\documentclass{amsart}
\usepackage{amssymb,amsmath,amsthm,mathtools, geometry, amssymb, verbatim, esint, soul}
\usepackage{a4wide}
\usepackage{float}
\usepackage{graphicx}
\usepackage[utf8]{inputenc} 
\usepackage{a4wide}
\usepackage{tikz}

\usetikzlibrary{intersections}

\usepackage{tkz-euclide}
\usetikzlibrary{calc}
\usepackage{ulem}
\usepackage{stackengine,scalerel,graphicx}
\stackMath
\usepackage{microtype}
\usepackage{hyperref}
\usepackage{amsfonts} 
\usepackage{latexsym}
\usepackage[font=small,format=hang,labelfont={sf,bf}]{caption}
\usepackage{epsfig}
\usepackage{enumitem}
\usepackage{subfig}
\usepackage{url}
\usepackage{varioref}
\usepackage{subfiles}

\usepackage{bm}
\usepackage{color}
\usepackage{mathrsfs}
\usepackage{latexsym}
\usepackage{bbm}
\usepackage{faktor}
\usepackage{yhmath}
\usepackage{empheq}
\usepackage{mathtools}
\usepackage{color}
\usepackage{marginnote}
\usetikzlibrary{patterns}
\usepackage{comment}
\usepackage{cancel}
\usetikzlibrary{arrows.meta}

\allowdisplaybreaks[1]
\newcommand{\dist}{\mathrm{dist}\,}          %distance function
\newcommand{\interi}{\mathrm{int}\,}         %interior
\DeclareMathOperator{\interior}{int}

\DeclareMathOperator{\R}{\mathbb{R}}
\DeclareMathOperator{\C}{\mathbb{C}}

\DeclareMathOperator{\Z}{\mathbb{Z}}
\DeclareMathOperator{\N}{\mathbb{N}}

\newcommand*{\mc}[1]{\mathcal{#1}}

\newcommand{\vertiii}[1]{{\left\vert\kern-0.25ex\left\vert\kern-0.25ex\left\vert #1 
		\right\vert\kern-0.25ex\right\vert\kern-0.25ex\right\vert}}
\def\avint{\mathop{\,\rlap{-}\kern-1.1ex\int}\nolimits}

\newtheorem{Theorem}{Theorem}[section]

\newtheorem{prop}[Theorem]{Proposition}
\newtheorem{lem}[Theorem]{Lemma}
\newtheorem{cor}[Theorem]{Corollary}
\newtheorem{rem}[Theorem]{Remark}

\makeatletter

\@addtoreset{equation}{section}
\makeatother
\newtheorem{theorem}{Theorem}[section]
\newtheorem*{theorem*}{Theorem}

\theoremstyle{definition}
\newtheorem{definition}[theorem]{Definition}
\author[L.~Kreutz]{Leonard~Kreutz}
\address[Leonard Kreutz]{Technische Universit\"at M\"unchen, Boltzmannstrasse 3, 85748 Garching, Germany	}
\email[]{kleo@cit.tum.de}
\author[T.~Ziereis]{Timo~Ziereis}
\address[Timo Ziereis]{Technische Universit\"at M\"unchen, Boltzmannstrasse 3, 85748 Garching, Germany	}
\email[]{timo.ziereis@tum.de}

\date{\today}	

\newcounter{fig}
\newcounter{thm}
\begin{document}

\title{The charge dependent hard-sphere model: Polycrystals as low-energy configurations}

\begin{abstract}
We investigate the emergence of rigid polycrystalline structures in atomistic ionic particle systems as low-energy configurations. The interaction between particles of opposite charge is modeled by hard spheres that interact when they are tangential. The interaction between particles of same charge is modeled as a hard repulsion that forces a minimal distance between them. The atomistic energy is frame invariant, and no underlying reference lattice is assumed on the ionic configurations. The asymptotic behavior of configurations with finite surface energy scaling is identified by means of $\Gamma$-convergence. The related continuum theory is described by piecewise constant fields that encode the local orientation of the configuration. The limiting energy is local and concentrates at grain boundaries, which correspond to the boundaries of the regions where the underlying configuration has a constant orientation. The limiting energy density is anisotropic and depends on the relative misorientation of the two grains, their translation misfit, and the normal to their interface. Furthermore, we perform a fine analysis of surface energies for solid-solid and solid-vacuum phase transitions and determine energetically favorable orientation mismatches. This relies on a structure result for our grain boundaries, which shows that, due to the rigid setup, interpolating layers near the grain interface are energetically not favorable. 
\end{abstract} 

\keywords{Polycrystals, crystallization, interfacial energies, $\Gamma$-convergence, ionic dimers}

\subjclass[2010]{82D25, 74N05, 82B24, 49J45}

\maketitle

\section{Introduction}\label{sec:1} 
 {\it Crystallization} \cite{BlancLewin} concerns the emergence of ordered, crystalline structures as ground states of interaction energies. For discrete models, one considers a large number of particles interacting through prescribed potentials, and one seeks to characterize configurations that minimize the energy. A central goal is to prove that, under suitable assumptions on the interactions, low-energy configurations or minimizers exhibit crystalline order in the large-particle limit. At zero or very low temperature, atomic interactions are expected to be governed only by the geometry of the atomic arrangement. In this case, configurations can be identified with their respective positions $\{x_1,\ldots,x_n\}$ and the crystallization problem consists in studying minimizers of a configurational energy $\mathcal{F}(\{x_1,\ldots,x_n\})$ and proving or disproving order of minimizers of $\mathcal{F}$.
 
 \medskip
 
 Various crystallization results for {\it one single} atomic type and different choices of the energy $\mathcal{F}$, taking into account classical interaction potentials, have been obtained in the last decades. Here, from the vast body of literature, we mention results in one dimension \cite{GardnerRadin:79,Radi,Ventevogel} and two dimensions \cite{DelNinDeLuca,DeLucaFriesecke,FriedrichKreutz:23,FriedrichKreutzStefanelli,HeitmannRadin,MaininiPiovanoStefanelli,MaininiStefanelli,Radin:81} for finite crystallization and, for a wider class of potentials, results for crystallization in the thermodynamic limit \cite{BeterminDeLucaPetrache,ELi:09,Farmer-Esedoglu-Smereka,Theil}.

\medskip

In this article, we study a problem related to the mathematical theory of crystallization, focusing on systems consisting of {\it two different types} of particles. Such models are motivated by ionic compounds, where particles of different species interact through attractive and repulsive forces and, under suitable conditions, form periodic crystalline structures. Although rigorous results on the analysis of such models are scarce, we mention \cite{Betermin,B43,FriedrichKreutzHexDimer,FriedrichKreutzSquareDimer} for results in this direction. Our aim is to understand the emergence of polycrystalline materials in this setting. A polycrystal is a configuration composed of several crystalline grains, each exhibiting local crystalline order, while different grains may have different orientations. The interfaces between neighboring grains, known as grain boundaries, disrupt the perfect crystalline order and give rise to an excess energy. We investigate how such ionic polycrystalline structures can arise as low-energy states and characterize the surface-energy contribution associated with grain boundaries. Polycrystalline structures have been identified as low-energy configurations for different interaction energies and single-species systems in \cite{FriedrichKreutzSchmidt,KreutzZiereis}.
 
 \medskip
 
 In the following, we describe the model under investigation and clarify what we mean by low-energy configurations. In accordance with the molecular mechanics framework \cite{Molecular}, we identify configurations $X=\{x_1,\dots,x_n\}$ of $n\in \mathbb{N}$ particles with their positions $x_i\in\R^2$ and additionally with their type (or {\it charge}) $q(x_i)\in\{-1,+1\}$. Their configurational energy takes the form 
\begin{align*}
\mc{E}(X):=\frac12\sum_{\underset{q(x_i)\neq q(x_j)}{i\neq j}}V_{\mathrm{a}}(\vert x_i-x_j\vert)+\frac12\sum_{\underset{q(x_i)=q(x_j)}{i\neq j}}V_{\mathrm{r}}(\vert x_i-x_j\vert)\,,
\end{align*}
where the factor $\frac12$ is due to double counting,
 $V_{\mathrm{a}}\colon [0,\infty)\to\overline{\R}$, and $V_{\mathrm{r}} \colon [0,\infty)\to\overline{\R}$ are pair-potentials modeling the interaction between atoms of the opposite and same charge, respectively. Here, the interaction between the two particle types is physically motivated by electrostatics: particles carrying opposite charges {\it attract} each other, whereas particles carrying charges of the same sign {\it repel}.
 In this article, both interaction potentials are of sticky type (see~\cite{HeitmannRadin} for the sticky disc potential) and are defined as
\begin{align}\label{def:Va}
V_\mathrm{a}(r):=\begin{cases}
+\infty&\text{if }r<1\,,\\
-1&\text{if }r=1\,,\\
0&\text{otherwise,}
\end{cases}
\end{align}
and 
\begin{align}\label{def:Vr}
V_\mathrm{r}(r):=\begin{cases}
+\infty&\text{if }r<\sqrt{2}\,,\\
0&\text{otherwise.}
\end{cases}
\end{align}
The potentials are depicted in Figure~\ref{fig:potentials}. They represent a basic choice of interaction potentials that model attraction between particles of opposite charge and repulsion between particles of the same charge, favoring crystallization on the square lattice. The specific form of these potentials allows us to employ geometric and combinatorial arguments in our analysis, which we explain in more detail below.
\begin{figure}[htp]
\begin{tikzpicture}[scale=0.6]
\tikzset{>={Latex[width=1mm,length=1mm]}};

\draw[->] (-1, 0) -- (5,0) node[below] {$r$};
\draw[->] (0,-1) -- (0,5) node[above] {$V_{\mathrm{a}}(r)$};

\draw(1,0) node[anchor=north west] {$1$};
\draw[dashed] (1,4)--(1,-1);
\draw[line width=1 pt] (1,0) --(4.95,0);
\fill[black](1,-1) circle(.1);
\node[left] at (0,-1){-1};
\draw[black,decorate,decoration={coil,aspect=0,segment length=4.5pt}](0,4) node[black,left]{$+\infty$}  -- (1,4);

\begin{scope}[shift={(12,0)}]

\draw(1.41,0) node[anchor=north ] {$\sqrt{2}$};

\draw[->] (-1, 0) -- (5,0) node[below] {$r$};
\draw[->] (0,-1) -- (0,5) node[above] {$V_{\mathrm{r}}(r)$};
\draw[dashed] (1.41,4)--(1.41,0);
\draw[line width=1 pt] (1.41,0) --(4.95,0);
\fill[black](1.41,0) circle(.1);
\node[color=white,left] at (0,-1){-1};
\draw[black,decorate,decoration={coil,aspect=0,segment length=4.5pt}](0,4) node[black,left]{$+\infty$}  -- (1.41,4);
\end{scope}
\end{tikzpicture}
\caption{The interaction potentials: The attractive potential $V_{\mathrm{a}}$ on the left and the repulsive potential $V_{\mathrm{r}}$ on the right.}
\label{fig:potentials}
\end{figure}
In \cite{FriedrichKreutzSquareDimer} it has been shown that minimizers under the cardinality constraint $\#X=N$ crystallize on the square lattice with alternating charge distribution, see~\eqref{def:Z2-charge} for its definition and Figure~\ref{Fig:minimizer} for an illustration of a minimizer.
\begin{figure}[htp]
\begin{tikzpicture}[scale=.4]

\foreach \j in {0,...,10}{

\draw[ultra thin,gray](\j,0)--++(0,10);
\draw[ultra thin,gray](0,\j)--++(10,0);
}

\foreach \j in {-10,...,10}{
\foreach \i in {-10,...,10}{

 \clip (-.2,-.2) rectangle (10.2,10.2);

\draw[fill=black](\i+\j,\i-\j) circle(.1);
\draw[fill=white](\i+\j+1,\i-\j) circle(.1);
}
}

\end{tikzpicture}
\caption{A minimizer of $\mathcal{E}$ for $N=121$.}
\label{Fig:minimizer}
\end{figure}
 Furthermore, the minimal energy of a configuration $X_N =\{(x_1,q(x_1)),\ldots, (x_N,q(x_N))\} \in (\mathbb{R}^2 \times \{-1,1\})^N$ of $N$ particles has been determined in \cite{FriedrichKreutzSquareDimer}:
\begin{align}\label{eq:min-energy-N}
\min \{\mathcal{E}(X_N) \colon \# X_N =N\} = -\lfloor 2N -2\sqrt{N}\rfloor \approx -2N + O(\sqrt{N})\,.
\end{align}
More precisely, \eqref{eq:min-energy-N} proven in \cite{FriedrichKreutzSquareDimer} holds for a broader class of energies that includes the potentials considered in this article. On subsets of $\mathbb{Z}^2$ with alternating charge, these energies (and thus also the minimal energies) coincide. The leading order term $-2N$ is due to the $N-O(\sqrt{N})$ atoms in the bulk, each having four neighbors and the lower order term $\sim \sqrt{N}$ is due to missing neighbors of order $O(\sqrt{N})$ of atoms at the boundary of the minimizing configuration and is thus a {\it surface energy} contribution. In this article, we want to identify polycrystalline states as low-energy states. We therefore investigate configurations $X_N$ whose energy excess relative to the ground-state energy is of the order of the surface energy. More precisely, we consider sequences of configurations $\{X_N\}_N$ satisfying
\begin{align*}
\mathcal{E}(X_N) \leq -2N + C\sqrt{N}
\end{align*} 
for some constant $C>0$. To this end, we consider $X_N \in (\mathbb{R}^2 \times \{-1,1\})^N$ with bounded rescaled energy
\begin{align*}
E_N(X_N):=\frac{\mathcal{E}(X_N) +2N}{\sqrt{N}} = \frac{1}{2\sqrt{N}}\underset{q(x)\neq q(y)}{\sum_{x,y\in X\,, x\neq y}}\left(4+V_\mathrm{a}(\vert x-y\vert)\right)+\frac{1}{2\sqrt{N}} \underset{q(x)=q(y)}{\sum_{x,y\in X\,, x\neq y}}V_\mathrm{r}(\vert x-y\vert)
\end{align*}
Clearly, for $\{X_N\}_N$ such that $\#X_N=N$ we have
\begin{align*}
\sup_{N \in \mathbb{N}} E_N(X_N) <+\infty \quad \iff \quad \mathcal{E}(X_N) \leq -2N + C\sqrt{N} \quad \text{ for some } C>0\,.
\end{align*}
The diameter of an $N$-particle configuration $X_N$ with energy given in \eqref{eq:min-energy-N} is $\sim \sqrt{N}$. To obtain configurations which are contained in a bounded domain, we
therefore rescale the configuration by a factor $\varepsilon= 1/\sqrt{N}$, that is, $X_\varepsilon:= \varepsilon X_N$. We then study the asymptotics of the energy $E_\varepsilon(X_\varepsilon)$ where the energy functional $E_\varepsilon$ is defined on configurations $X =\{(x_1,q(x_1)),\ldots,(x_n,q(x_n))\} \in (\mathbb{R}^2 \times \{-1,1\})^n$, $n \in \mathbb{N}$ (here not necessarily $n=N$) by
\begin{align}\label{energy:norm}
E_\varepsilon(X) := \frac{1}{2}\sum_{x \in X} \varepsilon\Bigg(4 +\underset{q(x) \neq q(y)}{\sum_{y \in X}} V_{\mathrm{a}}\left(\frac{|x-y|}{\varepsilon}\right) +\underset{q(x) = q
(y)}{\sum_{y \in X \setminus \{x\}}} V_{\mathrm{r}}\left(\frac{|x-y|}{\varepsilon}\right)  \Bigg)\,.
\end{align}
In the following, we consider the energy $E_\varepsilon$ in \eqref{energy:norm} {\it without cardinality constraint} as this energy has already been normalized with respect to the minimal energy per particle.

\medskip

Our main results are a full $\Gamma$-convergence result \cite{Braides:02,DalMaso:93} for the functionals $E_\varepsilon$ to an anisotropic surface energy (Theorem~\ref{thm:Main}) as well as a thorough analysis of our limiting energy density (Proposition~\ref{prop:density} and Theorem~\ref{thm:properties-of-varphi}). We also prove a corresponding compactness result for sequences of equi-bounded energy, see Theorem~\ref{thm:compactness}. The proof of the compactness result is relatively straightforward under the modeling assumptions. Our continuum description keeps track not only of the {\it orientation angles} of the various grains, but it additionally depends on a {\it micro-translation vector}, which measures the translation misfit of the various grains and is crucial, as there is non-zero surface energy even between translated copies of the same lattice.

\medskip

The limiting surface energy density $\varphi$ is a function depending on the relative orientation of the two grains, their microscopic translation misfit, and the normal to the interface. For solid-vacuum transitions, (up to a factor) this turns out to be the square anisotropic $L^1$-perimeter, which already appeared for Ising-systems defined on the square lattice, see \cite{Alicandro-Cicalese-Braides}. We observe that in this case, the energy density corresponds to the (crystalline) perimeter of a perfect crystal. For solid-solid transitions, the problem is more nuanced, as atomic interactions can occur across the interface, and it is not clear a priori if interpolating grain boundary layers lower the transition energy. A product of our analysis shows that, within our brittle set-up, {\it generically} $\varphi$ is given by the sum of the solid-vacuum surface energy of the two grains. Here, generic refers to the fact that the surface energy may be smaller only for a countable number of mismatch angles corresponding to {\it rational rotations} when the coincidence site lattice is non-trivial, see~\cite{Ranganathan}. Let us mention that in contrast to the Read-Shockley formula (see~\cite{Read-Shockley} and \cite{ContiCrismaleGarroniMalusa,ponsiglione,Fortuna-Garroni-Spadaro,LauteriLuckhaus:16,Scardia} for recent mathematical developments on that formula) our energy density is not infinitesimal as the orientation mismatch angle vanishes. This is a consequence of the choice of our extremely rigid potentials, leading to a brittle surface energy in the limit.

\medskip

We proceed with some comments on the general proof strategies. As is customary for variational limits of interfacial energies, $\varphi$ is given via a cell formula that minimizes the asymptotic surface energy between two grains separated by a flat grain boundary. In such problems, it is crucial to pass from $L^1$-convergence to fixed boundary values in order to match the upper and lower bounds of the $\Gamma$-limit. Therefore, as in \cite{FriedrichKreutzSchmidt}, we use a cut-off construction, called the {\it fundamental estimate}, to pass from a cell formula described via $L^1$-convergence of the upper and lower traces to a cell-formula with boundary values that are obtained asymptotically. Due to the rigidity of our setup, all constructions must be carried out carefully, as slight changes in the configuration can generate a significant (even infinite if the geometric constraints are violated) amount of energy. In a second step, we show that the boundary values can be fixed exactly, rather than only asymptotically. This follows the proof strategy of \cite{FriedrichKreutzSchmidt} and proceeds as follows. First, in Section~\ref{sec:6} we show that for the cell-formula it is energetically inconvenient to have interpolating grain boundary layers, and (almost) minimizers can be selected that only use two lattices. Here, the graph theoretic framework proves to be a useful tool. In particular, the face defect \eqref{def:eta} allows us to link excess energy with non-square faces.  With this result in place, we can relate the cell formula to the much simpler notion of the {\it vacuum energy density}, see Lemma~\ref{lem:Phi-vacuum}. Furthermore, in Lemma~\ref{lem:touching-lattices}, we show that the energy can be strictly smaller than twice the vacuum energy density only if the two lattices are oriented in such a way that the resulting arrangement of lattices is periodic.

\medskip

The paper is organized as follows. In Section~\ref{sec:setting-results}, we introduce the model and present the main results. Section~\ref{sec:3} is devoted to the proofs of compactness and $\Gamma$-convergence. They fundamentally rely on a fine characterization of the surface energy density whose proof is postponed to Sections~\ref{sec:5}--\ref{sec:8}. In Section~4, we state and prove some elementary auxiliary lemma that are useful throughout the proofs. In Section~5, we address the fundamental estimate.  In Section~\ref{sec:6}, we analyze the (asymptotic) cell problem and show that interpolating grain-boundary layers are energetically inconvenient. In Section~\ref{sec:7}, we characterize the solid-vacuum and solid-solid interactions at grain boundaries, respectively. Lastly, in Section~\ref{sec:8} we replace converging boundary values with fixed ones and provide the proof of Theorem~\ref{thm:properties-of-varphi}.

\section{Setting and Main Results}\label{sec:setting-results}
In this section, we introduce our model, give basic definitions, and present the main results of the paper.

\subsection*{Configurations and atomistic energy} In the following we always assume that $X$ is a finite subset of $\mathbb{R}^2$. We denote by $V_\mathrm{a} \colon [0,+\infty) \to \overline{\mathbb{R}}$ and by $V_\mathrm{r} \colon [0,+\infty) \to \overline{\mathbb{R}}$ the {\it attractive} and {\it repulsive potentials} given in \eqref{def:Va} and \eqref{def:Vr}, see Figure~\ref{fig:potentials}, respectively. By $\varepsilon>0$ we denote the {\it atomic spacing}. The {\it normalized atomistic energy} $E_\varepsilon$ of a configuration is given by \eqref{energy:norm}. The notion {\it normalized} has been explained in the introduction and is chosen in such a way that an infinite square-lattice with spacing $\varepsilon>0$ and alternating charge distribution (see \eqref{def:Z2-charge}) has zero energy. The energy can be equivalently expressed in terms of the {\it attractive} and {\it repulsive neighborhoods} of the atoms. To this end,  we define the {\it attractive neighborhood} of $x \in X$ by
\begin{align}\label{def:attractive-neighbourhood}
\mc{N}_\varepsilon^\mathrm{a}(x):=\{y\in X\colon\vert x-y\vert=\varepsilon, q(x)=-q(y)\} \,,
\end{align}
where for $\varepsilon=1$ we write $\mathcal{N}^\mathrm{a}(x)= \mathcal{N}^\mathrm{a}_1(x)$, and we define the  {\it repulsive neighborhood} of $x \in X$ by
\begin{align} \label{def:repulsive-neighbourhood}
 \mc{N}_\varepsilon^\mathrm{r}(x):=\{y\in X\colon 0<\vert x-y\vert<\sqrt{2}\varepsilon, q(x)=q(y)\}\,.
\end{align} 
We define the set of {\it finite energy configurations} by
\begin{align}\label{def:Space-eps}
\begin{split}
\mathcal{X}_\varepsilon(\mathbb{R}^2) := \{ X\subset \mathbb{R}^2 \colon\, &|x-y| \geq \varepsilon \text{ for all } x,y \in X \text{ such that } x\neq y  \text{ and }\\ &|x-y| \geq \sqrt{2} \varepsilon \text{ for all } x,y \in X \text{ such that } x\neq y \text{ and } q(x) =q(y) \}\,.
\end{split}
\end{align} 
It is elementary to check that
\begin{align*}
X \in \mathcal{X}_\varepsilon(\mathbb{R}^2) \quad \iff \quad E_\varepsilon(X) < +\infty 
\end{align*}
and for configurations $X \in \mathcal{X}_\varepsilon(\mathbb{R}^2)$ we have 
\begin{align*}
E_\varepsilon(X)=\sum_{x\in X} \frac\varepsilon2\left(4-\#\mc{N}^\mathrm{a}_\varepsilon(x)\right)\,.
\end{align*}
Thus, given $A \subset \mathbb{R}^2$ Borel we write
\begin{align}\label{def:local-energy}
E_\varepsilon(X,A):=\sum_{x\in X\cap A} \frac\varepsilon2\left(4-\#\mc{N}^\mathrm{a}_\varepsilon(x)\right)
\end{align}
 and we write $E_\varepsilon(X,\mathbb{R}^2) = E_\varepsilon(X)$.  Note that for a configuration with finite energy, the bond graph $G=(V,E)$ with $V=X$ and $E= \{(x,y) \in X \colon y \in \mathcal{N}^\mathrm{a}_\varepsilon(x)\}$ of a configuration is planar. The energy of an atom $x$ in configuration $X \in \mathcal{X}_\varepsilon(\mathbb{R}^2)$ is defined as
\begin{align}\label{def:Ecell}
E_\varepsilon(x,X):=\frac\varepsilon2(4-\#\mc{N}^\mathrm{a}_\varepsilon(x))\,,
\end{align}
so that $E_\varepsilon(X)=\sum_{x\in X}E_\varepsilon(x,X)$. 

\subsection*{Notation} We denote by $e_1,e_2$ the standard basis of $\mathbb{R}^2$. We let $\mathbb{S}^{1}=\{x\in\R^2\colon\vert x\vert=1\}$. Given $\nu\in\mathbb{S}^1$ we define $\nu^\perp$ by rotating $\nu$ by $\frac{\pi}{2}$ in a clockwise-sense. For $x,y\in\R^2$, we denote by $\langle x,y\rangle$ their scalar product. Without further notice, we will identify vectors $x\in\R^2$ with elements in $\C$. In particular, we identify rotations in the plane by multiplication with a unit vector in $\C$. This means that a rotation of $x\in\R^2$ by an angle $\theta\in[0,2\pi)$ is denoted by $e^{i\theta}x$. For $t\in\R$ we write $\lfloor t\rfloor=\max\{k\in\Z\colon k\leq t\}$. $\mc{L}^2$ denotes the $2$-dimensional Lebesgue measure and $\mc{H}^1$ denotes the $1$-dimensional Hausdorff measure. We write $\chi_E$ for the characteristic function of a set $E\subset\R^2$, which is $1$ on $E$ and $0$ otherwise. If $E$ is a set of finite perimeter, then $\partial^*\!E$ denotes its {\it essential boundary},  which is the part of the topological boundary where a measure-theoretic unit normal exists, see \cite[Definition 3.60]{Ambrosio-Fusco-Pallara:2000}. For $r>0$ and $x\in\R^2$, we denote by $B_r(x)$ the open ball of radius $r$ centered at $x$. If $x=0$, we simply write $B_r$. Given $A\subset\R^2,\tau\in\R^2$ and $\lambda\in\R$ we define
\begin{align*}
A+\tau:=\{x+\tau\colon x\in A\},\quad \lambda A:=\{\lambda x\colon x\in A\}\,, \quad 
(A)_\varepsilon:=\{x+y\colon x\in A,y\in B_\varepsilon\}\,.
\end{align*}
For $x_1,x_2 \in \mathbb{R}^2$ we define the {\it line segment} between $x_1$ and $x_2$ by
\begin{align*}
[x_1;x_2] := \{\lambda x_1 +(1-\lambda)x_2 \colon \lambda \in [0,1]\}\,.
\end{align*}
 For $\rho >0$, $\nu \in \mathbb{S}^1$, and $x\in\R^2$ we denote by
\begin{align*}
Q_\rho^\nu(x):=x+\left\{y\in\R^2\colon -\frac{\rho}{2}\leq\langle y,\nu\rangle<\frac{\rho}{2}\,,\,-\frac{\rho}{2}\leq\langle y,\nu^\perp\rangle<\frac{\rho}{2}\right\}
\end{align*}
the half-open cube in $\mathbb{R}^2$ with side-length $\rho>0$, centered in $x \in \mathbb{R}^2$ and two sides parallel to $\nu \in \mathbb{S}^1$. If $\rho=1$ we simple write $Q^\nu(x)=Q^\nu_1(x)$ and for $x=0$ we omit the dependence on the center and write $Q^\nu_\rho$ instead of $Q^\nu_\rho(0)$. Furthermore, we define the half-cubes
\begin{align}\label{def:half-cubes}
Q^{\nu,\pm}_\rho(x):=x+ \left\{y\in Q^{\nu}_\rho\colon\pm\langle\nu,y\rangle\geq 0\right\}\,.
\end{align}
 For  $\rho >0$ and $\varepsilon >0$ (such that $10 \varepsilon <\rho$) we introduce the of notation {\it boundary regions}
\begin{align} \label{def:boundary-regions}
\begin{split}
\partial_\varepsilon Q^\nu_\rho(x) &:=\overline{ Q^\nu_{\rho+10\varepsilon
} (x) \setminus  Q^\nu_{\rho-10\varepsilon
} (x)} \,,\\ 
\partial^\pm_\varepsilon   Q^\nu_\rho(x) &:= \partial_\varepsilon Q^\nu_\rho(x)  \cap \{y \in \mathbb{R}^2 \colon \pm\langle y-x,\nu \rangle \geq 10\varepsilon\}\,, \\
\partial^c_\varepsilon   Q^\nu_\rho(x) &:=  \partial_\varepsilon Q^\nu_\rho(x) \setminus \left(\partial^+_\varepsilon   Q^\nu_\rho(x)\cup \partial^-_\varepsilon   Q^\nu_\rho(x)\right)\,.
\end{split}
\end{align}

\subsection*{The charged square lattice} We define the {\it charged square lattice} as the set of points with charges given by
\begin{align}\label{def:Z2-charge}
\mathbb{Z}^2_{\mathrm{charge}} := \left\{\big((k_1,k_2),(-1)^{k_1+k_2}\big) \colon  k_1,k_2 \in \mathbb{Z}\right\}\,.
\end{align} 
The sub-lattice of positively charged points is denoted by $\mathbb{Z}^2_+= \{k \in \mathbb{Z}^2 \colon k_1 +k_2 \text{ is even}\}$ and the points of negative charge are denoted by $\mathbb{Z}^2_-= \{k \in \mathbb{Z}^2 \colon k_1 +k_2 \text{ is odd}\}$. Observe that $\mathbb{Z}^2_+=\sqrt{2} e^{i\pi/4}\mathbb{Z}^2$.

\subsection*{The set of lattice isometries} We denote the by $\mathbb{A}$ the {\it set of  rotations} by angles in $[0,\frac{\pi}{2})$ equipped with the metric of the $1$-dimensional torus, i.e., $\mathbb{A}= \mathbb{R}/\frac{\pi}{2}\mathbb{Z}$. In a similar fashion, we introduce {\it the set of translations} $\mathbb{T}= \mathbb{R}^2/\mathbb{Z}^2_+=\mathbb{R}^2/(\sqrt{2} e^{i\pi/4}\mathbb{Z}^2) $. Each translation $\tau \in \mathbb{T}$ can be represented by a vector in
\begin{align}\label{def:T}
\{\lambda_1 (e_1+e_2) +\lambda_2 (e_2-e_1)\colon 0 \leq \lambda_1,\lambda_2 <1\}\,.
\end{align}
We introduce the {\it set of lattice isometries} by
\begin{align*}
\mathcal{Z}:= \left(\mathbb{A} \times \mathbb{T}\times \{1\}\right) \cup \{{\bf 0}\}\,,
\end{align*}
where for each $\theta \in \mathbb{A}$ and $\tau \in \mathbb{T}$ the triple $z =(\theta,\tau,1)$ represents the {\it rotated and translated lattice}
\begin{align*}
\mathcal{L}(z) = \mathcal{L}(\theta,\tau,1) := e^{i\theta}\left(\mathbb{Z}^2_{\mathrm{charge}} +\tau \right)\,.
\end{align*}
We remark that, due to Lemma~\ref{lem:unique-lattice}, the above representation is unique. The entry $1$ represents that a lattice is present. On the contrary, ${\bf 0} = (0,0,0) \in \mathbb{A} \times \mathbb{T} \times \{0\}$ represents that no lattice is present, also referred to as {\it vacuum} in the following. We set 
\begin{align*}
\mathcal{L}({\bf 0}) = \emptyset\,.
\end{align*}
Note that $\mathbb{A} \simeq \mathbb{S}^1 \subset \mathbb{R}^2$ and $\mathbb{T} \simeq \mathbb{S}^1 \times \mathbb{S}^1 \subset \mathbb{R}^4$. Therefore, the set $\mathcal{Z}$ can be embedded into $\mathbb{R}^7$. We endow it with the product topology, i.e., $z_j = (\theta_j,\tau_j,1) \to z =(\theta,\tau,1)$ if and only if $\theta_j \to \theta$ in $\mathbb{A}$ and $\tau_j \to \tau$ in $\mathbb{T}$. Moreover, $z_j \to {\bf 0}$ if and only if $z_j = {\bf 0}$ for all $j$ large enough.  For a set $A \subset \mathbb{R}^2$, $z \in \mathcal{Z}$, and $X \in \mathcal{X}_\varepsilon(\mathbb{R}^2)$, we say that $X$ {\it coincides with the lattice} $\mathcal{L}_\varepsilon(z)$ on $A$, written $X= \mathcal{L}_\varepsilon(z)$ on $A$, if 
\begin{align*}
X \cap A = (\varepsilon\mathcal{L}(z)) \cap A\,.
\end{align*}
If $z= {\bf 0} $ we simply write $X = \emptyset$ on $A$.

\subsection*{The state space} For $A \subset \mathbb{R}^2$ open, we introduce the space of {\it piecewise constant functions} $PC(A;\mathcal{Z})$ with values in $\mathcal{Z}$ as functions of the form
\begin{align*}
u=\sum_{j=1}^\infty z_j \chi_{G_j}\,,
\end{align*}
where  $\{z_j\}_j\subset\mc{Z}\setminus\{\textbf{0}\}$ are pairwise distinct and  $G_j\subset A$ are pairwise disjoint sets satisfying $\mc{L}^2(\bigcup_{j=1}^\infty G_j)<+\infty$ and
\begin{align*}
\sum_{j=1}^\infty\mc{H}^{1}(\partial^* G_j)<+\infty\,.
\end{align*}
Here, $\{G_j\}_{j\in\N}$ represent the grains of the polycrystal and $\{z_j\}_{j\in\N}$ corresponds to the local {\it orientation and translation} of the lattice on each grain. We remark that this space can be identified with
\begin{align}\label{def:PC}
PC(A;\mc{Z}):=\{u\in SBV(A;\mc{Z})\colon\nabla u=0\,,\,\mc{L}^2(\{u\neq\textbf{0}\})<+\infty\,,\,\mc{H}^{1}(J_u)<+\infty\}\,.
\end{align}
Here, $u$ is a function in $SBV(A;\mathcal{Z})$ in the sense that $u \in SBV(A;\mathbb{R}^7)$ and $u(x) \in \mathcal{Z}$ for $\mathcal{L}^2$-a.e. $x \in A$. The jump set of $u$ is denoted by $J_u$. The one-sided limits of $u$ at a jump point will be indicated by $u^+$ and $u^-$ in the following, and the normal will be denoted by $\nu_u$. We refer to \cite[Definition~4.21]{Ambrosio-Fusco-Pallara:2000} for details on this space. We say that $u \in PC_{\mathrm{loc}}(A;\mathcal{Z})$ if $u \in PC(K;\mathcal{Z})$ for all $K \subset\subset A$.

\subsection*{Identification of configurations with piecewise constant functions} In this section, given $\varepsilon >0$ we embed $\mathcal{X}_\varepsilon(\mathbb{R}^2)$ into $PC(\mathbb{R}^2;\mathcal{Z})$. To this end, given $X \in \mathcal{X}_\varepsilon(\mathbb{R}^2)$, we define a function that captures the local orientation of the configuration. Due to Lemma~\ref{lemma:cell-energy-zero}, we have for each $x\in X$
\begin{align}\label{iff:zero-energy}
E_\varepsilon(x,X) = 0 \quad \iff \quad \begin{split} & \text{there exists a unique } z(x)=(\theta(x),\tau(x),1)\in\mc{Z} \\&\text{such that }  X =\mc{L}_\varepsilon(z(x)) \text{ on } B_{\sqrt{2}\varepsilon}(x)\,. \end{split}
\end{align} 
Recall that the open lattice Voronoi cell of such an $x \in X$ is given by
\begin{align}\label{def:Voronoi}
V_\varepsilon(x):= x+ e^{i\theta(x)}\interi(Q_\varepsilon)\,.
\end{align}
Given a configuration $X \in \mathcal{X}_\varepsilon(\mathbb{R}^2)$ we identify it with a suitable function $u \in PC(\mathbb{R}^2;\mathcal{Z})$. Recalling \eqref{iff:zero-energy} we define  $u_\varepsilon^X:\R^2 \to \mathcal{Z}$ by
\begin{align}\label{def:interpolation}
u_\varepsilon^X(y) := \begin{cases} z(x) &\text{if } y \in V_\varepsilon(x) \text{ for some } x \in X \text{ and } E_\varepsilon(x,X) =0\,, \\
{\bf 0} &\text{otherwise.}
\end{cases}
\end{align}
In the sequel, if there is no possible confusion, we will omit the dependence on $X$ and just write $u_\varepsilon$ instead of $u_\varepsilon^X$.  Note that the function is well defined as $V_\varepsilon^{z(x_1)}(x_1)\cap V_\varepsilon^{z(x_2)}(x_2)=\emptyset$ for all $x_1,x_2\in X$ with $x_1\neq x_2$ and $E_\varepsilon(x_1,X) = E_\varepsilon(x_2,X)=0$. Indeed, if this were not the case, then necessarily $\vert x_1-x_2\vert<\sqrt{2}\varepsilon$, so that $q(x_1) \neq q(x_2)$ as $X \in \mathcal{X}_\varepsilon(\mathbb{R}^2)$. But then necessarily $x_2 \in \mathcal{N}_\varepsilon^{\mathrm{a}}(x_1)$ or $\varepsilon<\vert x_1-x_2\vert<\sqrt{2}\varepsilon$. The first case immediately implies that $z(x_1)=z(x_2)$ and thus their open Voronoi cells have empty intersection, which is a contradiction.  In the second case, there exists $y \in \mathcal{N}_\varepsilon^{\mathrm{a}}(x_1)$ such that $|y-x_2| <\varepsilon$. This contradicts $X \in \mathcal{X}_\varepsilon(\mathbb{R}^2)$. See Figure~\ref{fig:Identification} for an illustration of $u_\varepsilon$ for a configuration $X \in \mathcal{X}_\varepsilon(\mathbb{R}^2)$.

\begin{figure}
\centering
\begin{tikzpicture}[scale=0.5]
\tkzDefPoints{0.5/1.5/A,1.5/1.5/B}
\tkzDefSquare(A,B)\tkzGetPoints{C}{D}
\tkzDrawPolygon[color=gray!60,fill=gray!30](A,B,C,D)
\tkzDefPoints{2.5/-2.5/A,3.5/-2.5/B}
\tkzDefSquare(A,B)\tkzGetPoints{C}{D}
\tkzDrawPolygon[color=gray!60,fill=gray!30](A,B,C,D)
\foreach \i in {0,1}{
\foreach \x in {-1,...,2}{
\foreach \y in {-2,...,1}{
\begin{scope}[ rotate=30*\i, shift={(\x+\i*8,\y+\i*1)}]
\ifnum\i=0
	\def\fillcolor{gray!30}
	\def\drawcolor{gray!60}
        \else
	\def\drawcolor{gray!80}
          \def\fillcolor{gray!50}
        \fi
\tkzDefPoints{-0.5/-0.5/A,0.5/-0.5/B}
\tkzDefSquare(A,B)\tkzGetPoints{C}{D}
\tkzDrawPolygon[color=\drawcolor,fill=\fillcolor](A,B,C,D)
\end{scope}
}
}
}
%\draw(-3,-5) grid(15,10);
%\draw(3,2) circle(1.414);
%\draw[red](0,0) circle(.1);
\foreach \i in {0,...,5}{
\foreach \j in {0,...,5}{
% parity: 0 if even, 1 if odd
        \pgfmathtruncatemacro{\parity}{mod(\i+\j,2)}
        % choose color based on parity
        \ifnum\parity=0
	\fill (\i-2,\j-3) circle[radius=0.1];
        \else
          \draw (\i-2,\j-3) circle[radius=0.1];
        \fi
}
}
\draw(1,3) circle(0.1);
\draw(4,-2) circle(0.1);

\begin{scope}[rotate=30, shift={(8,1)}]
\tkzDefPoints{2.5/-0.5/A,3.5/-0.5/B}
\tkzDefSquare(A,B)\tkzGetPoints{C}{D}
\tkzDrawPolygon[color=gray!80,fill=gray!50](A,B,C,D)
\tkzDefPoints{2.5/-1.5/A,3.5/-1.5/B}
\tkzDefSquare(A,B)\tkzGetPoints{C}{D}
\tkzDrawPolygon[color=gray!80,fill=gray!50](A,B,C,D)
%\draw(0,0) circle(1);
\foreach \i in {0,...,5}{
\foreach \j in {0,...,5}{
% parity: 0 if even, 1 if odd
        \pgfmathtruncatemacro{\parity}{mod(\i+\j,2)}
        % choose color based on parity
        \ifnum\parity=0
          \fill (\i-2,\j-3) circle[radius=0.1];
        \else
         \draw (\i-2,\j-3) circle[radius=0.1];
        \fi
}
}
\draw (4,0) circle(0.1);
\fill (4,-1) circle(0.1);
\end{scope}
\end{tikzpicture}
\caption{A function $u_\varepsilon$ defined in \eqref{def:interpolation}. The different regions $\{u\neq z\}$ with $z\neq\textbf{0}$ are highlighted in different shades of grey and consist of unions of squares. The complement of these regions is the set $\{u=\textbf{0}\}$. }
\label{fig:Identification}
\end{figure}

\begin{definition}(Convergence)\label{def:convergence}
Let $\{X_\varepsilon\}_\varepsilon$ be a sequence of configurations. We say that $X_\varepsilon\to u$ in $L^1_{\mathrm{loc}}(\R^2)$ as $\varepsilon \to 0$ if $u_\varepsilon\to u$ in $L^1_{\mathrm{loc}}(\R^2)$ as $\varepsilon \to 0$, where $u_\varepsilon=u_\varepsilon^{X_\varepsilon}$ is the function associated to $X_\varepsilon$ by \eqref{def:interpolation}.
\end{definition}

%%%%%%%%%%%%%%%%%%%%%%%%%%%%%%%%%%%%%%%%%%%%%%%%%%%%%%%%%%%%%%%%%%%%%%%%%%
\subsection{Main Results}\label{sec:2.2}
We are now in a position to present our main results. We begin with a compactness theorem for sequences of configurations with uniformly bounded energy. We establish a full $\Gamma$-convergence result, including a characterization of the limiting energy density. Finally, we discuss several key properties of the limiting energy density.
\begin{Theorem}[Compactness]\label{thm:compactness}
Let $\{X_\varepsilon\}_\varepsilon$ be a sequence of configurations, such that 
$$\sup_\varepsilon E_\varepsilon(X_\varepsilon)<+\infty$$
and let $u_\varepsilon$ denote the function associated to $X_\varepsilon$ by \eqref{def:interpolation}. Then there exists a subsequence $\{\varepsilon_k\}_{k\in\N}$ with $\varepsilon_k\to0$ as $k\to+\infty$ and a function $u\in PC(\R^2;\mc{Z})$ such that $X_{\varepsilon_k}\to u$ in $L^1_{\mathrm{loc}}(\R^2)$ as $k\to+\infty$.
\end{Theorem}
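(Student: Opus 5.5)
We argue through a uniform bound in $PC$ and the compactness theorem for piecewise constant (Caccioppoli partition) functions. The aim is to show that $u_\varepsilon$ satisfies
\begin{align*}
\mathcal{L}^2(\{u_\varepsilon\neq\mathbf{0}\})\le C\quad\text{and}\quad \mathcal{H}^1(J_{u_\varepsilon})\le C\,\sup_\varepsilon E_\varepsilon(X_\varepsilon)\,,
\end{align*}
and then to invoke the compactness theorem for $SBV$ functions with values in a finite-dimensional space, or more precisely for Caccioppoli partitions \cite{Ambrosio-Fusco-Pallara:2000}. Here we use the embedding of $\mathcal{Z}$ into $\mathbb{R}^7$, where $\mathcal{Z}$ is bounded, so $\|u_\varepsilon\|_\infty$ is uniformly bounded.

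\emph{Step 1 (measure bound).} Let $X^0:=\{x\in X_\varepsilon\colon E_\varepsilon(x,X_\varepsilon)=0\}$ and $X^1:=X_\varepsilon\setminus X^0$. Every $x\in X^1$ costs at least $\varepsilon/2$, so $\#X^1\le 2C/\varepsilon$. The set $\{u_\varepsilon\neq\mathbf{0}\}$ is the union of the disjoint squares $V_\varepsilon(x)$, $x\in X^0$, each of area $\varepsilon^2$. Since $\#X^0$ is not controlled directly, we bound the area isoperimetrically instead. The set $\{u_\varepsilon\neq \mathbf 0\}$ is a finite union of squares, and its perimeter is bounded by the jump-set bound of Step 2. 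The isoperimetric inequality then gives $\mathcal{L}^2(\{u_\varepsilon\neq\mathbf 0\})\le C\,\mathcal{H}^1(\partial\{u_\varepsilon\neq\mathbf 0\})^2\le C'$.

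\emph{Step 2 (jump bound, main step).} $J_{u_\varepsilon}$ is contained in the union of the edges of the squares $V_\varepsilon(x)$, $x\in X^0$, and each such edge has length $\varepsilon$. Take $x\in X^0$ and an edge of $V_\varepsilon(x)$. This edge is shared with the Voronoi square of the neighbour $y=x+\varepsilon e^{i\theta(x)}e_k\in X$, which exists because $X=\mathcal{L}_\varepsilon(z(x))$ on $B_{\sqrt2\varepsilon}(x)$. If $y\in X^0$, then $z(y)=z(x)$, since the lattices agree on the overlap of the two $\sqrt2\varepsilon$-balls, and by the uniqueness in \eqref{iff:zero-energy} the edge is not a jump. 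Hence every jump edge of $V_\varepsilon(x)$ faces a neighbour $y\in X^1$ of $x$. Each $y\in X^1$ has at most $4$ attractive neighbours, because distinct neighbours of a point are at mutual distance at least $\sqrt2\varepsilon$. Every jump edge therefore belongs to a square adjacent to some $y\in X^1$, and each such $y$ accounts for at most $4$ squares, each with at most $4$ edges. This gives
\begin{align*}
\mathcal{H}^1(J_{u_\varepsilon})\le 16\,\varepsilon\,\#X^1\le 32\,E_\varepsilon(X_\varepsilon)\le C\,.
\end{align*}
The delicate point is to check that a jump edge really has to face a neighbour in $X^1$. This means excluding the case where two zero-energy atoms have touching Voronoi squares without being bonded. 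I would handle this with the argument given after \eqref{def:interpolation}: if the squares share an edge, then $|x-y|=\varepsilon$ and the required lattice point sits in the ball $B_{\sqrt2\varepsilon}(x)$.

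\emph{Step 3 (conclusion).} We now have $\sup_\varepsilon\big(\|u_\varepsilon\|_\infty+\mathcal{H}^1(J_{u_\varepsilon})+\mathcal{L}^2(\{u_\varepsilon\ne\mathbf 0\})\big)<\infty$ and $\nabla u_\varepsilon=0$. By $BV$ compactness, a subsequence satisfies $u_{\varepsilon_k}\to u$ in $L^1_{\mathrm{loc}}$. By Ambrosio's $SBV$ closure theorem, $\nabla u=0$ and $\mathcal{H}^1(J_u)<\infty$, and Fatou gives $\mathcal{L}^2(\{u\neq\mathbf0\})<\infty$. A further subsequence converges pointwise a.e., and since $\mathcal{Z}$ is closed in $\mathbb{R}^7$, we get $u\in\mathcal{Z}$ a.e. This convergence is in the topology of $\mathbb{R}^7$; it is consistent with the product topology on $\mathcal{Z}$ because $\mathbf 0$ is isolated from $\mathbb{A}\times\mathbb{T}\times\{1\}$ by the last coordinate. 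Hence $u\in PC(\mathbb{R}^2;\mathcal{Z})$.
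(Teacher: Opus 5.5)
Your proof is correct and follows the paper's argument. It bounds $\mathcal{H}^1(J_{u_\varepsilon})$ by the energy, using the same neighbour-counting the paper packages as its coercivity Lemma~\ref{lem:coercivity}; it controls $\mathcal{L}^2(\{u_\varepsilon\neq\mathbf 0\})$ by the isoperimetric inequality and then applies compactness and closure for piecewise constant $SBV$ functions. One small fix: in Step~1 use the perimeter, i.e.\ $\mathcal{H}^1(\partial^*\{u_\varepsilon\neq\mathbf 0\})$, not the topological boundary, since the latter also contains edges shared by adjacent squares carrying the same value, and these are not in $J_{u_\varepsilon}$.
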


Recall the definition of boundary regions in \eqref{def:boundary-regions}. In order to introduce the limiting energy density, we define the set of admissible configurations for the asymptotic cell-formula.   

\begin{definition}\label{def:admissible-set}
Given $z^+,z^- \in \mathcal{Z}$, $\varepsilon, \rho >0$, $x_0 \in \mathbb{R}^2$ we say that  $X \in \mathrm{Adm}_{\varepsilon}^{(z^+,z^-)}(Q^\nu_\rho(x_0))$ if it satisfies the following:
\begin{itemize}
\item[(i)] $X \in \mathcal{X}_\varepsilon(\mathbb{R}^2)$,
\item[(ii)] $X = \mathcal{L}_\varepsilon(z^\pm)$  on $\partial_{\varepsilon}^\pm Q^\nu_\rho(x_0)$,
\item[(iii)] $X = \emptyset$  on $\partial_{\varepsilon}^c Q^\nu_\rho(x_0)\,.$
\end{itemize}
\end{definition}

The following proposition introduces the limiting energy density $\varphi \colon \mathcal{Z} \times \mathcal{Z} \times \mathbb{S}^1 \to [0,+\infty)$, which is the density of our continuum limiting functional.

\begin{prop}[Density]\label{prop:density}
For every $z^+,z^-\in\mc{Z},\nu\in\mathbb{S}^{1},x_0\in\R^2$ and $\rho>0$ there exists
\begin{align}\label{eq-prop:varphi}
\varphi(z^+,z^-,\nu)=\lim_{\varepsilon\to0}\frac{1}{\rho}\min\{E_\varepsilon(X,Q^\nu_\rho(x_0))\colon X \in \mathrm{Adm}_\varepsilon^{(z^+,z^-)}(Q^\nu_\rho(x_0))\}
\end{align}
and is independent of $x_0$ and $\rho$. 
\end{prop}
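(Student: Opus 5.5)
The plan is to show that the rescaled minimal energy has the same $\limsup$ and $\liminf$ as $\varepsilon\to0$, and that neither depends on $x_0$ or $\rho$. Throughout I would use two observations about the admissible class.

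The first is that the class is built from the \emph{global} lattices $\varepsilon\mathcal{L}(z^\pm)$, not from lattices centred at $x_0$. So moving the cube does not change the boundary datum; it only changes which part of the lattice is seen.

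The second is exact scaling. Set
$$m_\varepsilon(x_0,\rho):=\min\{E_\varepsilon(X,Q^\nu_\rho(x_0))\colon X\in\mathrm{Adm}_\varepsilon^{(z^+,z^-)}(Q^\nu_\rho(x_0))\}.$$
The minimum exists because, up to finitely many choices, the admissible configurations form a compact family: the points are $\varepsilon$-separated in a bounded set, and the energy takes finitely many values. Since $\rho^{-1}\varepsilon\mathcal{L}(z)=(\varepsilon/\rho)\mathcal{L}(z)$, the boundary regions scale correctly, and $E_\varepsilon$ is $1$-homogeneous under the map $X\mapsto X/\rho$, we get
$$\frac1\rho m_\varepsilon(x_0,\rho)=m_{\varepsilon/\rho}(x_0/\rho,1).$$
Hence $\limsup_{\varepsilon}$ and $\liminf_\varepsilon$ of $\rho^{-1}m_\varepsilon(x_0,\rho)$ depend only on $x_0/\rho$.

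The main step is a gluing (subadditivity) inequality. Fix $\sigma\ll\rho$ and a point $y$ on the line $x_0+\mathbb{R}\nu^\perp$. I would cover the interface segment of $Q^\nu_\rho(x_0)$ by $k=\lfloor\rho/\sigma\rfloor$ cubes $Q^\nu_\sigma(y+j\sigma\nu^\perp)$ and put an optimal configuration for $m_\varepsilon(y+j\sigma\nu^\perp,\sigma)$ in each. Outside these small cubes I would set $X=\mathcal{L}_\varepsilon(z^\pm)$ on the respective side. I would then insert a vacuum strip of width $O(\varepsilon)$ near the interface, both on the leftover piece of length $<\sigma$ and on the lateral boundary of the big cube.

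The resulting configuration lies in $\mathcal{X}_\varepsilon(\mathbb{R}^2)$ for three reasons. Adjacent small cubes agree with the same lattice on their common faces away from the interface. Near the interface they are separated by the $20\varepsilon$-wide vacuum regions $\partial^c_\varepsilon$. Away from the interface, points of the two lattices are on the correct sides.

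The extra cost has two parts. Each junction costs $O(\varepsilon)$, since only boundedly many atoms lose neighbours there. The leftover piece costs $O(\sigma)$ plus $O(\varepsilon)$, since atoms along a lattice/vacuum edge of length $\sigma$ lose $O(1)$ bonds per $\varepsilon$. This gives
$$\limsup_{\varepsilon\to0}\frac{m_\varepsilon(x_0,\rho)}{\rho}\le\Big(1+\frac\sigma\rho\Big)\liminf_{\varepsilon\to0}\frac{m_\varepsilon(y',\sigma)}{\sigma}+C\frac\sigma\rho.$$
Here $y'$ is any point on the line, because translating the grid along the line is allowed.

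I would combine this with the scaling identity, taking $\sigma/\rho\to0$. Applying it with $(x_0,\rho)$ replaced by a large cube that contains cubes of both scales yields $\limsup=\liminf$. It also yields independence of $\rho$ and of the position of $x_0$ along lines parallel to $\nu^\perp$.

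The step I expect to be the main obstacle is independence of the \emph{normal} offset, that is, comparing centres $x_0$ and $x_0+t\nu$. Moving the cube in the $\nu$-direction shifts where the interface sits relative to the prescribed boundary layers. A competitor for one cube cannot simply be reused: its boundary layers would be required to coincide with the wrong lattice on a thin strip near the interface.

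For this I would invoke the fundamental estimate of Section~\ref{sec:5}. Take an (almost) optimal configuration in a cube of size $\sigma$ centred at $x_0+t\nu$, with $|t|\ll\sigma$. The fundamental estimate should cut it off and re-glue it to the exact lattices $\mathcal{L}_\varepsilon(z^\pm)$ and to vacuum on the boundary regions of a slightly larger cube centred on the original line. The energy error should be $O(|t|)+o(1)\sigma$, which comes from the lateral vacuum strips of length $\sim|t|$.

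Feeding this into the tiling inequality above and letting $t/\sigma\to0$ gives independence of $x_0$. Together with the previous step, this proves that the limit exists and defines $\varphi(z^+,z^-,\nu)$, independent of $x_0$ and $\rho$.
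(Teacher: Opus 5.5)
Your proposal has a genuine gap: it never compares cell problems at centers that are far apart, and this comparison is the real content of the proposition.

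\textbf{Where the tiling breaks.} Your $k$ small cubes sit at $k$ different centers $y+j\sigma\nu^\perp$. So the construction gives $m_\varepsilon(x_0,\rho)\le\sum_j m_\varepsilon(y+j\sigma\nu^\perp,\sigma)+O(\sigma)$, not $k$ copies of $m_\varepsilon(y',\sigma)$ for a single $y'$. Shifting the grid moves all cubes at once and does not help. By your own first observation, the cell problem at a center $c$ depends on the relative position of the two fixed lattices $\varepsilon\mathcal{L}(z^\pm)$ near $c$, that is, on $c/\varepsilon$ modulo their common charge-preserving periods. A translation is a symmetry of the admissible class only if it lies in $\varepsilon$ times the coincidence site lattice, which is $\{0\}$ when $\theta^+-\theta^-\notin\mathcal{G}_{\mathbb{A}}$. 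Tangential offsets are therefore no better than normal ones.

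\textbf{Small offsets are not the problem.} An offset by $|t|\ll\sigma$ in either direction is cheap: embed the small cube in a slightly larger one and fill the rest with the global lattices and a vacuum channel, at cost $O(|t|+\varepsilon)$. No fundamental estimate is needed. In any case, the cut-off of Section~\ref{sec:5} only produces converging data $z^\pm_\varepsilon\to z^\pm$, not exact ones.

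\textbf{Why far-apart centers cannot be avoided.} Even for fixed $x_0$, the center in lattice units is $x_0/\varepsilon$, which moves as $\varepsilon\to0$. Decoupling scales through your scaling identity rescales the centers as well. So the $\liminf$ on your right-hand side would have to be realized at all $k$ centers along one sequence of $\varepsilon$. A soft subadditivity argument at best shows that the supremum over centers of the normalized cell energy converges. It cannot show that this supremum has the same limit as the infimum over centers. A good configuration that uses exact touching bonds at one center cannot be transplanted to a center where the two lattices are only approximately in the same relative position, because the constraints $|x-y|\ge\varepsilon$ and $|x-y|=\varepsilon$ are rigid.

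\textbf{What is missing, and how the paper gets it.} You need the center-uniform bound of Proposition~\ref{prop:optimal-sequence}: along some $T_j\to\infty$, the normalized minimal energy is at most $\bar\varphi+\eta_j$ for all centers simultaneously. The paper obtains this from its structure theory, not softly.
\begin{itemize}
\item If $z^+$ or $z^-$ equals $\mathbf{0}$, Lemma~\ref{lem:Phi-vacuum}(i) gives the vacuum density explicitly, uniformly in the center.
\item If $\theta^+-\theta^-\notin\mathcal{G}_{\mathbb{A}}$, the touching-lattices Lemma~\ref{lem:touching-lattices} (which rests on the two-lattice reduction of Lemma~\ref{lem:subset-two-lattices}), together with Lemma~\ref{lem:Phi-vacuum}(ii), identifies the density as $\frac12|e^{-i\theta^+}\nu|_1+\frac12|e^{-i\theta^-}\nu|_1$. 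This value is attained by an explicit construction valid for every center.
\item If $\theta^+-\theta^-\in\mathcal{G}_{\mathbb{A}}$, a near-optimal configuration is translated by a coincidence-site-lattice vector, which preserves both lattices. This brings it to within bounded distance of any prescribed center at $O(1)$ cost.
\end{itemize}
With this uniformity, the tiling of Proposition~\ref{prop:limit-phi} gives $\limsup\le\bar\varphi$ for every sequence of centers. The matching lower bound is trivial because $\bar\varphi$ is defined with an infimum over centers. Proposition~\ref{prop:density} then follows by the rescaling you describe.
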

\begin{figure}
\center
\begin{tikzpicture}[scale=0.6]
\tikzset{>={Latex[width=1mm,length=1mm]}};
%\draw (-5,-5) grid (10,10);
\draw(5,2)--(6,3) node[above]{$Q^\nu_{\rho}$};
\draw(-5.5,2)--(-6.5,3) node[above]{$\partial^+_\varepsilon Q^\nu_{\rho}$};
\draw(-5.5,-2)--(-6.5,-3) node[below]{$\partial^-_\varepsilon Q^\nu_{\rho}$};
\draw(-5.5,0.2)--(-6.5,0.5) node[above]{$\partial^c_\varepsilon Q^\nu_{\rho}$};
\draw(-5,-5) rectangle(5,5);
\draw[dashed](-6,0)--(6,0);
\draw[->](5.8,0)--(5.8,1) node[right]{$\nu$};
\draw[<->](-5,6)-- node[above]{$\rho$}(5,6);
\draw[black!20,pattern=north west lines,pattern color=black!20](-5.5,1)--(-5.5,5.5)--(5.5,5.5)--(5.5,1)--(4.5,1)--(4.5,4.5)--(-4.5,4.5)--(-4.5,1)--(-5.5,1);
\draw[black!20,pattern=north west lines,pattern color=black!20](-5.5,-1)--(-5.5,-5.5)--(5.5,-5.5)--(5.5,-1)--(4.5,-1)--(4.5,-4.5)--(-4.5,-4.5)--(-4.5,-1)--(-5.5,-1);
\draw[black!40,pattern=crosshatch,pattern color=black!40](-5.5,-1)--(-5.5,1)--(-4.5,1)--(-4.5,-1)--cycle;
\draw[black!40,pattern=crosshatch,pattern color=black!40](5.5,-1)--(5.5,1)--(4.5,1)--(4.5,-1)--cycle;
\begin{scope}
\clip (-4.5,1)--(-4.5,0)--(4.5,0)--(4.5,1)--(5.5,1)--(5.5,5.5)--(-5.5,5.5)--(-5.5,1)--cycle;
\begin{scope}[scale=0.4,shift={(-12,6)}]
\foreach \i in {1,...,27}{
\foreach \j in {-2,...,10}{
% parity: 0 if even, 1 if odd
        \pgfmathtruncatemacro{\parity}{mod(\i+\j,2)}
        % choose color based on parity
        \ifnum\parity=0
          \fill (\i-2,\j-3) circle[radius=0.1];
        \else
         \draw (\i-2,\j-3) circle[radius=0.1];
        \fi
}
}
\end{scope}
\end{scope}
\clip(-5.5,-1)--(-5.5,-5.5)--(5.5,-5.5)--(5.5,-1)--(4.5,-1)--(4.5,-0.1)--(-4.5,-0.1)--(-4.5,-1)--(-5.5,-1);
\begin{scope}[scale=0.4,shift={(-12,-20)}]
\foreach \i in {0,...,35}{
\foreach \j in {0,...,19}{
% parity: 0 if even, 1 if odd
        \pgfmathtruncatemacro{\parity}{mod(\i+\j,2)}
        % choose color based on parity
        \ifnum\parity=0
          \fill[rotate=20] (\i-2,\j-3) circle[radius=0.1];
        \else
         \draw[rotate=20] (\i-2,\j-3) circle[radius=0.1];
        \fi
}
}
\fill[white,color=white,rotate=20] (14,16) circle(0.15);
\fill[white,color=white,rotate=20] (22,13) circle(0.15);
\fill[white,color=white,rotate=20] (25,12) circle(0.15);
\end{scope}
\end{tikzpicture}
\caption{Schematic Illustration of a competitor $X$ for the cell problem on $Q^\nu_\rho$ in the definition of $\varphi$. Also, this shows an illustration of the boundary regions $\partial^\pm_\varepsilon Q^\nu_\rho$, $\partial^c_\varepsilon Q^\nu_\rho$, in which we assume $X=\mathcal{L}_\varepsilon(z^\pm)$ and $X = \emptyset$, respectively. Note that this is only for illustration purposes, i.e., the scaling factor between the lattice spacing and the boundary thickness does not match the definition.}
\label{fig:CellProblemCompetitor}
\end{figure}
The limiting functional $E: PC(\R^2;\mc{Z})\to[0,+\infty)$ is defined as
\begin{align}\label{def:limiting-energy}
E(u):=\int_{J_u}\varphi(u^+,u^-,\nu_u)\, \mathrm{d}\mc{H}^{1}\,.
\end{align}
As $PC(\R^2;\mc{Z}) \subset SBV(\R^2;\mc{Z})$, recall \eqref{def:PC}, the quantities $u^+,u^-$ and $\nu_u$ are well defined. The following statement shows that $E$ can be interpreted as the effective limit of the atomistic energies $E_\varepsilon$ as the lattice spacing tends to $0$. 
\begin{Theorem}[$\Gamma$-convergence]\label{thm:Main}
It holds that $E=\Gamma(L^1_{\mathrm{loc}})$-$\lim_{\varepsilon\to0}E_\varepsilon$; more precisely it holds:
\begin{enumerate}[ref=\thethm (\roman*),label=(\roman*)]
\item[{\rm (i)}]$(\Gamma$-$\liminf$ inequality$)$ For each $u\in PC(\R^2;\mc{Z})$ and each sequence $\{X_\varepsilon\}_\varepsilon$ with $X_\varepsilon\to u$ in $L^1_{\mathrm{loc}}(\R^2)$ as $\varepsilon\to0$ it holds
$$\liminf_{\varepsilon\to0}E_\varepsilon(X_\varepsilon)\geq E(u)\,.$$
\item[{\rm (ii)}]$(\Gamma$-$\limsup$ inequality$)$  For each $u\in PC(\R^2;\mc{Z})$ there is a sequence of configurations $\{X_\varepsilon\}_\varepsilon$ such that $X_\varepsilon\to u$ in $L^1_{\mathrm{loc}}(\R^2)$ as $\varepsilon\to0$ and
$$\lim_{\varepsilon\to0}E_\varepsilon(X_\varepsilon)=E(u)\,.$$
\end{enumerate}
\end{Theorem}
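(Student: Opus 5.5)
We prove Theorem~\ref{thm:Main} along the usual route for surface-energy limits of discrete systems: a blow-up argument for the lower bound and a density argument for the upper bound. Throughout we use Proposition~\ref{prop:density}, the existence of the cell formula $\varphi$ and its independence of $x_0,\rho$, as the bridge between the two. We also use the properties of $\varphi$ established later: upper semicontinuity or continuity in $\nu$ and suitable continuity in $(z^+,z^-)$, the bound $\varphi\le C$, and the fundamental estimate of Section~\ref{sec:5}. Together with Theorem~\ref{thm:compactness} this yields the full $\Gamma$-convergence.

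\emph{Lower bound.} Let $X_\varepsilon\to u$ with $\liminf E_\varepsilon(X_\varepsilon)<\infty$, and pass to a subsequence realizing the liminf. We consider the positive Radon measures
\begin{align*}
\mu_\varepsilon:=\sum_{x\in X_\varepsilon}E_\varepsilon(x,X_\varepsilon)\,\delta_x\,,
\end{align*}
which are nonnegative because $\#\mathcal{N}^\mathrm{a}_\varepsilon(x)\le 4$ for finite energy configurations. These measures are bounded, so $\mu_\varepsilon\overset{*}{\rightharpoonup}\mu$. It then suffices to show
\begin{align*}
\frac{d\mu}{d\mathcal{H}^1\llcorner J_u}(x_0)\ge\varphi(u^+(x_0),u^-(x_0),\nu_u(x_0))
\end{align*}
for $\mathcal{H}^1$-a.e. $x_0\in J_u$. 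At such a point we blow up on $Q^\nu_\rho(x_0)$ with $\rho\to0$. By the $L^1$ convergence, $u_\varepsilon$ is close in $L^1(Q^\nu_\rho(x_0))/\rho^2$ to the two-valued jump function $u^+\chi_{Q^{\nu,+}}+u^-\chi_{Q^{\nu,-}}$. The key step is to modify $X_\varepsilon$ near $\partial Q^\nu_\rho(x_0)$ so that it becomes admissible in $\mathrm{Adm}_\varepsilon^{(z^+,z^-)}(Q^\nu_{\rho'}(x_0))$ for some $\rho'$ close to $\rho$, at an energy cost $o(\rho)$. We do this with the fundamental estimate in the form provided by Section~\ref{sec:5}:
\begin{enumerate}[label=(\alph*)]
\item choose a good layer by averaging, where $u_\varepsilon$ equals $z^\pm$ on most Voronoi cells;
\item paste in the exact lattice $\mathcal{L}_\varepsilon(z^\pm)$ outside that layer and vacuum in the lateral strips $\partial^c_\varepsilon$;
\item delete atoms in an $O(\varepsilon)$ transition strip so that the hard-core constraints of $\mathcal{X}_\varepsilon(\mathbb{R}^2)$ are not violated.
\end{enumerate}
Deleting atoms only creates surface energy proportional to the length of the cut region, and this is small by the choice of the layer. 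Translation convergence $\tau_\varepsilon\to\tau$ is handled because the boundary condition is required only asymptotically and $\varphi$ is continuous. Comparing with the minimum in \eqref{eq-prop:varphi} then gives $\mu(Q^\nu_\rho(x_0))\ge\rho\varphi-o(\rho)$.

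\emph{Upper bound.} First we treat polyhedral $u$: finitely many grains with polygonal boundaries, where the values $z_j$ may be restricted to a dense set if needed. On each segment of $J_u$ we cover by small cubes $Q^\nu_\rho$ and insert the (almost) minimizers from \eqref{eq-prop:varphi}. Because of boundary conditions (ii)--(iii), neighboring cubes match exactly: both sides coincide with the lattices $\mathcal{L}_\varepsilon(z^\pm)$ and the lateral strips are empty. Away from $J_u$ we place $\mathcal{L}_\varepsilon(z_j)$ on each grain, and near the vertices of the polygons we insert vacuum balls of radius $\eta$. The lateral vacuum strips have width $O(\varepsilon)$ and contribute $O(\varepsilon)\cdot\#\text{cubes}\to0$, while the vertex balls contribute $O(\eta)$. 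This gives $\limsup E_\varepsilon(X_\varepsilon)\le E(u)+C\eta$. For general $u\in PC$ we use density of polyhedral partitions in energy, which requires the continuity and boundedness properties of $\varphi$, together with lower semicontinuity of the $\Gamma$-limsup. The boundedness is $\varphi\le C$, since twice the vacuum density, i.e.\ the $\ell^1$-type perimeter, is always an admissible bound. Finally a diagonal argument gives the claim with equality by (i).

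The main obstacle is the cut-off step in the lower bound. Because of the rigidity of the potentials, any interpolation between $X_\varepsilon$ and the prescribed lattice must respect the minimal distances exactly. I would therefore never interpolate, but instead remove all atoms in a thin strip and rely on the fact that the removal cost is controlled by the number of atoms with nonzero energy or mismatched $u_\varepsilon$ in that strip.
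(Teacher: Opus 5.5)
\textbf{Verdict: the lower bound has a genuine gap.} Your overall architecture matches the paper's. The lower bound uses the measures $\mu_\varepsilon$ and a blow-up at $\mathcal{H}^1$-a.e.\ jump point. The upper bound uses label-preserving density of polyhedral partitions, with cell minimizers placed along the edges; the paper defers this part to \cite{FriedrichKreutzSchmidt}. The gap is exactly where you write that the translation mismatch is handled ``because $\varphi$ is continuous''.

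\textbf{$\varphi$ is not continuous in $(z^+,z^-)$.} By Theorem~\ref{thm:properties-of-varphi}(ii), $\varphi$ equals the continuous function $\frac12|e^{-i\theta^+}\nu|_1+\frac12|e^{-i\theta^-}\nu|_1$ off a null set. It is strictly smaller on certain exceptional touching pairs (Figure~\ref{fig:low-energy}(b),(c)). An arbitrarily small generic shift of $\tau^+$ removes all touching pairs and restores the full value (Lemmas~\ref{lem:prop-translations} and~\ref{lem:varphi-bar-Phi}).

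\textbf{Your cut-off cannot produce the exact lattices at cost $o(\rho)$.} In your good layer the configuration coincides with $\mathcal{L}_\varepsilon(z^\pm_\varepsilon)$. Here $z^\pm_\varepsilon\to z^\pm$ in $\mathcal{Z}$, but in general $z^\pm_\varepsilon\neq z^\pm$, in both angle and translation. Pasting $\mathcal{L}_\varepsilon(z^\pm)$ next to it creates a grain boundary of length $\sim\rho$. Any interface between two distinct lattices costs at least $\frac14|e^{-i\theta^+}\nu|_1+\frac14|e^{-i\theta^-}\nu|_1\ge\frac12$ per unit length, however close the lattices are. So this costs order $\rho$, not $o(\rho)$. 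Deleting atoms in an $O(\varepsilon)$-strip along a curve of length $\sim\rho$ also costs order $\rho$.

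What the fundamental estimate actually yields is only $\psi\ge\Phi$, with $\Phi$ the cell formula \eqref{def:Phi} with \emph{converging} boundary lattices $z^\pm_\varepsilon$ (Lemma~\ref{lem:psi-Phi}). Even that step needs the layer to be completely defect-free. The paper gets this from energy concentration at the interface plus integrality of the defect count; ``$u_\varepsilon=z^\pm$ on most cells'' is not enough. Finally, even if $\varphi$ were continuous, you would still need the $\varepsilon$-cell minima of Proposition~\ref{prop:density} to converge uniformly in the boundary data, and nothing provides that.

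\textbf{The missing idea is $\Phi\ge\varphi$}, i.e.\ that converging boundary lattices can be replaced by the fixed ones. This is the core of Sections~\ref{sec:6}--\ref{sec:8}:
\begin{enumerate}
\item Almost-minimizers are reduced to $X_T^+\cup X_T^-$ with $X_T^\pm\subset\mathcal{L}(z_T^\pm)$ (Lemma~\ref{lem:subset-two-lattices}).
\item Energy strictly below the vacuum sum forces $\theta^+_T-\theta^-_T$ to be a fixed rational angle with bounded denominators (Lemma~\ref{lem:touching-lattices}). Hence both angles can be rotated to exactly $\theta^\pm$ (Lemma~\ref{lem:fixed-rotations}).
\item Closedness of touching points (Lemma~\ref{lem:closedness-touching-points}) lets one translate $X_T^+$ and $X_T^-$ \emph{separately} by $e^{i\theta^\pm}(\tau^\pm-\tau^\pm_T)$. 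This neither breaks attractive bonds nor violates the hard-core constraints (Lemma~\ref{lem:varphi-bar-Phi}).
\end{enumerate}

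\textbf{Upper bound.} Only continuity of $\varphi$ in $\nu$ (from convexity) and boundedness are available, and only these are needed. The density theorem preserves the labels, so you should not try to restrict the values $z_j$ to a dense set; with the discontinuity of $\varphi$ in the labels, that step would not be justified.
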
\stepcounter{thm}

\begin{rem}[Extension to $L^1$]\label{rem:extension-L1}
Notice that by defining $E_\varepsilon:L^1(\R^2;\mc{Z})\to[0,+\infty]$ by
\begin{align*}
E_\varepsilon(u):=\begin{cases}
E_\varepsilon(X)&\text{if there exists }X \in \mathcal{X}_\varepsilon(\mathbb{R}^2) \text{ such that }u=u_\varepsilon^X,\\
+\infty&\text{otherwise}
\end{cases}
\end{align*}
and extending $E$ to all of $L^1(\R^2;\mc{Z})$ by setting $E(u)=+\infty$ if $u\in L^1(\R^2;\mc{Z})\setminus PC(\R^2;\mc{Z})$, in view of Theorem~\ref{thm:Main}, this implies  $E=\Gamma(L^1_{\mathrm{loc}})$-$\lim_{\varepsilon\to0}E_\varepsilon$. 
\end{rem}
To end the section, we gather the properties of the limiting energy density. To this end, we introduce the $\ell^1$-norm in $\mathbb{R}^2$ defined by
\begin{align}\label{eq:ellone-norm}
\vert\nu\vert_1 = |\nu_1| + |\nu_2|\,.
\end{align}
\begin{Theorem}[Properties of $\varphi$]\label{thm:properties-of-varphi}
Let $\varphi$ be the density in Proposition~\ref{prop:density} extended to a function defined on $\mc{Z}\times\mc{Z}\times\R^2$, which is positively $1$-homogeneous in the third variable. Then it holds:
\begin{enumerate}[ref=\thethm (\roman*),label=(\roman*)]
\item[{\rm (i)}](Solid-vacuum energy) It holds $\varphi(z,{\bf 0},\nu)=\varphi({\bf 0},z,-\nu)=\frac{1}{2}\vert e^{-i\theta}\nu\vert_1$ for all $z=(\theta,\tau,1)\in\mc{Z}\setminus\{{\bf 0}\}$ and $\nu\in\mathbb{S}^{1}$.
\item[{\rm (ii)}](Solid-solid energy) There exists a null-set $\mc{N}$ in $(\mc{Z}\setminus\{{\bf 0}\})^2$ (with respect to its Haar-measure) such that for all pairs $(z^+,z^-)\in(\mc{Z}\setminus\{{\bf 0}\})^2\setminus\mc{N}, z^+\neq z^-$ and $\nu\in\mathbb{S}^1$ it holds
\begin{align*}
\varphi(z^+,z^-,\nu)=\frac{1}{2} \vert e^{-i\theta^+}\nu\vert_1+\frac{1}{2}\vert e^{-i\theta^-}\nu\vert_1\,.
\end{align*}
Further, for pairs $(z^+,z^-)\in\mc{N},z^+\neq z^-$, and $\nu\in\mathbb{S}^1$ it holds
\begin{align*}
\frac{1}{4} \vert e^{-i\theta^+}\nu\vert_1+\frac{1}{4}\vert e^{-i\theta^-}\nu\vert_1\leq \varphi(z^+,z^-,\nu)\leq \frac{1}{2} \vert e^{-i\theta^+}\nu\vert_1+\frac{1}{2}\vert e^{-i\theta^-}\nu\vert_1\,.
\end{align*}
where $z^\pm=(\theta^\pm,\tau^\pm,1)$. The set $\mathcal{N}$ satisfies
\begin{align*}
\mc{N}\subset\{(z^+,z^-)\in(\mc{Z}\setminus\{\textbf{0}\})^2\}\colon\theta^+-\theta^-\in\mc{G}_{\mathbb{A}},e^{i\theta^+}\tau^+-e^{i\theta^-}\tau^-\in\mc{G}_{\mathbb{T}}(\theta^+-\theta^-)\}\,,
\end{align*}
 where $\mc{G}_{\mathbb{A}}\subset\mathbb{A}$ is a countable set and, for each $\theta\in\mc{G}_{\mathbb{A}}$, $\mc{G}_{\mathbb{T}}(\theta)\subset\R^2$ is contained in a finite union of spheres.
\item[{\rm (iii)}](Convexity) The mapping $\nu\mapsto\varphi(z^+,z^-,\nu)$ is convex for all $z^+,z^-\in\mc{Z}$.
\item[{\rm (iv)}](Translational invariance) For all $z^\pm=(\theta^\pm,\tau^\pm,1)\in\mc{Z},\nu\in\mathbb{S}^{1},$ and $\tau\in\mathbb{T}$ it holds
$$\varphi((\theta^+,\tau^++e^{-i\theta^+}\tau,1),(\theta^-,\tau^-+e^{-i\theta^-}\tau,1),\nu)=\varphi(z^+,z^-,\nu)\,. $$
\item[{\rm (v)}](Rotational invariance) For all $z^\pm=(\theta^\pm,\tau^\pm,1),\nu\in\mathbb{S}^{1}$ and $\theta\in\mathbb{A}$ it holds
$$\varphi((\theta^++\theta,\tau^+,1),(\theta^-+\theta,\tau^-,1),e^{i\theta}\nu)=\varphi(z^+,z^-,\nu)\,.$$
\end{enumerate}
\end{Theorem}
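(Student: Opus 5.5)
The easy parts come first: (iv), (v) and (iii). Then the solid-vacuum formula (i), and finally the generic solid-solid formula (ii), which I expect to be the hard part.

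\emph{Invariances (iv), (v).} These follow directly from the cell formula in Proposition~\ref{prop:density}. The energy $E_\varepsilon$ is invariant under rigid motions of $\mathbb{R}^2$. If $X$ is admissible for $(z^+,z^-)$ on $Q^\nu_\rho(x_0)$, then:
\begin{itemize}
\item the rotated configuration $e^{i\theta}X$ is admissible for the rotated data on $Q^{e^{i\theta}\nu}_\rho(e^{i\theta}x_0)$;
\item the translated configuration $X+\varepsilon e^{i\theta^\pm}\tau'$ is admissible for the translated data.
\end{itemize}
The independence of $\varphi$ from $x_0$ and $\rho$ then gives equality of the minima. One subtlety is that translating by $\varepsilon\tau$ with a fixed $\tau$ shifts the lattices consistently, so the microscopic translation must be written in the frame of each lattice; this matches the formula in (iv).

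\emph{Convexity (iii).} Once (i) and (ii) are known, convexity is immediate wherever $\varphi$ is given by $\ell^1$-type norms. For the general case I would use the standard argument: the $\Gamma$-limit $E$ is lower semicontinuous on $PC$, which forces $BV$-ellipticity of $\varphi$. Testing with a flat interface approximated by a zigzag of two directions $\nu_1,\nu_2$ (both grains fixed) gives $\varphi(\cdot,\cdot,\nu_1+\nu_2)\le\varphi(\cdot,\cdot,\nu_1)+\varphi(\cdot,\cdot,\nu_2)$ for the $1$-homogeneous extension. Together with $1$-homogeneity this is exactly convexity.

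\emph{Solid-vacuum (i).} By (v) I reduce to $\theta=0$.
\begin{itemize}
\item \emph{Upper bound.} Take $\varepsilon\mathcal{L}(z)$ restricted to the half-cube. It is a staircase approximation of the line with normal $\nu$. Each missing bond costs $\varepsilon/2$, and counting broken bonds crossing the interface gives $\tfrac12|\nu|_1\rho$, up to boundary effects of order $o(1)$.
\item \emph{Lower bound.} I would use the graph-theoretic structure. Each atom with fewer than $4$ neighbours carries energy $\varepsilon/2$ per missing bond. By the interpolation \eqref{def:interpolation}, atoms of zero energy near the interface cover the lattice part. Projecting onto the two lattice directions of $z^+$, every column (resp. row) of $\mathcal{L}_\varepsilon(z^+)$ crossing the interface must terminate. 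The bond graph is planar and finite energy forbids same-charge atoms closer than $\sqrt2\varepsilon$, so each terminated column contributes at least one dangling bond, i.e.\ energy $\ge\varepsilon/2$. There are roughly $\rho|\nu_1|/\varepsilon$ columns and $\rho|\nu_2|/\varepsilon$ rows, which yields the bound.
\end{itemize}

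\emph{Solid-solid (ii).}
\begin{itemize}
\item \emph{Upper bound.} Place the two lattices on either side with a gap, which gives the sum of the vacuum energies.
\item \emph{Lower bound in general.} The quarter bound comes from the structure result of Section~\ref{sec:6}: almost minimizers consist of exactly the two lattices, with no interpolating layer. A broken bond at the interface can be compensated by a cross-interface bond only if it joins two atoms, one from each lattice. Each such bond saves at most one half-bond per side, which leaves at least half of the vacuum energy, hence the factor $\tfrac14$.
\item \emph{Genericity.} An exact equality requires that cross bonds occur with positive density per unit length. This means a positive density of pairs $(p,q)\in\mathcal{L}(z^+)\times\mathcal{L}(z^-)$ with $|p-q|=1$. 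I would show this forces $\theta^+-\theta^-$ to be a rational rotation with non-trivial coincidence site lattice; this gives the countable set $\mc{G}_\mathbb{A}$. For each such angle, the translation difference must lie on finitely many circles of radius $1$ around points of the finite coincidence structure; this gives $\mc{G}_\mathbb{T}$. This is where I expect the main obstacle: converting ``positive density of unit-distance cross pairs'' into periodicity. I would first try an ergodic/equidistribution argument. For irrational relative rotations, the values $p-q$ modulo the lattice equidistribute, so the set of unit-distance coincidences has zero density, and the interface energy is asymptotically the full sum.
\end{itemize}
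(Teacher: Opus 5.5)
Your parts (iv) and (v) and both upper bounds are fine. There are genuine gaps in the lower bound of (i), in the $\frac14$ bound of (ii), and above all in the genericity step of (ii).

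\textbf{Genericity in (ii).} The equidistribution step would fail. Zero \emph{area} density of unit-distance pairs $(p,q)\in\mathcal{L}(z^+)\times\mathcal{L}(z^-)$ for an irrational misorientation says nothing about a competitor. Its interface has only $O(T)$ atoms in a cube of area $T^2$, and the path $\partial X^\pm_T$ is chosen by the almost-minimizer. So a set of area density zero can still supply $cT$ touching pairs along it, and equidistribution along a fixed line does not apply to an adaptively chosen staircase.

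The paper instead argues combinatorially on the interface itself (Lemma~\ref{lem:touching-lattices}):
\begin{enumerate}
\item After the reduction $X_T=X_T^+\cup X_T^-$ with $X_T^\pm\subset\mathcal{L}(z^\pm)$, an energy deficit $\eta$ forces $\gtrsim\eta T$ touching points.
\item Since $\partial X_T^\pm$ is a connected path with at most $5T$ atoms and linear lower density, most touching points have another one within distance $R\sim\eta^{-1}$.
\item Pigeonholing over the $\lesssim R^4$ possible offset pairs $(\xi_1,\xi_2)$, and using the lower density again, gives three pairs with the same offsets within distance $\rho\sim\eta^{-5}$.
\item A quadrilateral with two opposite unit sides and the other two sides prescribed has at most two shapes, so two of these are translates. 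This gives $e^{i\theta^-}(b_1^1-b_1^2)=e^{i\theta^+}(b_2^1-b_2^2)$ with nonzero Gaussian integers of size at most $2\rho$, i.e.\ $\theta^+-\theta^-\in\mathcal{G}_{\mathbb{A}}$. For an irrational angle this is the contradiction that one difference vector $p-q$ is realized by two distinct pairs.
\end{enumerate}
Only then do translations enter. If $\theta^+-\theta^-\in\mathcal{G}_{\mathbb{A}}$ but there is no opposite-charge unit pair, the two-lattice energy splits into two vacuum problems (Lemma~\ref{lem:varphi-bar-Phi}, case (a)). Lemma~\ref{lem:prop-translations} then places the remaining translation differences on finitely many unit circles.

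\textbf{Lower bounds in (i) and the $\frac14$ bound.} Your lower bound in (i) has a related gap: slicing along lattice lines is only sharp for configurations contained in $\mathcal{L}(z^+)$, and you never reduce to that case. Walking down a line from $\partial^+_\varepsilon Q^\nu_\rho$ does give a first lattice point $y\in X$ with $y-\varepsilon d\notin X$ and $\#\mathcal{N}^{\mathrm{a}}_\varepsilon(y)\le 3$. But the same $y$ can terminate both its row and its column while having a third, off-lattice neighbour on the arc of $\partial B_\varepsilon(y)$ between $y-\varepsilon d_1$ and $y-\varepsilon d_2$; planarity and the $\sqrt2\varepsilon$ constraint permit this. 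Such a $y$ pays only $\frac{\varepsilon}{2}$ for two lines, so your count yields only $\frac14|e^{-i\theta}\nu|_1$. Excluding such interpolating atoms is exactly Lemma~\ref{lem:subset-two-lattices} (Euler formula plus the face-defect estimate). The paper applies it before slicing, so that distinct lines give distinct missing lattice bonds.

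Likewise, for the factor $\frac14$ in (ii), ``each cross bond saves at most one half-bond per side'' only bounds the number of cross bonds by $\min(m^+,m^-)$, with $m^\pm$ the missing bonds, hence $E\ge\frac12|m^+-m^-|$. You need items (iv)--(v) of Lemma~\ref{lem:subset-two-lattices}: two consecutive atoms of $\partial X_T^\pm$ cannot both be saturated, and boundary atoms miss one bond on average.

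\textbf{Convexity (iii).} The lower-semicontinuity route is legitimate if you use only the $\Gamma$-liminf together with recovery sequences for polygonal interfaces. The full $\Gamma$-limsup is imported from \cite{FriedrichKreutzSchmidt} and may rely on regularity of $\varphi$ in $\nu$, which would make the argument circular. The polygonal recovery sequence you need amounts to the paper's direct zigzag gluing construction anyway.
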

\begin{center}

\begin{figure}[htp]
\begin{tikzpicture}[scale=.5]

\begin{scope}[shift={(-22,0)}]

\draw(4.25,-3.5) node[anchor=north]{${\rm (a)}$};

\clip (-.6,-3.25) rectangle (8.75,4.75);

\begin{scope}[shift={(-2,-1)}]

\foreach \j in {0,...,10}{
\foreach \l in {0,...,2}{
\draw[ultra thin,gray](\j-1+4*\l,-2+3*\l)--++(0,-15);
\draw[ultra thin,gray](-1+4*\l,-\j-2+3*\l)--++(15,0);
}
}

\begin{scope}[rotate=4]

\foreach \j in {-2,...,15}{
\draw[ultra thin,gray](3,1)++(120:1) ++(.8*\j,.6*\j)--++(-6,8);
}
\foreach \j in {0,...,15}{
\draw[ultra thin,gray](3,1)++(120:1) ++(-.6*\j,.8*\j)--++(-8,-6);
\draw[ultra thin,gray](3,1)++(120:1) ++(-.6*\j,.8*\j)--++(16,12);
}

\foreach \j in {-10,...,20}{
\foreach \k in {0,...,10}{

\draw[fill=black](3,1)++(120:1) ++(1.6*\j,1.2*\j)++(-1.2*\k,1.6*\k) circle(.1);

\draw[fill=black](3,1)++(120:1)++(.8,.6)++(-.6,.8)++(1.6*\j,1.2*\j)++(-1.2*\k,1.6*\k) circle(.1);

\draw[fill=white](3,1)++(120:1)++(.8,.6)++(1.6*\j,1.2*\j)++(-1.2*\k,1.6*\k) circle(.1);

\draw[fill=white](3,1)++(120:1)++(-.6,.8)++(1.6*\j,1.2*\j)++(-1.2*\k,1.6*\k) circle(.1);

}
}

\end{scope}

\foreach \j in {0,...,10}{
\foreach \k in {0,...,10}{
\foreach \l in {1}{
\draw[fill=white](2*\j+4*\l-1,-2*\k+3*\l-2) circle(.1);
\draw[fill=black](2*\j+1+4*\l-1,-2*\k+3*\l-2) circle(.1);
\draw[fill=black](2*\j+4*\l-1,-2*\k-1+3*\l-2) circle(.1);
\draw[fill=white](2*\j+1+4*\l-1,-2*\k-1+3*\l-2) circle(.1);
}

}
}

\foreach \j in {0,...,10}{
\foreach \k in {0,...,10}{
\foreach \l in {0,2}{
\draw[fill=black](2*\j+4*\l-1,-2*\k+3*\l-2) circle(.1);
\draw[fill=white](2*\j+1+4*\l-1,-2*\k+3*\l-2) circle(.1);
\draw[fill=white](2*\j+4*\l-1,-2*\k-1+3*\l-2) circle(.1);
\draw[fill=black](2*\j+1+4*\l-1,-2*\k-1+3*\l-2) circle(.1);
}

}
}

\end{scope}

\end{scope}

\begin{scope}[shift={(-12,0)}]

\draw(4.25,-3.5) node[anchor=north]{${\rm (b)}$};

\clip (-.59,-3.25) rectangle (8.75,4.75);

\begin{scope}[shift={(-2,-1)}]

\draw[ultra thin,gray](3,1)--++(-.6,.8);

\draw[ultra thin,gray](7,4)--++(-.6,.8);

\foreach \j in {-2,...,15}{
\draw[ultra thin,gray](3,1)++(-.6,.8) ++(.8*\j,.6*\j)--++(-6,8);
}
\foreach \j in {0,...,15}{
\draw[ultra thin,gray](3,1)++(-.6,.8) ++(-.6*\j,.8*\j)--++(-8,-6);
\draw[ultra thin,gray](3,1)++(-.6,.8) ++(-.6*\j,.8*\j)--++(16,12);
}

\foreach \j in {0,...,10}{
\foreach \l in {0,...,2}{
\draw[ultra thin,gray](\j-1+4*\l,-2+3*\l)--++(0,-15);
\draw[ultra thin,gray](-1+4*\l,-\j-2+3*\l)--++(15,0);
}
}

\foreach \j in {-10,...,20}{
\foreach \k in {0,...,10}{

\draw[fill=black](3,1)++(-.6,.8) ++(1.6*\j,1.2*\j)++(-1.2*\k,1.6*\k) circle(.1);

\draw[fill=black](3,1)++(-.6,.8)++(.8,.6)++(-.6,.8)++(1.6*\j,1.2*\j)++(-1.2*\k,1.6*\k) circle(.1);

\draw[fill=white](3,1)++(-.6,.8)++(.8,.6)++(1.6*\j,1.2*\j)++(-1.2*\k,1.6*\k) circle(.1);

\draw[fill=white](3,1)++(-.6,.8)++(-.6,.8)++(1.6*\j,1.2*\j)++(-1.2*\k,1.6*\k) circle(.1);

}
}

\foreach \j in {0,...,10}{
\foreach \k in {0,...,10}{
\foreach \l in {1}{
\draw[fill=white](2*\j+4*\l-1,-2*\k+3*\l-2) circle(.1);
\draw[fill=black](2*\j+1+4*\l-1,-2*\k+3*\l-2) circle(.1);
\draw[fill=black](2*\j+4*\l-1,-2*\k-1+3*\l-2) circle(.1);
\draw[fill=white](2*\j+1+4*\l-1,-2*\k-1+3*\l-2) circle(.1);
}

}
}

\foreach \j in {0,...,10}{
\foreach \k in {0,...,10}{
\foreach \l in {0,2}{
\draw[fill=black](2*\j+4*\l-1,-2*\k+3*\l-2) circle(.1);
\draw[fill=white](2*\j+1+4*\l-1,-2*\k+3*\l-2) circle(.1);
\draw[fill=white](2*\j+4*\l-1,-2*\k-1+3*\l-2) circle(.1);
\draw[fill=black](2*\j+1+4*\l-1,-2*\k-1+3*\l-2) circle(.1);
}

}
}

\end{scope}

\end{scope}

\begin{scope}[shift={(-2,0)}]

\draw(4.25,-3.5) node[anchor=north]{${\rm (c)}$};

\clip (-.25,-3.25) rectangle (8.75,4.75);

\foreach \j in {-5,...,10}{

\draw[ultra thin,gray](90:1)++(45:\j)++(-45:\j)--++(45:10);
\draw[ultra thin,gray](90:1)++(45:\j)++(-45:\j)--++(135:10);

\draw[ultra thin,gray](45:\j)++(-45:\j)--++(90:1);

\draw[ultra thin,gray](45:\j)++(-45:\j)--++(-45:10);
\draw[ultra thin,gray](45:\j)++(-45:\j)--++(-135:10);
}
\foreach \j in {0,...,10}{
\foreach \k in {0,...,10}{

\draw[fill=black](45:\j)++(-45:\j)++(-45:2*\k) circle(.1);
\draw[fill=black](45:\j)++(-45:\j)++(-135:2*\k) circle(.1);
\draw[fill=white](45:\j)++(-45:\j)++(-45:2*\k+1) circle(.1);
\draw[fill=white](45:\j)++(-45:\j)++(-135:2*\k+1) circle(.1);

\draw[fill=white](90:1)++(45:\j)++(-45:\j)++(45:2*\k) circle(.1);
\draw[fill=white](90:1)++(45:\j)++(-45:\j)++(135:2*\k) circle(.1);
\draw[fill=black](90:1)++(45:\j)++(-45:\j)++(45:2*\k+1) circle(.1);
\draw[fill=black](90:1)++(45:\j)++(-45:\j)++(135:2*\k+1) circle(.1);
}
}

\end{scope}

\end{tikzpicture}
\caption{Different scenarios for optimal interfaces. Edges are depicted between points at distance $1$. (a): Two lattices for which $\varphi$ is equal to twice the surface energy between lattice and vacuum. (b): Two lattices for which $\varphi$ is less than twice the energy between lattice and vacuum. (c): Two lattices for which the lower bound in Theorem~\ref{thm:properties-of-varphi}(ii) is attained. }
\label{fig:low-energy}
\end{figure}
\end{center}
The interfacial energy between a lattice and vacuum, see {\rm (i)}, has already been characterized for crystallized configurations in \cite{Alicandro-Cicalese-Braides}. The most interesting part lies in {\rm (ii)}. It states that generically, the interfacial energy between two energies is equal to the sum of the interfacial energies of the two lattices interacting with vacuum. In the non-generic case, the two lattices may have many pairs of points at distance $1$, which allow to reduce the interfacial energy. Optimal interfaces in the generic and non-generic case are illustrated in Figure~\ref{fig:low-energy}. We remark that the complete characterization of $\varphi$ seems to be a diﬃcult issue which is beyond the
scope of the present analysis. Note that (iv) and (v)
express the fact that both the atomistic and the continuum model are frame indiﬀerent. Finally, note that the lower bound in (ii) is asymptotically sharp. In order to see this, let
$z^-=(\frac{\pi}{4},(0,0),1),z^+=(\frac{\pi}{4},(\frac{1}{\sqrt{2}},\frac{1}{\sqrt{2}}),1),\nu=e_2$ and $y_T=0$ for all $T$. We define $X_T=(Q^\nu_{T+10}\setminus\partial^c_1Q^\nu_T)\cap((\mc{L}(z^+)\cap\{x\in\R^2\colon\langle x,e_2\rangle\geq1\})\cup(\mc{L}(z^-)\cap\{x\in\R^2\colon\langle x,e_2\rangle\leq0\}))$. In this case, it holds $X_T \in \mathrm{Adm}_1^{(z^+,z^-)}(Q_T^\nu)$ and $E_1(X_T,Q^\nu_T)\leq \frac{T}{\sqrt{2}}+C$ for a $C>0$ independent of $T$. Together with Lemma~\ref{lem:Phi-vacuum}(ii), this shows 
\begin{align*}
\varphi(z^+,z^-,\nu)=\lim_{T\to\infty}\frac{1}{T}E_1(X_T, Q^\nu_T)=\frac{1}{\sqrt{2}}=\frac{1}{2}\vert e^{-i\frac{\pi}{4}}e_2\vert_1 = \frac{1}{4}\vert e^{-i\theta^+}e_2\vert_1 + \frac{1}{4}\vert e^{-i\theta^-}e_2\vert_1\,.
\end{align*}
The construction is illustrated in Figure~\ref{fig:low-energy}(c).

The compactness and $\Gamma$-convergence results will be proved in Section~\ref{sec:3}. The properties of
several cell formulas related to $\varphi$, which are fundamental for the proofs, are postponed to Sections~\ref{sec:4}-\ref{sec:7}. Finally, the proofs of Proposition~\ref{prop:density} and Theorem~\ref{thm:properties-of-varphi} are given in Section~\ref{sec:8}.

\section{Proof of Main Results}\label{sec:3}

\subsection{Proof of Compactness}\label{sec:3.1}
\begin{proof}[Proof of Theorem~\ref{thm:compactness}.]
Let $\{X_\varepsilon\}_\varepsilon$ be given as in the statement and let $\{u_\varepsilon\}_\varepsilon$ denote their corresponding PC-functions given by \eqref{def:interpolation}. We notice that $\mc{Z}$ can be embedded into $\R^N$ and is closed and bounded via this embedding. This implies that the $L^\infty$-norm of $\{u_\varepsilon\}_\varepsilon$ can be uniformly bounded by some finite constant. For $r>0$, let $B_r$ be the ball of radius $r$ centered at the origin, then, by Lemma~\ref{lem:coercivity} and the assumption $\sup_{\varepsilon>0}E_\varepsilon(X_\varepsilon)<+\infty$ we can uniformly bound $\mc{H}^{1}(J_{u_\varepsilon}\cap B_r)$. Thus, for each $r\in\N$ we can apply the compactness theorem for piecewise constant functions, see \cite[Theorem 4.25]{Ambrosio-Fusco-Pallara:2000}. Therefore, there exists a subsequence $\{u_{\varepsilon_k^1}\}_{k\in\N}$ of $\{u_\varepsilon\}_\varepsilon$ and $u^1\in PC(B_1;\mc{Z})$, such that $u_{\varepsilon_k^1}\to u^1$ as $k\to+\infty$ in measure and thus also in $L^1(B_1;\mc{Z})$. Inductively we can find a subsequence $\{\varepsilon_k^n\}_{k\in\N}\subseteq\{\varepsilon_k^{n-1}\}_{k\in\N}$, such that $u_{\varepsilon_k^n}\to u^n$ in $L^1(B_n;\mc{Z})$ as $k\to +\infty$ for some $u^n\in PC(B_n;\mc{Z})$. Defining $\varepsilon_k:=\varepsilon_k^k$ for $k\in\N$ we obtain the following
$$u_{\varepsilon_k}\to u^n\text{ in } L^1(B_n;\mc{Z})\quad \text{as } k \to +\infty  \quad\forall n\in\N$$
and by uniqueness of limits we get $u^{n+1}=u^n$ on $B_n$, where by lower semicontinuity $\mc{H}^{1}(J_{u^n}\cap B_n)\leq C$ for some constant independent of $n$. Thus, we obtain a function $u\colon\R^d\to\mc{Z}$ with $u=u^n$ on $B_n$ for all $n\in\N$, such that $u_{\varepsilon_k}\to u$ in $L^1_{\mathrm{loc}}(\R^d;\mc{Z})$ as $k\to+\infty$ and $\mc{H}^{1}(J_u)<+\infty$. We still need to show that $u\in PC(\R^2;\mc{Z})$, so we need to check that $\mc{L}^2(\{u\neq\textbf{0}\})<+\infty$. Using Lemma~\ref{lem:coercivity} with $A=\R^2$, the isoperimetric inequality and lower semicontinuity of $\mc{L}^2(\{u\neq\textbf{0}\})$ with respect to strong $L^1_{\mathrm{loc}}$-convergence, we see that
\begin{align*}
(\mc{L}^2(\{u\neq\textbf{0}\}))^\frac{1}{2}&\leq\liminf_{k\to\infty}(\mc{L}^2(\{u_{\varepsilon_k}\neq\textbf{0}\}))^\frac{1}{2}\leq\liminf_{k\to\infty}C\mc{H}^{1}(\partial^*\{u_{\varepsilon_k}\neq\textbf{0}\})\\
&\leq\liminf_{k\to\infty}C\mc{H}^{1}(J_{u_{\varepsilon_k}})\leq\liminf_{k\to\infty}CE_{\varepsilon_k}(X_{\varepsilon_k})<+\infty\,.
\end{align*}
Thus, we have $u\in PC(\R^2;\mc{Z})$. This concludes the proof.
\end{proof}

\subsection{Lower Bound}\label{sec:3.2}
In this subsection, we will prove Theorem~\ref{thm:Main}(i). For the proof, it is useful to introduce a different cell formula than the limiting energy density. For this cell formula, we require that the interpolations defined in \eqref{def:interpolation} converge strongly in $L^1$ to the function $u^\nu_{z^+,z^-}\in PC_{\mathrm{loc}}(\R^2;\mc{Z})$ defined as
\begin{align}\label{def:pure-jump-function}
u^\nu_{z^+,z^-}(x):=\begin{cases}z^+ & \text{if }\langle x,\nu\rangle\geq0\,,\\ z^- &\text{if }\langle x,\nu\rangle<0\,\end{cases}
\end{align}
for $z^\pm\in\mc{Z}$ and $\nu\in\mathbb{S}^{1}$. More precisely, for $z^+,z^-\in\mc{Z},\nu\in\mathbb{S}^{1}$  we introduce
\begin{align}\label{def:psi}
\begin{split}
\psi(z^+,z^-,\nu):=\inf\Big\{\liminf_{\varepsilon\to0}&E_\varepsilon(X_\varepsilon, Q^\nu(y_\varepsilon))\colon y_\varepsilon\in\R^2\,\\
&\lim_{\varepsilon\to0}\int_{Q^\nu}\vert u_\varepsilon(x+y_\varepsilon)-u^\nu_{z^+,z^-}(x)\vert\,\mathrm{d}x=0\Big\}\,,
\end{split}
\end{align}
where $u_\varepsilon$ is the function associated to $X_\varepsilon$ as defined in \eqref{def:interpolation}. The two densities are related as we will show in Sections \ref{sec:5}-\ref{sec:8}. Their relation is described in the following proposition:
\begin{prop}[Relation of $\psi$ and $\varphi$]\label{prop:psi-varphi}
For all $z^+,z^-\in\mc{Z}$ and $\nu\in\mathbb{S}^{1}$ it holds that
\begin{align*}
\psi(z^+,z^-,\nu)\geq\varphi(z^+,z^-,\nu)\,.
\end{align*}
\end{prop}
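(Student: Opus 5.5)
The plan is to take a sequence $X_\varepsilon$ that is almost optimal for $\psi(z^+,z^-,\nu)$ and turn it, at asymptotically negligible energy cost, into an admissible competitor for the cell problem defining $\varphi$. The tool is the fundamental estimate of Section~\ref{sec:5}. We may assume $\psi(z^+,z^-,\nu)<+\infty$. Fix $\eta>0$ and choose $y_\varepsilon$ and $X_\varepsilon$ with $\liminf_\varepsilon E_\varepsilon(X_\varepsilon,Q^\nu(y_\varepsilon))\le\psi+\eta$ and $u_\varepsilon(\cdot+y_\varepsilon)\to u^\nu_{z^+,z^-}$ in $L^1(Q^\nu)$. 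We pass to a subsequence realizing the liminf. Translating the configuration by $-y_\varepsilon$ does not change the energy, so we may take $y_\varepsilon=0$. By Proposition~\ref{prop:density}, $\varphi$ is independent of $\rho$ and $x_0$, so it suffices to compare with the cell problem on $Q^\nu_{\rho}$ for some fixed $\rho<1$ close to $1$.

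\textbf{Step 1 (boundary layer).} We select a good layer near $\partial Q^\nu_\rho$. Since $u_\varepsilon\to u^\nu_{z^+,z^-}$ in $L^1$, the set of points in $Q^\nu\setminus Q^\nu_{1-2\delta}$ where $u_\varepsilon$ differs from the target has vanishing measure. By a slicing/averaging argument over $\rho\in(1-2\delta,1-\delta)$, we can find $\rho_\varepsilon$ such that two things hold. First, in the strip $\partial_\varepsilon Q^\nu_{\rho_\varepsilon}$, the configuration coincides with $\mathcal{L}_\varepsilon(z^\pm)$ except on a set of $o(1/\varepsilon)$ lattice cells. Second, the energy of $X_\varepsilon$ in that strip is $o(1)$. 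This is a Fubini argument over roughly $\delta/\varepsilon$ disjoint strips of width $\sim\varepsilon$. Here we also use that $u_\varepsilon=z$ on a Voronoi cell forces zero-energy atoms locally equal to $\varepsilon\mathcal{L}(z)$ on a ball, via \eqref{iff:zero-energy}. A detail to handle is that the limit lattice is exactly $z^\pm$ while $u_\varepsilon$ may take values near, but not equal to, $z^\pm$. Rigidity of zero-energy neighbourhoods (Lemma~\ref{lemma:cell-energy-zero}) shows that connected zero-energy regions carry a single exact lattice. Therefore the $L^1$ convergence forces the large components in the strip to carry exactly $z^\pm$, up to small bad sets.

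\textbf{Step 2 (cut-off).} We apply the fundamental estimate. Inside $Q^\nu_{\rho_\varepsilon-10\varepsilon}$ we keep $X_\varepsilon$. In $\partial^\pm_\varepsilon Q^\nu_{\rho_\varepsilon}$ we replace it by $\varepsilon\mathcal{L}(z^\pm)$, and in $\partial^c_\varepsilon Q^\nu_{\rho_\varepsilon}$ we delete all atoms. Outside we may place anything, since only the energy inside the cube counts. The resulting $\tilde X_\varepsilon$ must lie in $\mathcal{X}_\varepsilon(\mathbb{R}^2)$, so we remove all atoms of $X_\varepsilon$ in a thin layer that would violate the hard-core constraints with the inserted lattice. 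This costs energy proportional to the number of removed or mismatched atoms. On the $\partial^\pm$ parts that number is $o(1/\varepsilon)$ by Step 1, since the two agree there. On the lateral part $\partial^c$ it is $O(1)$ atoms, because that region has size $\sim\varepsilon\times\varepsilon$ around the interface height. Deleting the atoms in $\partial^c$ creates a vacuum boundary of length $\sim 20\varepsilon$, costing $O(\varepsilon)$. Altogether,
\begin{align*}
E_\varepsilon(\tilde X_\varepsilon,Q^\nu_{\rho_\varepsilon})\le E_\varepsilon(X_\varepsilon,Q^\nu)+o(1)\,,
\end{align*}
where the error is uniform after averaging.

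\textbf{Step 3 (conclusion).} Since $\tilde X_\varepsilon\in\mathrm{Adm}^{(z^+,z^-)}_\varepsilon(Q^\nu_{\rho_\varepsilon})$, dividing by $\rho_\varepsilon\ge 1-2\delta$ and using Proposition~\ref{prop:density} gives $\varphi\le(\psi+\eta)/(1-2\delta)$. Because $\rho_\varepsilon$ varies with $\varepsilon$, we either compare to a fixed $\rho$ through a small enlargement of the cube at $O(\varepsilon)$ cost, or use that the convergence in Proposition~\ref{prop:density} is uniform in $\rho$ on compact sets. Letting $\eta,\delta\to0$ yields the claim.

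The main obstacle is Step 2: the gluing in the rigid, hard-core setting. Removing a few atoms can shift neighbour counts, and inserted lattice atoms may sit within $\sqrt2\varepsilon$ of same-charge atoms of $X_\varepsilon$. The count must show that the defect atoms in the transition layer are controlled by the local mismatch measure together with the local energy of $X_\varepsilon$, and not merely by the strip length $1/\varepsilon$. I would first try to bound this using the face-defect/graph structure of the bond graph.
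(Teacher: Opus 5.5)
There is a genuine gap, and it lies in Step~1 rather than in the gluing you flag. Your claim that rigidity plus $L^1$ convergence forces the strip to carry \emph{exactly} $\mathcal{L}_\varepsilon(z^\pm)$ is false. Lemma~\ref{lemma:cell-energy-zero} only tells you that a connected zero-energy region carries some single lattice $z^\pm_\varepsilon$. Since $\mathcal{Z}$ is a continuum, $L^1$ convergence then gives $z^\pm_\varepsilon\to z^\pm$ and nothing more; this is exactly what the paper extracts in Step~2 of the proof of Lemma~\ref{lem:psi-Phi}. For example, take the half-lattices $\varepsilon\mathcal{L}(z^+_\varepsilon)\cap\{\langle x,\nu\rangle\geq 2\varepsilon\}$ and $\varepsilon\mathcal{L}(z^-_\varepsilon)\cap\{\langle x,\nu\rangle<0\}$ with $z^\pm_\varepsilon\to z^\pm$ and $z^\pm_\varepsilon\neq z^\pm$. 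This is an admissible sequence for $\psi$ with bounded energy, yet $u_\varepsilon$ never takes the values $z^\pm$.

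Consequently, in Step~2 the lattice $\mathcal{L}_\varepsilon(z^\pm)$ you impose on the frame does not agree with $X_\varepsilon$ there. You create a seam between $\mathcal{L}_\varepsilon(z^\pm_\varepsilon)$ and $\mathcal{L}_\varepsilon(z^\pm)$ of length of order one. With the sticky potential, an arbitrarily small misfit destroys the bonds across such a seam, so it costs energy of order one, not $o(1)$. Compare the lower bound $\frac14|e^{-i\theta}\nu|_1+\frac14|e^{-i\theta'}\nu|_1\geq\frac12$ in Lemma~\ref{lem:Phi-vacuum}(ii), which holds uniformly however close $z'\neq z$ is to $z$. A global rigid motion does not repair this either. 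It can move one of $z^+_\varepsilon,z^-_\varepsilon$ onto $z^+$ resp.\ $z^-$, but not both, because the relative misorientation and relative translation of $(z^+_\varepsilon,z^-_\varepsilon)$ need not equal those of $(z^+,z^-)$.

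With $z^\pm_\varepsilon$ in place of $z^\pm$, your Steps~1--2 essentially reproduce Lemma~\ref{lem:psi-Phi}, i.e.\ $\psi\geq\Phi$, where $\Phi$ only requires boundary values $z^\pm_\varepsilon\to z^\pm$. The local cut-off you call the main obstacle is the easy part: the paper simply empties a $\delta$-region where the interface meets the lateral sides, at cost $C\delta$.

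The missing idea is the passage from converging to fixed boundary values, $\Phi=\bar\varphi=\varphi$ (Lemma~\ref{lem:varphi-bar-Phi}, Proposition~\ref{prop:limit-phi}), which occupies Sections~\ref{sec:6}--\ref{sec:8}.
\begin{itemize}
\item First, almost-minimizers of the cell problem can be taken to be subsets of the two boundary lattices (Lemma~\ref{lem:subset-two-lattices}, via Euler's formula and the face defect).
\item If $\Phi$ is at least the vacuum sum $\frac12|e^{-i\theta^+}\nu|_1+\frac12|e^{-i\theta^-}\nu|_1$, you are done, since this sum bounds $\bar\varphi$ from above by Lemma~\ref{lem:Phi-vacuum}(ii).
\item If $\Phi$ lies strictly below the vacuum sum, Lemma~\ref{lem:touching-lattices} shows that the misorientation is a rational angle, eventually constant along the sequence. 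A single rotation then fixes both angles (Lemma~\ref{lem:fixed-rotations}, which also absorbs the tilt $\nu_T\to\nu$ of the cube).
\item Finally, Lemma~\ref{lem:closedness-touching-points} lets you translate the $\mathcal{L}(z^+_T)$-part and the $\mathcal{L}(z^-_T)$-part \emph{separately} onto $\mathcal{L}(z^+)$ and $\mathcal{L}(z^-)$, without losing bonds or violating the hard-core constraints.
\end{itemize}
That separate translation only makes sense after the two-lattice reduction, and nothing in your proposal replaces this chain of arguments.
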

This will be proven by combining the results of Lemma~\ref{lem:psi-Phi} and Lemma~\ref{lem:varphi-bar-Phi}.

\begin{proof}[Proof of Theorem~\ref{thm:Main}{\rm (i)}]
Let $\{X_\varepsilon\}_\varepsilon$ be a sequence with $X_\varepsilon\to u$ in $L^1_{\mathrm{loc}}(\mathbb{R}^2)$. Clearly, it suffices to treat the case
\begin{align}\label{ineq:uniformbound-compactness}
\sup_{\varepsilon>0}E_\varepsilon(X_\varepsilon)<+\infty\,.
\end{align}
%as otherwise the inequality is trivial, which can be seen as follows: The inequality in Theorem \ref{thm:2.3.1} is non-trivial only if $\liminf_{\varepsilon\to0}E_\varepsilon(X_\varepsilon)<\infty$. Up to choosing a (non relabeled) subsequence we can assume that $\liminf_{\varepsilon\to0}E_\varepsilon(X_\varepsilon)$ $=\lim_{\varepsilon\to0}E_\varepsilon(X_\varepsilon)$, which then is a converging sequence in $\R$ and thus bounded, i.e. (\ref{\thesection.3}) holds.
We proceed in two steps. In the first step, we will identify a limiting measure associated to the energies of the discrete configurations. In the second step, we proceed by a blow-up procedure for the jump part of this measure.\\ 
\noindent {\bf Step 1:} {\it Identification of a limiting measure.} Consider the family of positive measures $\{\mu_\varepsilon\}_\varepsilon$ given as
\begin{align*}
\mu_\varepsilon:= \sum_{x\in X_\varepsilon}E_\varepsilon(x,X_\varepsilon)\delta_x\,.
\end{align*}
By \eqref{def:local-energy}, for all $ A\subset \mathbb{R}^2$ there holds
\begin{align}\label{eq:measure-local-energy}
\vert\mu_\varepsilon\vert(A)=\mu_\varepsilon(A)=E_\varepsilon(X_\varepsilon,A)\,.
\end{align}
Thus, by \eqref{ineq:uniformbound-compactness} we have $\sup_{\varepsilon>0}\vert\mu_\varepsilon\vert(\R^2)<+\infty$ and as $\R^2$ is locally compact, there exists a non-negative finite Radon measure $\mu$ such that up to a (non relabeled) subsequence we have
\begin{align}\label{eq:convergence-mueps}
\mu_\varepsilon\overset{*}{\rightharpoonup}\mu\,.
\end{align}
Now, using the Radon-Nykodym Theorem and Lebesgue Decomposition Theorem for $\mc{H}^{1}|_{J_u}$, we get a decomposition into two non-negative mutually singular measures
\begin{align*}
\mu=\xi\mc{H}^{1}|_{J_u}+\mu_s\,.
\end{align*}
Using a blow-up procedure, we will show that
\begin{align}\label{ineq:xidensity}
\xi(x_0)\geq\psi(z^+,z^-,\nu)\quad\text{for }\mc{H}^{1}\text{-almost every }x_0\in J_u\,,
\end{align}
where $z^+,z^-$ denote the one-sided limits of $u$ at $x_0$ and $\nu$ denotes the corresponding normal (here we omit the explicit dependence on $u$ for notational convenience). We postpone the proof of \eqref{ineq:xidensity} to Step 2 and first argue how we can conclude the proof once \eqref{ineq:xidensity} is proven. We use \eqref{def:limiting-energy},\eqref{eq:measure-local-energy}--\eqref{ineq:xidensity},  and Proposition~\ref{prop:psi-varphi} to show that
\begin{align*}
\liminf_{\varepsilon\to0}E_\varepsilon(X_\varepsilon)=\liminf_{\varepsilon\to0}\mu_\varepsilon(\R^2)\geq\mu(\R^2)\geq\int_{J_u}\xi\,\mathrm{d}\mc{H}^{1}
\geq\int_{J_u}\varphi(z^+,z^-,\nu)\,\mathrm{d}\mc{H}^{1}=E(u)\,.
\end{align*}
\noindent {\bf Step 2:} {\it Blow-up procedure.}  It remains to prove \eqref{ineq:xidensity}. Using properties of SBV-functions and Radon measures (see, for example, \cite[Theorem 2.63, Theorem 3.78, and Remark 3.79]{Ambrosio-Fusco-Pallara:2000}), we know that for $\mc{H}^{1}$- almost every $x_0\in J_u$ it holds that
\begin{enumerate}
\item[(a)] 
$\displaystyle
\lim_{\rho\to0}\frac{1}{\rho^2}\int_{Q_\rho^\nu(x_0)}\vert u(x)-u^\nu_{z^+,z^-}(x-x_0)\vert\,\mathrm{d}x=0\,,
$
\item[(b)]
$\displaystyle
\lim_{\rho\to0}\frac{1}{\rho}\mc{H}^{1}(J_u\cap Q_\rho^\nu(x_0))=1\,,
$
\item[(c)]
$ \displaystyle
\xi(x_0)=\lim_{\rho\to0}\frac{\mu(Q^\nu_\rho(x_0))}{\mc{H}^{1}(J_u\cap Q_\rho^\nu(x_0))}\,.
$
\end{enumerate}
Here, $u^\nu_{z^+,z^-}$ is defined as in \eqref{def:pure-jump-function}. To show \eqref{ineq:xidensity} it suffices to prove it for all $x_0\in J_u$ such that (a)-(c) hold. To do this, let $x_0\in J_u$ be such a point. As $\mu$ is a locally finite measure we can fix a sequence $\rho_n\to0$ as $n\to+\infty$ such that $\vert\mu\vert(\partial Q^\nu_{\rho_n}(x_0))=0$ for all $n\in\N$. Then as (b)-(c) hold, using the Portmanteau Theorem as well as \eqref{eq:measure-local-energy}, and \eqref{eq:convergence-mueps}, we get:
\begin{align*}
\xi(x_0)&=\lim_{\rho\to0}\frac{\mu(Q_\rho^\nu(x_0))}{\mc{H}^{1}(J_u\cap Q_\rho^\nu(x_0))}=\lim_{\rho\to0}\frac{\mu(Q_\rho^\nu(x_0))}{\rho}
=\lim_{n\to\infty}\frac{1}{\rho_n}\lim_{\varepsilon\to0}\mu_\varepsilon(Q_{\rho_n}^\nu(x_0))\\&=\lim_{n\to\infty}\frac{1}{\rho_n}\lim_{\varepsilon\to0}E_\varepsilon(X_\varepsilon,Q^\nu_{\rho_n}(x_0))\,.
\end{align*}
We introduce the configurations $X_\varepsilon^n:=\rho_n^{-1}X_\varepsilon$ and apply Lemma \ref{lem:elementary-properties}(ii) (for $\lambda=\frac{1}{\rho_n}$) to obtain
\begin{align}\label{eq:xi-energy-cube}
\xi(x_0)=\lim_{n\to\infty}\lim_{\varepsilon\to0}E_{\frac{\varepsilon}{\rho_n}}(X^n_\varepsilon,Q^\nu(\rho_n^{-1}x_0))\,.
\end{align}
Recall that $X_\varepsilon\to u$ in $L^1_{\mathrm{loc}}(\R^2)$ implies, by Definition~\ref{def:convergence}, that $u_\varepsilon\to u$ in $L^1_{\mathrm{loc}}(\R^2)$. Let $u_\varepsilon^n$ denote the function corresponding to $X_\varepsilon^n$. Then Lemma~\ref{lem:scaling} implies that $u^n_\varepsilon(x)=u_\varepsilon(\rho_nx)$ for all $x\in\R^2$. In particular it also yields $u^n_\varepsilon\to u^n$ on $Q^\nu(\rho^{-1}_nx_0)$, where $u^n(x):=u(\rho_nx)$ for $x\in\R^2$. Using (a), the change of variables $y=\rho_nx+x_0$ and $u^n(x+\rho_n^{-1}x_0)=u(x_0+\rho_nx)$ as well as $u^\nu_{z^+,z^-}(x)=u^\nu_{z^+,z^-}(\rho_nx)$ for $x\in\R^d$, we obtain
\begin{align*}
\lim_{n\to\infty}\int_{Q^\nu}\vert u^n(x+\rho_n^{-1}x_0)-u^\nu_{z^+,z^-}(x)\vert\,\mathrm{d}x&=\lim_{n\to\infty}\int_{Q^\nu}\vert u(x_0+\rho_nx)-u^\nu_{z^+,z^-}(\rho_nx)\vert\,\mathrm{d}x
\\&=\lim_{n\to\infty}\frac{1}{\rho_n^2}\int_{Q^\nu_{\rho_n}(x_0)}\vert u(y)-u^\nu_{z^+,z^-}(y-x_0)\vert\,\mathrm{d}y=0\,.
\end{align*}
Now, by \eqref{eq:xi-energy-cube} and $u^n_\varepsilon\to u^n$ on $Q^\nu(\rho_n^{-1}x_0)$ as $\varepsilon\to0$, we use a diagonal argument to find a null sequence $\{\varepsilon(n)\}_n$ such that for $X^n:=X^n_{\varepsilon(n)}$ and $u^n:=u^n_{\varepsilon(n)}$ it holds
\begin{align}\label{eq:convergencexiEnergy2}
\xi(x_0)=\lim_{n\to\infty}E_{\varepsilon_n}(X^n,Q^\nu(y^n))\,,
\end{align}
and
\begin{align*}
\lim_{n\to\infty}\int_{Q^\nu}\vert u^n(x+y^n)-u^\nu_{z^+,z^-}(x)\vert\,\mathrm{d}x=0\,,
\end{align*}
where $\varepsilon_n:=\frac{\varepsilon(n)}{\rho_n}$ and $y^n=\rho_n^{-1}x_0$. As the above constructed sequence is admissible for the definition of $\psi$, see \eqref{def:psi}, \eqref{eq:convergencexiEnergy2} implies $\xi(x_0)\geq\psi(z^+,z^-,\nu)$. This shows \eqref{ineq:xidensity} and concludes the proof.
\end{proof}
\subsection{Upper bound}\label{sec:Upper-bound}
\begin{proof}[Proof of Theorem~\ref{thm:Main}{\rm (ii)}] The proof of this theorem follows as the proof of \cite[Theorem~2.3(ii)]{FriedrichKreutzSchmidt}.
\end{proof}

\section{Auxiliary Results}\label{sec:4}
We would like to state some properties of our energy, which we will heavily use throughout. 
Recall the definition of the energy \eqref{def:local-energy}.
\begin{lem}\label{lem:elementary-properties}
 Let $\varepsilon>0$ and $X\in \mathcal{X}_\varepsilon(\mathbb{R}^2)$. Then it holds:
\begin{enumerate}[ref=\thethm (\roman*),label=(\roman*)]
\item[{\rm (i)}]$E_\varepsilon(e^{i\theta}X+\tau,e^{i\theta}A+\tau)=E_\varepsilon(X,A)$ for all $\theta\in[0,2\pi),\tau\in\R^2$ and $A\subset\R^2$ Borel.
\item[{\rm (ii)}] For all $\lambda>0$ and $A\subset\R^2$ it holds:
$$E_{\lambda\varepsilon}(\lambda x,\lambda X)=\lambda E_{\varepsilon}(x,X)\quad\forall x\in X,$$ so in particular $E_{\lambda\varepsilon}(\lambda X,\lambda A)=\lambda E_\varepsilon(X,A)$.
\item[{\rm (iii)}] $E_\varepsilon(X,A)\leq E_\varepsilon(X,B)$ for all $A\subset B$.
\item[{\rm (iv)}] $E_\varepsilon(X,A\cup B)=E_\varepsilon(X,A)+E_\varepsilon(X,B)$ for all $A,B\subset\R^2$ with $A\cap B=\emptyset$.
\item[{\rm (v)}] There exists a constant $ C>0$ such that $\forall A\subset\R^2$ Borel: $\#(X\cap A)\leq\frac{C}{\varepsilon^2}\mc{L}^d((A)_\varepsilon)$.
\item[{\rm (vi)}] Let $Y \in \mathcal{X}_\varepsilon(\mathbb{R}^2) $ and $A \subset \mathbb{R}^2$ are such that $Y \cap \overline{(A)_\varepsilon} =X \cap \overline{(A)_\varepsilon}$. Then
\begin{align*}
E_\varepsilon(X,A) = E_\varepsilon(Y,A)\,.
\end{align*} 
\item[{\rm (vii)}] If $\{x_1,\ldots,x_n\} \subset X$ is a cycle, i.e., there holds $x_{i+1} \in \mathcal{N}_\varepsilon^{\mathrm{a}}(x_i)$ for all $i=1,\ldots,n-1$ and $x_1\in\mc{N}^\mathrm{a}_\varepsilon(x_n)$. Then $n$ is even.
\end{enumerate}
\end{lem}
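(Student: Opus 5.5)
Most of the items follow directly from the explicit formula \eqref{def:local-energy}, $E_\varepsilon(X,A)=\sum_{x\in X\cap A}\frac\varepsilon2(4-\#\mathcal{N}^\mathrm{a}_\varepsilon(x))$, together with the geometric constraints in \eqref{def:Space-eps}. The plan is therefore to first record the basic fact that each summand is non-negative, and then check the items one by one.

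\textbf{Non-negativity.} Let $X\in\mathcal{X}_\varepsilon(\mathbb{R}^2)$ and $x\in X$. Any two $y,y'\in\mathcal{N}^\mathrm{a}_\varepsilon(x)$ have the same charge, so $|y-y'|\ge\sqrt2\varepsilon$. They both lie on the circle $\partial B_\varepsilon(x)$. On a circle of radius $\varepsilon$, a chord of length at least $\sqrt2\varepsilon$ subtends an angle at least $\pi/2$. Hence $\#\mathcal{N}^\mathrm{a}_\varepsilon(x)\le4$, and every summand $E_\varepsilon(x,X)$ is non-negative.

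\textbf{Items (i)--(iv).} Item (iii) follows from non-negativity. Item (iv) follows because the sum is indexed by $X\cap A$ and this index set splits over a disjoint union. For (i), the map $x\mapsto e^{i\theta}x+\tau$ preserves distances and charges, so it maps $\mathcal{N}^\mathrm{a}_\varepsilon(x)$ bijectively onto $\mathcal{N}^\mathrm{a}_\varepsilon(e^{i\theta}x+\tau)$. It also maps $X\cap A$ onto $(e^{i\theta}X+\tau)\cap(e^{i\theta}A+\tau)$. For (ii), $|\lambda x-\lambda y|=\lambda\varepsilon$ holds if and only if $|x-y|=\varepsilon$. This gives $\mathcal{N}^\mathrm{a}_{\lambda\varepsilon}(\lambda x)=\lambda\mathcal{N}^\mathrm{a}_\varepsilon(x)$, so the prefactor scales as $\lambda\varepsilon/2$. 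Summing over $x\in X\cap A$ gives the set version.

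\textbf{Item (vi).} For $x\in X\cap A$, the neighbourhood $\mathcal{N}^\mathrm{a}_\varepsilon(x)$ depends only on $X\cap\overline{B_\varepsilon(x)}\subset X\cap\overline{(A)_\varepsilon}$. The same holds for $Y$. We also have $X\cap A=Y\cap A$. So the two sums coincide term by term.

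\textbf{Item (v).} Since points of $X$ are at mutual distance at least $\varepsilon$, the balls $B_{\varepsilon/2}(x)$, $x\in X\cap A$, are pairwise disjoint. Each of them is contained in $(A)_\varepsilon$. Comparing areas gives $\#(X\cap A)\,\pi\varepsilon^2/4\le\mathcal{L}^2((A)_\varepsilon)$, so (v) holds with $C=4/\pi$.

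\textbf{Item (vii).} Consecutive points $x_i,x_{i+1}$ of the cycle have opposite charges, so charges alternate along the cycle. Closing the cycle with $x_1\in\mathcal{N}^\mathrm{a}_\varepsilon(x_n)$ requires $q(x_n)=-q(x_1)$. Since $q(x_n)=(-1)^{n-1}q(x_1)$, this forces $n$ to be even.

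None of the steps is a real obstacle. The only point needing care is the non-negativity bound $\#\mathcal{N}^\mathrm{a}_\varepsilon(x)\le4$, which is what (iii) relies on.
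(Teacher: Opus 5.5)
Your proposal is correct and follows essentially the same route as the paper: every item is read off from the local energy formula \eqref{def:local-energy}, with non-negativity (the pigeonhole bound $\#\mathcal{N}^\mathrm{a}_\varepsilon(x)\le 4$, which the paper obtains in Lemma~\ref{lemma:cell-energy-zero}) giving (iii), and the disjoint balls of radius $\varepsilon/2$ giving (v) with $C=4/\pi$. Your charge-alternation argument for (vii) is, if anything, more precise than the paper's one-line remark, since it shows the claim uses only that attractive neighbours have opposite charge.
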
\stepcounter{thm}

%Similar/Identical to paper, but maybe short enough to keep.
\begin{proof}
Item (i) immediately follows from \eqref{def:attractive-neighbourhood},\eqref{def:local-energy}, and the fact that isometries do not change the distance between points. Item (ii) follows from \eqref{def:attractive-neighbourhood},\eqref{def:local-energy}, and the one-homogeneity of the euclidean distance. Item (iii) follows from non-negativity of the cell energy, see Lemma~\ref{lemma:cell-energy-zero}. Item (iv) follows by using $A\cap B=\emptyset$, so each summand on the left side occurs on the right side and vice versa. Now for item (v) we notice that by \eqref{def:Va} and \eqref{def:Vr} it holds $B_\frac{\varepsilon}{2}(x)\cap B_\frac{\varepsilon}{2}(y)=\emptyset$ for $x,y\in X$ $x\neq y$. Thus we have
\begin{align*}
\frac{\varepsilon^2\pi}{4}\#\{ X\cap A\}=\sum_{x\in X\cap A}\mc{L}^2(B_\frac{\varepsilon}{2}(x))=\mc{L}^2\left(\bigcup_{x\in X\cap A}B_\frac{\varepsilon}{2}(x)\right)\leq\mc{L}^2((A)_\varepsilon)
\end{align*}
as $B_\frac{\varepsilon}{2}(x)\subset(A)_\varepsilon$ if $x\in A$. Item (vi) follows from \eqref{def:local-energy} and noting that, if $Y \cap \overline{A_\varepsilon} =X \cap \overline{A_\varepsilon}$, then $E_\varepsilon(x,X) =E_\varepsilon(y,X)$ for all $x \in X \cap A$ (note that $X\cap A= Y \cap A$). Item (vii) follows immediately from the fact that $X \in \mathcal{X}_\varepsilon(\mathbb{R}^2)$. 
\end{proof}

\begin{lem}\label{lemma:cell-energy-zero}
Let $X\in \mathcal{X}_\varepsilon(\mathbb{R}^2)$. Then, $E_{\varepsilon}(x,X)\geq0$ for all $x \in X$ and 
 \begin{align}\label{iff:Ecell-degeneracy}
 E_\varepsilon(x,X)\neq 0 \quad\implies \quad  E_\varepsilon(x,X)\geq \frac{\varepsilon}{2}\,.
 \end{align}
 In particular, $\#\mathcal{N}_\varepsilon^{\mathrm{a}}(x) \leq 4$ for all $x \in X$. Furthermore,   the following two statements are equivalent:
\begin{itemize}
\item[{\rm (i)}] $E_{\varepsilon}(x,X)=0$.
\item[{\rm (ii)}] There exists a unique $z(x) = (\theta(x), \tau(x),1) \in \mathcal{Z}$ such that 
\begin{align*}
 X \cap B_{ \sqrt{2}\varepsilon} (x) = \mathcal{L}_\varepsilon(z(x))\cap B_{ \sqrt{2} \varepsilon} (x).
\end{align*}
\end{itemize}
\end{lem}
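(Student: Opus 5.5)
The plan is to work entirely with the neighbourhood $\mathcal{N}^{\mathrm a}_\varepsilon(x)$. By scaling, via Lemma~\ref{lem:elementary-properties}(ii), we may assume $\varepsilon=1$.

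\emph{Step 1: at most four attractive neighbours.} Let $y_1,y_2\in\mathcal{N}^{\mathrm a}(x)$ be distinct. Both have charge $-q(x)$, hence the same charge as each other. Since $X\in\mathcal{X}_1(\mathbb{R}^2)$, this gives $|y_1-y_2|\ge\sqrt2$. Both points lie on the unit circle around $x$, so the angle they subtend at $x$ is at least $\pi/2$. Hence at most four such points fit, i.e. $\#\mathcal{N}^{\mathrm a}(x)\le 4$. This yields $E_\varepsilon(x,X)=\frac\varepsilon2(4-\#\mathcal{N}^{\mathrm a}_\varepsilon(x))\ge0$. The gap estimate \eqref{iff:Ecell-degeneracy} is then immediate, because the cell energy takes values in $\frac\varepsilon2\{0,1,2,3,4\}$.

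\emph{Step 2: (ii) $\Rightarrow$ (i).} In $\mathcal{L}(z)$ every point has exactly four opposite-charged neighbours at distance $1$, namely $x\pm e^{i\theta}e_1$ and $x\pm e^{i\theta}e_2$, all inside $B_{\sqrt2}(x)$. If $X$ coincides with $\mathcal{L}(z)$ on that ball, then $\#\mathcal{N}^{\mathrm a}(x)=4$ and the energy vanishes.

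\emph{Step 3: (i) $\Rightarrow$ (ii).} If $\#\mathcal{N}^{\mathrm a}(x)=4$, equality must hold in the angle estimate of Step 1. So the four neighbours are $x+e^{i\theta}i^k$ for $k=0,\dots,3$ and some $\theta$. This $\theta$ is unique modulo $\pi/2$, so it determines a unique element of $\mathbb{A}$.

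Next, a translation $\tau$ is fixed by requiring $x\in e^{i\theta}(\mathbb{Z}^2_{\mathrm{charge}}+\tau)$ with charge $q(x)$. The charge condition pins down the coset modulo $\mathbb{Z}^2_+$, which gives uniqueness of $\tau$ in $\mathbb{T}$; this is the content of Lemma~\ref{lem:unique-lattice}.

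It remains to show that $X\cap B_{\sqrt2}(x)$ contains no other points. Any additional point $w$ satisfies $1\le|w-x|<\sqrt2$.
\begin{itemize}
\item If $q(w)=q(x)$, then $|w-x|\ge\sqrt2$ by the constraint in $\mathcal{X}_1$, a contradiction.
\item If $q(w)=-q(x)$, then $w$ has the same charge as the four neighbours, so $|w-y|\ge\sqrt2$ for each neighbour $y$. But the four neighbours at angular spacing $\pi/2$ leave no room: any point at distance in $[1,\sqrt2)$ from $x$ lies at angle at most $\pi/4$ from some neighbour $y$. An elementary law-of-cosines computation, with $|w-x|=r\in[1,\sqrt2)$, then gives $|w-y|^2\le r^2+1-2r\cos(\pi/4)<\sqrt2^{\,2}$. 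This is again a contradiction.
\end{itemize}
Hence $X\cap B_{\sqrt2}(x)$ consists of $x$ and its four neighbours, which equals $\mathcal{L}(z)\cap B_{\sqrt2}(x)$, since the diagonal lattice points lie exactly at distance $\sqrt2$ and therefore outside the open ball.

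\emph{Main obstacle.} The delicate step is this last exclusion argument: the open versus closed boundary of $B_{\sqrt2}$ and the inequality $r^2+1-\sqrt2 r<2$ for $r\in[1,\sqrt2)$ must be checked carefully. The uniqueness of $z$ is also delicate, since it depends on the correct quotient by $\mathbb{Z}^2_+$ combined with the charge.
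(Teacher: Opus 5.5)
Your proof is correct and follows essentially the same route as the paper: the same-charge constraint $|y_1-y_2|\geq\sqrt{2}$ caps the attractive neighbourhood at four (the paper uses $2\sin(\pi/5)<\sqrt{2}$ for five points, you use pairwise angles $\geq\pi/2$), the angle sum forces right angles when there are exactly four neighbours, the gap estimate follows from integrality, and uniqueness of $z(x)$ comes from Lemma~\ref{lem:unique-lattice}. Your law-of-cosines exclusion of further points in $B_{\sqrt{2}}(x)$ makes explicit the step the paper calls elementary, namely that (ii) is equivalent to having four neighbours, no repulsive neighbours, and right angles.
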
 
\begin{proof} Let $X \in \mathcal{X}_\varepsilon(\mathbb{R}^2)$ and recall \eqref{def:Ecell}. We first recall some preliminary observations in order to prove the equivalence. Once the equivalence is proven, we show \eqref{iff:Ecell-degeneracy}. \\
 \noindent  {\it Preliminary observations}: First of all, by \eqref{def:Space-eps} and \eqref{def:repulsive-neighbourhood},  we have $\mc{N}^r_\varepsilon(x)=\emptyset$ for all $x\in X$ and $|x-y| \geq \varepsilon$ for all $x,y \in X$ such that $x\neq y$.  Thus, by a simple geometric argument we have that $\#\mc{N}_\varepsilon^\mathrm{a}(x)\leq 6$ for all $x\in X$. Actually, it even holds
\begin{align}\label{ineq:attractive-neighbourhood}
\#\mc{N}_\varepsilon^\mathrm{a}(x)\leq 4\,,
\end{align}
 as otherwise there are $y,z\in\mc{N}_\varepsilon^\mathrm{a}(x)$ with $\vert y-z\vert\leq2\sin(\frac{\pi}{5}) \varepsilon<\sqrt{2} \varepsilon$, which contradicts the assumption $X \in \mathcal{X}_\varepsilon(\mathbb{R}^2)$. It is elementary to prove that ({\rm ii}) is equivalent to 
 \begin{align}\label{eq:local-lattice}
 \#\mc{N}_\varepsilon^\mathrm{a}(x)=4\,, \quad \mc{N}_\varepsilon^\mathrm{r}(x)=\emptyset\,,\quad \text{ and } \quad \theta(y,x,z)=0\mod\frac{\pi}{2}\quad \forall y,z\in\mc{N}_\varepsilon^\mathrm{a}(x)\,,
 \end{align}
 where $\theta(y,x,z)$ denotes the angle between the vectors $y-x, z-x$ in counter-clockwise orientation. We use \eqref{eq:local-lattice} to prove the equivalence between ({\rm i}) and ({\rm ii}). \\
 \noindent  {$({\rm ii}) \implies ({\rm i})$}: As $E_\varepsilon(x,X) = \frac{\varepsilon}{2}(4-\#\mathcal{N}_\varepsilon^{\mathrm{a}}(x))$, the first condition ensures that $E_\varepsilon(x,X)=0$. The second and third condition ensure that $X \in \mathcal{X}_\varepsilon(\mathbb{R}^2)$.  \\
 \noindent  {$({\rm i}) \implies ({\rm ii})$}: By definition it is clear that $E_\varepsilon(x,X)\geq\varepsilon/2$ if $\#\mc{N}_\varepsilon^\mathrm{a}(x)\leq 3$. 
Thus, clearly, by \eqref{ineq:attractive-neighbourhood} and the fact that $X \in \mathcal{X}_\varepsilon(\mathbb{R}^2)$, we have
\begin{align*}
E_\varepsilon(x,X) =0 \quad \implies  \quad \#\mc{N}_\varepsilon^\mathrm{a}(x)=4 \quad \text{ and }\quad \mc{N}_\varepsilon^\mathrm{r}(x)=\emptyset\,.
\end{align*}
 Now, denote by $\mc{N}_\varepsilon^\mathrm{a}(x)=\{x_1,\dots,x_4\}$ its neighbors indexed in counter-clockwise orientation, and $\theta_i= \theta(x_i,x,x_{i+1}) \in (0,2\pi)$ (here we count the points modulo $4$, i.e. $x_1=x_5$). It holds
 \begin{align}\label{eq:sum-2pi}
 \sum_{i=1}^4\theta_{i} = 2\pi\,.
 \end{align}
  Now, assuming there exists $i_0 \in\{1,\ldots,4\}$ such that $\theta_{i_0}<\frac\pi2$, then it holds $\vert x_{i_0}-x_{i_0+1}\vert<2\varepsilon\sin(\frac\pi4)=\sqrt{2}\varepsilon$. This contradicts $X \in \mathcal{X}_\varepsilon(\mathbb{R}^2)$. Thus, for all $i=1,\ldots,4$ it holds $\theta_{i}\geq\frac\pi2$. But, by \eqref{eq:sum-2pi} and the fact that $\theta_i \geq 0$ for all $i=1,\ldots,4$, this implies $\theta_{i}=\frac\pi2$ for all $i=1,\ldots,4$. This is the third condition in (ii) and concludes this step. \\  \noindent  {\it Uniqueness of $z$}: The uniqueness of $z$ follows by Lemma~\ref{lem:unique-lattice}. \\ 
 \noindent  {\it Proof of \eqref{iff:Ecell-degeneracy}}: Note first that, due to \eqref{ineq:attractive-neighbourhood}, we have $E_\varepsilon(x,X) \geq 0$ for all $x \in X$. Now, if $E_\varepsilon(x,X) \neq 0$, by \eqref{ineq:attractive-neighbourhood} and ({\rm ii}) we have that $\#\mathcal{N}_\varepsilon^{\mathrm{a}}(x) \leq 3$. This fact, together with \eqref{def:Ecell}, shows \eqref{iff:Ecell-degeneracy} and concludes the proof.
\end{proof}

This Lemma thus characterizes atoms that do not contribute to the energy. Further, it allows us to show the following coercivity result.

\begin{lem}[Coercivity]\label{lem:coercivity}
Let $X \in \mathcal{X}_\varepsilon(\mathbb{R}^2)$ and $A\subset\R^2$ Borel. Then there is a universal constant $C>0$ such that
\begin{align*}
\mc{H}^{1}(J_{u}\cap A)\leq CE_\varepsilon(X,(A)_{3\varepsilon})\,,
\end{align*}
where $J_{u}$ is the jumpset of the function $u$ associated to $X$ by \eqref{def:interpolation}.
\end{lem}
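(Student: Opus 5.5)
The jump set $J_u$ of $u=u^X_\varepsilon$ is contained in the union of boundaries of the open Voronoi squares $V_\varepsilon(x)$ of atoms $x$ with $E_\varepsilon(x,X)=0$. We call these \emph{good} atoms; all other atoms are \emph{bad}. The plan is to charge each boundary edge of a good square lying in $J_u$ to a bad atom within distance $3\varepsilon$. We then bound the number of edges charged to each bad atom by a universal constant. Since each bad atom carries energy at least $\varepsilon/2$ by Lemma~\ref{lemma:cell-energy-zero}, this yields the claim.

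First, $J_u$ is contained in $\bigcup_{x\text{ good}}\partial V_\varepsilon(x)$. This holds because $u$ is constant on each open Voronoi square, and off the closure of the union of good squares $u=\mathbf{0}$. Each $\partial V_\varepsilon(x)$ consists of four sides of length $\varepsilon$. Take a good atom $x$ and a side $S$ of $V_\varepsilon(x)$ with $\mathcal{H}^1(S\cap J_u)>0$. By Lemma~\ref{lemma:cell-energy-zero}, $x$ has four neighbours $x\pm\varepsilon e^{i\theta}e_1$, $x\pm \varepsilon e^{i\theta}e_2$, and the side $S$ is shared with the Voronoi square of the neighbour $y$ across it. We claim that $y$ is bad.

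Suppose instead that $y$ is good. Since $y\in\mathcal{N}^{\mathrm a}_\varepsilon(x)$, both $X\cap B_{\sqrt2\varepsilon}(x)$ and $X\cap B_{\sqrt2\varepsilon}(y)$ coincide locally with lattices. These two balls share the three atoms $x$, $y$ and one further point, which is non-collinear as soon as we use the right-angle structure. Hence, by the uniqueness in Lemma~\ref{lemma:cell-energy-zero} (via Lemma~\ref{lem:unique-lattice}), $z(y)=z(x)$. Then $V_\varepsilon(y)$ is the adjacent square of the same lattice, $u$ takes the same value on both sides of $S$, and $S\cap J_u$ is at most $\mathcal{H}^1$-null, which contradicts the choice of $S$. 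Rather than relying on that overlap argument, I would verify $z(y)=z(x)$ directly: the orientation and translation of a lattice are determined by a bonded pair $x,y$ with charges together with the perpendicular neighbour directions at $x$. Either way, $y$ is a bad atom with $|x-y|=\varepsilon$, and $S\subset \overline{B_{2\varepsilon}(y)}$.

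Now fix a bad atom $y$ and count the sides it receives. Every such side belongs to a good atom $x\in\mathcal{N}^{\mathrm a}_\varepsilon(y)$, and by Lemma~\ref{lemma:cell-energy-zero} there are at most $4$ of these, each contributing at most $4$ sides. So at most $16$ sides, of total length at most $16\varepsilon$, are charged to $y$. If $S$ meets $A$, then $y\in (A)_{2\varepsilon}\subset(A)_{3\varepsilon}$. Therefore
\[
\mathcal{H}^1(J_u\cap A)\le 16\varepsilon\,\#\{y\text{ bad}: y\in (A)_{3\varepsilon}\}\le 32\,E_\varepsilon(X,(A)_{3\varepsilon}),
\]
using $E_\varepsilon(y,X)\ge\varepsilon/2$ for bad atoms.

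The step I expect to be the main obstacle is the claim that two bonded good atoms carry the same $z$, since it needs the uniqueness in Lemma~\ref{lem:unique-lattice} and some care with the charge and translation conventions. A subsidiary point is that $J_u$ contains no points outside the good-square boundaries. This follows because the good squares are disjoint open squares whose boundaries form a finite union of segments, and $u=\mathbf{0}$ off their closures.
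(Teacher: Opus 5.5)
Your proposal is correct and follows essentially the paper's argument: a good square whose boundary carries jump forces a bad attractive neighbour, each bad atom is charged by at most four good neighbours, and $E_\varepsilon(y,X)\ge \varepsilon/2$ for bad atoms closes the estimate. Your per-side charging is slightly finer bookkeeping than the paper's per-square charging, and since each good neighbour charges only the one side it shares with $y$, your constant $16$ could be $4$. Your direct justification of $z(x)=z(y)$ via a charged bond is the right one and is more explicit than the paper, which simply asserts it. The overlap variant you mention would not work as stated, since the only atoms of $X$ in both open balls $B_{\sqrt{2}\varepsilon}(x)$ and $B_{\sqrt{2}\varepsilon}(y)$ are $x$ and $y$.
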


%Slight change to paper, maybe short enough to keep.
\begin{proof}
Let $u=\sum_{j=1}^\infty z_j\chi_{G_j}$ be its representation with pairwise disjoint $G_j$ and pairwise distinct $z_j$ so that $J_u = \bigcup_{j } \partial^* G_j$. Then, by its definition, each $G_j$ consists of a finite union of squares of sidelength $\varepsilon$. Here $z_j=z(x)$ is the associated lattice for $x\in X$.
We first consider a local argument and show that for each $x\in X$, it holds
\begin{align}\label{ineq:Ju-energy}
\mc{H}^{1}(J_u\cap V_\varepsilon(x))\leq C E_\varepsilon(X,B_{2\varepsilon}(x))\,.
\end{align}
Due to Lemma~\ref{lemma:cell-energy-zero} it suffices to consider the case where $E_{\varepsilon}(x,X)=0$, i.e., by Lemma~\ref{lemma:cell-energy-zero} there exists $z(x) \in \mathcal{Z}$ such that
\begin{align*}
X \cap B_{\sqrt{2}\varepsilon}(x) = \mathcal{L}_\varepsilon(z(x)) \cap B_{\sqrt{2}\varepsilon}(x)\,.
\end{align*} 
If $ V_{\varepsilon}(x)\cap J_u=\emptyset$ the inequality \eqref{ineq:Ju-energy} is trivial. Thus, we may assume that $ V_{\varepsilon}(x)\cap J_u\neq \emptyset$. In this case, we claim that 
\begin{align}\label{ineq:non-zero-neighbour}
E_\varepsilon(y,X) \geq \frac{\varepsilon}{2} \quad \text{for some }  y \in \mathcal{N}_\varepsilon^\mathrm{a}(x)\,.
\end{align}
We postpone the proof of \eqref{ineq:non-zero-neighbour} and show how we can conclude. Clearly, as $ \mathcal{H}^1(\partial^* V_\varepsilon(x)) = 4\varepsilon$ (see \eqref{def:Voronoi}), this implies \eqref{ineq:Ju-energy} also in the the case that $\partial^* V_{\varepsilon}^{z(x)}(x)\cap J_u\neq\emptyset$.  Due to Lemma~\ref{lemma:cell-energy-zero}, we have  $\#\mathcal{N}_ \varepsilon^\mathrm{a}(x) \leq 4 $ for each $x \in X$. Thus, each $y \in X$ satisfying  \eqref{ineq:non-zero-neighbour} can be chosen for at most four $x \in X$ and, whenever such an $y \in X$ is chosen, it satisfies $y \in B_{2\varepsilon}(x)$. Therefore, using \eqref{ineq:Ju-energy}, we obtain
\begin{align*}
\mc{H}^{1}(J_u\cap A) \leq \sum_{x \in X\cap (A)_\varepsilon} \mc{H}^{1}(J_u\cap V_\varepsilon(x)) \leq C\sum_{y \in  X\cap (A)_{3\varepsilon}}E_\varepsilon(y,X) = C E_\varepsilon(X, (A)_{3\varepsilon})\,.
\end{align*}
It remains to prove \eqref{ineq:non-zero-neighbour}. This is a simple consequence of the fact that, if $E_\varepsilon(y,X) =0$ for all $y \in \mathcal{N}_\varepsilon^\mathrm{a}(x)$, then $z(x) =z(y)$ for all $y \in \mathcal{N}_\varepsilon^\mathrm{a}(x)$ and thus, by \eqref{def:interpolation}, $ J_u \cap (\partial^* V_\varepsilon(x) \cap \partial^* V_\varepsilon(y) )=\emptyset$. Noting that in this case $ \partial^* V_\varepsilon(x) = \bigcup_{y \in \mathcal{N}_\varepsilon^{\mathrm{a}}(x)} \left( \partial^* V_\varepsilon(x) \cap \partial^* V_\varepsilon(y) \right)$, this shows that there exists $y \in \mathcal{N}_\varepsilon^\mathrm{a}(x)$ such that $E_\varepsilon(y,X) \neq 0$. Using Lemma~\ref{lemma:cell-energy-zero}, this shows  \eqref{ineq:non-zero-neighbour} and concludes the proof.
\end{proof}

We need a relation of our interpolation functions with respect to some scaling $\lambda>0$. Then, we are able to use a fundamental estimate construction that allows us to pass from $L^1$-convergence of boundary values to converging boundary values in the proof of the lower bound of our $\Gamma$-convergence result.

\begin{lem}\label{lem:scaling}
Let $\varepsilon, \lambda >0$ and let $X \in \mathcal{X}_\varepsilon$. By $u^\lambda_{\lambda\varepsilon}$ and $u_\varepsilon$ we denote the functions defined in \eqref{def:interpolation} for $\lambda X_\varepsilon$ and $\lambda \varepsilon$ and $X_\varepsilon$, respectively. Then, it holds
\begin{align*}
u_{\lambda\varepsilon}^\lambda(\lambda x)=u_\varepsilon(x)\quad \text{ for all } x\in\R^2\,.
\end{align*}
Moreover, for each bounded set $A\subset\R^2$ we have $u_{\lambda\varepsilon}^\lambda\to u(\lambda^{-1}\cdot)$ in $L^1(\lambda A)$ as $\varepsilon\to0$ if and only if $u_\varepsilon\to u$ in $L^1(A)$ as $\varepsilon\to0$. 
\end{lem}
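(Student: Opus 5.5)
The plan is to prove the pointwise identity first, by showing that everything in definition \eqref{def:interpolation} is equivariant under dilations, and then to get the $L^1$ statement from a change of variables.

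\emph{Step 1: zero-energy atoms and their lattice parameters are preserved.} Fix $x\in X$ (with $X=X_\varepsilon$). The attractive neighbourhoods in \eqref{def:attractive-neighbourhood} scale exactly: $y\in\mathcal{N}^{\mathrm{a}}_\varepsilon(x)$ if and only if $\lambda y\in\mathcal{N}^{\mathrm{a}}_{\lambda\varepsilon}(\lambda x)$ with respect to $\lambda X$, since $|\lambda x-\lambda y|=\lambda\varepsilon \iff |x-y|=\varepsilon$ and charges are unchanged. Likewise $\lambda X\in\mathcal{X}_{\lambda\varepsilon}(\mathbb{R}^2)$. By Lemma~\ref{lem:elementary-properties}(ii), $E_{\lambda\varepsilon}(\lambda x,\lambda X)=\lambda E_\varepsilon(x,X)$. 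Hence $\lambda x$ has zero energy exactly when $x$ does.

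In that case, Lemma~\ref{lemma:cell-energy-zero} gives
\[
X\cap B_{\sqrt2\varepsilon}(x)=\varepsilon\mathcal{L}(z(x))\cap B_{\sqrt2\varepsilon}(x).
\]
Multiplying by $\lambda$ yields
\[
\lambda X\cap B_{\sqrt2\lambda\varepsilon}(\lambda x)=\lambda\varepsilon\mathcal{L}(z(x))\cap B_{\sqrt2\lambda\varepsilon}(\lambda x).
\]
By the uniqueness part of Lemma~\ref{lemma:cell-energy-zero}, the lattice parameter attached to $\lambda x$ in $\lambda X$ is therefore the same $z(x)$.

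\emph{Step 2: Voronoi cells scale, which gives the pointwise identity.} From \eqref{def:Voronoi} and Step 1, the Voronoi cell of $\lambda x$ in $\lambda X$ is
\[
\lambda x+e^{i\theta(x)}\interi(Q_{\lambda\varepsilon})=\lambda V_\varepsilon(x).
\]
So $\lambda y\in V_{\lambda\varepsilon}(\lambda x)$ if and only if $y\in V_\varepsilon(x)$. The first case of \eqref{def:interpolation} therefore holds at $\lambda y$ for $\lambda X$ exactly when it holds at $y$ for $X$, and with the same value $z(x)$. Otherwise both functions equal $\mathbf{0}$. This proves $u^\lambda_{\lambda\varepsilon}(\lambda y)=u_\varepsilon(y)$ for all $y$.

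\emph{Step 3: the $L^1$ equivalence.} Substituting $x=\lambda y$, $dx=\lambda^2\,dy$, gives
\[
\int_{\lambda A}\big|u^\lambda_{\lambda\varepsilon}(x)-u(\lambda^{-1}x)\big|\,dx=\lambda^2\int_A\big|u_\varepsilon(y)-u(y)\big|\,dy.
\]
Since $\lambda>0$ is fixed, one side tends to zero if and only if the other does.

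No step here is a real obstacle. The only point needing care is Step 1, namely invoking the uniqueness of $z$ (Lemma~\ref{lem:unique-lattice}) so that the lattice parameter assigned to the rescaled atom is well defined and equal to the original one.
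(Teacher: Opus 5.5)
Your proposal is correct and follows essentially the paper's argument: scale-equivariance of the local lattice identification (hence of zero-energy atoms, their lattice parameters $z(x)$, and their Voronoi cells), followed by the change of variables $x=\lambda y$ for the $L^1$ equivalence. The only cosmetic difference is that you get the equivalence of zero energy from Lemma~\ref{lem:elementary-properties}(ii), while the paper deduces it from the rescaled lattice identity $X\cap B_{\sqrt2\varepsilon}(x)=\mathcal{L}_\varepsilon(z)\cap B_{\sqrt2\varepsilon}(x)$ together with Lemma~\ref{lemma:cell-energy-zero}.
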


%Slight change to general polycrystal paper, maybe short enough to keep.
\begin{proof}
We notice that 
\begin{align*}
X \cap B_{\sqrt{2}\varepsilon}(x)= \mathcal{L}_\varepsilon(z) \cap B_{\sqrt{2}\varepsilon}(x) \quad \iff \quad \lambda X \cap B_{\sqrt{2}\lambda\varepsilon}(\lambda x)= \mathcal{L}_{\lambda\varepsilon}(z) \cap B_{\sqrt{2}\lambda\varepsilon}(\lambda x)\,.
\end{align*} 
Thus, by Lemma~\ref{lemma:cell-energy-zero} we have 
\begin{align*}
E_\varepsilon(x,X)=0\quad \iff \quad E_{\lambda\varepsilon}(\lambda x,\lambda X)=0\,.
\end{align*}
 Recalling \eqref{def:interpolation}, this implies that we interpolate on the  Voronoi cell $V_\varepsilon(x)$ the value $z$ if and only if we interpolate on the  Voronoi cell $ V_{\lambda\varepsilon}(\lambda x)$ the value $z$ Therefore, by \eqref{def:interpolation}, it holds $u_\varepsilon(y)=u_{\lambda\varepsilon}^\lambda(\lambda y)$ for all $y\in\R^2$. Lastly, performing the change of variables $y=\lambda x$, for every bounded $A\subset\R^2$ there holds
\begin{align*}
\lambda^2\int_A\vert u_\varepsilon(x)-u(x)\vert\,\mathrm{d} x=\lambda^2\int_A\vert u_{\lambda\varepsilon}^\lambda(\lambda x)-u(x)\vert\,\mathrm{d} x=\int_{\lambda A}\vert u^\lambda_{\lambda\varepsilon}(y)-u(\lambda^{-1}y)\vert\,\mathrm{d} y\,.
\end{align*}
\end{proof}
\begin{lem}\label{lem:unique-lattice}
Let $(\overline{\theta},\overline{\tau})\in [0,2\pi)\times\R^2$. Then, there exists a unique pair $(\theta,\tau)\in \mathbb{A}\times \mathbb{T}$ such that
\begin{align}\label{eq:lattice-equality}
e^{i\bar{\theta}}(\Z^2_\mathrm{charge} +\bar{\tau}) \cap B_{\sqrt{2}}(\overline{\tau})=e^{i\theta}(\Z^2_\mathrm{charge} +\tau) \cap B_{\sqrt{2}}(\overline{\tau})\,.
\end{align} 
\end{lem}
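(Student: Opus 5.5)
The plan is to prove existence by a global identity of charged lattices, and uniqueness by showing that the points in $B_{\sqrt2}(\overline{\tau})$ already determine the whole charged lattice. Throughout I use two symmetries of $\mathbb{Z}^2_{\mathrm{charge}}$:
\begin{itemize}
\item[(S1)] Multiplication by $i$ maps $(k_1,k_2)$ to $(-k_2,k_1)$. This preserves the parity of $k_1+k_2$, so $i\,\mathbb{Z}^2_{\mathrm{charge}}=\mathbb{Z}^2_{\mathrm{charge}}$ with charges preserved.
\item[(S2)] Translation by $v\in\mathbb{Z}^2$ preserves charges if and only if $v_1+v_2$ is even, i.e.\ $v\in\mathbb{Z}^2_+$. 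Translation by $v\in\mathbb{Z}^2_-$ swaps the charges, and translation by $v\notin\mathbb{Z}^2$ does not map $\mathbb{Z}^2$ to itself.
\end{itemize}

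\emph{Existence.} Write $\overline{\theta}=\theta+k\frac{\pi}{2}$ with $\theta\in[0,\frac{\pi}{2})$ and $k\in\{0,1,2,3\}$. By (S1),
\begin{align*}
e^{i\overline{\theta}}(\mathbb{Z}^2_{\mathrm{charge}}+\overline{\tau})=e^{i\theta}\big(i^k\mathbb{Z}^2_{\mathrm{charge}}+i^k\overline{\tau}\big)=e^{i\theta}\big(\mathbb{Z}^2_{\mathrm{charge}}+i^k\overline{\tau}\big)
\end{align*}
as charged sets. Let $\tau$ be the representative of $i^k\overline{\tau}$ modulo $\mathbb{Z}^2_+$ in the fundamental domain \eqref{def:T}. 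By (S2) the set is unchanged. Thus the two charged lattices coincide on all of $\mathbb{R}^2$, so in particular \eqref{eq:lattice-equality} holds.

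\emph{Uniqueness.} Suppose $(\theta,\tau),(\theta',\tau')\in\mathbb{A}\times\mathbb{T}$ both satisfy \eqref{eq:lattice-equality}.

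First I check that the open ball $B_{\sqrt2}(\overline{\tau})$ contains two points of the charged lattice at distance $1$. The center lies in some closed unit cell of the lattice. If it is not a vertex of that cell, at least two adjacent vertices of the cell lie at distance $<\sqrt2$ from it, since only the diagonally opposite vertex can be at distance $\ge\sqrt2$. If it is a vertex, that vertex and its four neighbours all lie in the ball. This is the step I expect to need the most care, because the ball is open and the radius is critical, but the case distinction above handles it.

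Take such a bond $x,y$, which has opposite charges. Then $y-x\in\{e^{i\theta}i^k\}$ and also $y-x\in\{e^{i\theta'}i^k\}$. Since $\theta,\theta'\in[0,\frac{\pi}{2})$, this forces $\theta=\theta'$.

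Next, $x$ belongs to both $e^{i\theta}(\mathbb{Z}^2_{\mathrm{charge}}+\tau)$ and $e^{i\theta}(\mathbb{Z}^2_{\mathrm{charge}}+\tau')$ with the same charge. Hence $e^{-i\theta}x-\tau\in\mathbb{Z}^2$ and $e^{-i\theta}x-\tau'\in\mathbb{Z}^2$ carry the same charge. Their difference $\tau'-\tau$ therefore lies in $\mathbb{Z}^2$ with even coordinate sum, i.e.\ in $\mathbb{Z}^2_+$. By (S2) this gives $\tau=\tau'$ in $\mathbb{T}$, which proves uniqueness.
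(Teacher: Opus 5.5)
Your proof is correct. Existence is the paper's argument: $\Z^2_{\mathrm{charge}}$, charges included, is invariant under multiplication by $i$ and under translations by $\Z^2_+$.

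For uniqueness you use the same idea as the paper, namely that the lattice points in $B_{\sqrt2}(\overline\tau)$ fix first the angle and then the translation. Your execution differs in two ways, both of which make it more robust.

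First, the paper normalizes $(\overline\theta,\overline\tau)$ to $(0,0)$ and then uses that $0,e_1,e_2$ lie in the ball. This tacitly assumes the center of the ball is a lattice point. That holds in the application (Lemma~\ref{lemma:cell-energy-zero}, where the ball is centered at an atom), but not for the statement as written, since $\overline\tau$ is in general not a point of $e^{i\overline\theta}(\Z^2_{\mathrm{charge}}+\overline\tau)$. Your closed-unit-cell argument shows that the open ball always contains a nearest-neighbour pair, so you prove the literal statement.

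Second, the paper writes $0,e_1,e_2$ in the competing lattice with coefficient vectors $l,m,n$. It then asserts $l=m=n$ from a norm computation and concludes via $v_2=e^{i\theta}v_2$. The norm alone does not force $l=m=n$ (e.g.\ $|e_1-v_1|=1$); what is really needed is the direction of the bonds. Your observation $y-x\in e^{i\theta}\{\pm1,\pm i\}\cap e^{i\theta'}\{\pm1,\pm i\}$ uses exactly this direction information from a single bond and gives $\theta=\theta'$ at once. Your parity argument showing $\tau'-\tau\in\Z^2_+$ is an intrinsic version of the paper's last step.

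You also correctly read \eqref{eq:lattice-equality} as an equality of charged sets. This is necessary: as uncharged sets, $\tau$ and $\tau+e_1$ would both satisfy it, although they are distinct in $\mathbb{T}$.
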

\begin{proof} Set $v_1=e_1+e_2, v_2=-e_1+e_2$.  We note the following symmetries:
\begin{align}
e^{i\frac{k\pi}{2}}\Z^2_\mathrm{charge}&=\Z^2_\mathrm{charge}\quad \text{ for all } k\in\Z\,,\label{eq:rotation-invariance-Z2}\\
\Z^2_\mathrm{charge}+k_1 v_1+k_2 v_2&=\Z^2_\mathrm{charge}\quad \text{ for all } k_1,k_2\in\Z\label{eq:translation-invariance-Z2}\,.
\end{align}
\noindent {\it Existence.} Since $v_1, v_2$ form a basis of $\mathbb{R}^2$, the invariance conditions \eqref{eq:rotation-invariance-Z2}--\eqref{eq:translation-invariance-Z2} and \eqref{def:T} imply the existence of $\theta \in [0,\pi/2)$ and $\tau \in \mathbb{T}$ such that \eqref{eq:lattice-equality} holds. \\
\noindent {\it Uniqueness.} Assume by contradiction that there exist $(\overline{\theta},\overline{\tau}),(\theta,\tau)\in\mathbb{A}\times\mathbb{T}$ such that $(\overline{\theta},\overline{\tau})\neq (\theta,\tau)$ and
\begin{align*}
e^{i\overline{\theta}}(\Z^2_\mathrm{charge}+\overline{\tau})  \cap B_{\sqrt{2}}(\overline{\tau})=e^{i\theta}(\Z^2_\mathrm{charge}+\tau) \cap B_{\sqrt{2}}(\overline{\tau})\,.
\end{align*}
Up to rotation and translation, we assume without loss of generality that $(\overline{\theta},\overline{\tau}) =(0,0)$. If $\overline{\theta}\neq\theta$, we have that $\theta \in (0,\frac{\pi}{2})$. In particular (recalling that the equality must hold for the points in $\mathbb{Z}_-^2$ and for $0 \in \mathbb{Z}_+^2$), there exist $l_1,l_2,m_1,m_2,n_1,n_2\in\Z$ such that
\begin{align*}
0=e^{i\theta}(l_1v_1+l_2v_2+\tau)\,, \quad 
e_1=e^{i\theta}(e_1+m_1v_1+m_2v_2+\tau)\,,\quad
e_2=e^{i\theta}(e_2+n_1v_1+n_2v_2+\tau)\,.
\end{align*}
Taking the difference of the first and the second equation (resp. the first and the third) and calculating the norm, this implies that $l_1=m_1=n_1$ and  $l_2=m_2=n_2$, i.e.,
\begin{align*}
0=e^{i\theta}(l_1v_1+l_2v_2+\tau)\,, \quad 
e_1=e^{i\theta}(e_1+l_1v_1+l_2v_2+\tau)\,,\quad
e_2=e^{i\theta}(e_2+l_1v_1+l_2v_2+\tau)\,.
\end{align*}
Now, taking the difference of the second and the third equation, this implies that $v_2=e^{i\theta}v_2$. For $\theta\in(0,\frac{\pi}{2})$ this equation has no solution so that $\theta=0$. Now, it remains to show that $\tau =0$. As $\tau \in \mathbb{T}$, due to \eqref{def:T}, we can write $\tau= \lambda_1 v_1 +\lambda_2 v_2$ with $0\leq \lambda_1,\lambda_2 <1$. As $0 \in \mathbb{Z}^2_{\mathrm{charge}} \cap B_{\sqrt{2}}$ and $ \mathbb{Z}^2_{\mathrm{charge}} \cap B_{\sqrt{2}} =  \left(\mathbb{Z}^2_{\mathrm{charge}} +\tau\right) \cap B_{\sqrt{2}}$, there exist $k_1,k_2 \in \mathbb{Z}$
\begin{align*}
0 = l_1 v_1 +l_2 v_2 +\tau \quad \iff \quad 0 =(l_1+\lambda_1) v_1 +(l_2+\lambda_2) v_2\,.
\end{align*}
As $v_1,v_2$ is a basis of $\mathbb{R}^2$ we have that $\lambda_1=-l_1$ and $\lambda_2 =-l_2$, i.e., $\lambda_1,\lambda_2 \in \mathbb{Z}$. This implies $\lambda_1,\lambda_2 =0$ and thus $\tau=0$. This shows that the pair $(\theta,\tau) \in [0,\frac{\pi}{2}) \times \mathbb{T}$ satisfying \eqref{eq:lattice-equality} is unique. 
\end{proof}

%%%%%%%%%%%%%%%%%%%%%%%%%%%%%%%%%%
\section[Cell Formula I]{Cell Formula I: Relation of $L^1$-convergence and converging boundary values}\label{sec:5}
In this section, we show that the condition of $L^1$-convergence in the definition of $\psi$, see \eqref{def:psi}, can be replaced by converging boundary values. Precisely we introduce the function $\Phi:\mc{Z}\times\mc{Z}\times\mathbb{S}^{1}\to[0,+\infty)$ defined as
\begin{align}\label{def:Phi}
\begin{split}
\Phi(z^+,z^-,\nu):=\min\big\{\liminf_{\varepsilon\to0}\inf\big\{E_\varepsilon(X_\varepsilon,Q^\nu(y_\varepsilon))&\colon y_\varepsilon\in\R^2\,,X_\varepsilon \in \mathrm{Adm}_\varepsilon^{(z_\varepsilon^+,z_\varepsilon^-)}(Q^\nu(y_\varepsilon))\big\}
\\&\colon\{z^\pm_\varepsilon\}_\varepsilon\subset\mc{Z}\text{ with }z^\pm_\varepsilon\to z^\pm \big\}\,.
\end{split}
\end{align}
Here, the notation $X_\varepsilon \in \mathrm{Adm}_\varepsilon^{(z_\varepsilon^+,z_\varepsilon^-)}(Q^\nu(y_\varepsilon))$ is given in Definition~\ref{def:admissible-set}. By a standard diagonal argument, one can see that the minimum in \eqref{def:Phi} is attained. This definition shows us that near the boundary of the cube our configuration is contained in at most two different lattices $\mc{L}_\varepsilon(z_\varepsilon^\pm)$ and only one lattice if $z^+={\bf 0}$ or $z^-={\bf 0}$. The goal of this section is to show the following lemma:
\begin{lem}[Relation of $\psi$ and $\Phi$]\label{lem:psi-Phi}
Let $z^+,z^-\in\mc{Z}$ and $\nu\in\mathbb{S}^{1}$. Then
\begin{equation*}
\psi(z^+,z^-,\nu)\geq\Phi(z^+,z^-,\nu)\,.
\end{equation*}
\end{lem}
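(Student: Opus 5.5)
The plan is a cut-off (fundamental estimate) construction. Fix $y_\varepsilon$ and $X_\varepsilon$ that are (almost) optimal in the definition \eqref{def:psi} of $\psi$. We may assume $\psi(z^+,z^-,\nu)<+\infty$, a genuine limit along a subsequence, and $\sup_\varepsilon E_\varepsilon(X_\varepsilon,Q^\nu(y_\varepsilon))\le C$. By Lemma~\ref{lem:elementary-properties}(i) we take $y_\varepsilon=0$ and $\nu$ fixed.

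Fix $\delta\in(0,1/4)$. Inside the frame $Q^\nu_{1-\delta}\setminus Q^\nu_{1-2\delta}$ we select a thin square annulus $S_\varepsilon=\partial_\varepsilon Q^\nu_{\rho_\varepsilon}$ with $\rho_\varepsilon\in(1-2\delta,1-\delta)$, of width $20\varepsilon$, on which $X_\varepsilon$ is essentially a perfect lattice. The frame contains $K\sim\delta/\varepsilon$ disjoint annuli of this type. By Lemma~\ref{lem:elementary-properties}(iv) and an averaging argument (Fubini for the $L^1$ part), we choose one with:
\begin{itemize}
\item[(a)] $E_\varepsilon(X_\varepsilon,(S_\varepsilon)_{3\varepsilon})\le C\varepsilon/\delta$, so by Lemma~\ref{lemma:cell-energy-zero} at most $C/\delta$ atoms in $S_\varepsilon$ carry nonzero energy;
\item[(b)] $\int_{S_\varepsilon}|u_\varepsilon-u^\nu_{z^+,z^-}|\le \frac{C\varepsilon}{\delta}\int_{Q^\nu}|u_\varepsilon-u^\nu_{z^+,z^-}|$, i.e.\ $o(\varepsilon)$.
\end{itemize}
By \eqref{iff:zero-energy} and the definition \eqref{def:interpolation}, $u_\varepsilon$ is constant along chains of zero-energy atoms. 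Hence $S_\varepsilon$, minus $O(1/\delta)$ balls of radius $3\varepsilon$ around the defects, splits into at most $C/\delta$ pieces. On each piece $X_\varepsilon$ coincides with a single lattice $\mathcal L_\varepsilon(z)$; here Lemma~\ref{lem:coercivity} bounds the jump set inside $S_\varepsilon$.

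Next we define $z^\pm_\varepsilon$ as the lattice of the longest piece in $S_\varepsilon\cap\{\pm\langle x,\nu\rangle\ge 10\varepsilon\}$. By (b), $z^\pm_\varepsilon\to z^\pm$ (if $z^-={\bf 0}$, then (b) forces $X_\varepsilon$ to be essentially empty there). The new configuration $\widetilde X_\varepsilon$ is assembled from four parts:
\begin{itemize}
\item $X_\varepsilon$ inside the inner boundary of $S_\varepsilon$;
\item $\mathcal L_\varepsilon(z^\pm_\varepsilon)$ on the parts of $Q^\nu_{\rho_\varepsilon+20\varepsilon}\setminus Q^\nu_{\rho_\varepsilon}$ with $\pm\langle x,\nu\rangle\ge 10\varepsilon$;
\item vacuum in the lateral strip $|\langle x,\nu\rangle|<10\varepsilon$;
\item vacuum at every place in $S_\varepsilon$ where $X_\varepsilon$ does not coincide with $\mathcal L_\varepsilon(z^\pm_\varepsilon)$, enlarged by a few lattice spacings so that $\widetilde X_\varepsilon\in\mathcal X_\varepsilon(\mathbb R^2)$.
\end{itemize}
Where $X_\varepsilon$ already equals $\mathcal L_\varepsilon(z^\pm_\varepsilon)$, the gluing creates no energy, by Lemma~\ref{lem:elementary-properties}(vi). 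Every other atom gets at most a solid–vacuum defect. Hence
\[
E_\varepsilon(\widetilde X_\varepsilon,Q^\nu_{\rho_\varepsilon+20\varepsilon})\le E_\varepsilon(X_\varepsilon,Q^\nu)+C\bigl(\varepsilon/\delta+\ell_\varepsilon+\varepsilon\bigr),
\]
where $\ell_\varepsilon$ is the total length of the vacuum pieces inserted in the fourth item. After the rescaling $x\mapsto x/(\rho_\varepsilon+20\varepsilon)$ and Lemma~\ref{lem:elementary-properties}(ii), $\widetilde X_\varepsilon$ is admissible in \eqref{def:Phi} with $z^\pm_\varepsilon\to z^\pm$. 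Its energy is bounded by $(1-2\delta)^{-1}\bigl(E_\varepsilon(X_\varepsilon,Q^\nu)+C\ell_\varepsilon+o(1)\bigr)$. A diagonal argument in $\delta\to0$ (the minimum in \eqref{def:Phi} allows varying $z^\pm_\varepsilon$) then gives $\Phi\le\psi$.

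The main obstacle is the estimate $\limsup_\varepsilon\ell_\varepsilon\to0$ as $\delta\to0$. Pieces whose lattice is far from $z^\pm$ have length $o(1)$ by (b). Pieces with a lattice $z\neq z^\pm_\varepsilon$ but close to $z^\pm$ are not controlled by $L^1$ alone. For these I would use rigidity:
\begin{itemize}
\item Two adjacent pieces are separated only by a cluster of defects. Crossing from one piece to the other through the width-$20\varepsilon$ strip forces a defect cluster, and there are at most $C/\delta$ of these.
\item A change of lattice along the strip therefore happens only at these $O(1/\delta)$ locations.
\item I would then compare with the interior: choosing the annulus so that additionally the inner region near $S_\varepsilon$ carries $L^1$-small deviation, a piece of macroscopic length with lattice $z\ne z^\pm_\varepsilon$ must be matched, inside the lattice-region it borders, by a grain boundary of comparable length.
\item By Lemma~\ref{lem:coercivity}, such a grain boundary costs energy of order its length. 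Averaging over the $K$ candidate annuli shows this can happen only on an $O(\varepsilon)$-energy budget.
\end{itemize}
If this comparison is too delicate, the fallback is to let the outer boundary datum on each side follow the few pieces exactly and then use that \eqref{def:Phi} only needs convergent (not constant) boundary lattices. In that case the difficulty becomes showing that a bounded number of lattice switches along the boundary contributes only $O(\varepsilon/\delta)$ energy.
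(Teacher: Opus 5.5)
There is a genuine gap, and it sits exactly at the step you flag as the main obstacle. Neither your rigidity sketch nor your fallback closes it.

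Your averaging uses only $\sup_\varepsilon E_\varepsilon(X_\varepsilon,Q^\nu)\le C$. So the selected annulus may carry $O(1/\delta)$ defects and $O(\varepsilon)$ energy. This is compatible with macroscopic pieces carrying a lattice $z'_\varepsilon\ne z^+_\varepsilon$ with $z'_\varepsilon\to z^+$. For instance, take a grain boundary in the upper half running from the interface to the top side, separating $\mathcal{L}_\varepsilon(z^+_\varepsilon)$ from $\mathcal{L}_\varepsilon(z'_\varepsilon)$. It crosses every one of your $K$ annuli once and costs only $O(\varepsilon)$ in each (its energy inside the frame is of order $\delta$). It satisfies (a) and (b), yet splits each annulus into two pieces of length of order one.

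Averaging over annuli therefore cannot detect it, $\ell_\varepsilon$ does not tend to $0$, and inserting vacuum on the wrong piece costs an amount of order one. The fallback is not admissible either: in \eqref{def:Phi}, for each fixed $\varepsilon$ the datum on all of $\partial^\pm_\varepsilon Q^\nu(y_\varepsilon)$ is a single lattice $\mathcal{L}_\varepsilon(z^\pm_\varepsilon)$. Only the dependence on $\varepsilon$ may vary.

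What rules such configurations out is the optimality of $X_\varepsilon$ for $\psi$, which your argument never uses. The paper exploits it through an energy-concentration step. Take $X_\varepsilon$ with $E_\varepsilon(X_\varepsilon,Q^\nu(y_\varepsilon))\to\psi(z^+,z^-,\nu)$, and apply Lemma~\ref{lem:density-rectangles} with $l=1$, $h=\delta/2$ (for which $X_\varepsilon$ is admissible). This gives
\[
\psi\le\liminf_{\varepsilon\to0}E_\varepsilon(X_\varepsilon,R^\nu_{1,\delta/2}(y_\varepsilon))\le\lim_{\varepsilon\to0}E_\varepsilon(X_\varepsilon,Q^\nu(y_\varepsilon))=\psi,
\]
hence $E_\varepsilon(X_\varepsilon,Q^\nu(y_\varepsilon)\setminus R^\nu_{1,\delta/2}(y_\varepsilon))\to0$ for every fixed $\delta$.

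Averaging over the $\sim\delta/\varepsilon$ layers then yields a layer with $o(1)/\delta$ defect atoms. Being an integer, this count is $0$ for $\varepsilon$ small. To make this work, the paper picks the layer in each half cube separately and outside $R^\nu_{1,\delta}$: a U-shaped strip rather than a full annulus through the interface. That strip is connected and defect-free, so $u_\varepsilon\equiv z^+_\varepsilon$ there by Lemma~\ref{lem:coercivity}, and the $L^1$ bound gives $z^+_\varepsilon\to z^+$.

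The cut-off then keeps $X_\varepsilon$ inside, puts $\mathcal{L}_\varepsilon(z^\pm_\varepsilon)$ outside, and puts vacuum on the two lateral pieces of $R^\nu_{1,\delta}$ at cost $C\delta$. No bookkeeping of pieces or of $\ell_\varepsilon$ is needed.
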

This provides the first step towards proving Proposition~\ref{prop:psi-varphi}. In Section \ref{sec:8}, we show $\Phi(z^+,z^-,\nu)=\varphi(z^+,z^-,\nu)$ for all $z^+,z^-\in\mc{Z}$ and $\nu\in\mathbb{S}^{1}$, see Lemma~\ref{lem:varphi-bar-Phi}  and Proposition~\ref{prop:limit-phi}. The proof of Lemma~\ref{lem:psi-Phi} relies on a cut-off argument, which allows us to construct configurations attaining the appropriate boundary values. This is a classical technique in the homogenization theory. However, in contrast to problems on Sobolev spaces, where this is usually done via a convex combinations of functions, our discrete construction is considerably more delicate. This is due to two observations. On the one hand, the system is quite flexible due to the fact that the energies are invariant under translations and rotations, see~Lemma~\ref{lem:elementary-properties}(i). On the other hand, the hard-sphere constraints render the system highly rigid, so that even small changes may result in configurations of infinite energy; see~\eqref{def:Space-eps}.

One of the key observations in the proof is the fact that the energy of optimal sequences in \eqref{def:psi} is concentrated asymptotically arbitrarily close to the interface. In order to prove this, we use the following characterization of the density $\psi$ on rectangles. To this end, we introduce half-open rectangles with sides parallel to $\nu \in \mathbb{S}^1$ by
\begin{align*}
R^\nu_{l,h}(y) = y +\left\{x\in \mathbb{R}^2 \colon -\frac{h}{2}\leq \langle x,\nu \rangle < \frac{h}{2}\,, \frac{l}{2}\leq \langle x,\nu^\perp \rangle < \frac{l}{2}\right\}\,,
\end{align*}
where $y \in \mathbb{R}^2$, and $l,h >0$. We simply write $R^\nu_{l,h}$ instead of $R^\nu_{l,h}(y)$ if the rectangle is centered at $y=0$. Recall \eqref{def:pure-jump-function}.
\begin{lem}[Density $\psi$ on rectangles] \label{lem:density-rectangles}
For all $z^+,z^-\in\mc{Z}, \nu\in\mathbb{S}^{1}$ and all $l,h>0$ there holds
\begin{align*}
\begin{split}
\psi(z^+,z^-,\nu)=\inf\big\{\liminf_{\varepsilon\to0}\frac{1}{l}E_\varepsilon(X_\varepsilon,R_{l,h}^\nu(y_\varepsilon))&\colon y_\varepsilon\in\R^2\,,\\
&\lim_{\varepsilon\to0}\int_{R_{l,h}^\nu}\vert u_\varepsilon(x+y_\varepsilon)-u_{z^+,z^-}^\nu(x)\vert\,dx=0\big\}\,. 
\end{split}
\end{align*}
\end{lem}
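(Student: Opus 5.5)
Denote by $\psi_{l,h}(z^+,z^-,\nu)$ the right-hand side of the claimed identity. The plan is to show independence from $(l,h)$ by two reductions: invariance under a common rescaling, and independence from the height $h$ together with additivity in the length $l$. Since $Q^\nu=R^\nu_{1,1}$, we have $\psi=\psi_{1,1}$, so these reductions give the claim.

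\emph{Step 1 (scaling).} For $\lambda>0$, we show $\psi_{\lambda l,\lambda h}=\psi_{l,h}$. Given an admissible sequence $(X_\varepsilon,y_\varepsilon)$ for $\psi_{l,h}$, we set $X'_{\lambda\varepsilon}:=\lambda X_\varepsilon$ and $y'=\lambda y_\varepsilon$. Lemma~\ref{lem:elementary-properties}(ii) gives $E_{\lambda\varepsilon}(\lambda X_\varepsilon,R^\nu_{\lambda l,\lambda h}(\lambda y_\varepsilon))=\lambda E_\varepsilon(X_\varepsilon,R^\nu_{l,h}(y_\varepsilon))$, which compensates the factor $\frac1{\lambda l}$. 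Lemma~\ref{lem:scaling}, together with the $0$-homogeneity $u^\nu_{z^+,z^-}(\lambda\cdot)=u^\nu_{z^+,z^-}$, transfers the $L^1$-convergence. Since $\lambda\varepsilon\to0$ runs over all null sequences, the two infima coincide.

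\emph{Step 2 (height).} We show $\psi_{l,h}$ does not depend on $h$.
\begin{itemize}
\item For $h'<h$, restriction gives $\psi_{l,h'}\le\psi_{l,h}$: $L^1$-convergence on $R_{l,h}$ implies it on $R_{l,h'}$, and the energy is monotone in the set by Lemma~\ref{lem:elementary-properties}(iii).
\item For the reverse inequality, we take an almost optimal sequence on the thin rectangle $R_{l,h'}$ and extend it to a configuration on $R_{l,h}$. Above the strip we place $\varepsilon\mathcal L(z^+)$, below it $\varepsilon\mathcal L(z^-)$, suitably translated and rotated, with vacuum $\emptyset$ if $z^\pm={\bf 0}$.
\end{itemize}
The extension must not create infinite energy and must not add energy of order one. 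Here the $L^1$-convergence together with the coercivity Lemma~\ref{lem:coercivity} helps: the set where $u_\varepsilon$ deviates from $z^\pm$ has small area. By Fubini we choose horizontal lines $\langle x-y_\varepsilon,\nu\rangle=\pm t_\varepsilon$ with $t_\varepsilon\in(h'/4,h'/2)$ on which $u_\varepsilon$ is mostly equal to $z^\pm_\varepsilon$ and the energy in a $C\varepsilon$-neighbourhood is $o(1)$. Because all zero-energy atoms near such a line lie on a common lattice, we cut along the line and continue that lattice outward. The defects are controlled by the number of bad cells on the line, which is small.

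\emph{Step 3 (length).} We show $\psi_{l,h}=\psi_{1,h/l}$; by Step 1 this is $\psi_{1,1}$ once Step 2 is applied. For integer $l=k$ and the inequality $\ge$, we split $R_{k,h}$ into $k$ translated rectangles $R_{1,h}$. Superadditivity of the energy, Lemma~\ref{lem:elementary-properties}(iv), gives the bound, and $L^1$-convergence passes to each piece. For $\le$, we would instead use Step 1 to rescale a $(1,1)$-competitor to $(k,k)$ and then Step 2 to change the height. Rational $l$ follow by combining with Step 1, and general $l$ by monotonicity and approximation, since $\psi_{l,h}$ is monotone in a controlled way.

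\emph{Main obstacle.} The main obstacle is the extension in Step 2. Because of the hard-core constraints, gluing a lattice to an arbitrary configuration along a line can produce points at distance less than $\varepsilon$ (or less than $\sqrt2\varepsilon$ for equal charges). The first remedy to try is deleting all atoms within $2\varepsilon$ of the cut that are not on the continued lattice. The energy created by this deletion is bounded by $C\varepsilon$ times the number of affected sites. That number is controlled by $\varepsilon^{-1}\mathcal H^1(J_{u_\varepsilon}\cap\text{strip})$ plus the number of cells with $u_\varepsilon\ne z^\pm_\varepsilon$ along the line, and both contributions are $o(\varepsilon^{-1})$ by the choice of $t_\varepsilon$.
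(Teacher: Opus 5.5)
There is a genuine gap in your proof, but it sits in a step you do not need.

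\textbf{The gap.} Your Step~1, the monotonicity half of Step~2, and the splitting argument of Step~3 are correct. The reverse height inequality in Step~2 is not. The upper bound in Step~3, and hence your whole proof, rests on it.

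Choosing $t_\varepsilon$ by averaging/Fubini only gives energy $O(\varepsilon)$ in the selected $C\varepsilon$-layer. That means a \emph{bounded}, not vanishing, number of defective atoms there. Such defects can cut the layer into boundedly many segments of length of order one. On these segments $u_\varepsilon$ can take different values $z^{(1)}_\varepsilon,z^{(2)}_\varepsilon,\dots$, all converging to $z^+$. Think of slightly shifted copies of the lattice separated by a vertical grain boundary or vacancy column of length $\sim h'$, which costs only $O(1)$.

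$L^1$-convergence to $u^\nu_{z^+,z^-}$ cannot detect this, because $|z^{(i)}_\varepsilon-z^+|\to0$. So neither of the following follows:
\begin{itemize}
\item ``all zero-energy atoms near the line lie on a common lattice'';
\item ``the number of cells on the line with $u_\varepsilon\neq z^\pm_\varepsilon$ is $o(\varepsilon^{-1})$''.
\end{itemize}
If you continue one lattice outward, you create a mismatch of length of order one along the cut. If you continue each lattice separately, you create vertical interfaces of length $\sim(h-h')/2$. In both cases the added energy is $O(1)$, not $o(1)$.

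A defect-free layer would require the energy of the configuration to vanish away from the interface. That is exactly what the paper deduces \emph{from} this lemma (Step~1 of the proof of Lemma~\ref{lem:psi-Phi}), so it cannot be used here.

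\textbf{The fix.} No gluing is needed; your three correct ingredients already close the argument. (The paper itself only refers to \cite[Lemma~4.2]{FriedrichKreutzSchmidt}.) Combine:
\begin{itemize}
\item splitting $R^\nu_{kl,h}$ into $k$ translates of $R^\nu_{l,h}$ along $\nu^\perp$, using that $u^\nu_{z^+,z^-}$ is invariant in direction $\nu^\perp$;
\item scaling with $\lambda=1/k$;
\item monotonicity in $h$.
\end{itemize}
Together they give
\begin{align*}
\psi_{l,h}\le\psi_{kl,h}=\psi_{l,h/k}\le\psi_{l,h}\qquad\text{for all }k\in\mathbb{N}\,.
\end{align*}
Now take $0<h_1<h_2$ and $k\ge h_2/h_1$. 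Then $\psi_{l,h_2}=\psi_{l,h_2/k}\le\psi_{l,h_1}\le\psi_{l,h_2}$, so $h\mapsto\psi_{l,h}$ is constant. One more application of Step~1, with $\lambda=1/l$, yields $\psi_{l,h}=\psi_{1,h/l}=\psi_{1,1}=\psi$, since $R^\nu_{1,1}=Q^\nu$. This also makes the integer/rational/approximation discussion in your Step~3 superfluous.
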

\begin{proof}
The proof is entirely analogous to the proof of \cite[Lemma 4.2]{FriedrichKreutzSchmidt}.
\end{proof}
We are now in position to prove Lemma~\ref{lem:psi-Phi}.
\begin{proof}[Proof of Lemma~\ref{lem:psi-Phi}.] Considering \eqref{def:psi}, we can choose a subsequence in $\varepsilon$ (not relabeled) and configurations $X_\varepsilon\subset\R^2$ and $y_\varepsilon\in\R^2$, such that 
\begin{align}\label{eq:L1-convergence-cutoff}
\lim_{\varepsilon\to0}\int_{Q^\nu}\vert u_\varepsilon(x+y_\varepsilon)-u^\nu_{z^+,z^-}(x)\vert\,\mathrm{d}x=0
\end{align}
and
\begin{equation}\label{eq:Xeps-psi}
\psi(z^+,z^-,\nu)=\lim_{\varepsilon\to0}E_\varepsilon(X_\varepsilon, Q^\nu(y_\varepsilon))\,.
\end{equation}
The following proof will follow the strategy of \cite[Lemma~4.1]{FriedrichKreutzSchmidt} with some adaptations. This will be done in several steps by a refined cut-off construction, taking into consideration our rigid potentials. In Step 1, we show that the energy $X_\varepsilon$ is concentrated around a strip close to the limiting interface. Step 2 applies an averaging argument to select an optimal transition layer in the upper and lower half-cube, where the configuration coincides with a dominant component. Here, the notion 'component' refers to a subset of a specific lattice. In Step 3, we will modify our configuration $X_\varepsilon$ such that it satisfies the imposed boundary values in \eqref{def:Phi} with the selected dominant component. Step 4 shows that the modified configuration is an asymptotic lower bound of the energy of the original configuration. Step 5 concludes that the constructed configuration is a competitor in the definition of $\Phi$. \\
\noindent {\bf Step~1: }{\it The energy concentrates near the line $\{\langle\nu,(x-y_\varepsilon)\rangle=0\}$.} We want to show that for all $\delta\in(0,1)$ it holds
\begin{equation}\label{eq:zero-energy-strip}
\lim_{\varepsilon\to0}E_\varepsilon(X_\varepsilon,Q^\nu(y_\varepsilon)\setminus R^\nu_{1,\frac{\delta}{2}}(y_\varepsilon))=0\,.
\end{equation}
Using Lemma~\ref{lem:elementary-properties}(iii), Lemma~\ref{lem:density-rectangles}, \eqref{eq:Xeps-psi}, as well as the fact that $\{X_\varepsilon\}_\varepsilon$ is admissible in the definition of $\psi$ on $R^\nu_{1,\frac{\delta}{2}}$, see Lemma~\ref{lem:density-rectangles}, we obtain
\begin{align*}
\psi(z^+,z^-,\nu)&\leq\liminf_{\varepsilon\to0}E_\varepsilon(X_\varepsilon,R^\nu_{1,\frac{\delta}{2}}(y_\varepsilon))\leq\lim_{\varepsilon\to0}E_\varepsilon(X_\varepsilon,Q^\nu(y_\varepsilon))=\psi(z^+,z^-,\nu)\,.
\end{align*}
Using Lemma~\ref{lem:elementary-properties}(iv) this implies
\begin{align*}
0\leq\limsup_{\varepsilon\to0}E_\varepsilon(X_\varepsilon,Q^\nu(y_\varepsilon)\setminus R^\nu_{1,\frac{\delta}{2}}(y_\varepsilon))&=\limsup_{\varepsilon\to0}\left(E_\varepsilon(X_\varepsilon,Q^\nu(y_\varepsilon))-E_\varepsilon(X_\varepsilon,R^\nu_{1,\frac{\delta}{2}}(y_\varepsilon))\right)\\
&\leq\lim_{\varepsilon\to0}E_\varepsilon(X_\varepsilon,Q^\nu(y_\varepsilon))-\liminf_{\varepsilon\to0}E_\varepsilon(X_\varepsilon,R^\nu_{1,\frac{\delta}{2}}(y_\varepsilon))=0\,.
\end{align*}
This shows \eqref{eq:zero-energy-strip} and thus concludes Step~1. \\

In order to shorten the notation for the rest of the proof, we omit the dependence on the center $y_\varepsilon$ and write $Q^\nu_\rho$ instead of $Q^\nu_\rho(y_\varepsilon)$ for $\rho>0$ and $R^\nu_{1,\delta}$ instead of $R^\nu_{1,\delta}(y_\varepsilon)$. Also, omitting the center, we define the rectangles $P^\pm_{\delta,\varepsilon}=Q^{\nu,\pm}_{1-2\varepsilon}\setminus R^\nu_{1-2\varepsilon,\delta}$, where $Q^{\nu,\pm}_{1-2\varepsilon}$ is defined in \eqref{def:half-cubes}. We will only deal with the case $Q^{\nu,+}$ as the case $Q^{\nu,-}$ is analogous. In the following, we fix $\delta\in(0,1)$ small enough, and we suppose without loss of generality that $\varepsilon\ll\delta$ as we consider the limit as $\varepsilon\to0$. \\

\noindent {\bf Step~2: }{\it Finding an optimal transition layer.} Let $N_\varepsilon=\lfloor\frac{\delta}{12\varepsilon}\rfloor$ and for $k=0,\dots, N_\varepsilon+1$ define 
\begin{align*}
S_\varepsilon(k)=(Q^{\nu,+}_{r_k}\setminus Q^{\nu,+}_{r_{k-1}})\setminus R^\nu_{1,\delta}\,, 
\end{align*}
where $r_k=1-\delta+6k\varepsilon$. Furthermore, for $k=1,\dots, N_\varepsilon$ we define 
\begin{align*}
L_\varepsilon(k)=\left( S_\varepsilon(k-1)\cup S_\varepsilon(k)\cup S_\varepsilon(k+1)\right)_{3\varepsilon}\,.
\end{align*}
 By an averaging argument, using that $L_\varepsilon(k) \cap L_\varepsilon(j) =\emptyset$ for $|j-k|\geq 4$ and $L_\varepsilon(k) \subset Q^\nu \setminus R^\nu_{1,\frac{\delta}{2}}$ for all $k \in \{1,\ldots,N_\varepsilon\}$, there exists $k_\varepsilon\in\{1,\dots,N_\varepsilon\}$ such that
\begin{align*}
\begin{split}
E_\varepsilon(X_\varepsilon,L_\varepsilon(k_\varepsilon))+\Vert u_\varepsilon-u\Vert_{L^1(L_\varepsilon(k_\varepsilon))}&\leq\frac{1}{N_\varepsilon}\sum_{k=1}^{N_\varepsilon}\left( E_\varepsilon(X_\varepsilon,L_\varepsilon(k))+\Vert u_\varepsilon- u^\nu_{z^+,z^-}\Vert_{L^1(L_\varepsilon(k))} \right)\\
&\leq C\frac{\varepsilon}{\delta}(E_\varepsilon(X_\varepsilon,Q^\nu\setminus R_{1,\frac\delta2}^\nu)+\Vert u_\varepsilon- u^\nu_{z^+,z^-}\Vert_{L^1(Q^\nu)})\,.
\end{split}
\end{align*}
Lemma~\ref{lemma:cell-energy-zero} and this implies
\begin{align*}
\varepsilon\#\{x\in X_\varepsilon\cap L_\varepsilon(k_\varepsilon)\colon E_{\varepsilon}(x,X_\varepsilon)\neq0\}&\leq 2E_\varepsilon(X_\varepsilon, L_\varepsilon(k_\varepsilon))\\&\leq C\frac{\varepsilon}{\delta}(E_\varepsilon(X_\varepsilon,Q^\nu\setminus R_{1,\frac\delta2}^\nu)+\Vert u_\varepsilon-u^\nu_{z^+,z^-}\Vert_{L^1(Q^\nu)})\,.
\end{align*}
Dividing both sides by $\varepsilon$ thus yields
\begin{align*}
\#\{x\in X_\varepsilon\cap L_\varepsilon(k_\varepsilon)\colon E_{\varepsilon}(x,X_\varepsilon)\neq0\}\leq C\frac{1}{\delta}(E_\varepsilon(X_\varepsilon,Q^\nu\setminus R_{1,\frac\delta2}^\nu)+\Vert u_\varepsilon-u\Vert_{L^1(Q^\nu)})\,.
\end{align*}
Since the left-hand side is a natural number and the right-hand side tends to $0$ as $\varepsilon \to 0$ by \eqref{eq:L1-convergence-cutoff} and Step~1, it follows that there exists $\varepsilon_0 > 0$ such that, for all $\varepsilon \leq \varepsilon_0$ we have
\begin{align}\label{eq:zero-energy-strip}
E_\varepsilon(X_\varepsilon,L_\varepsilon(k_\varepsilon))=0\,.
\end{align} 
This together with Lemma~\ref{lem:coercivity} and $(S_\varepsilon(k_\varepsilon))_{3\varepsilon}\subset L_\varepsilon(k_\varepsilon)$ shows
\begin{align*}
\mc{H}^1(J_{u_\varepsilon}\cap S_\varepsilon(k_\varepsilon))\leq E_\varepsilon(X_\varepsilon,(S_\varepsilon(k_\varepsilon))_{3\varepsilon})\leq E_\varepsilon(X_\varepsilon,L_\varepsilon(k_\varepsilon))=0\,.
\end{align*}
As $u_\varepsilon$ is piecewise constant and $S_\varepsilon(k_\varepsilon)$ is connected this implies the  existence of a unique $z_\varepsilon^+\in\mc{Z}$ such that
\begin{align}\label{eq:u-on-strip}
u_\varepsilon|_{S_\varepsilon(k_\varepsilon)}\equiv z_\varepsilon^+
\end{align}
and in particular by the definition of $u_\varepsilon$, see \eqref{def:interpolation}, that for all $x\in X_\varepsilon\cap S_\varepsilon(k_\varepsilon)$ it holds $ X_\varepsilon \cap B_{\sqrt{2}\varepsilon}(x)=\mc{L}_\varepsilon(z_\varepsilon^+)\cap B_{\sqrt{2}\varepsilon}(x)$.  Lastly, we claim that it holds $z_\varepsilon^+\to z^+$ as $\varepsilon\to0$. Using $\mc{L}^2(S_\varepsilon(k_\varepsilon))=3\varepsilon(2-4\delta+12k_\varepsilon\varepsilon-6\varepsilon)\geq 3\varepsilon$ for $\delta>0$ small enough (and thus $\varepsilon>0$ small enough), we have
\begin{align}\label{ineq:L1-on-strip}
\begin{split}
3\varepsilon \vert z_\varepsilon^+-z^+\vert\leq \mc{L}^2(S_\varepsilon^{k_\varepsilon})\vert z_\varepsilon^+-z^+\vert&=
\Vert u_\varepsilon-u^\nu_{z^+,z^-}\Vert_{L^1(S_\varepsilon(k_\varepsilon))}\\
&\leq C\frac{\varepsilon}{\delta}(E_\varepsilon(X_\varepsilon,Q^\nu\setminus R_{1,\frac\delta2}^\nu)+\Vert u_\varepsilon-u\Vert_{L^1(Q^\nu)})\,.
\end{split}
\end{align}
Dividing \eqref{ineq:L1-on-strip} by $\varepsilon$ noting that the right-hand side still tends to $0$ as $\varepsilon$ to $0$, this implies $z_\varepsilon^+\to z^+$ as $\varepsilon\to0$. \\
\noindent {\bf Step~3: }{\it Cut-off construction.} We will now define a configuration $Y_\varepsilon^+\in \mathcal{X}_\varepsilon(\mathbb{R}^2)$, such that $Y_\varepsilon^+=\mc{L}_\varepsilon(z^+_\varepsilon)$ on $\partial_\varepsilon^+Q^\nu, Y_\varepsilon^+=\emptyset$ on $\partial^c_\varepsilon Q^\nu$, and its energy is asymptotically equivalent to $X_\varepsilon$. To this end, we define 
\begin{equation}\label{def:Yeps-plus}
Y_\varepsilon^+:=\begin{cases}
\mc{L}_\varepsilon(z_\varepsilon^+) &\text{in  } (Q^{\nu,+}\setminus(Q^\nu_{r_{k_\varepsilon}}\cup R^\nu_{1,\delta}))\cup\partial_\varepsilon^+Q^\nu\,, \\
\emptyset &\text{in  } (R_{1,\delta}^\nu\setminus Q^\nu_{r_{k_\varepsilon-1}})\setminus  (\partial_\varepsilon^+ Q^\nu \cup \partial_\varepsilon^- Q^\nu)\,, \\
X_\varepsilon &\text{otherwise.}
\end{cases}
\end{equation}
Note that $Y_\varepsilon^+ \in \mathcal{X}_\varepsilon(\mathbb{R}^2)$. This is an easy consequence of the fact of \eqref{eq:u-on-strip} and the fact that $Y_\varepsilon^+ = \emptyset$ on $(R_{1,\delta}^\nu\setminus Q^\nu_{r_{k_\varepsilon-1}})\setminus  (\partial_\varepsilon^+ Q^\nu \cup \partial_\varepsilon^- Q^\nu)$. The different regions used to define $Y_\varepsilon^+$ are illustrated in Figure~\ref{fig:cutoff}. 
\begin{figure}[htp]
\centering
\begin{tikzpicture}[scale=.8]
\tikzset{>={Latex[width=1mm,length=1mm]}};
\draw(0,0) rectangle(10,10);
%\draw[fill=black!5](0.5,0.5) rectangle(9.5,9.5);
\draw[fill=black!5](1,1) rectangle(9,9);
\draw[fill=black!20](0,0) rectangle(10,5);
\draw[fill=black!10](0,5.25)--(0.25,5.25)--(0.25,5.5)--(9.75,5.5)--(9.75,5.25)--(10,5.25)--(10,4.75)--(9.75,4.75)--(9.75,4.5)--(0.25,4.5)--(0.25,4.75)--(0,4.75)--cycle;
\draw[fill=black!20](1.5,4.5) rectangle(8.5,8.5);
\draw[black!20](1.5,4.5)--(8.5,4.5);
\draw(5,8.35) node[above]{$S_\varepsilon(k_\varepsilon)$};
%\draw(5,9.4) node[above]{$(P^+_{\delta,\varepsilon}\setminus Q^\nu_{r_{k_\varepsilon}})\cup\partial^+_\varepsilon Q^\nu$};
\draw[<->](-0.1,4.5)-- node[left]{$\delta$}(-0.1,5.5);
\draw[<->](0,5.6)node[above right]{$\sim\delta$}-- (1.5,5.6);
\draw[thick](0,5)--(10,5);
\draw[dashed](0,5.5)--(10,5.5);
\draw[dashed](0,4.5)--(10,4.5);
\draw(10,5) node[right]{$R^\nu_{1,\delta}$};
\draw[->](7,5)--(7,6) node[right]{$\nu$};
%\draw(5,2) node[above]{$D^\varepsilon\cup\partial^-_\varepsilon Q^\nu$};
\end{tikzpicture}
\caption{Illustration of the different regions in the definition of $Y^+_\varepsilon$. In the grey region, we set $Y_\varepsilon^+ =X_\varepsilon$. In the light gray strip $S^\varepsilon(k_\varepsilon)$ we have $Y_\varepsilon^+ = X_\varepsilon = \mathcal{L}_\varepsilon(z_\varepsilon^+)$. In the grey region is $(R^\nu_{1,\delta}\setminus Q^\nu_{r_{k_\varepsilon-1}})\setminus(\partial^+_\varepsilon Q^\nu\cup\partial^-_\varepsilon Q^\nu)$ we set $Y_\varepsilon^+ =\emptyset$. Finally, in the white region we set $Y_\varepsilon^+ =\mathcal{L}_\varepsilon(z_\varepsilon^+)$. The dashed lines enclose $R^\nu_{1,\delta}$.} 
\label{fig:cutoff}
\end{figure} \\
\noindent {\bf Step~4:} {\it Energy estimate.} In this step, we show that the energy of the configuration constructed in Step~3 is asymptotically controlled by the original energy, i.e., 
\begin{equation}\label{ineq:Step4-estimate}
\liminf_{\varepsilon\to0}E_\varepsilon(Y_\varepsilon^+,Q^\nu)\leq\liminf_{\varepsilon\to0}E_\varepsilon(X_\varepsilon,Q^\nu)+C\delta
\end{equation}
for some universal $C>0$. In order to obtain \eqref{ineq:Step4-estimate} we distinguish between three regions
\begin{equation}\label{def:regions-cut-off}
A_1^\varepsilon:=\overline{(R_{1,\delta}^\nu\setminus Q^\nu_{r_{k_\varepsilon-1}})_{\varepsilon}}\,,\quad A_2^\varepsilon:= \left(\overline{\left( Q^\nu_{r_{k_\varepsilon-1}}\right)_\varepsilon} \cup Q^{\nu,-}\right) \setminus A_1^\varepsilon\,,\quad A_3^\varepsilon:=Q^\nu\setminus(A_1^\varepsilon\cup A_2^\varepsilon)\,. 
\end{equation}
\noindent {\it Energy estimate on }$A_1^\varepsilon$: We claim that there exists a constant $C>0$ such that
\begin{align}\label{ineq:energy-estimate-A1}
E_\varepsilon(Y_\varepsilon^+,A_1^\varepsilon)\leq C\delta\,.
\end{align}
As $Y_\varepsilon^+ \in \mathcal{X}_\varepsilon(\mathbb{R}^2)$ we can use \eqref{def:Yeps-plus} to infer that $Y_\varepsilon^+\cap(R_{1,\delta}^\nu\setminus Q^\nu_{r_{k_\varepsilon}-1})=(\mc{L}_\varepsilon(z^+_\varepsilon)\cap R_{1,\delta}^\nu\cap\partial^+_\varepsilon Q^\nu)\cup(X_\varepsilon\cap R^\nu_{1,\delta}\cap\partial_\varepsilon^-Q^\nu)$. As $\mc{L}^2((R_{1,\delta}^\nu\cap\partial^\pm_\varepsilon Q^\nu)_\varepsilon)\leq C\delta\varepsilon$ we can use Lemma~\ref{lem:elementary-properties}(v) to obtain that 
\begin{align}\label{ineq:card-estimate-A1}
\#\left(Y_\varepsilon^+\cap(R_{1,\delta}^\nu\setminus Q^\nu_{r_{k_\varepsilon}-1})\right)\leq C\frac{\delta}{\varepsilon}\,.
\end{align}
Furthermore, $\mathcal{L}^{2}((R^\nu_{1,\delta}\setminus Q^\nu_{r_{k_\varepsilon}-1})_\varepsilon \setminus (R^\nu_{1,\delta}\setminus Q^\nu_{r_{k_\varepsilon}-1}))\leq C\varepsilon\mathcal{H}^1(\partial (R^\nu_{1,\delta}\setminus Q^\nu_{r_{k_\varepsilon}-1})) \leq    C\delta \varepsilon$ and thus, again by Lemma~\ref{lem:elementary-properties}(v), we have that 
\begin{align*}
\#(Y_\varepsilon^+ \cap A_1^\varepsilon \setminus (R^\nu_{1,\delta}\setminus Q^\nu_{r_{k_\varepsilon}-1}) \leq C\frac{\delta}{\varepsilon}\,.
\end{align*}
This along with \eqref{def:local-energy} and \eqref{ineq:card-estimate-A1} yields \eqref{ineq:energy-estimate-A1}. \\
\noindent {\it Energy estimate on }$A_2^\varepsilon$: Using \eqref{eq:u-on-strip}, \eqref{def:Yeps-plus}, and \eqref{def:regions-cut-off}, it is easy to verify that $Y_\varepsilon^+ \cap \overline{(A_2^\varepsilon)_\varepsilon} = X_\varepsilon \cap \overline{(A_2^\varepsilon)_\varepsilon}$ and therefore, using Lemma~\ref{lem:elementary-properties}(vi), we have
\begin{align}\label{eq:energy-Y-eps-plus-A2}
E_\varepsilon(Y_\varepsilon^+,A_2^\varepsilon) = E_\varepsilon(X_\varepsilon,A_2^\varepsilon)\,.
\end{align}
\noindent {\it Energy estimate on }$A_3^\varepsilon$: We claim that
\begin{align}\label{eq:energy-Y-eps-plus-A3}
E_\varepsilon(Y_\varepsilon^+,A_3) =0\,.
\end{align}
This follows from Lemma~\ref{lem:elementary-properties}(vi), Lemma~\ref{lemma:cell-energy-zero}, and the fact that $Y_\varepsilon^+ \cap \overline{(A_3^\varepsilon)_\varepsilon} = \mathcal{L}_\varepsilon(z_\varepsilon^+)\cap \overline{(A_3^\varepsilon)_\varepsilon}$ and therefore 
\begin{align*}
E_\varepsilon(Y_\varepsilon^+,A_3) = E_\varepsilon(\mathcal{L}_\varepsilon(z_\varepsilon^+),A_3) =0\,. 
\end{align*}
This shows \eqref{eq:energy-Y-eps-plus-A3}. Now, by Lemma~\ref{lem:elementary-properties}(iv) we have
\begin{align*}
E_\varepsilon(Y_\varepsilon^+,Q^\nu)=E_\varepsilon(Y_\varepsilon^+,A_1^\varepsilon)+E_\varepsilon(Y_\varepsilon^+,A_2^\varepsilon)+E_\varepsilon(Y_\varepsilon^+,A_3^\varepsilon)\,.
\end{align*}
Using \eqref{ineq:energy-estimate-A1}, \eqref{eq:energy-Y-eps-plus-A2}, and \eqref{eq:energy-Y-eps-plus-A3} this shows \eqref{ineq:Step4-estimate}. \\
\noindent {\bf Step 5:} {\it Conclusion.} Repeating Steps~2-Step~4 on $Q^{\nu,-}$ for $z_\varepsilon^-$, we obtain a configuration $Y_\varepsilon$ such that $Y_\varepsilon \in \mathrm{Adm}_\varepsilon^{(z_\varepsilon^+,z_\varepsilon^-)}(Q^\nu(y_\varepsilon))$ and, due to Step~4,
\begin{align}\label{ineq:final-estimate-Step5}
\liminf_{\varepsilon\to0}E_\varepsilon(Y_\varepsilon,Q^\nu(y_\varepsilon))\leq\liminf_{\varepsilon\to0}E_\varepsilon(X_\varepsilon,Q^\nu(y_\varepsilon))+C\delta\,.
\end{align}
where we re-include the center $y_\varepsilon$ in the notation for clarification. Since $z_\varepsilon^\pm\to z^\pm$ as $\varepsilon\to 0$, due to Step~2, we observe by the definition of $\Phi$ (see \eqref{def:Phi}) that
$$\liminf_{\varepsilon\to0}E_\varepsilon(Y_\varepsilon, Q^\nu(y_\varepsilon))\geq\Phi(z^+,z^-,\nu)\,.$$
Using \eqref{eq:Xeps-psi}, \eqref{ineq:final-estimate-Step5}, and by passing $\delta\to0$, we obtain the statement of the Lemma and conclude the proof.
\end{proof}

\section{Characterization of minimizers in the cell problem}\label{sec:6}
In the previous section, we have seen that the condition of $L^1$-convergence in the definition of $\psi$ (see~\ref{def:psi}) can be replaced with converging boundary values, see \eqref{def:Phi} for the definition of $\Phi$ and Lemma~\ref{lem:psi-Phi} for the precise statement. In the following, it will be convenient to express the problem with lattice spacing set to $1$. Indeed, by using Lemma~\ref{lem:elementary-properties}(ii) the cell formula for $\Phi$ for $z^+,z^-\in\mc{Z},\nu\in\mathbb{S}^1$ can be equivalently written as
\begin{align*}
\begin{split}
\Phi(z^+,z^-,\nu)=\min\big\{\liminf_{T\to\infty}\frac1T\inf\{E_1(X_T,&Q^\nu_T(y_T))\colon y_T\in\R^2, X_T \in \mathrm{Adm}_1^{(z_T^+,z_T^-)}(Q^\nu_T(y_T))\}\\
&\text{with } \{z_T^\pm\}_T\subset\mc{Z}\text{ and }z_T^\pm\to z^\pm \text{ as } T \to +\infty\big\}\,.
\end{split}
\end{align*}
In this section, we study the minimization problem above for fixed $T>0$ (large), $z^\pm \in \mathcal{Z}$, $y \in \mathbb{R}^2$, and $\nu \in \mathbb{S}^1$, i.e.,
\begin{align}\label{eq:min-problem-T-fixed}
\inf\left\{E_1(X,Q^\nu_T(y)) \colon X \in \mathrm{Adm}_1^{(z^+,z^-)}(Q^\nu_T(y)) \right\}\,.
\end{align}
The goal of this section is to show that one can always select an almost-minimizer of \eqref{eq:min-problem-T-fixed}, such that the configuration is a subset of only the two lattices prescribed on the boundary (or even just one if $z^+=\textbf{0}$ or $z^-=\textbf{0}$). To prove this, we will quickly recall some notions from graph theory that we will use throughout the proof. A charged configuration $X$ is called {\it repulsion-free} if for all $x,y\in X,x\neq y$ with $q(x)=q(y)$ it holds $\vert x-y\vert\geq\sqrt{2}$. Note that if $X \in \mathcal{X}_1(\mathbb{R}^2)$ then $X$ is repulsion-free.\\
{\it The bond graph:} We define the bond graph of $X\subset\R^2$ as the set of vertices $X$ with the set of bonds $\{\{x,y\}\colon x\in X,y\in\mc{N}_1^a(x)\}$. Note that for $X \in \mathcal{X}_1(\mathbb{R}^2)$ the bond graph is a planar graph.\\
A sequence of atoms $p=(v_1,\dots,v_n)\subset X$ is called a {\it simple path} if $\{v_i,v_{i+1}\}$ are bonds for $i=1,\dots,n-1$ and the atoms are distinct. A cycle is a simple path $p=(v_1,\dots,v_{n-1})$ such that $v_{n-1}$ is connected to $v_1$ by a bond. A configuration is called {\it connected} if every two atoms can be joined by a simple path.  A bond is called {\it acyclic} if it is not part of any cycle in the bond graph. The {\it reduced bond graph} is obtained by first deleting all acyclic atoms and then all atoms that are not connected to any other atoms. By a {\it face}, we always mean a face of the reduced bond graph. The boundary of a face is given by a disjoint union of cycles, whereas the cycle is unique if the reduced bond graph is connected. In the latter case, we call the boundary a polygon and specifically a $j$-gon if it consists of $j\in\N$ atoms. \\
{\it Sub-configuration:} For a configuration $X$, we call $Z\subset X$ a sub-configuration, where all previously defined notions extend analogously to $Z$.\\
{\it Face defect:} We define the {\it face defect} of a sub-configuration $Z\subset X$ by
\begin{align}\label{def:eta}
\eta(Z)=\sum_{j\geq 4}(j-4)f_j(Z)\,,
\end{align} 
where $f_j$ is the number of $j$-gons in $Z$. Note that in the summation we exclude triangles, as for configurations $X \in \mathcal{X}_1(\mathbb{R}^2)$ all possible faces are of even length, see Lemma~\ref{lem:elementary-properties}(vii).\\
{\it Strong connectedness:} A connected configuration $Z$ is called strongly connected if $Z\setminus\{z\}$ is connected for all $z\in Z$. \\
{\it Maximal component:} Let $Q^\nu_T(y)$ be given, as well as $z^+,z^-\in\mc{Z}$ and $X \in \mathrm{Adm}_1^{(z^+,z^-)}(Q^\nu_T(y))$. We denote the strongly connected subsets of $X$ contained in a lattice by
$$C^\pm=\{Z\subset X\cap\mc{L}(z^\pm)\colon Z\cap\partial^\pm_1Q^\nu_T(y)\neq\emptyset\text{ and }Z\text{ is strongly connected}\}\,.$$
The maximal component, denoted by $M^\pm$, is then defined as the maximal elements of $C^\pm$ with respect to set inclusion. In particular, they can be written as
\begin{align}\label{def:Max-component}
M^\pm=\bigcup_{Z\in C^\pm}Z\,.
\end{align}
Note that $M^\pm = \emptyset$ if $z^\pm ={\bf 0}$. We define the boundary $\partial M^\pm$  of $M^\pm$ as
\begin{align}\label{def:boundary-Max-component}
\partial M^\pm:=\{x\in M^\pm\colon\overline{B}_{\sqrt{2}}(x)\cap M^\pm\neq\overline{B}_{\sqrt{2}}(x)\cap\mc{L}(z^\pm)\}\,.
\end{align}
This definition of boundary also takes into account concave corners of angle $\tfrac{3\pi}{2}$ that would not be taken into account by a combinatorial definition of the boundary, see, e.g., the white boundary vertex with $4$ neighbors in the bottom component in Figure~\ref{fig:low-energy}(a).  Note that in general $X\cap\partial^\pm_1Q^\nu_T(y)\not\subset M^\pm$ as points in the convex corners of $\partial^\pm_1Q^\nu_T(y)$ may not be strongly connected, see Figure~\ref{fig:NonMaximalComp}. However, their cardinality is bounded independent of $T$.

\begin{rem}[Maximal components are disjoint]\label{rem:maxcomp}
Observe that it holds $M^+\cap M^-=\emptyset$ for $z^+\neq z^-$. Indeed, suppose that $x\in M^+\cap M^-$. Notice that by strong connectedness it holds $\#(\mc{N}^\mathrm{a}(x)\cap M^\pm)\geq2$. Further, by a similar argument as in Lemma~\ref{lem:unique-lattice}, it holds $\mc{N}^\mathrm{a}(x)\cap M^+\cap M^-=\emptyset$, as a charged point and an attractive bond uniquely determine a pair $(\theta,\tau)\in\mathbb{A}\times\mathbb{T}$. But then we must have $\#\mc{N}^\mathrm{a}(x)=4$, which by $X \in \mathcal{X}_1(\mathbb{R}^2)$ contradicts the uniqueness in Lemma~\ref{lem:unique-lattice} whenever $z^+\neq z^-$.
\end{rem}

\begin{figure}
\centering
\begin{tikzpicture}
\draw(0,0)--(2,0);
\draw(0,0)--(0,2);
\draw(0,1)--(-0.3,1.5) node[anchor=east]{$\partial^+_1Q^\nu_T(y)$};
\draw[dashed](2,0)--(5,0);
\draw[dashed](0,2)--(0,5);
\begin{scope}[scale=0.5]
\draw[dashed,rotate=45,shift={(0.5,-6)}](0,6)--(0,7);
\draw[dashed,rotate=45,shift={(0.5,-6)}](0,6)--(0,5);
\draw[dashed,rotate=45,shift={(0.5,-6)}](0,6)--(-1,6);
%\draw[pattern=crosshatch,rotate=45,shift={(0.5,-6)}](-1,6) circle(.1);
%\draw[pattern=crosshatch,rotate=45,shift={(0.5,-6)}](0,5) circle(.1);
%\draw[pattern=crosshatch,rotate=45,shift={(0.5,-6)}](0,7) circle(.1);

\draw(45:.5)++(0,-.3) node[anchor=north]{$x$};

\clip rectangle(10,10);

\foreach \j in {0,...,15}{

\draw[ultra thin,gray](45:.5)++(45:\j)++(-45:\j)--++(20,20);
\draw[ultra thin,gray](45:.5)++(45:\j)++(-45:\j)--++(135:2*\j);
\draw[ultra thin,gray](45:.5)++(45:\j)++(135:\j)--++(20,20);

}

\begin{scope}[rotate=45,shift={(0.5,-6)}]
%\draw[red](0,6) circle(1);

\foreach \j in {-15,...,15}{
\foreach \i in {-15,...,15}{

% \clip (-.2,-.2) rectangle (10.2,10.2);

\draw[fill=black](\i+\j,\i-\j) circle(.1);
\draw[fill=white](\i+\j+1,\i-\j) circle(.1);
}
}

\end{scope}
\end{scope}
\end{tikzpicture}
\caption{ Schematic picture of a point $x\in X\cap\partial_1^+Q^\nu_T(y)$ with $x\notin M^+$.}
\label{fig:NonMaximalComp}
\end{figure}

\begin{figure}[htp]
\begin{tikzpicture}[scale=.8]
\tikzset{>={Latex[width=1mm,length=1mm]}};
\draw(-5,-5) rectangle(5,5);

\draw[black!40,pattern=north west lines,pattern color=black!40](-5,-1)--(-5,1)--(-4.5,1)--(-4.5,-1)--cycle;

\draw[black!40,pattern=north west lines,pattern color=black!40](5,-1)--(5,1)--(4.5,1)--(4.5,-1)--cycle;

\draw[thick,fill=gray!30!white](-5,1)--(-4,1) plot [smooth] coordinates {(-4,-1)(-3,0)(0,-2)(2,-1.5)(3,1)(4,-1)}--(5,-1)--(5,-5)--(-5,-5)--(-5,-1);

\draw[thick,fill=gray!50!white](-5,-1)--(-4,-1) plot [smooth] coordinates {(-4,1)(-3,0)(0,1)(2,1.5)(3,1)(3.5,1.5)(4,1)}--(5,1)--(5,5)--(-5,5)--(-5,1);

\draw[thick,fill=white] plot [smooth cycle] coordinates {(-3,-4)(-3,-3.5)(-2,-3)(-1.5,-4)};

\draw[thick,fill=white] plot [smooth cycle] coordinates {(3,-3)(2,-2.5)(1.8,-3)(2.6,-3.4)};

\draw[thick,fill=white] plot [smooth cycle] coordinates {(-3,4)(-2,3)(-1.5,4)(-2,3.8)}node[above]{$p$};

\draw[thick,fill=white] plot [smooth cycle] coordinates {(3,4)(2,3.5)(1.5,4)(2.6,4.2)};

\draw[thick,fill=white] plot [smooth cycle] coordinates {(0,2)(1,2.5)(1.3,1.8)} ;

\draw(-5,-1) node[anchor=east]{$\partial X^-$};
\draw(-5,1) node[anchor=east]{$\partial X^+$};

\draw(0,4) node{$ X^+$};
\draw(0,-4) node{$ X^-$};

\draw[dashed](-6,0)--++(12,0);
\draw[->](5.5,0)--++(0,1);

\draw(5.5,1)node[anchor=west]{$\nu$};
\end{tikzpicture}
\caption{A schematic picture of $ X^+ \cap Q^\nu_T$, depicted in dark gray, and $ X^- \cap Q^\nu_T$, depicted in light gray, and their boundaries illustrated in bold. The cycle $p$ depicts a cycle as considered in Step 2. The striped region corresponds to the vacuum boundary condition in $\partial^cQ^\nu_T(y)$.}
\label{fig:Reduction2lat}
\end{figure}

\begin{lem} \label{lem:subset-two-lattices}
Let $z^+,z^-\in\mc{Z},\nu\in\mathbb{S}^1,y\in\R^2$ and $T>0$.  There exists $X_0 \in \mathrm{Adm}_1^{(z^+,z^-)}(Q_T^\nu(y))$ with the following properties: 
\begin{enumerate}[label={\rm (\roman*)}]
\item(Almost minimizer) There is a universal $C>0$ such that 
\begin{align*}
E_1(X_0,Q^\nu_T(y))\leq\min\{E_1(X, Q_T^\nu(y))\colon X\in \mathrm{Adm}_1^{(z^+,z^-)}(Q_T^\nu(y))\}+C\,.
\end{align*}
\item(Subset of lattices) It holds $X_0=X_0^+\cup X_0^-$, where $X_0^\pm\subset\mc{L}(z^\pm)$ and $X_0^+\cap X_0^-\cap Q^\nu_T(y)=\emptyset$.
\item(Structure of boundaries) The sets $\partial X_0^+:=\partial M^+$ and $\partial X_0^-:=\partial M^-$, defined in \eqref{def:boundary-Max-component} are simple paths with $\max_{x,y\in\partial X_0^\pm}\vert x-y\vert\geq T$.
\item(Missing bonds on boundaries)
There is a set $S^\pm\subset \partial X_0^\pm$ such that for all $x\in S^\pm$ it holds $\#\mc{N}^\mathrm{a}(x)\leq 3$ and $2\#S^\pm\geq\lfloor\#\partial X_0^\pm\rfloor$. 
\item(Neighborhood structure at grain boundary) With $M^\pm$ as in \eqref{def:Max-component} it holds that
\begin{align*}
\big\vert\sum_{x\in\partial X_0^\pm}\#(\mc{N}^\mathrm{a}(x)\cap M^\pm)-3\#\partial X_0^\pm\big\vert\leq 1\,.
\end{align*}

\end{enumerate}
\end{lem}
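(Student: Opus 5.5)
The plan is to start from an arbitrary minimizer $X$ of \eqref{eq:min-problem-T-fixed}. A minimizer exists because, by Lemma~\ref{lem:elementary-properties}(v), the energy takes values in a discrete set and only finitely many points lie in the relevant region. I then replace $X$ by $X_0:=(M^+\cup M^-)\cup (X\cap(\partial_1^+Q^\nu_T(y)\cup\partial^-_1Q^\nu_T(y)\setminus(M^+\cup M^-)))$. This keeps the maximal components, together with the boundedly many corner points that are not strongly connected (Figure~\ref{fig:NonMaximalComp}), and discards everything else. Admissibility is preserved: (i) of Definition~\ref{def:admissible-set} holds because we only delete points, (ii) holds because the boundary strips are kept, and (iii) is untouched. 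Property (ii) of the lemma follows from $X_0^\pm\subset\mc{L}(z^\pm)$ and Remark~\ref{rem:maxcomp}. The substance of the proof is the energy estimate (i). I would prove
\begin{align*}
E_1(X_0,Q^\nu_T(y))\leq E_1(X,Q^\nu_T(y))+C\,.
\end{align*}
To do this, I rewrite $E_1(X,Q)=\frac12\sum_x(4-\#\mc{N}^\mathrm{a}(x))=2\#V-\#\mathrm{bonds}$ and apply Euler's formula to the planar bond graph, with faces counted through the face defect \eqref{def:eta}.

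For the key comparison, let $Y:=X\setminus(M^+\cup M^-)$ be the interpolating layer near the interface (cf.\ Figure~\ref{fig:Reduction2lat}). Removing $Y$ gains $2\#Y$ minus the bonds inside $Y$, but loses the bonds between $Y$ and $\partial M^\pm$. I would use Euler's formula for the reduced bond graph of $X$ in the form $\#\mathrm{bonds}=2\#V-2-\tfrac12\eta-(\text{outer boundary terms})$. Every face meeting $Y$ cannot be a square in both lattices. Indeed, a square face whose four vertices satisfy Lemma~\ref{lemma:cell-energy-zero} forces a local lattice, and by strong connectedness and maximality such a face would already be contained in $M^\pm$. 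Hence each point of $Y$, or each bond from $Y$ to $M^\pm$, contributes to either the face defect or to missing bonds. A charging argument should then give
\begin{align*}
E_1(X,Q)-E_1(X_0,Q)\geq c\,\eta(X)-C\geq -C\,.
\end{align*}
The constant $C$ absorbs the bounded number of corner points and the two lateral vacuum strips $\partial^c_1Q^\nu_T$. Acyclic bonds and isolated atoms are handled separately: each such atom has at most $2$ neighbors inside any cycle-free piece, so deleting it does not increase the energy. I expect this charging step to be the main obstacle. The danger lies in atoms of $Y$ with four neighbors split between $M^+$ and $M^-$, since deleting them costs bonds. My first attempt would be to show that any such atom creates a non-square face, because two lattices with different $z$ cannot meet at right angles compatibly with Lemma~\ref{lem:unique-lattice}, so it is paid for by $\eta$.

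For (iii)–(v), I would argue on $M^\pm$ as a strongly connected subset of a single square lattice. It contains the full strip $\partial^\pm_1Q^\nu_T$, whose width is $10$. It is therefore a lattice polyomino-type region whose boundary \eqref{def:boundary-Max-component} is a single simple path from the left vacuum strip to the right one. Its diameter is at least $T$ because it must cross the cube laterally. If the path were not simple, or if there were holes, I would fill the holes with lattice points of $\mc{L}(z^\pm)$. This is allowed since the hole interiors lie away from the other lattice after the reduction, and filling only lowers the energy. This replacement is made before fixing $X_0$, so that (i) still holds. For (v), I would trace the boundary path: generically each boundary vertex has $3$ neighbors inside $M^\pm$. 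Convex corners have $2$ and concave corners (angle $3\pi/2$) have $4$, and along a path running from one side to the other these turning numbers cancel up to $\pm1$. For (iv), I take $S^\pm$ to be the boundary points whose fourth neighbor is missing in $X_0$. Since $X_0^\pm$ lie in different lattices, at most every other boundary atom can bond across to the opposite grain; this uses the distance constraints together with the alternating charges. Hence $2\#S^\pm\geq\lfloor\#\partial X_0^\pm\rfloor$.
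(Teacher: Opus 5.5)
The gap is in (i), which is the real content of the lemma. You defer it to ``a charging argument'', and the mechanism you indicate does not work.

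First, it is false that every face meeting $Y=X\setminus(M^+\cup M^-)$ is non-square, or that every atom of $Y$ can be charged to a face defect or a missing bond. Two examples:
\begin{itemize}
\item $Y$ can contain a grain of a third orientation, whose interior atoms have four neighbours and whose faces are squares.
\item A unit square of $\mc{L}(z^+)$ attached to $\partial M^+$ at a single vertex gives a square face containing a $Y$--$M^+$ bond. Its other three vertices are not strongly connected to $M^+$, so they lie in $Y$.
\end{itemize}
What is needed is a sharp count localized at the interface. The paper fixes a connected component $X'$ of $Y$ and lets $\Gamma\subset\partial M^+\cup\partial M^-$ be the minimal subpaths seen by $X'$. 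Via Euler's formula for $X_\Gamma=X'\cup\Gamma$, it reduces ``deleting $X'$ does not increase the energy'' to \eqref{ineq:eta-Gamma}, i.e.\ $b_{\mathrm{ac}}+\frac d2+\frac\eta2+2\geq 2n_\Gamma-b_\Gamma$. This requires $\eta(X_\Gamma)\geq n_\Gamma-C_0$ with $C_0\in\{2,3,4\}$ at full strength. That bound follows from the fact that every face containing $k$ atoms of $\Gamma$ has at least $k+3$ atoms, hence defect at least $k-1$. The only exceptions are at most two faces containing a direct $M^+$--$M^-$ bond, cf.\ \eqref{cond:polygon-Mplus}. A bound with a weaker coefficient in front of $\eta$ does not close. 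Neither does one with an additive loss per component, since a priori $Y$ may have of order $T$ components.

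The missing idea is that the bound $\#P\geq k+3$ fails for an arbitrary minimizer, which is where you start. A face can leave $M^+$ through only two atoms $x_1,x_2\notin M^+$ bridging $x_3,x_n\in\partial M^+$ with $|x_3-x_n|=\sqrt5$. Maximality and parity exclude a one-atom bridge and the distances $1,\sqrt2,2,2\sqrt2,3$, but not $\sqrt5$, and such a face has defect only $k-2$.

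The paper therefore first proves Lemma~\ref{lem:polygon-lengths}. It shows that an explicit local surgery does not increase the energy: replace the bridge atoms by points of $\mc{L}(z^+)$ and fill the enclosed polygon with $\mc{L}(z^+)$. Hence there is a minimizer without such bridges, and the face-defect count is run for that minimizer. Your proposal has no substitute for this step.

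Two smaller points:
\begin{itemize}
\item Your claim that acyclic atoms have at most two neighbours in a cycle-free piece is false for branching trees. The paper instead absorbs acyclic bonds through $b_{\mathrm{ac}}$.
\item In (iv), two consecutive atoms of $\partial M^\pm$ cannot both have four neighbours because of maximality, not because $X_0^\pm$ lie in different lattices. Their outer neighbours would be adjacent points of $\mc{L}(z^\pm)$ closing a unit square with them, contradicting maximality of $M^\pm$; corners need a short angle case distinction.
\end{itemize}
Your treatment of (ii), (iii) and (v) is in line with the paper.
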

Note that the minimum in \eqref{eq:min-problem-T-fixed} exists, as $E_1$ is lower semi-continuous and the problem is finite-dimensional. The strategy of the proof is to take a minimizer for \eqref{eq:min-problem-T-fixed} and either deduce properties {\rm (i)}-{\rm (v)} or modify the configuration and ensure that the energy does not increase too much.

\begin{lem}[Simple polygons in maximal components] \label{lem:polygon-lengths} There exists $X_0 \in \mathrm{Adm}_1^{(z^+,z^-)}(Q^\nu_T(y))$ a minimizer of \eqref{eq:min-problem-T-fixed} such that the following property is true: Let $p=(x_1,\dots,x_n)$ be a cycle in $X_0$ such that there exists $k \geq 2$ such that $x_i \notin M^+$ (resp. $x_i \notin M^-$) for all $i \in \{1,\ldots,k-1\}$ and $x_i \in M^+$ (resp. $x_i \in M^-$) for all $i \in \{k,\ldots,n\}$. Then, it holds $k\geq 4$. 
\end{lem}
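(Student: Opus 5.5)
The plan is to pick, among all minimizers of \eqref{eq:min-problem-T-fixed} (which exist by lower semicontinuity and finite dimensionality), one that maximizes $\#M^++\#M^-$. I then argue by contradiction: if a cycle $p$ as in the statement has $k\in\{2,3\}$, the atoms $x_1,\dots,x_{k-1}$ are in fact (or can be moved to, at no energy cost) lattice points of $\mc{L}(z^+)$. Adding them to $M^+$ contradicts maximality, either of $M^+$ itself or of $\#M^++\#M^-$. I only treat $M^+$; the case of $M^-$ is symmetric. The argument uses one elementary fact throughout. If $W\subset X_0\cap\mc{L}(z^+)$ is strongly connected and meets $\partial^+_1Q^\nu_T(y)$, and $w_1,\dots,w_m\in X_0\cap\mc{L}(z^+)$ form a bonded path whose two endpoints are bonded to two distinct points of $W$, then $W\cup\{w_1,\dots,w_m\}$ is again strongly connected. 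Indeed, removing a single vertex of the new path leaves each remaining piece of the path attached to $W$ through one of its endpoints. So it lies in $C^+$, and by \eqref{def:Max-component} these points already lie in $M^+$.

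\emph{Case $k=2$.} Here $x_1\notin M^+$ is bonded to $x_2,x_n\in M^+$. These two points have the same charge, so $\sqrt2\le|x_2-x_n|\le 2$. Both are points of the same charge class of $\mc{L}(z^+)$, whose mutual distances lie in $\{\sqrt2,2,\sqrt8,\dots\}$, so $|x_2-x_n|\in\{\sqrt2,2\}$. In either case the point at distance $1$ from both is forced: it is the midpoint if the distance is $2$, and one of the two remaining corners of the unit square if it is $\sqrt2$. Both choices are points of $\mc{L}(z^+)$ with the correct charge. Hence $x_1\in\mc{L}(z^+)$, and by the fact above $x_1\in M^+$, a contradiction. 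No modification of the configuration is needed in this case.

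\emph{Case $k=3$.} Here we have a bonded path $x_n,x_1,x_2,x_3$ with $x_n,x_3\in M^+$ of opposite charge. Opposite-charge distances in $\mc{L}(z^+)$ lie in $\{1,\sqrt5,3,\dots\}$, and $|x_n-x_3|\le3$. If the distance is $3$, the path is a straight segment, so $x_1,x_2\in\mc{L}(z^+)$. If the distance is $1$, the four points form a rhombus of side $1$. By the repulsion constraint its diagonals are $\ge\sqrt2$, hence all its angles are $\ge\pi/2$, so it is a unit square in $\mc{L}(z^+)$. In both subcases the fact above gives $x_1,x_2\in M^+$, a contradiction.

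The remaining subcase $|x_n-x_3|=\sqrt5$ is the main obstacle, because the four-bar linkage $x_n,x_1,x_2,x_3$ has a one-parameter family of positions, only two of which (the two L-shapes) are lattice positions. I would first try to exclude the non-lattice positions by rigidity. Since $x_n,x_3\in M^+$ lie in the interior of a strongly connected lattice piece, each has at least two lattice neighbours in $M^+$. These neighbours have the same charge as $x_1$ (resp.\ $x_2$), and the condition $|x_1-w|\ge\sqrt2$ for every such neighbour $w$ forces the angle at $x_n$ between $x_1-x_n$ and any lattice bond at $x_n$ to be $\ge\pi/2$ (and likewise at $x_3$). Combined with the linkage constraint $|x_1-x_2|=1$, this should pin $x_1,x_2$ to lattice positions. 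If it does not, the fallback is a replacement argument: substitute $x_1,x_2$ by the lattice L-path $a,b$ through the corresponding corner. One then has to check three things. First, that $X_0\setminus\{x_1,x_2\}\cup\{a,b\}$ stays in $\mathcal{X}_1(\mathbb{R}^2)$; here the cycle $p$ encloses a face that the face defect \eqref{def:eta} forces to be small. Second, that the bonds lost by other neighbours of $x_1,x_2$ are compensated, so that the energy does not increase and the new configuration is still a minimizer. Third, that the new configuration has strictly larger $\#M^++\#M^-$ (the path $a,b$ joins $M^+$, and $M^-$ is unaffected because $x_1,x_2\notin M^-$ by the $k=2$ argument applied to $M^-$), which contradicts the choice of the minimizer.
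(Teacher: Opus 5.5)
Your treatment of $k=2$ and of the subcases $|x_3-x_n|\in\{1,3\}$ for $k=3$ is correct and coincides with Steps~1--2 of the paper. The ear-addition remark is a clean way to justify that the new points would lie in $M^+$. The gap is the subcase $|x_3-x_n|=\sqrt5$, which carries essentially all of the content of the lemma and for which you give only a plan.

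The rigidity route you try first fails. In this subcase $x_3$ and $x_n$ must be right-angle corners of $\partial M^+$ with exactly two bonds into $M^+$: three bonds into $M^+$, or two collinear ones, would force $x_2$, and then $x_1$, onto $\mathcal{L}(z^+)$. So the repulsion constraints only confine $x_2-x_3$ and $x_1-x_n$ to closed quarter circles. Together with $|x_1-x_2|=1$ they leave a one-parameter family of admissible non-lattice positions, which is precisely the configuration drawn on the left of Figure~\ref{Fig:root-five-a}. Such a bridge cannot be excluded geometrically, only by comparison with a competitor of no larger energy.

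Your fallback does not supply that competitor, for four reasons.

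\emph{The swap can violate the hard-core constraints.} Replacing only $x_1,x_2$ by the points of a lattice L-path need not stay in $\mathcal{X}_1(\mathbb{R}^2)$. In the configuration of Figure~\ref{Fig:root-five-a}, where $x_2=(\cos70^\circ,\sin70^\circ)$, the point $w=x_2+(\cos140^\circ,\sin140^\circ)$ is compatible with all hard-core constraints and bonded to $x_2$. Yet $|w-(0,1)|\approx0.72<1$. Even when the swap is admissible, the bonds of $x_1,x_2$ to atoms off the bridge are lost. Invoking the face defect \eqref{def:eta} does not help: $p$ is an arbitrary cycle and typically encloses a large pocket bounded by $\partial M^+$.

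\emph{The ingredients of the paper's argument are missing.}
\begin{itemize}
\item Emptiness of the pentagon $\mathrm{conv}\{x_1,x_2,x_3,(2,0),x_n\}$, see \eqref{eq:polygon-intersection}.
\item A case distinction according to the side of the bridge on which the cycle closes.
\item Replacement of the \emph{whole} interior of the cycle by $\mathcal{L}(z^+)$, which by planarity changes only neighbourhoods of atoms inside or on the cycle.
\item Explicit energy bookkeeping.
\end{itemize}
In case (a), $x_2,x_1$ become $(0,1),(1,1)$ and the energy drops by at least $\frac12$, contradicting minimality. In case (b), only $x_2$ is replaced by $(1,0)$ and $x_1$ is kept. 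The bonds lost at $x_1$ and at a third neighbour $w$ of $x_2$ are then compensated by new bonds at $(2,0)$ and at the auxiliary point $z_0=(1,k_0)$.

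\emph{The claim about $M^-$ is unjustified.} You assert that $x_1,x_2\notin M^-$ ``by the $k=2$ argument applied to $M^-$''. That argument concerns cycles with a single vertex outside $M^-$. Here $x_1,x_2$ may well be touching points of $M^-$, cf.~\eqref{cond:polygon-Mplus}. So strict increase of $\#M^++\#M^-$ is not established.

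\emph{The potential argument is only a termination device.} Selecting a minimizer that maximizes such a potential is a sensible substitute for the paper's informal ``finitely many polygons'' termination, provided you count only atoms in $Q^\nu_T(y)$ after the reduction \eqref{incl:X-subset-cube}, so the maximum exists. But it closes the argument only once you have a modification that is admissible, does not increase the energy, and strictly increases the potential.
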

\begin{proof}
Let $p$ be as in the statement. Without restriction, assume $\{x_k,\ldots, x_n\}\subset M^+$. Note that $M^+\subset\mc{L}(z^+)$. \\
\noindent {\bf Step 1:} {\it $k\geq 3$.} Assume by contradiction that $k=2$. Therefore, we have $x_1 \notin \mc{L}(z^+)$ and $\{x_2,\ldots,x_n\}\subset\mc{L}(z^+)$, $\vert x_1-x_2\vert=\vert x_1-x_n\vert=1$ and thus,  by the triangular inequality, $\vert x_2-x_n\vert\leq2$. As $X \in \mathcal{X}_1(\mathbb{R}^2)$ we have that $q(x_1) = -q(x_2) =-q(x_n)$. Now, as $x_2,x_n \in M^+ \subset \mathcal{L}(z^+) $ we have $|x_2-x_n| \in \{1,\sqrt{2},2\} $. First, note that $\vert x_2-x_n\vert=1$ is not possible, as this contradicts $X \in \mathcal{X}_1(\mathbb{R}^2)$. If $\vert x_2-x_n\vert \in \{\sqrt{2},2\}$ we necessarily have $x_1\in\mc{L}(z^+)$ (cf.~Figure~\ref{fig:path-lemma}) and, in both cases, $x_1\in\mc{L}(z^+)$, which contradicts the choice of $M^+$ as the maximal component, as $M^+\cup\{x_1\}$ would be strongly connected. This concludes Step~1 and shows $k\geq 3$. 
\begin{figure}[htp]
\begin{tikzpicture}

\draw[dashed,ultra thin, gray] (-1,-1) grid (1,1);

\draw(-1,0) node[anchor=north]{$x_2$};
\draw(1,0) node[anchor=north]{$x_n$};
\draw(0,0) node[anchor=north]{$x_1$};

\draw[ultra thin](-1,0)--(1,0);

\draw[fill=black](-1,0) circle(.05);

\draw[fill=black](1,0) circle(.05);

\draw[fill=white](0,0) circle(.05);

\begin{scope}[shift={(5,0)}]

\draw[dashed,ultra thin, gray] (-1,-1) grid (1,1);

\draw(0,0) node[anchor=north]{$x_2$};
\draw(1,0) node[anchor=north]{$x_1$};
\draw(1,1) node[anchor=west]{$x_n$};

\draw[ultra thin](0,0)--(1,0)--(1,1);

\draw[fill=black](0,0) circle(.05);

\draw[fill=black](1,1) circle(.05);

\draw[fill=white](1,0) circle(.05);

\end{scope}

\end{tikzpicture}
\caption{The two different possible paths of length three. In both cases, the unique point  (up to rotation, reflection, and changing the charge) $x_1 \in \mathbb{R}^2$ such that $|x_1-x_2|=|x_1-x_n|=1$ is contained in the same square lattice.}
\label{fig:path-lemma}
\end{figure} \\
\noindent {\bf Step 2:} {\it $k\geq 4$.} Assume by contradiction that $k=3$ as the case $k=2$ is treated in Step~1.  Therefore, we have $x_1,x_2 \notin \mc{L}(z^+)$  and $\{x_3,\ldots,x_n\}\subset\mc{L}(z^+)$, $\vert x_1-x_2\vert=\vert x_2-x_3\vert=\vert x_1-x_n\vert=1$ and thus,  by the triangular inequality, $\vert x_n-x_3\vert\leq3$. As $X \in \mathcal{X}_1(\mathbb{R}^2)$ we have that $q(x_1) = -q(x_2) = q(x_3) =-q(x_n)$. Now, as $x_3,x_n \in M^+ \subset \mathcal{L}(z^+) $ we have $|x_3-x_n| \in \{1,\sqrt{2},2,\sqrt{5}, 2\sqrt{2},3\} $. Note however, that $|x_3-x_n| \in \{\sqrt{2},2,2\sqrt{2}\}$ is impossible as all paths $\gamma \subset \mathcal{L}(z^+)$ connecting $x_3$ and $ x_n$ have odd length. This implies that $q(x_1)=q(x_n)$ which is a contradiction to $q(x_1)= -q(x_n)$. If $|x_3-x_n|=3$ then, by the triangular inequality, we necessarily have that $x_1,x_2 \in \mathcal{L}(z^+)$ which contradicts the choice of $M^+$ as the maximal component, as $M^+\cup\{x_1,x_2\}$ would be strongly connected. If  $|x_3-x_n|=1$ we have that the points $\{x_1,x_2,x_3,x_n\}$ form a square with sidelengths all equal to $1$. Thus, as $X \in \mathcal{X}_1(\mathbb{R}^2)$ (and therefore the diagonals have to be both equal to $\sqrt{2}$), necessarily we have that $x_1,x_2 \in \mathcal{L}(z^+)$. This again contradicts the choice of $M^+$ as the maximal component, since $M^+\cup\{x_1,x_2\}$ would be strongly connected. It remains to treat the case $|x_3-x_n|=\sqrt{5}$.  \\
\noindent {\bf Step 3:} {\it Preliminary observations.} As $\partial M^+$ is a simple path we can write $\partial M^+ = \{z_1,\ldots,z_N\}$ for some points $z_i \in \mathcal{L}(z^+)$ such that $|z_i-z_{i+1}|=1$ for all $i=1,\ldots,N$. Furthermore, if $\theta(z_{i-1},z_i,z_{i+1}) =\pi$, then $\mathcal{N}^\mathrm{a}(z_i) \subset \mathcal{L}(z^+)$. Thus, as $X \in \mathrm{Adm}_1^{(z^+,z^-)}(Q^\nu_T(y))$ and the fact that $\mathcal{N}^\mathrm{a}(x_3) ,\mathcal{N}^\mathrm{a}(x_n) \not \subset \mathcal{L}(z^+)$,  we necessarily have that that the angles in $\partial M^+$ at $x_3$ and $x_n$ are necessarily $\frac{\pi}{2}$ or $\frac{3\pi}{2}$. Therefore, up to rigid motion, we can assume that one of the following two cases holds:
\begin{itemize}
\item[(i)] $x_3=(0,0)$, $x_n=(2,1)$, $\{(-1,0),(0,-1),(2,0),(3,1)\} \subset \partial M^+$, cf.~Figure~\ref{Fig:root-five-a};
\item[(ii)] $x_3=(0,0)$, $x_n=(2,1)$, $\{(-1,0),(0,-1),(2,2),(3,1)\} \subset \partial M^+$.
\end{itemize}
 As $X \in \mathcal{X}_1(\mathbb{R}^2)$, it is elementary to check that
\begin{align}\label{eq:polygon-intersection}
X \cap \mathrm{conv}\{x_1,x_2,x_3,(2,0),x_n\}= \{x_1,x_2,x_3,(2,0),x_n\} \,,
\end{align}
 and, since $x_2,x_3 \not \in \mathcal{L}(z^+)$,  we have
\begin{align}\label{ineq:Nbhd-sqrt5}
\#\mathcal{N}^\mathrm{a}(x_3)=\#\mathcal{N}^\mathrm{a}(x_n)=3 \quad \text{and} \quad \#\mathcal{N}^\mathrm{a}(x_2) \leq 3\,.
\end{align}
\begin{figure}[htp]
\begin{tikzpicture}

\draw[dashed,thick] (0,0)--++(-2,0)--++(0,3)--++(5,0)--++(0,-2)--++(-1,0);

\draw[name path=circle,white](0,0) -- (70:1) circle(1);
%\draw(1,0) arc(0:90:1);

\draw[dashed,ultra thin,white](2,1) circle({sqrt(2)});

\draw[name path=arc,white](2,2) arc(90:180:1);

\draw[dashed,ultra thin, gray] (-2,-1) grid (3,3);

\draw[ultra thin](0,0)--(0,-1);

\draw[ultra thin](2,0)--(2,1);

%\draw[dashed,ultra thin](2,1) circle({sqrt(2)});

\draw(0,0) -- (70:1);

\draw (70:1) node[anchor=south]{$x_2$};

\path[name intersections={of=arc and circle, by={A}}];

\draw(70:1) -- (A)--(2,1);

\draw(A) node[anchor=east] {$x_1$};

\draw[fill=black] (A) circle (.05);

\draw[fill=white](0,0) ++(70:1) circle(.05);

\draw[fill=black](0,0) circle(.05);
\draw[fill=white](-1,0) circle(.05);
\draw[fill=white](0,-1) circle(.05);
\draw[fill=white](2,1) circle(.05);
\draw[fill=black](2,0) circle(.05);
\draw[fill=black](3,1) circle(.05);

\draw[fill=white](-2,1) circle(.05);
\draw[fill=white](-2,3) circle(.05);
\draw[fill=white](0,3) circle(.05);
\draw[fill=white](2,3) circle(.05);
\draw[fill=white](3,2) circle(.05);

\draw[fill=black](-2,0) circle(.05);
\draw[fill=black](-2,2) circle(.05);
\draw[fill=black](-1,3) circle(.05);
\draw[fill=black](1,3) circle(.05);
\draw[fill=black](3,3) circle(.05);

\draw (2,1) node[anchor=south]{$x_n$};

\draw (0,0) node[anchor=west]{$x_3$};

\begin{scope}[shift={(7,0)}]

\draw[dashed,thick] (0,0)--++(-2,0)--++(0,3)--++(5,0)--++(0,-2)--++(-1,0);

%
%\draw[name path=circle,white](0,0) -- (70:1) circle(1);
%\draw(1,0) arc(0:90:1);
%
%\draw[dashed,ultra thin,white](2,1) circle({sqrt(2)});
%
%\draw[name path=arc,white](2,2) arc(90:180:1);

\draw[dashed,ultra thin, gray] (-2,-1) grid (3,3);

\draw[ultra thin](0,-1)--(0,0);

\draw[ultra thin](2,0)--(2,1);

%\draw[dashed,ultra thin](2,1) circle({sqrt(2)});

%\draw(0,0) -- (70:1);

%\draw (70:1) node[anchor=south]{$x_2$};
%
%\path[name intersections={of=arc and circle, by={A}}];
%
%\draw(70:1) -- (A)--(2,1);
%
%\draw(A) node[anchor=east] {$x_1$};
%
%\draw[fill=black] (A) circle (.05);

%\draw[fill=white](0,0) ++(70:1) circle(.05);

\begin{scope}

\draw[ultra thin](-2,1)--++(4,0);
\draw[ultra thin](-2,2)--++(5,0);

\foreach \j in {0,1}{
\draw[ultra thin](-\j,0)--++(0,3);
\draw[ultra thin](1+\j,1)--++(0,2);
}

\clip (-2,.9) rectangle (3,3);

\foreach \j in {0,1,2,3,4}{

\foreach \k in {-1,-2,-3,0,1,2,3,4}{

\draw[fill=black] (-1,1)++(\j-\k,\j+\k) circle(.05);

\draw[fill=white] (-2,1)++(\j-\k,\j+\k) circle(.05);

}}

\end{scope}

\draw[fill=white](-2,1) circle(.05);
\draw[fill=white](-2,3) circle(.05);
\draw[fill=white](0,3) circle(.05);
\draw[fill=white](2,3) circle(.05);
\draw[fill=white](3,2) circle(.05);

\draw[fill=black](-2,0) circle(.05);
\draw[fill=black](-2,2) circle(.05);
\draw[fill=black](-1,3) circle(.05);
\draw[fill=black](1,3) circle(.05);
\draw[fill=black](3,3) circle(.05);

\draw[fill=black](0,0) circle(.05);
\draw[fill=white](-1,0) circle(.05);
\draw[fill=white](0,-1) circle(.05);
\draw[fill=white](2,1) circle(.05);
\draw[fill=black](2,0) circle(.05);
\draw[fill=black](3,1) circle(.05);

\draw (2,1) node[anchor=south]{$x_n$};

\draw (0,0) node[anchor=west]{$x_3$};

\end{scope}

\end{tikzpicture}
\caption{The modification in Case (a). On the left, we indicated the relevant particles of the configuration $X$ and on the right the configuration $\hat{X}$. The thick and dashed line indicates $P \cap \partial M^+$.}
\label{Fig:root-five-a}
\end{figure}\\
\noindent {\bf Step 4:} {\it Modification of $X$ for $|x_3-x_n|=\sqrt{5}$.} Here, we only discuss the modifications for Case~{\rm (i)}. The modifications for Case~{\rm (ii)} are analogous to the Case~(a) below. We modify our configuration to obtain $X_0 \in \mathrm{Adm}_1^{(z^+,z^-)}(Q^\nu_T(y))$ such that
\begin{align}\label{ineq:sqrt5}
E_1(X_0,Q^\nu_T(y)) \leq  E_1(X,Q^\nu_T(y))\,.
\end{align}
This contradicts the assumptions of the Lemma and concludes the proof. The remaining proof is dedicated to finding  $X_0 \in \mathrm{Adm}_1^{(z^+,z^-)}(Q^\nu_T(y))$ such that \eqref{ineq:sqrt5} holds true.  Since the bond-graph is planar and $\partial^+ M$ is a simple path, there are two cases to consider:
\begin{align*}
&{\rm (a)}\quad (-1,0) \text{ is connected to } (3,1) \text{ in } \partial M^+ \cap P\,,\\ &{\rm (b)}\quad  (0,-1) \text{ is connected to } (2,0) \text{ in } \partial M^+ \cap P\,.
\end{align*}
In the following, we write $E_1(X, P) =E_1(X, \overline{\mathrm{int}(P)})$ for a closed cycle $P=\{x_1,\ldots,x_n\}$, where $\mathrm{int}(P)$ is the interior of the path joining consecutive points with straight line segments. \\  
\noindent {\it Case {\rm (a)}:} We define $\hat{P} =( \partial M^+ \cap P ) \cup\{((0,1),q(x_2)),((1,1),q(x_1))\}$ and set (cf.~Figure~\ref{Fig:root-five-a})
\begin{align}\label{def:Xhat-a-sqrt5}
\hat{X}=  ((\mathcal{L}(z^+) \cap \hat{P}) \cup (X \setminus \hat{P})) \setminus \{x_1,x_2\}\,.
\end{align}
Observe that, due to \eqref{eq:polygon-intersection}, we have $\hat{X} \in \mathrm{Adm}_1^{(z^+,z^-)}(Q^\nu_T(y))$ and
\begin{align}\label{ineq:x-n-a}
\begin{split}
&E_1(x_n,\hat{X}) = E_1(x_n,X)-\frac{1}{2}\,, \quad E_1((0,1),\hat{X}) \leq  E_1(x_2,X) -\frac{1}{2} \,,  \\& E_1(x_3,\hat{X}) = E_1(x_3,X)\,, \quad \quad E_1((1,1),\hat{X}) \leq E_1(x_1,X) +\frac{1}{2}\,.
\end{split}
\end{align}
 Furthermore, due to Lemma~\ref{lemma:cell-energy-zero}, \eqref{ineq:Nbhd-sqrt5}, and \eqref{def:Xhat-a-sqrt5},  we have $E_1(z,\hat{X}) \leq E_1(z,X)$ for all $ z \in P \cap \partial M^+$ and therefore, using \eqref{ineq:x-n-a}, we obtain
\begin{align}\label{ineq:Xhat-Phat}
E_1(\hat{X},\hat{P}\cap \partial M ) \leq  E_1(X,\hat{P}\cap \partial M)-\frac{1}{2}\,, \quad \text{and} \quad E_1(\hat{X},\mathrm{int}(\hat{P}))=0\,.
\end{align}
Observe that $E_1(z,\hat{X}) = E_1(z,X)$ for all $z \in \hat{X} \setminus \hat{P}$ and therefore
\begin{align}\label{ineq:Xhat-complement}
E_1(\hat{X},Q^\nu_T(y) \setminus \hat{P})=E_1(X,Q^\nu_T(y) \setminus \hat{P}) \,.
\end{align}
Thus, due to Lemma~\ref{lem:elementary-properties}(iv), Lemma~\ref{lemma:cell-energy-zero}, and  \eqref{ineq:x-n-a}--\eqref{ineq:Xhat-complement}, we have
\begin{align*}
E_1(\hat{X},Q^\nu_T(y)) &=E_1(\hat{X},\mathrm{int}(\hat{P}))+ E_1(\hat{X}, \hat{P}\cap \partial M^+) \\ &\quad + E_1((0,1),\hat{X})+ E_1((1,1),\hat{X})+  E_1(\hat{X},Q^\nu_T(y) \setminus \hat{P}) \\&\leq E_1(X, \hat{P}\cap \partial M^+)+ E_1(x_2,X)+E_1(x_1,X) + E_1(X,Q^\nu_T(y)  \setminus \hat{P}) -\frac{1}{2} \\&< E_1(X,Q^\nu_T(y))\,.
\end{align*}
In the last inequality we used that $\hat{P} \cap \partial M^+ =P \cap \partial M^+$  and, due to \eqref{eq:polygon-intersection}, $X \cap\hat{P} \setminus  P  = \emptyset$. This shows \eqref{ineq:sqrt5} in Case~(a). \\
\noindent {\it Case {\rm (b)}:}  We define $\hat{P} =( \partial M^+ \cap P ) \cup \{((1,0),q(x_2))\}$ and set (cf.~\ref{Fig:root-five-b})
\begin{align}\label{def:Xhat-b-sqrt5}
\hat{X}=  ((\mathcal{L}(z^+) \cap \hat{P}) \cup (X \setminus \hat{P})) \setminus \{x_2\}\,.
\end{align}

\begin{figure}
\begin{tikzpicture}

\draw[dashed,thick](0,0)--++ (0,-1)--++(-2,0)--++(0,-2)--++(5,0)--++(0,3)--++(-1,0)--++(0,1);

\draw[name path=circle,white](0,0) -- (70:1) circle(1);
%\draw(1,0) arc(0:90:1);

\draw[dashed,ultra thin,white](2,1) circle({sqrt(2)});

\draw[name path=arc,white](2,2) arc(90:180:1);

\draw[dashed,ultra thin, gray] (-2,1) grid (3,-3);

\draw[ultra thin](0,0)--(-1,0);

\draw[ultra thin](3,1)--(2,1);

%\draw[dashed,ultra thin](2,1) circle({sqrt(2)});

\draw(0,0) -- (70:1);

\draw (70:1) node[anchor=south]{$x_2$};

\path[name intersections={of=arc and circle, by={A}}];

\draw(70:1) -- (A)--(2,1);

\draw(A) node[anchor=east] {$x_1$};

\draw[fill=black] (A) circle (.05);

\draw[fill=white](0,0) ++(70:1) circle(.05);

\draw[fill=black](0,0) circle(.05);
\draw[fill=white](-1,0) circle(.05);
\draw[fill=white](0,-1) circle(.05);
\draw[fill=white](2,1) circle(.05);
\draw[fill=black](2,0) circle(.05);
\draw[fill=black](3,1) circle(.05);

\draw[fill=black](-1,-1) circle(.05);
\draw[fill=black](-2,-2) circle(.05);
\draw[fill=black](-1,-3) circle(.05);
\draw[fill=black](1,-3) circle(.05);
\draw[fill=black](3,-3) circle(.05);
\draw[fill=black](3,-1) circle(.05);

\draw[fill=white](-1,0) circle(.05);
\draw[fill=white](0,-1) circle(.05);
\draw[fill=white](-2,-1) circle(.05);
\draw[fill=white](-2,-3) circle(.05);
\draw[fill=white](0,-3) circle(.05);
\draw[fill=white](2,-3) circle(.05);
\draw[fill=white](3,-2) circle(.05);
\draw[fill=white](3,0) circle(.05);

\draw (2,1) node[anchor=south]{$x_n$};

\draw (0,0) node[anchor=south west]{$x_3$};

\begin{scope}[shift={(7,0)}]

\draw[dashed,thick](0,0)--++ (0,-1)--++(-2,0)--++(0,-2)--++(5,0)--++(0,3)--++(-1,0)--++(0,1);

\draw[dashed,ultra thin, gray] (-2,1) grid (3,-3);

\draw[ultra thin](0,0)--(-1,0);

\draw[ultra thin](3,1)--(2,1);

%\draw[dashed,ultra thin](2,1) circle({sqrt(2)});

\begin{scope}

\draw[ultra thin](0,0)--++(2,0);

\draw[ultra thin](1,0)--++(0,-1);
\draw[ultra thin](2,0)--++(0,-1);

\draw[fill=white](1,0) circle(.05);

\clip (-2.05,-3.05) rectangle(3.05,-.95);

\foreach \j in {0,1}{
\draw[ultra thin](-2+2*\j,-2+\j)--++(5,0);

\draw[ultra thin](-1+\j,-1)--++(0,-3);

\draw[ultra thin](\j+1,0)--++(0,-3);

}

\foreach \j in {-5,...,5}{

\foreach \k in {-5,...,5}{

\draw[fill=white](-1+\j+\k,\j-\k) circle(.05);
\draw[fill=black](\j+\k,\j-\k) circle(.05);

}

}

\end{scope}

\draw[ultra thin] (A)++(7,0)--(2,1);
\draw[fill=black](0,0) circle(.05);
\draw[fill=white](-1,0) circle(.05);
\draw[fill=white](0,-1) circle(.05);
\draw[fill=white](2,1) circle(.05);
\draw[fill=black](2,0) circle(.05);
\draw[fill=black](3,1) circle(.05);

\draw[fill=black](-1,-1) circle(.05);
\draw[fill=black](-2,-2) circle(.05);
\draw[fill=black](-1,-3) circle(.05);
\draw[fill=black](1,-3) circle(.05);
\draw[fill=black](3,-3) circle(.05);
\draw[fill=black](3,-1) circle(.05);

\draw[fill=white](-1,0) circle(.05);
\draw[fill=white](0,-1) circle(.05);
\draw[fill=white](-2,-1) circle(.05);
\draw[fill=white](-2,-3) circle(.05);
\draw[fill=white](0,-3) circle(.05);
\draw[fill=white](2,-3) circle(.05);
\draw[fill=white](3,-2) circle(.05);
\draw[fill=white](3,0) circle(.05);

\draw (A)++(7,0) node[anchor=east] {$x_1$} ;

\draw[fill=black] (A)++(7,0) circle(.05);

\draw (2,1) node[anchor=south]{$x_n$};

\draw (0,0) node[anchor=south west]{$x_3$};
\end{scope}

\end{tikzpicture}
\caption{The modification in Case (b). On the left, we indicated the relevant particles of the configuration $X$ and on the right the configuration $\hat{X}$. The thick and dashed line indicates $P \cap \partial M^+$.}
\label{Fig:root-five-b}
\end{figure}

Observe that, due to \eqref{eq:polygon-intersection}, we have $\hat{X} \in \mathrm{Adm}_1^{(z^+,z^-)}(Q^\nu_T(y))$ and 
\begin{align*}
\begin{split}
&E_1(x_n,\hat{X}) = E_1(x_n,X)\,, \quad E_1((1,0),\hat{X}) \leq  E_1(x_2,X) \,,  \\& E_1(x_3,\hat{X}) = E_1(x_3,X)\,, \quad E_1((2,0),\hat{X}) \leq E_1((2,0),X) -\frac{1}{2}\,.
\end{split}
\end{align*}
Additionally, we have that $E_1(z,\hat{X}) \leq E_1(z,X)$ for all $z \in P \cap \partial M^+$ and there exists $z_0 \in (P\cap X) \setminus \{(2,0)\}$ such that  
\begin{align}\label{ineq:z0}
E_1(z_0,\hat{X})\leq E_1(z_0,X) -\frac{1}{2}\,.
\end{align}
  Indeed, observing that $(1,0) \notin X$, we can define 
\begin{align*}
z_0 := (1,k_0) \,, \text{ where } k_0:= \max \{k \in \mathbb{Z}\colon k \leq 0 \text{ and } (1,k) \in  P \}\,.
\end{align*}
First note that $k_0$ is well defined due to condition (b). If $ z_0 \in P \cap \partial M^+$ then necessarily we have that the angle in $P \cap M^+$ is equal to $\pi$ and thus, as $X \in \mathcal{X}_1(\mathbb{R}^2)$ and, by definition of $z_0$, $z_0+e_2 \notin X$, $\#(\mathcal{N}^\mathrm{a}(z_0) \cap P) =2$ - the two neighboring particles in $\partial M^+$. By construction, it is easy to see that $z_0 + e_2 \in \hat{X}$ and therefore \eqref{ineq:z0} holds in this case. If $ z_0 \in P \setminus \partial M^+$, then, using \eqref{def:Xhat-b-sqrt5}, it is easy to see that $E_1(z_0,\hat{X})=0$. On the other hand, as $z_0 +e_2 \notin X$ and $z_0 -e_2 \in X$, due to Lemma~\ref{lemma:cell-energy-zero} we have that $E_1(z_0,X)\geq \frac{1}{2}$ (indeed for $E_1(z_0,X)=0$ we would need $z_0 +e_2 \in X$ if $z_0 -e_2 \in X$). This shows \eqref{ineq:z0} also in this case. Finally, we observe that
\begin{align}\label{eq:energies-sqrt5-b}
\begin{split}
 &E_1(x_3,X) =E_1(x_3,\hat{X})\,, \quad E_1(x_n,X) =E_1(x_n,\hat{X})\,, \\& E_1((1,0),X) \leq E_1(x_2,\hat{X})\,, \quad E_1((2,0),\hat{X})= E_1((2,0),X) -\frac{1}{2}\,.
 \end{split}
\end{align}
We have that  $ E_1(\hat{X},\mathrm{int}(\hat{P}))=0$ and therefore, using \eqref{ineq:z0} as well as \eqref{eq:energies-sqrt5-b}, we obtain
\begin{align}\label{ineq:Xhat-Phatb}
\begin{split}
E_1(\hat{X},\hat{P}) &= E_1(\hat{X},\mathrm{int}(\hat{P}) \setminus \{z_0\})+ E_1(z_0,\hat{X}) \\&\quad+E_1((1,0),\hat{X})  + E_1((2,0),\hat{X})+ E_1(\hat{X},\hat{P}\cap  \partial M^+\setminus \{z_0,(2,0)\} )\\&\leq   E_1(z_0,X) +E_1(x_2,X) +E_1((2,0),X) + E_1(X,\hat{P}\cap  \partial M^+\setminus \{z_0\} )-1 \\&\leq E_1(X,P \setminus \{x_1\})-1 \,.
\end{split}
\end{align}
As we removed $x_2$ in the definition of $\hat{X}$, there may exist $w \in \mathcal{N}^\mathrm{a}(x_2) $ such that 
\begin{align}\label{ineq:x-2-b}
E_1(w,\hat{X})= E_1(w,X) +\frac{1}{2}\, \quad \text{ and } E_1(x_1,\hat{X})= E_1(x_1,X) +\frac{1}{2}\,.
\end{align}
Additionally, we observe that $E_1(z,\hat{X})= E_1(z,X) $ for all $z \in Q^\nu_T(y) \setminus (P \cup \{w\})$. Thus, due to \eqref{ineq:x-2-b}, we have
\begin{align}\label{ineq:Xhat-Phat-b}
E_1(\hat{X},Q^\nu_T(y) \setminus P)  \leq E_1(X,Q^\nu_T(y) \setminus P) +\frac{1}{2}\,.
\end{align}
Observe that, due to \eqref{eq:polygon-intersection} and \eqref{def:Xhat-b-sqrt5}, we have $\hat{X} \cap P \setminus \hat{P} = \{x_1\}$. Therefore, using Lemma~\ref{lem:elementary-properties}(iv), \eqref{ineq:Xhat-Phatb}--\eqref{ineq:Xhat-Phat-b}, we obtain
\begin{align*}
E_1(\hat{X},Q^\nu_T(y)) &= E_1(\hat{X},\hat{P}) + E_1(\hat{X},P\setminus \hat{P})+ E_1(\hat{X},Q^\nu_T(y) \setminus P) \\& \leq E_1(X,P \setminus \{x_1\}) + E_1(x_1,X) + E_1(\hat{X},Q^\nu_T(y) \setminus P)  = E_1(X,Q^\nu_T(y))\,.
\end{align*}
This concludes Case (b). In both cases, we show that we can successively remove elementary polygons to obtain $X_0 \in \mathrm{Adm}_1^{(z^+,z^-)}(Q_T^\nu(y))$. Note that there are only finitely many such polygons, so the construction terminates after finitely many steps. This concludes the proof.
\end{proof}

%\begin{lem}[Simple paths in maximal components]\label{lem:simple-paths} Let $X \in \mathcal{X}_1(\mathbb{R}^2)$ and let $p=(x_1,\dots,x_k)$ be a simple path in $X$ with $x_1,x_k\in M^+$ (resp. $x_1,x_k\in M^-$) such that $x_2,\dots, x_{k-1}\notin M^+$ (resp.~$x_2,\dots, x_{k-1}\notin M^-$). Then it holds $k\geq 4$.
%\end{lem}
%
%\begin{proof}
%Let $p$ be as in the statement. Without restriction, assume $x_1,x_k\in M^+$. Note that $M^+\subset\mc{L}(z^+)$. Assume now that $k=3$. This implies $x_1,x_3\in\mc{L}(z^+)$, $\vert x_1-x_2\vert=1=\vert x_3-x_2\vert$ and thus, by the triangular inequality, $\vert x_1-x_3\vert\leq2$. As $X \in \mathcal{X}_1(\mathbb{R}^2)$ we have that $q(x_1) = -q(x_2) =q(x_3)$. Now, as $x_1,x_3 \in M^+ \subset \mathcal{L}(z^+) $ we have $|x_1-x_3| \in \{1,\sqrt{2},2\} $ First, note that$\vert x_1-x_3\vert=1$ is not possible, as this contradicts $X \in \mathcal{X}_1(\mathbb{R}^2)$. If $\vert x_1-x_3\vert \in \{\sqrt{2},2\}$ we necessarily have $x_2\in\mc{L}(z^+)$ (cf.~Figure~\ref{fig:path-lemma}) and, in both cases, $x_2\in\mc{L}(z^+)$ contradicts the choice of $M^+$ as the maximal component, as $M^+\cup\{x_2\}$ would be strongly connected. This shows the claim.
%\end{proof}

%Keep
\begin{proof}[Proof of Lemma~\ref{lem:subset-two-lattices}.]
Let $z^+,z^-\in\mc{Z},\nu\in\mathbb{S}^1,T>0$ and $y\in\R^2$ be fixed. Without loss of generality, we can assume $z^+\neq z^-$. Let $X$ be a minimizer of \eqref{eq:min-problem-T-fixed} with the additionally property described in Lemma~\ref{lem:polygon-lengths}. Note that without loss of generality, we can assume
\begin{align}\label{incl:X-subset-cube}
X\subset\{x\in\overline{(Q^\nu_T(y))_1}\colon\mc{N}^\mathrm{a}_1(x)\cap Q^\nu_T(y)\neq\emptyset\}\cup\partial^+_1Q^\nu_T(y)\cup\partial^-_1Q^\nu_T(y),
\end{align}
as for all $x \in Q^\nu_T(y)$, we have that $\mathcal{N}^\mathrm{a}(x) $ is contained in the set on the right in \eqref{incl:X-subset-cube} and therefore removing points outside this set does not change the energy in  $Q^\nu_T(y)$. In particular, we have that $X=\mc{L}(z^\pm)$ on $\partial_1^\pm Q^\nu_T(y)$ and $X=\emptyset$ on $\partial^c_1 Q^\nu_T(y)$. We divide the proof into several steps: In Step~1, we show that $X$ consists of at most two connected components, which contain the upper and lower parts of the boundary, respectively. These at most two connected components contain the maximal components $M^\pm$ defined in \eqref{def:Max-component}.  In Step 2, we show that $M^\pm$ does not contain any {\it holes}, which ensures that $\partial M^\pm$ is a simple path. In Step~3, we show that we can select a subset of $X$ with lower energy, such that there are no parts of $X$ that may be connected to $M^\pm$, which are not subsets of the upper or lower lattice $\mc{L}(z^\pm)$. However, by passing to that subset, one may violate the boundary conditions. In order to obtain a configuration that satisfies the boundary condition, we add atoms that have been possibly removed at the boundary. We show that only finitely many atoms need to be added in this way. The resulting configuration fulfills the boundary condition but may only be an almost minimizer. Step~4 and Step~5 deduce structural properties of the boundary $\partial M^\pm$. \\  
\noindent {\bf Step 1:} {\it $X$ has at most two connected components in $Q^\nu_T(y)$ and $\#\mc{N}^{\mathrm{a}} (x)\geq 1$ for all $x\in X\cap Q^\nu_T(y)$.}  Note that $M^+ \cup M^-$ is contained in at most two connected components of $X$. Furthermore, the connected components containing $M^+$ and $M^-$ also contain all points in $X\cap\partial^\pm_1 Q^\nu_T(y)$, and $M^\pm$ contains all points in $X\cap\partial^\pm_1 Q^\nu_T(y)$ but a bounded number of points (independent of $T$) in the convex corners of $\partial^\pm_1Q^\nu_T(y)$, see Figure~\ref{fig:NonMaximalComp} for an illustration.  Assume now that $X$ contains another connected component $\tilde{X}$. Then, in particular, we can remove $\tilde{X}$ without changing the boundary datum. Also it holds $E_1(\tilde{X},Q^\nu_T(y))\geq0$ and therefore, observing that for $x \in \tilde{X}$ we have $\mathcal{N}^\mathrm{a}(x) = \mathcal{N}^\mathrm{a}(x) \cap \tilde{X}$ and for $x \in X \setminus \tilde{X}$ we have $\mathcal{N}^\mathrm{a}(x) = \mathcal{N}^\mathrm{a}(x) \setminus \tilde{X}$, we obtain
\begin{align*}
E_1(X, Q^\nu_T(y))=E_1(X\setminus\tilde{X},Q^\nu_T(y))+E_1(\tilde{X},Q^\nu_T(y)) \geq E_1(X\setminus\tilde{X},Q^\nu_T(y))
\end{align*}
so that $X\setminus\tilde{X}$ is also a minimizer. In particular, we can from now on assume that $X$ has at most two connected components. Since $X$ consists of the two connected components containing $M^+$ and $M^-$ (possibly just one if $z^\pm ={\bf 0}$), each $x \in X$ satisfies $\#\mathcal{N}_1^\mathrm{a}(x) \geq 1$. \\
\noindent {\bf Step 2:} {\it $\partial M^\pm$ is a simple path.} We now show that $\partial M^\pm$ is a simple path joining the lateral sides of $Q^\nu_T(y)$. More precisely, let
\begin{align*}
H^T_{\nu^\perp,-}(y):=\left\{x\in\R^2\colon\langle(x-y),\nu^\perp\rangle<-\frac{T}{2}\right\} \quad \text{and} \quad 
H^T_{\nu^\perp,+}(y):=\left\{x\in\R^2\colon\langle(x-y),\nu^\perp\rangle\geq\frac{T}{2}\right\}\,.
\end{align*} 
Then there exist $v^\pm_-\in M^\pm\cap H^T_-(y)$ and $v^\pm_+\in M^\pm\cap H^T_+(y)$ such that $\{v^\pm_-,v^\pm_+\}\cup\partial M^\pm$ is a simple path with first element $v^\pm_-$, intermediate elements in $\partial M^\pm$ and last element $v^\pm_+$.  To prove this, we color each square of sidelength $1$ with all corners in $M^\pm$ in dark/light gray, respectively, see Figure~\ref{fig:Reduction2lat}. We first show that $\partial M^\pm$ does not contain any non-square cycles. Indeed, assume that $\partial M^+$ does contain a non-square cycle $p=(v_1,\dots,v_{n+1})\in\partial M^+$ with $v_1=v_{n+1}$. Let $\interior(p)$ denote the interior connected component of the curve (the case for $M^-$ is treated analogously). We define
\begin{align*}
\tilde{X}:=\begin{cases}
\mc{L}(z^+)&\text{in }\interior(p)\,,\\
X&\text{else.}
\end{cases}
\end{align*}
As $p \subset \partial M^+ \subset \mathcal{L}(z^+)$ by an elementary argument we have that $\tilde{X} \in \mathcal{X}_1(\mathbb{R}^2)$. As the attractive neighborhood of atoms $x\in Q^\nu_T(y)\setminus\overline{\interior(p)}$ did not change and at least one atom in $p$ has gained an attractive neighbor in $\interior(p)$, using Lemma~\ref{lem:elementary-properties}(iv) and \eqref{def:local-energy}, we obtain
\begin{align*}
E_1(\tilde{X},Q^\nu_T(y))&=E_1(\tilde{X},\overline{\interior(p)})+E_1(\tilde{X},Q^\nu_T(y)\setminus\overline{\interior(p)})\\
&<E_1(X,\overline{\interior(p)})+E_1(X,Q^\nu_T(y)\setminus\overline{\interior(p)})=E_1(X,Q^\nu_T(y))\,.
\end{align*}
As $\tilde{X} \in \mathrm{Adm}_1^{(z^+,z^-)}(Q^\nu_T(y))$, this contradicts that $X$ is a minimizer and shows the claim. Now, as the $M^+$ does not contain any non-square cycles, they are in particular simply connected. Therefore, $\partial M^+$ is a Jordan curve made out of a finite union of segments (oriented according to the unit segments of $\mathcal{L}(z^+)$). Recalling that $X \in \mathrm{Adm}_1^{(z^+,z^-)}(Q^\nu_T(y))$, this concludes Step~2.  \\
\noindent {\bf Step~3:} {\it Reduction to two lattices.} We now construct $X_0 \in \mathrm{Adm}_1^{(z^+,z^-)}(Q^\nu_T(y))$ satisfying  Lemma~\ref{lem:subset-two-lattices}{\rm (i)}--{\rm (v)}. Recalling \eqref{def:Max-component}, we show that removing the connected components of $(X\cap Q^\nu_T(y)) \setminus (M^+ \cup M^-)$ decreases the energy. To this end, let $X^\prime$ be a connected component of $(X\cap Q^\nu_T(y)) \setminus (M^+ \cup M^-)$. In the following, we prove that
\begin{align}\label{ineq:remove-con-comp}
 E_1(X\setminus X^\prime, Q^\nu_T(y))\leq  E_1(X, Q^\nu_T(y))\,.
\end{align}
To this end, we define $\Gamma^\pm \subset \partial M^\pm$ as the smallest connected sets containing $\mathcal{N}^\mathrm{a}(X^\prime) \cap \partial M^\pm$, where we define $\mathcal{N}^\mathrm{a}(X^\prime) := \bigcup_{x \in X^\prime} \mathcal{N}^\mathrm{a}(x) \setminus X^\prime$. Define $\Gamma := \Gamma^+ \cup \Gamma^-$ and $X_\Gamma = X^\prime \cup \Gamma$. Note that both $\Gamma^+$ and $\Gamma^-$ are simple paths in $X$ since $\partial M^\pm$ are simple paths. We define the internal and external attractive neighborhoods of $x \in X_\Gamma$ as
\begin{align}\label{def:int-ext-N}
\mathcal{N}_\mathrm{i}^\mathrm{a}(x) : = \mathcal{N}^\mathrm{a}(x) \cap X_\Gamma \quad \text{and} \quad \mathcal{N}_\mathrm{e}^\mathrm{a}(x) : = \mathcal{N}^\mathrm{a}(x) \setminus X_\Gamma \,.
\end{align}
Note that by definition, $X_\Gamma$ is connected and thus its reduced bond graph is delimited by a finite union of disjoint cycles. We will denote by $\partial X_\Gamma$ the union of these cycles and by $d= \#\partial X_\Gamma$ their cardinality. We further define
\begin{align}\label{def:X-Gamma-defs}
\begin{split}
&f_j:=\#j\text{-gons of }X_\Gamma\,,\quad f:=\sum_{j\geq 4}f_j\,,\quad \eta:=\eta(X_\Gamma)\,,\quad n_\Gamma:=\#\Gamma\,,\quad n:=\# X_\Gamma\,,\\
&b_\Gamma:=\#\{\{x,y\}\colon x,y\in \Gamma, y\in\mc{N}^\mathrm{a}(x)\}\,,\quad b:=\{\{x,y\}\colon x,y\in X_\Gamma,  y\in\mc{N}^\mathrm{a}(x)\}\,,\\
&b_{\mathrm{ac}}:=\{\{x,y\}\text{ acyclic}\colon x,y\in X_\Gamma,y\in\mc{N}^\mathrm{a}(x)\}\,,
\end{split}
\end{align}
where $\eta(X_\Gamma)$ was defined in \eqref{def:eta}. Since in the passage from $X$ to $X \setminus X^\prime$ the neighborhoods of atoms outside $X_\Gamma$ stays the same and for atoms in $\Gamma$ the neighborhoods outside of $X_\Gamma \setminus \Gamma$ remain, in view of \eqref{def:local-energy}, in order to show \eqref{ineq:remove-con-comp}  it suffices to show
\begin{align}\label{ineq:neighborhoods-XGamma}
\frac{1}{2}\sum_{x \in X_\Gamma} (4-\#\mathcal{N}^\mathrm{a}(x)) \geq \frac{1}{2}\sum_{x \in \Gamma} (4- (\#\mathcal{N}_{\mathrm{e}}^\mathrm{a}(x) +\#(\mathcal{N}^\mathrm{a}(x) \cap \Gamma)))\,.
\end{align}
Using \eqref{def:int-ext-N} and \eqref{def:X-Gamma-defs}, $\#\mathcal{N}^\mathrm{a}(x) = \#\mathcal{N}^\mathrm{a}_{\mathrm{e}}(x)+ \#\mathcal{N}^\mathrm{a}_{\mathrm{i}}(x)$ for $x \in \Gamma$, and $\mathcal{N}^\mathrm{a}(x) = \mathcal{N}^a_{\mathrm{i}}(x)$ for all $x \in X_\Gamma \setminus \Gamma$, \eqref{ineq:neighborhoods-XGamma} can be equivalently rewritten as
\begin{align}\label{ineq:n-b-n-bGamma}
2n-b\geq 2n_\Gamma-b_\Gamma\,.
\end{align}
 Applying Euler's formula without the exterior face to $X_\Gamma$, i.e. $n-b+f=1$, the left side of \eqref{ineq:n-b-n-bGamma} can be rewritten as
 \begin{align}\label{eq:Euler}
 2n-b=2+b-2f\,.
 \end{align}
 We now count the faces of $X_\Gamma$ and, observing that acyclic bonds are not counted this way and boundary faces are only counted once, we obtain
 \begin{align}\label{eq:face-defect-XGamma}
2b-d-2b_{\mathrm{ac}}=\sum_{j\geq4}jf_j=\eta+4f\,.
\end{align}
Using \eqref{eq:Euler} and \eqref{eq:face-defect-XGamma}, we observe that \eqref{ineq:n-b-n-bGamma} is equivalent to
\begin{align}\label{ineq:eta-Gamma}
b_{\mathrm{ac}} + \frac{d}{2} +\frac{\eta}{2} +2 \geq 2n_\Gamma -b_\Gamma\,. 
\end{align}
The remaining part of the proof is dedicated to proving \eqref{ineq:eta-Gamma}. \\
\noindent {\bf Step 4:} {\it Proof of \eqref{ineq:eta-Gamma}.} As $\Gamma$ consists of the two simple paths $\Gamma^+$ and $\Gamma^-$ we need to distinguish three cases:
\begin{align*}
(a)\quad \Gamma\text{ is not connected},\qquad  (b)\quad \Gamma\text{ is a cycle}, \qquad (c)\quad \Gamma\text{ is a simple path}.
\end{align*}
Note that a simple path of $k$ bonds consists of $k+1$ atoms, whereas for a cycle, the number of bonds and atoms is the same. Thus, one can see that
\begin{align}\label{ineq:nGamma-bGamma}
\text{Case }(a):n_\Gamma\leq b_\Gamma+2\,, \qquad \text{Case }(b):n_\Gamma\leq b_\Gamma\,, \qquad  \text{Case }(c):n_\Gamma\leq b_\Gamma+1\,.
\end{align}
Furthermore, we claim that the following crucial estimate holds:
\begin{align}\label{ineq:eta-nGamma}
\eta\geq n_\Gamma-\begin{cases}
2&\text{case }(a)\,,\\
4&\text{case }(b)\,,\\
3&\text{case }(c)\,.
\end{cases}
\end{align}
We defer the proof of \eqref{ineq:eta-nGamma} and show first how we can conclude. Observe that if a connected component $\tilde{\Gamma}$ of $\Gamma$ satisfies $\tilde{\Gamma}\not\subset\partial X_\Gamma$ then necessarily $\tilde{\Gamma}=1$ and $\tilde{\Gamma}$ connects to $X^\prime$ by one acyclic bond. This follows from the fact that whenever $x\in\tilde{\Gamma}$ satisfies $\#(\mc{N}^\mathrm{a}(x)\cap X_\Gamma)\geq2$, then it must lie on a cycle in $X_\Gamma$ and thus as an element of $\Gamma$ must also be in $\partial X_\Gamma$. \\
\noindent {\it Case {\rm (a.1)}:} Suppose $\Gamma\subset\partial X_\Gamma$. As $\Gamma$ consists of two disjoint simple paths and $\partial X_\Gamma$ is a union of disjoint cycles, it necessarily holds $\#(\partial X_\Gamma\setminus\Gamma)\geq2$. This is clear to see if $\Gamma^+$ and $\Gamma^-$ intersect the same cycle of $\partial X_\Gamma$, as $\Gamma^+\cup\Gamma^-$ is not connected. If they are not on the same cycle, this just follows from $\Gamma^+$ and $\Gamma^-$ not being cycles but lying on different cycles in $\partial X_\Gamma$. In either case, it holds $d\geq n_\Gamma+2$.
Using \eqref{ineq:nGamma-bGamma} and \eqref{ineq:eta-nGamma}, we get
\begin{align*}
2n_\Gamma-b_\Gamma\leq n_\Gamma+2\leq\frac{n_\Gamma}{2}+\frac{d}{2}+1\leq\frac{d}{2}+\frac{\eta}{2}+b_{\mathrm{ac}}+2\,. 
\end{align*}
\noindent {\it Case {\rm (a.2)}:} Suppose $\Gamma^+\subset\partial X_\Gamma,\Gamma^-\not\subset\partial X_\Gamma$ or $\Gamma^+\not\subset\partial X_\Gamma,\Gamma^-\subset \partial X_\Gamma$ (both cases work analogously and we only consider the first one). Then as before $\#(\partial X_\Gamma\setminus\Gamma^+)\geq1$, so that $d\geq\#\Gamma^++1$. Furthermore, by the above observation $\#\Gamma^-=1$ and $b_{\mathrm{ac}}\geq1$ and in particular $d\geq n_\Gamma$.
Using \eqref{ineq:nGamma-bGamma} and \eqref{ineq:eta-nGamma}, we get
\begin{align*}
2n_\Gamma-b_\Gamma\leq n_\Gamma+2\leq\frac{n_\Gamma}{2}+1+\frac{d}{2}+b_{\mathrm{ac}}\leq2+\frac{d}{2}+\frac{\eta}{2}+b_{\mathrm{ac}}\,.
\end{align*}
\noindent {\it Case {\rm (a.3)}:} Suppose $\Gamma^+,\Gamma^-\not\subset\partial X_\Gamma$. Then, it holds $n_\Gamma=2$ and $b_{\mathrm{ac}}\geq2$ as $\Gamma^+$ and $\Gamma^-$ cannot be connected to $X^\prime$ by the same acyclic bond.
Using \eqref{ineq:nGamma-bGamma}, we get
\begin{align*}
2n_\Gamma-b_\Gamma\leq n_\Gamma+2\leq 2+b_{\mathrm{ac}}\leq\frac{d}{2}+\frac{\eta}{2}+b_{\mathrm{ac}}+2\,.
\end{align*}
\noindent {\it Case {\rm (b)}:}  As $\Gamma$ is a cycle we immediately get $n_\Gamma\leq d$, which together with \eqref{ineq:nGamma-bGamma} and \eqref{ineq:eta-nGamma} implies
\begin{align*}
2n_\Gamma-b_\Gamma\leq n_\Gamma\leq\frac{n_\Gamma}{2}+\frac{d}{2}\leq\frac{d}{2}+\frac{\eta}{2}+b_{\mathrm{ac}}+2\,.
\end{align*}
\noindent {\it Case {\rm (c)}:} First suppose that $\Gamma\subset\partial X_\Gamma$, then as $\Gamma$ is not a cycle we get $\#(\partial X_\Gamma\setminus\Gamma)\geq1$ and thus $d\geq n_\Gamma+1$. Together with \eqref{ineq:nGamma-bGamma} and \eqref{ineq:eta-nGamma} this implies 
\begin{align*}
2n_\Gamma-b_\Gamma \leq n_\Gamma+1 \leq \frac{n_\Gamma}{2}+ \frac{d}{2} +\frac{1}{2} \leq \frac{d}{2}+\frac{\eta}{2}+b_{\mathrm{ac}}+2\,.
\end{align*}
 If $\Gamma\not\subset\partial X_\Gamma$ then $n_\Gamma=1$. This implies
 \begin{align*}
 2n_\Gamma-b_\Gamma \leq 2 \leq \frac{d}{2}+\frac{\eta}{2}+b_{\mathrm{ac}}+2\,.
 \end{align*}
 This shows \eqref{ineq:eta-Gamma} and all three cases (a)-(c). We are left to prove \eqref{ineq:eta-nGamma}. \\
 \noindent {\bf Step 5:} {\it Proof of \eqref{ineq:eta-nGamma}.} We are left to prove \eqref{ineq:eta-nGamma}. To this end, we need to classify the polygons in the reduced bond graph of $X_\Gamma$. For $k\geq 1$ define
\begin{align*}
\partial\text{-}k\text{-gon}:=\{P\text{ polygon in }X_\Gamma\colon\#(P\cap\Gamma)=k\}\quad \text{ and }\quad\partial\text{-gon}=\bigcup_{k\geq1}\partial\text{-}k\text{-gon}\,.
\end{align*}
Furthermore,  we set $D_k=\#\partial$-$k$-gon. In order to estimate the length of a polygon $P\in\partial$-$k$-gon, we introduce the following condition:
\begin{align}\label{cond:polygon-Mplus}
\text{there exists } x^+\in M^+\cap P\text{ and } x^-\in (M^-\setminus M^+) \cap P\text{ with }x^+ \in \mathcal{N}^\mathrm{a}(x^-)\,.
\end{align}
We claim that
\begin{align}\label{ineq:lengths-polygons-cond}
\#P \geq k+1 \quad \text{and if } \eqref{cond:polygon-Mplus} \text{ does not hold, then } \# P\geq k+3\,.
\end{align}
First of all, let us observe that $\#P \geq k+1$, i.e., $ P \setminus (\partial M^+\cup \partial M^-)\neq \emptyset$. Indeed, if $\#P =k$, then $P \subset \Gamma$ and $\Gamma$ is a cycle. Therefore, $\Gamma=P$. But then all bonds connecting $\Gamma$ to $X^\prime$ are acyclic, which implies that $\# \Gamma=1$. This shows that $\Gamma$ and hence $P$ cannot be a cycle, which is a contradiction. This shows \eqref{ineq:lengths-polygons-cond} if \eqref{cond:polygon-Mplus} does hold. 
Now, if \eqref{cond:polygon-Mplus} does not hold, then, due to Lemma~\ref{lem:polygon-lengths} (it is applicable as $P\setminus (\partial M^+\cup \partial M^-)\neq \emptyset$), we have  $\#P \geq k+3$. Now that we have shown \eqref{ineq:lengths-polygons-cond}, we prove \eqref{ineq:eta-nGamma}. To this end, denote by $N$ the number of $\partial$-$k$-gons satisfying \eqref{cond:polygon-Mplus}. Observe that in case $(a): N=0$, as otherwise $\Gamma$ would be connected. In case $(b): N\leq2$ as the bond-graph is planar, and $X'$ is connected, and in case $(c): N\leq 1$ as $\Gamma$ is a simple path. By the definition of $\eta$ in \eqref{def:X-Gamma-defs}  and \eqref{ineq:lengths-polygons-cond} we obtain
\begin{align}\label{ineq:eta-Dk}
\eta=\sum_{j\geq4}f_j(j-4)&\geq\sum_{k\geq1}D_k(k+3-4)-2N\geq\sum_{k\geq1}D_k(k-1)-\begin{cases}
0&\text{case }(a)\,,\\
4&\text{case }(b)\,,\\
2&\text{case }(c)\,.
\end{cases}
\end{align}
Note that two successive atoms $x,y\in\Gamma$ are contained in exactly one $\partial$-$k$-gon. Furthermore, if $P\neq \Gamma$, $k-1$ is an upper bound for the number of bonds in $\Gamma\cap P$ for $P\in\partial$-$k$-gon. On the other hand, if $P=\Gamma$, we have $\#P=k=n_\Gamma$, and therefore, recalling \eqref{ineq:nGamma-bGamma}, we obtain
\begin{align}\label{ineq:Dk-n-Gamma}
\sum_{k\geq1}D_k(k-1)\geq\begin{cases}
n_\Gamma-2&\text{case }(a)\,,\\
n_\Gamma&\text{case }(b)\,,\\
n_\Gamma-1&\text{case }(c)\,.
\end{cases}
\end{align}
 Combining \eqref{ineq:Dk-n-Gamma} with \eqref{ineq:eta-Dk} yields \eqref{ineq:eta-nGamma}.  This concludes Step~5. \\ 
 \noindent {\bf Step 6:} {\it Proof of {\rm (i)}-{\rm (iv)}.} We now define $X_0^+ = M^+ \cup (\mathcal{L}(z^+) \cap \partial_1^+ Q^\nu_T(y)) $, $X_0^-= (M^- \cup (\mathcal{L}(z^-) \cap \partial_1^- Q^\nu_T(y)) )$, and $X_0= X_0^+ \cup X_0^-$. Then by Definition~\ref{def:boundary-regions} and Remark~\ref{rem:maxcomp} it holds $X^+\cap X^-=\emptyset$. Note first that, as $X \in \mathcal{X}_1(\mathbb{R}^2)$ it holds that $X_0 \in \mathrm{Adm}_1^{(z^+,z^-)}(Q_T^\nu(y))$. Secondly, we point out that merely defining $X_0 = M^+ \cup M^-$ does not ensure in general that $X_0 \in \mathrm{Adm}_1^{(z^+,z^-)}(Q_T^\nu(y))$ as there may be (at most four) points at the convex corners of $\partial_1^\pm Q^\nu_T(y)$  in $\mathcal{L}(z^\pm) \cap \partial_1^\pm Q^\nu_T(y) \setminus M^\pm$. Since each of these points contributes at most $2$ to the energy, using Step~3--Step~5, we have that condition {\rm (i)} is satisfied as $X$ was a minimizer of \eqref{eq:min-problem-T-fixed}. By definition of $M^\pm$, condition {\rm (ii)} is obviously satisfied. Due to Step~2 and the fact that $X \in \mathrm{Adm}_1^{(z^+,z^-)}(Q_T^\nu(y))$ we have that {\rm (iii)} is satisfied. Next, we prove {\rm (iv)}. We define $S^\pm:=\{x\in\partial M^\pm\colon \#\mc{N}^a(x)\leq 3\}$. We are left to show
 \begin{align}\label{ineq:S-M}
 \# S^\pm \geq \frac{1}{2}\left\lfloor \#\partial M^\pm \right\rfloor\,.
 \end{align}
 We only prove the claim for $\partial M^+$. As $\partial M^+$ is a simple path, we write $\partial M^+=(x_1,\dots,x_{N})$. The inequality \eqref{ineq:S-M} follows if we can show that for all $ i\in\{1,\dots,N-1\}$ it holds 
 \begin{align}\label{ineq:min-card-xi}
 \min(\#\mc{N}^\mathrm{a}(x_i),\#\mc{N}^\mathrm{a}(x_{i+1}))\leq 3\,.
 \end{align}
 Suppose by contradiction that this does not hold, i.e., $\exists i_0 \in\{1,\dots,N-1\}$ such that $\#\mc{N}^\mathrm{a}(x_{i_0})=\#\mc{N}^a(x_{i_0+1})=4$. In particular, this implies $\{x_{k}\}\cup\mc{N}^\mathrm{a}(x_{k})\subset\mc{L}(z^+)$ for $k\in \{i_0,i_0+1\}$.  Let $\theta$ denote the interior angle of $\partial M^\pm$ at $x_{i_0}$, then it holds $\theta\in\{\frac{\pi}{2},\pi,\frac{3\pi}{2}\}$.  If $\theta\in\{\frac{\pi}{2},\pi\}$,  this would imply that $x_{i_0}$ and $x_{i_0+1}$ are not successors in $\partial M^\pm$, which yields a contradiction. If $\theta=\frac{3\pi}{2}$, then as $x_{i_0+1}$ is the successor in $\partial M^\pm$, this contradicts the fact that $M^+$ was chosen maximal.  This shows \eqref{ineq:min-card-xi} and concludes the proof. \\ 
\noindent {\bf Step 7:} {\it Proof of {\rm (v)}.} As $\partial M^\pm$ is a simple path connecting the lateral faces of $Q^\nu_T(y)$ they form a polygonal line and denoting by $\alpha(x)$ the interior angle formed at such an atom $x$, it holds
\begin{align*}
\sum_{x\in\partial M^\pm}(\pi-\alpha(x))\in\frac12\{-\pi,0,\pi\}\,.
\end{align*}
Since $M^\pm$ is strongly and simply connected one can relate $\alpha(x)$ to the number of neighbors in $M^\pm$ by the formula
\begin{align*}
\alpha(x)=\frac{\pi}{2}(\#(\mc{N}^\mathrm{a}(x)\cap M^\pm)-1)\,.
\end{align*}  
Consequently one concludes
\begin{align*}
\left|\sum_{x\in\partial M^\pm}(\#(\mc{N}^\mathrm{a}(x)\cap M^\pm)-3)\right|=\frac{2}{\pi}\left|\sum_{x\in\partial M^\pm}(\alpha(x)-\pi)\right|\leq1\,.
\end{align*}
This shows the last item and concludes the proof.
\end{proof}

\begin{cor}\label{cor:two-lattices-Phi}
Let $z^+,z^-\in\mc{Z}$ and $\nu\in\mathbb{S}^1$, then it holds
\begin{align*}
\Phi(z^+,z^-,\nu)=\min\{\liminf_{T\to+\infty}\frac1T\inf\{&E_1(X_T,Q^\nu_T(y_T))\colon y_T\in\R^2, X_T \in \mathrm{Adm}_1^{(z^+,z^-)}(Q^\nu_T(y_T)) \,,\\ &X_T\subset\mc{L}(z^+_T)\cup\mc{L}(z_T^-)\}\text{ and }\{z_T^\pm\}_T\subset\mc{Z}
\text{ with }z_T^\pm\to z^\pm\}\,.
\end{align*}
In particular, we can choose an optimal sequence for $\Phi$ satisfying the conclusions of Lemma~\ref{lem:subset-two-lattices}.
\end{cor}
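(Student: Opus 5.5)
The plan is to prove two inequalities between $\Phi(z^+,z^-,\nu)$, written in its lattice-spacing-one form, and the restricted quantity on the right-hand side, which we denote $\tilde\Phi(z^+,z^-,\nu)$. (In the admissibility condition on the right-hand side the boundary data are of course $z_T^\pm$.)

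The inequality $\Phi\le\tilde\Phi$ is immediate. The right-hand side is an infimum over a smaller class: every $X_T$ with $X_T\subset\mathcal{L}(z_T^+)\cup\mathcal{L}(z_T^-)$ that lies in $\mathrm{Adm}_1^{(z_T^+,z_T^-)}(Q^\nu_T(y_T))$ is also admissible in the unrestricted formula, with the same sequences $z_T^\pm\to z^\pm$.

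For $\tilde\Phi\le\Phi$, fix sequences $z_T^\pm\to z^\pm$ and centers $y_T$ realizing the minimum in the definition of $\Phi$. The remark after \eqref{def:Phi} says this minimum is attained by a diagonal argument. For each $T$, apply Lemma~\ref{lem:subset-two-lattices} with data $z_T^\pm$, $\nu$, $y_T$ and $T$. It yields $X_{0,T}\in\mathrm{Adm}_1^{(z_T^+,z_T^-)}(Q^\nu_T(y_T))$ with $X_{0,T}=X_{0,T}^+\cup X_{0,T}^-$ and $X_{0,T}^\pm\subset\mathcal{L}(z_T^\pm)$, by item (ii). By item (i) it also satisfies
\begin{align*}
E_1(X_{0,T},Q^\nu_T(y_T))\le\min\{E_1(X,Q^\nu_T(y_T))\colon X\in\mathrm{Adm}_1^{(z_T^+,z_T^-)}(Q^\nu_T(y_T))\}+C
\end{align*}
with $C$ universal. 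The minimum on the right is at most the infimum over $y$ appearing in $\Phi$ plus an arbitrarily small error, since we may choose $y_T$ almost optimal. Dividing by $T$, the additive constant contributes $C/T\to0$. Taking $\liminf_{T\to\infty}$ therefore gives
\begin{align*}
\liminf_{T\to\infty}\tfrac1T E_1(X_{0,T},Q^\nu_T(y_T))\le\Phi(z^+,z^-,\nu).
\end{align*}
Since $X_{0,T}$ is a competitor in the restricted class, this shows $\tilde\Phi\le\Phi$. The sequence $\{X_{0,T}\}_T$ is then an optimal sequence for $\Phi$ that satisfies all conclusions (i)--(v) of Lemma~\ref{lem:subset-two-lattices}, which is the final claim. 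The minimum on the right-hand side is attained by the same diagonal argument as for $\Phi$.

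The only delicate point is bookkeeping. The boundary data in Lemma~\ref{lem:subset-two-lattices} must be the $T$-dependent $z_T^\pm$, not the limits $z^\pm$. The constant $C$ in item (i) must be uniform in $T$ and in the boundary data, which the lemma asserts. The degenerate cases $z_T^\pm=\mathbf{0}$ are harmless: there $\mathcal{L}(\mathbf{0})=\emptyset$, and the lemma gives a configuration contained in a single lattice. If $z_T^+=z_T^-$, the statement is trivial, because an optimal configuration can be taken to be a subset of that one lattice, again by the lemma with both data equal.
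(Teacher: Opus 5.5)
Your proposal is correct and follows essentially the same argument as the paper: $\Phi\le\tilde\Phi$ because the restricted class is smaller, and $\tilde\Phi\le\Phi$ by applying Lemma~\ref{lem:subset-two-lattices} with the $T$-dependent data $z_T^\pm$ along an optimal sequence for $\Phi$, where the uniform additive constant $C$ vanishes after dividing by $T$. You are also right that the admissibility class on the right-hand side should read $\mathrm{Adm}_1^{(z_T^+,z_T^-)}$ rather than $\mathrm{Adm}_1^{(z^+,z^-)}$.
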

\begin{proof}
Denote the right hand side as $\tilde{\Phi}(z^+,z^-,\nu)$. Then by definition it is clear that $\Phi(z^+,z^-,\nu)\leq\tilde{\Phi}(z^+,z^-,\nu)$. For the opposite inequality consider an optimal sequence $\{z_T^\pm\}_T\subset\mc{Z},\{y_T\}_T\subset\R^2$ and $\{\tilde{X}_T\}_T$ of $\Phi$. Then, up to passing to a subsequence, we can assume
\begin{align*}
\lim_{T\to +\infty} \frac{1}{T}E_1(\tilde{X}_T,Q^\nu_T(y_T))=\Phi(z^+,z^-,\nu)\,,
\end{align*}
where $\tilde{X}_T$ is a minimizer of \eqref{eq:min-problem-T-fixed}. Then by Lemma~\ref{lem:subset-two-lattices} we can choose for each $T$ a competitor $X_T$ satisfying the conclusions of Lemma~\ref{lem:subset-two-lattices}. Thus it holds $E_1(X_T,Q^\nu_T(y_T))\leq E_1(\tilde{X}_T,Q^\nu_T(y_T))+C$ for a universal $C>0$ independent of $T$. Noticing that now $X_T$ is a sequence of competitors for $\tilde{\Phi}(z^+,z^-,\nu)$, together with the last two inequalities this implies
\begin{align*}
\tilde{\Phi}(z^+,z^-,\nu)\leq\liminf_{T\to +\infty}\frac{1}{T}E_1(X_T,Q^\nu_T(y_T))\leq\lim_{T\to +\infty}\frac{1}{T}E_1(\tilde{X}_T,Q^\nu_T(y_T))=\Phi(z^+,z^-,\nu)\,.
\end{align*}
This concludes the proof.
\end{proof}

\section{Characterization of Solid-vacuum and solid-solid Interactions}\label{sec:7}
We will now show a relation between the cell formula $\Phi$ and the density $\vert\cdot\vert_1$ in \eqref{eq:ellone-norm}. This will be achieved by a slicing argument similar to \cite{FriedrichKreutzSchmidt}, which uses our structure result for minimizers of the cell problem.
\begin{lem}\label{lem:Phi-vacuum}
There exists a universal constant $C>0$ such that for every $\nu\in\mathbb{S}^1$ and every sequence of centers $\{y_T\}_T$ it holds:
\begin{enumerate}
\item[{\rm (i)}] If $(z^+,z^-) \in \{((\theta,\tau,1),{\bf 0}),({\bf 0},(\theta,\tau,1))\} \in \mathcal{Z} \times \mathcal{Z}$ it holds for all $T>0$
\begin{align*}
\left\vert\frac1T\min\{E_1(X,Q^\nu_T(y_T))\colon X\in \mathrm{Adm}_1^{(z^+,z^-)}(Q^\nu_T(y_T))\}-\frac{1}{2}\vert e^{-i\theta}\nu\vert_1\right\vert\leq\frac{C}{T}\,.
\end{align*}
\item[{\rm (ii)}] For all $(z^+,z^-)=((\theta^+,\tau^+,1),(\theta^-,\tau^-,1)) \in\mc{Z}\times \mc{Z}$ it holds for all $T>0$
\begin{align*}
\frac1T\min\{E_1(X,Q^\nu_T(y_T))\colon X\in \mathrm{Adm}_1^{(z^+,z^-)}(Q^\nu_T(y_T))\}\leq \frac{1}{2}\vert e^{-i\theta^+}\nu\vert_1+ \frac{1}{2}\vert e^{-i\theta^-}\nu\vert_1+ \frac{C}{T}\,.
\end{align*}
Moreover, if $z^+\neq z^-$ then
\begin{align*}
\frac1T\min\{E_1(X,Q^\nu_T(y_T))\colon X\in \mathrm{Adm}_1^{(z^+,z^-)}(Q^\nu_T(y_T))\}\geq \frac{1}{4}\vert e^{-i\theta^+}\nu\vert_1+ \frac{1}{4}\vert e^{-i\theta^-}\nu\vert_1+ \frac{C}{T}\,.
\end{align*}
\end{enumerate}
\end{lem}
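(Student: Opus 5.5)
By the rotational and translational invariance in Lemma~\ref{lem:elementary-properties}(i), I reduce to $\theta$ replaced by $0$ and $\nu$ replaced by $e^{-i\theta}\nu$. So in each lattice I work with coordinates aligned to $\mathbb{Z}^2$. The plan has two parts: upper bounds from explicit competitors, and lower bounds by slicing along lattice rows and columns, using the structure from Lemma~\ref{lem:subset-two-lattices} and Corollary~\ref{cor:two-lattices-Phi}.

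\emph{Upper bounds.} For (i) with $z^-={\bf 0}$, I take $X=\mathcal{L}(z^+)\cap\{\langle x-y_T,\nu\rangle\ge 0\}$ inside $(Q^\nu_T(y_T))_1$, together with the prescribed boundary layers. The atoms with missing bonds are exactly those whose lattice neighbor $x-e_j$ (or $x+e_j$) lies on the vacuum side.
\begin{itemize}
\item Counting per lattice column (direction $e_2$): a segment of length $T$ with normal $\nu=(\nu_1,\nu_2)$ is crossed by about $T|\nu_1|$ such columns, up to $O(1)$, and each crossing produces exactly one missing bond.
\item The rows (direction $e_1$) contribute about $T|\nu_2|$ missing bonds in the same way.
\end{itemize}
Each missing bond costs $\tfrac12$ at one atom, so the energy is $\tfrac12 T|\nu|_1+O(1)$, where the $O(1)$ accounts for corner effects and the lateral vacuum strips $\partial^c_1Q^\nu_T$. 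For (ii), I glue two such half-configurations, keeping the $z^-$ half at distance $\ge\sqrt2$ from the other by removing a layer of width $O(1)$. Removing this layer does not change the count per unit length, up to $O(1)$ at the ends. Any bonds that arise across the interface only lower the energy.

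\emph{Lower bound for (i).} I take a minimizer satisfying Lemma~\ref{lem:subset-two-lattices}, so $X_0\subset\mathcal{L}(z^+)$ with $\partial M^+$ a simple lattice path joining the lateral sides. Each lattice column meeting the interface segment must contain a topmost-below-vacuum transition: an atom $x$ in $X_0$ with $x-e_2\notin X_0$, or the symmetric case. Such an atom has a missing bond. Rows are handled the same way. Since each atom is charged once per missing bond, the energy is at least $\tfrac12(\#\text{columns}+\#\text{rows})$. Because the boundary conditions force the path across the whole cube, this is at least $\tfrac12 T|\nu|_1-C$.

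\emph{Lower bound for (ii).} Here I use item (iv) of Lemma~\ref{lem:subset-two-lattices}: at least about half of the atoms of $\partial X_0^\pm$ carry energy $\ge\tfrac12$. In addition, the boundary path $\partial X_0^\pm$, being a lattice path in $\mathcal{L}(z^\pm)$ spanning the cube, has at least $T|e^{-i\theta^\pm}\nu|_1-C$ atoms, since its $\ell^1$ length in lattice coordinates dominates its displacement. Because $X_0^+\cap X_0^-\cap Q^\nu_T=\emptyset$, the two sets $S^\pm$ are disjoint, so
\[
E\ge \tfrac12(\#S^++\#S^-)\ge \tfrac14\big(T|e^{-i\theta^+}\nu|_1+T|e^{-i\theta^-}\nu|_1\big)-C .
\]
The main obstacle is making the $O(1)$ boundary corrections uniform in $\nu$ and in the centers. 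This matters in particular for points of $\partial X_0^\pm$ near the lateral strips and for the fact that item (i) of Lemma~\ref{lem:subset-two-lattices} only gives near-minimality up to a constant. I would absorb both into $C/T$.
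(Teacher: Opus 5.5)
Your proposal is correct. For the upper bounds and for the lower bound in (i) it follows the paper's proof. The upper bounds use half-space competitors, glued across an $O(1)$ gap for (ii). The lower bound in (i) takes an almost-minimizer $X_0\subset\mathcal{L}(z^+)$ from Lemma~\ref{lem:subset-two-lattices} and counts one missing bond per lattice row and per lattice column crossing the interface, up to a number of exceptional lines bounded independently of $\nu$ and $T$. One small slip: the columns crossing the interface number about $T|\nu_2|$ and the rows about $T|\nu_1|$, not the other way round, but the sum is unaffected.

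The lower bound in (ii) is where your route genuinely differs from the paper's. You and the paper share the first step, from Lemma~\ref{lem:subset-two-lattices}(iv):
\[
E_1(X_0,Q^\nu_T(y_T))\ge\tfrac12(\#S^++\#S^-)\ge\tfrac14(\#\partial X_0^++\#\partial X_0^-)-C.
\]
\begin{itemize}
\item The paper then uses Lemma~\ref{lem:subset-two-lattices}(v) to bound $\#\partial X_0^\pm$ from below by $\sum_{x\in\partial X_0^\pm}(4-\#(\mathcal{N}^{\mathrm a}(x)\cap X_0^\pm))-1$. Up to $O(1)$, this is twice the solid--vacuum energy of $X_0^\pm$ taken alone. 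The paper closes with the already proven lower bound of (i), since $X_0^\pm$ is admissible for the vacuum problem. This yields the structurally meaningful estimate that the solid--solid energy is at least half the sum of the two solid--vacuum energies.
\item You instead bound $\#\partial X_0^\pm$ directly by the $\ell^1$ norm, in lattice coordinates, of the displacement between the endpoints of the boundary path. This avoids (v) and the reuse of (i), and is more elementary.
\end{itemize}

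Your route has a price: you need the endpoints of $\partial X_0^\pm$ to lie within $O(1)$ of $y_T\pm\frac T2\nu^\perp$. The diameter bound in Lemma~\ref{lem:subset-two-lattices}(iii) only gives $\#\partial X_0^\pm\geq T-C$. That loses up to a factor $\sqrt2$ when $e^{-i\theta^\pm}\nu$ is close to a diagonal.

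The localization is true. The prescribed lattice on $\partial^\pm_1Q^\nu_T(y_T)$ and the vacuum on $\partial^c_1Q^\nu_T(y_T)$ force the interface to leave the cube through the lateral vacuum strips. However, this is not part of the statement of Lemma~\ref{lem:subset-two-lattices}, so you should state it and justify it explicitly.

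Finally, your $-C$ is the correct sign in the last inequality. The statement as printed has $+\frac{C}{T}$ there, which is a misprint.
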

This allows to estimate from above and below the density $\Phi$ with the density $\vert \cdot\vert
_1$ as
\begin{align*}
\Phi(z^+,z^-,\nu)\leq\liminf_{T\to\infty}\frac1T \min \{E_1(X_T,Q^\nu_T(y_T))\colon X_T\in \mathrm{Adm}_1^{(z^+,z^-)}(Q^\nu_T(y_T)) \}
\end{align*}
for all $z^\pm\in\mc{Z},\nu\in\mathbb{S}^1$ and all $\{y_T\}_T$. We remark that, up to a multiplicative constant, the energy density $\vert\cdot\vert_1$ appears as the surface energy density of the $\Gamma$-limit of discrete ferromagnetic energies on the square lattice (see, e.g.,~\cite{Alicandro-Cicalese-Braides}). 
\begin{proof}
For the whole proof, we fix $\nu\in\mathbb{S}^1$ as well as a sequence of centers $\{y_T\}_T$. Each item will consist of two inequalities. The lower bound is obtained by  Lemma~\ref{lem:subset-two-lattices} and a discrete slicing argument. The upper bound is obtained by a specific construction. We let $z=(\theta,\tau,1)\in\mc{Z}\setminus\{\textbf{0}\}$. \\
\noindent {\bf Step 1:} {\it Lower bound for {\rm (i)}}.  We only treat the case $z^+=z$ and $z^-=\textbf{0}$, as the other case is treated analogously.  In this step we prove
\begin{align}\label{ineq:lem-7.1.1}
\frac1T\min\{E_1(X,Q^\nu_T(y_T))\colon  X\in \mathrm{Adm}_1^{(z^+,{\bf 0})}(Q^\nu_T(y_T))\}\geq  \frac{1}{2}\vert e^{-i\theta}\nu\vert_1-\frac{C}{T}\,.
\end{align}
To this end we employ Lemma~\ref{lem:subset-two-lattices} to choose $X_T \in \mathrm{Adm}_1^{(z^+,{\bf 0})}(Q^\nu_T(y_T))$ such that $X_T \subset \mathcal{L}(z^+)$ and
\begin{align}\label{ineq:almost-min-XT-slicing1}
E_1(X_T,Q^\nu_T(y_T)) \leq \min\{E_1(X,Q^\nu_T(y_T))\colon X \in \mathrm{Adm}_1^{(z^+,{\bf 0})}(Q^\nu_T(y_T))\}  + C
\end{align}
for some uniform constant $C>0$ independent of $T$. For $k=1,2$ and  for $\mu\in\R$ we define
\begin{align*}
I_k(\mu):=\{\lambda e^{i\theta}e_k+\mu e^{i\theta}e_k^\perp\colon\lambda\in\R\}
\end{align*}
the line in lattice direction $e^{i\theta}e_k$ passing through $\R e^{i\theta}e_k^\perp$ at the point $\mu e^{i\theta}e_k^\perp$. Furthermore, we set
\begin{align*}
\mc{I}_k:=\left\{\mu\in\R\colon I_k(\mu)\cap\mc{L}(z)\neq\emptyset, I_k(\mu)\cap\left[y_T-\frac{T}{2}\nu^\perp;y_T+\frac{T}{2}\nu^\perp\right] \neq \emptyset \right\}\,.
\end{align*}
Due to the boundary conditions, up to a uniformly bounded number of times not depending on $\nu$ and $T$, for each $\mu\in\mc{I}_k$ we find $x\in X_T\subset\mc{L}(z)$ such that $x+e^{i\theta}e_k\notin X_T$ or $x-e^{i\theta}e_k\notin X_T$. Due to \eqref{def:local-energy}
\begin{align}\label{ineq:slicing-Ik}
E_1(X_T,Q^\nu_T(y_T))\geq\frac12\sum_{k=1}^2\#\mc{I}_k-C\,,
\end{align}
where the constant $C>0$ accounts for the number of lines in direction $e^{i\theta}e_k$ passing through $[y_T-\frac{T}{2}\nu^\perp;y_T+\frac{T}{2}\nu^\perp]$ that do not intersect $\partial^+_1 Q^\nu_T(y_T)$, which is independent of $T$. It remains to estimate $\#\mc{I}_k$. We observe that for $\mu\in\R$ with $I_k(\mu)\cap\mc{L}(z)\neq\emptyset$ we get $I_k(\mu\pm1)\cap\mc{L}(z)\neq\emptyset$ and $I_k(\mu')\cap\mc{L}(z)=\emptyset$ for all $\mu'\in(\mu-1,\mu+1)\setminus\{\mu\}$. Finally, denoting by $\Pi_k$ the orthogonal projection onto the line $\R e^{i\theta}e_k^\perp$, we have
\begin{align*}
\mc{L}^1\left(\Pi_k\left(\left[y_T-\frac{T}{2}\nu^\perp;y_T+\frac{T}{2}\nu^\perp\right]\right)\right)=T\vert\langle\nu,e^{i\theta}e_k\rangle\vert=T\vert\langle e^{-i\theta}\nu,e_k\rangle\vert\,.
\end{align*}
We therefore obtain
\begin{align}\label{ineq:Ik1}
\#\mc{I}_k=  T\vert\langle e^{-i\theta}\nu,e_k\rangle\vert +O(1)\,.
\end{align}
Thus, using \eqref{ineq:slicing-Ik}--\eqref{ineq:Ik1}, recalling \eqref{eq:ellone-norm}, and dividing by $T$ we conclude
\begin{align*}
\frac{1}{T}E_1(X_T,Q^\nu_T(y_T))\geq  \frac{1}{2}\vert
 e^{-i\theta}\nu\vert_1-\frac{C}{T}\,.
\end{align*}
This together with \eqref{ineq:almost-min-XT-slicing1} shows \eqref{ineq:lem-7.1.1}. \\
\noindent {\bf Step 2:} {\it Upper bound for {\rm (i)} and {\rm (ii)}}. We now prove
\begin{align}\label{ineq:lem-7.1.2}
\frac1T\min\{E_1(X,Q^\nu_T(y_T))\colon  X\in \mathrm{Adm}_1^{(z^+,{\bf 0})}(Q^\nu_T(y_T))\}\leq \frac{1}{2}\vert e^{- i\theta}\nu\vert_1+\frac{C}{T}\,.
\end{align}
To this end, we define $X_T$
\begin{align*}
X_T=\begin{cases}
\mc{L}(z)&\text{in }\{x\colon\langle x-y_T,\nu\rangle\geq 10\}\,,\\
\emptyset&\text{otherwise,}
\end{cases}
\end{align*}
which can be seen as a discrete version of a half-space. Clearly $X_T \in \mathrm{Adm}_1^{(z^+,{\bf 0})}(Q_T^\nu(y_T))$ and therefore
\begin{align}\label{ineq:X-T-min-712}
E_1(X_T,Q^\nu_T(y_T)) \geq \min\{E_1(X,Q^\nu_T(y_T))\colon  X\in \mathrm{Adm}_1^{(z^+,{\bf 0})}(Q^\nu_T(y_T))\}\,.
\end{align}
To estimate the energy of $X_T$, we observe that, by construction of $X_T$, equality holds in  \eqref{ineq:slicing-Ik} with $\mc{I}_k$ defined as above, up to an error independent of $T$. Indeed, this follows as for $x\in\mc{L}(z)\setminus X_T^+$ then either $x+me^{i\theta}e_k\notin X_T$ or $x-me^{i\theta}e_k\notin X_T$ for all $m\in\N$.  Then, the equalities in \eqref{ineq:slicing-Ik}, \eqref{ineq:Ik1}, and \eqref{eq:ellone-norm} yield
\begin{align*}
\frac{1}{T}E_1(X_T^+,Q^\nu_T(y_T))\leq \frac{1}{2}\vert e^{-i\theta}\nu\vert_1+\frac{C}{T}\,.
\end{align*}
This together with \eqref{ineq:X-T-min-712} shows \eqref{ineq:lem-7.1.2}. For the purpose of the proof of {\rm (ii)}, we observe that the construction can be repeated with $-\nu$ together with the construction for $\nu$ to obtain a configuration $X_T \in \mathrm{Adm}_1^{(z^+,z^-)}(Q_T^\nu(y_T))$, which satisfies the first inequality of {\rm (ii)}.\\ 
\noindent {\bf Step 3:} {\it Lower bound for {\rm (ii)}.} Fix $z^+,z^- \in \mathcal{Z}$ such that $z^+ \neq z^-$. Using Lemma~\ref{lem:subset-two-lattices}, we consider  $X_T\in \mathrm{Adm}_1^{(z^+,z^-)}(Q^\nu_T(y_T))$ such that $X_T=X_T^+\cup X_T^-$ with $X_T^\pm \subset \mathcal{L}(z^\pm)$ and
\begin{align}\label{ineq:upper-bound-almostmin-73}
E_1(X_T,Q^\nu_T(y_T)) \leq  \min\{E_1(X,Q^\nu_T(y_T))\colon X\in \mathrm{Adm}_1^{(z^+,z^-)}(Q^\nu_T(y_T))\} +C
\end{align}
for some constant $C>0$ independent of $T$. Due to Lemma~\ref{lem:subset-two-lattices}(iv)  there are $S_T^\pm\subset\partial X^\pm_T$ with $\#S_T^\pm\geq\frac{1}{2}\#\partial X_T^\pm-\frac12$ and for all $x\in S_T^\pm$ we have  $\#\mc{N}^\mathrm{a}(x)\leq3$. Due to Lemma~\ref{lem:subset-two-lattices}(v)  and $\#(\mc{N}^\mathrm{a}(x)\cap M^\pm)\leq\#(\mc{N}^\mathrm{a}(x)\cap X_T^\pm)$ this implies
\begin{align*}
\frac12\sum_{x\in X_T^\pm\cap Q^\nu_T(y_T)}(4-\#\mc{N}^\mathrm{a}(x))&\geq\frac12\#\partial S_T^\pm\geq\frac14\sum_{x\in\partial X^\pm_T}(4-\#(\mc{N}^\mathrm{a}(x)\cap X^\pm_T))-\frac{1}{2}.
\end{align*}
As $X_T^\pm$ are competitors in the lower bound of ${\rm (i)}$ and $X_T^+\cap X_T^-\cap Q^\nu_T(y_T)=\emptyset$ we obtain
\begin{align*}
\frac{1}{T}E_1(X_T,Q^\nu_T(y_T))&\geq\frac{1}{2T}E_1(X_T^+,Q^\nu_T(y_T))+\frac{1}{2T}E_1(X_T^-,Q^\nu_T(y_T))-\frac{C}{T}\\
&\geq\frac14\vert e^{-i\theta^+}\nu\vert_1+\frac14\vert e^{-i\theta^-}\nu\vert_1-\frac{C}{T}\,.
\end{align*}
This, together with \eqref{ineq:upper-bound-almostmin-73}, concludes the proof.
\end{proof}

We will now address a finer analysis, which poses conditions on the difference of rotation angles $\theta^+-\theta^-$ such that equality holds in Lemma~\ref{lem:Phi-vacuum}(ii). For this purpose we introduce the set of rational angles $\mc{G}_{\mathbb{A}}$, defined as the set of all angles $\theta\in\mathbb{A}$ of the form
\begin{align}\label{eq:rational-angles}
e^{i\theta}=\frac{w_1}{w_2}\quad\text{for }w_1,w_2\in\Z^2\setminus\{0\}\,,
\end{align}
where division is to be understood in the sense of complex numbers. Notice that $\mc{G}_{\mathbb{A}}$ is countable and any angle $\theta\in\mc{G}_{\mathbb{A}}$ corresponds to a rational point on the unit sphere with both coordinates non-negative and vice versa.
\begin{lem}[Touching lattices]\label{lem:touching-lattices}
Let $\nu\in\mathbb{S}^1$ and $z^\pm=(\theta^\pm,\tau^\pm,1)\in\mc{Z}$ be such that
\begin{align}\label{lemineq:touching-lattices}
\Phi(z^+,z^-,\nu)\leq \frac{1}{2}\vert e^{-i\theta^+}\nu\vert_1 + \frac{1}{2}\vert e^{-i\theta^-}\nu\vert_1 -\eta
\end{align}
for some $\eta>0$. Then, there exist sequences $\{y_T\}_T \subset \mathbb{R}^2$, $\{z_T^\pm\}_T=\{(\theta_T^\pm,\tau_T^\pm,1)\}_T\subset\mc{Z}$ such that $z_T^\pm \to z^\pm$ as $T\to +\infty$, and a sequence $\{X_T\}_T$ such that $X_T \in  \mathrm{Adm}_1^{(z_T^+,z_T^-)}(Q^\nu_T(y_T))$,  $X_T\subset\mc{L}(z_T^+)\cup\mc{L}(z_T^-)$ for all $T>0$, and
\begin{align*}
\lim_{T\to +\infty} \frac{1}{T}E_1(X_T,Q^\nu_T(y_T)) = \Phi(z^+,z^-,\nu)\,.
\end{align*} 
The rotation angles satisfy
\begin{align*}
\theta^+_T-\theta^-_T=\theta^+-\theta^-\in\mc{G}_{\mathbb{A}}\quad\forall T>0\,.
\end{align*}
More specifically, it holds $e^{i(\theta^+-\theta^-)}=\frac{w_1}{w_2}$ for points $w_1,w_2\in\Z^2\setminus\{0\}$ with $\vert w_1\vert,\vert w_2\vert\leq C_\eta$, where $C_\eta>0$ is a constant only depending on $\eta$.
\end{lem}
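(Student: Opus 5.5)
The plan is to argue by contradiction: strict improvement below the vacuum sum forces many bonds between the two lattices, and recurrences among these bonds force a rational relation between $e^{i\theta^+}$ and $e^{i\theta^-}$.

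\emph{Choice of sequence.} First I would invoke Corollary~\ref{cor:two-lattices-Phi}. It gives $z_T^\pm\to z^\pm$, centers $y_T$ and configurations $X_T=X_T^+\cup X_T^-\in\mathrm{Adm}_1^{(z_T^+,z_T^-)}(Q^\nu_T(y_T))$ with $X_T^\pm\subset\mathcal{L}(z_T^\pm)$, satisfying the conclusions of Lemma~\ref{lem:subset-two-lattices} and $\frac1T E_1(X_T,Q^\nu_T(y_T))\to\Phi(z^+,z^-,\nu)$. This already gives every statement except the ones on the angles.

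\emph{Counting cross bonds.} Next I split the neighborhoods: $\#\mathcal{N}^{\mathrm a}(x)=\#(\mathcal{N}^{\mathrm a}(x)\cap X_T^+)+\#(\mathcal{N}^{\mathrm a}(x)\cap X_T^-)$. This gives
\begin{align*}
E_1(X_T,Q^\nu_T(y_T)) = E_1(X^+_T,Q^\nu_T(y_T)) + E_1(X^-_T,Q^\nu_T(y_T)) - B_T + O(1),
\end{align*}
where $B_T$ is the number of \emph{cross bonds} $\{x,y\}$ with $x\in X_T^+$, $y\in X_T^-$ and $|x-y|=1$. Each $X_T^\pm$, with the opposite half emptied near the interface, is a competitor in the solid--vacuum problem. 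So Lemma~\ref{lem:Phi-vacuum}(i), together with the continuity of $\theta\mapsto|e^{-i\theta}\nu|_1$ (needed because $\theta_T^\pm\to\theta^\pm$), gives $E_1(X^\pm_T,Q^\nu_T(y_T))\ge \frac T2|e^{-i\theta^\pm}\nu|_1-o(T)$. Comparing with \eqref{lemineq:touching-lattices} then yields $B_T\ge \eta T/2$ for $T$ large. Every cross bond has an endpoint on $\partial X_T^+$. By Lemma~\ref{lem:subset-two-lattices}(iii)--(v) and the energy bound $E_1\le CT$, we also have $\#\partial X^\pm_T\le CT$. A pigeonhole argument along the simple path $\partial X_T^+$ therefore produces, for any fixed $K$, a window of path-length $L\le C K/\eta$ containing at least $K$ cross bonds.

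\emph{Recurrence.} Write a cross bond as $y_j=x_j+u_j$ with $|u_j|=1$. For two cross bonds in the same window, $x_j-x_1=e^{i\theta_T^+}a_j$ and $y_j-y_1=e^{i\theta_T^-}b_j$ with $a_j,b_j\in\mathbb{Z}^2$ and $|a_j|,|b_j|\le C L$. Hence
\begin{align*}
e^{i\theta_T^-}b_j-e^{i\theta_T^+}a_j=u_j-u_1 .
\end{align*}
If $u_j=u_1$ for some $j\neq 1$, then $e^{i(\theta_T^+-\theta_T^-)}=b_j/a_j$ with $|a_j|=|b_j|\le C_\eta$. This is the desired rational relation, with $w_1=b_j$, $w_2=a_j$.

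\emph{Main obstacle.} The hard step is producing two cross bonds with the same direction $u$ inside one window. The directions $u_j$ a priori range over a continuum, so pigeonhole on the $u_j$ alone fails. I would first try to exploit rigidity. Inside a window, a point $y\in X_T^-$ bonded to $x\in\partial X_T^+$ must keep distance $\ge1$ from all of $X_T^+$ and distance $\ge\sqrt2$ from same-charge points. Along a straight piece of $\partial X_T^+$ this restricts $u$ relative to $e^{i\theta^+_T}$ to finitely many "nestling" positions, mirroring the case analysis in Lemma~\ref{lem:polygon-lengths}. An alternative is a Diophantine argument. The $K$ vectors $u_1+(e^{i\theta^-_T}b_j-e^{i\theta^+_T}a_j)$ all lie on $\mathbb{S}^1$ with bounded $a_j,b_j$. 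For $K$ large, three such unit vectors already force an algebraic relation between $e^{i\theta_T^+}$ and $e^{i\theta_T^-}$, from which one could extract two equal $u$'s after shifting by lattice vectors.

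\emph{Freezing the angle difference.} Once a relation $e^{i(\theta_T^+-\theta_T^-)}=w_1/w_2$ holds for large $T$ with $|w_i|\le C_\eta$, there are only finitely many candidate quotients. So along a subsequence the angle difference is constant, and letting $T\to\infty$ gives $\theta^+-\theta^-\in\mathcal{G}_{\mathbb A}$. Finally I would replace $\theta_T^\pm$ by $\theta^\pm+(\theta_T^+-\theta^+)$, i.e.\ rotate both lattices by the same small angle. Using the rotational invariance in Lemma~\ref{lem:elementary-properties}(i), this keeps $z_T^\pm\to z^\pm$ and gives $\theta_T^+-\theta_T^-=\theta^+-\theta^-$ for all $T$.
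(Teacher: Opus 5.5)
Your reduction is sound and parallels the paper up to the decisive point. The matching ingredients are the two-lattice optimal sequence, the lower bound $B_T\ge\eta T/2$ on cross bonds (the paper's touching points), the bound $\#\partial X_T^\pm\le CT$, and the observation that two distinct parallel cross bonds at bounded distance give $e^{i(\theta_T^+-\theta_T^-)}=b/a$ with bounded $a,b$. However, the step you flag as the main obstacle is the heart of the lemma, and neither suggestion closes it.

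\emph{The nestling route.} On a straight piece of $\partial X_T^+$ the hard-core constraints do force $u$ to be the outward lattice normal. At a convex corner, though, the constraints from $X_T^+$ only confine $u$ to a closed quarter arc between two lattice directions. The cross bonds may all be corner-to-corner: in Figure~\ref{fig:low-energy}(c) every bond joins two convex corners, and $u=e_2$ is a lattice diagonal that no local rigidity argument singles out.

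\emph{The Diophantine route.} Three unit vectors $u_1,u_2,u_3$ with $u_j-u_1=e^{i\theta_T^-}b_j-e^{i\theta_T^+}a_j$ only say that three points lie on a common unit circle. That is one real algebraic equation in $\theta_T^+-\theta_T^-$. Its solutions are in general not rational rotations, and it does not produce $u_j=u_k$.

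The missing idea is to pigeonhole over \emph{pairs} of nearby cross bonds rather than single bonds. Say two cross bonds $(x,x+u)$ and $(x',x'+u')$ with $x\neq x'\in X_T^+$ form a pair of type $(a,b)$ if $x-x'=e^{i\theta_T^+}a$ and $(x+u)-(x'+u')=e^{i\theta_T^-}b$. By counting along the boundary, using $\#\partial X_T^\pm\le 5T$ and a linear lower density of boundary points in balls, at least half of the touching points have another touching point within distance $R\sim\eta^{-1}$. There are only $O(R^4)$ types, so some type occurs $\gtrsim\eta^5T$ times, and by the same density argument three occurrences lie within distance $\rho\sim\eta^{-5}$ of each other. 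For all three, $d:=u-u'=e^{i\theta_T^-}b-e^{i\theta_T^+}a$ is the same vector.
\begin{itemize}
\item If $d=0$, then $e^{i\theta_T^-}b=e^{i\theta_T^+}a\neq0$ directly.
\item If $d\neq0$, then $u\in\mathbb{S}^1\cap(\mathbb{S}^1+d)$, which has at most two points. So two of the three pairs share the same $u$, and their first bonds are distinct parallel cross bonds at distance at most $2\rho$. This is exactly your good case, giving $|w_1|,|w_2|\le C\eta^{-5}$.
\end{itemize}
The counting also fits inside one of your windows once its length is of order $\eta^{-5}$.

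A minor point on the last step: no subsequence or extra rotation is needed. For large $T$, $e^{i(\theta_T^+-\theta_T^-)}$ converges and takes values in a finite set, so it is eventually constant. Moreover, your proposed replacement changes only $\theta_T^-$, which would break $X_T^-\subset\mathcal{L}(z_T^-)$ whenever the difference was not already frozen.
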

Condition \eqref{lemineq:touching-lattices}  means that the surface energy between the two sublattices of $\mc{L}(z^+)$ and $\mc{L}(z^-)$ may be strictly less than the sum of the two surface energies of each lattice interacting with vacuum. This is an indication that in some sense the two lattices are {\it compatible}, i.e., they have many attractive pairs of atoms in $\mc{L}(z^+)\times\mc{L}(z^-)$ at distance $1$. This is why we speak of {\it touching lattices}. The lemma shows that in such cases the difference of the corresponding rotation angles is constant in $T$ and lies in $\mc{G}_{\mathbb{A}}$.
\begin{proof}
Let $\{z_T^\pm\}_T$ be an optimal sequence for $\Phi(z^+,z^-,\nu)$ with corresponding centers $\{y_T\}_T$. We use Lemma~\ref{lem:subset-two-lattices} to find a sequence $\{X_T\}$ such that $X_T \in\mathrm{Adm}_1^{(z_T^+,z_T^-)}(Q^\nu_T(y_T))$ for all $T>0$ satisfying properties {\rm (i)}--{\rm (v)} of Lemma~\ref{lem:subset-two-lattices}. Then, up to a non-relabeled subsequence, by \eqref{lemineq:touching-lattices}, it holds
\begin{align}\label{ineq:upper-bound-touching}
\frac{1}{2} \vert e^{-i\theta^+}\nu\vert_1+\frac{1}{2}\vert e^{-i\theta^-}\nu\vert_1-\lim_T\frac1TE_1(X_T,Q^\nu_T(y_T))\geq\frac{\eta}{2}>0\,.
\end{align}
Due to Lemma~\ref{lem:subset-two-lattices}(ii) we have that $X_T= X_T^+\cup X_T^-$ with $X_T^\pm\subset\mc{L}(z^\pm_T)$, where $z_T^\pm=(\theta^\pm,\tau^\pm,1)\to z^\pm$ as $T\to+\infty$. Due to Lemma~\ref{lem:subset-two-lattices}(iii), we have that the sets $\partial X^\pm_T$ are connected and it holds $X_T=\mc{L}(z_T^\pm)$ on $\partial^\pm_1Q^\nu_T(y_T)$. The main goal of the proof consists in proving
\begin{align}\label{eq:rational-misorientation}
e^{i(\theta^+_T-\theta^-_T)}=\frac{v_T^+}{v_T^-}\text{ with }v_T^\pm\in\Z^2\setminus\{0\}\text{ satisfying }\vert v_T^+\vert=\vert v_T^-\vert\leq C_\eta
\end{align}
for $T$ sufficiently large, where $C_\eta$ only depends on $\eta$. Using \eqref{eq:rational-misorientation}, the fact that $\Z^2$ is a discrete set, and that $\theta_T^\pm\to\theta^\pm$ we necessarily have that $\frac{v_T^+}{v_T^-}$ must be eventually constant, which implies $\theta^+-\theta^-=\theta_T^+-\theta_T^-\in\mc{G}_{\mathbb{A}}$ for all $T$ large enough. This is precisely the second part of the Lemma. The remaining part of the proof is dedicated to showing \eqref{eq:rational-misorientation}.  Observe that by Lemma~\ref{lem:subset-two-lattices}(ii) it holds $X_T=X_T^+\cup X_T^-$ with $X_T^+\cap X_T^-=\emptyset$. We define the set of  {\it touching points} as 
\begin{align*}
&\mc{T}^+_T:=\{x\in X_T^+\colon \text{ there exists } y\in X_T^- \text{ such that }\vert x-y\vert=1\}\,,\\ &\mc{T}^-_T:=\{x\in X_T^-\colon \text{ there exists } y\in X_T^+ \text{ such that } \vert x-y\vert=1\} \,.
\end{align*}
Note that $\mc{T}_T^\pm\subset\bigcup_{x\in\partial X_T^\pm}\{x\}\cup\mc{N}^\mathrm{a}(x)$ and
\begin{align}\label{ineq:comparison-Tpm}
\frac{1}{4}\#\mc{T}^+_T\leq\#\mc{T}_T^-\leq4\#\mc{T}^+_T\,.
\end{align}
\noindent {\bf Step 1:} {\it Estimate on the cardinality of touching points}.  We show that $\#\mc{T}^\pm_T\geq\frac{\eta}{21}T$ for $T$ large enough. By \eqref{def:local-energy} we have
\begin{align*}
E_1(X_T,Q^\nu_T(y_T))&\geq\frac12\sum_{x\in X_T^+\cap Q^\nu_T(y_T)}(4-\#(\mc{N}^{\mathrm{a}}(x)\cap X_T^+)) +\frac12\sum_{x\in X_T^-\cap Q^\nu_T(y_T)}(4-\#(\mc{N}^{\mathrm{a}}(x)\cap X_T^-))\\
& \quad-2(\#\mc{T}_T^++\#\mc{T}_T^-)
\end{align*}
and therefore
\begin{align*}
2(\#\mc{T}^+_T+\#\mc{T}^-_T)\geq E_1(X_T^+,Q^\nu_T(y_T))+E_1(X_T^-,Q^\nu_T(y_T))-E_1(X_T,Q^\nu_T(y_T))\,.
\end{align*}
We observe that $X_T^\pm$ are competitors in the minimization problem of Lemma~\ref{lem:Phi-vacuum}(i). Therefore, dividing by $T$, passing to the $\liminf_{T\to\infty}$, and using \eqref{ineq:upper-bound-touching}, we obtain
\begin{align*}
\liminf_{T\to\infty}\frac1T(\#\mc{T}_T^++\#\mc{T}_T^-)\geq\frac{\eta}{4}\,.
\end{align*}
This together with \eqref{ineq:comparison-Tpm} implies $\liminf_{T\to\infty}\frac1T(\#\mc{T}_T^\pm)\geq\frac{\eta}{20}$ and concludes Step~1. \\ 
\noindent {\bf Step~2:} {\it A priori estimate on the length of the boundaries}. We claim that for $T$ large enough the boundaries $\partial X_T^\pm\subset Q^\nu_T(y_T)$ satisfy
\begin{align}\label{ineq:boundary-length-Step2}
\#(\partial X_T^+\cup\partial X_T^-)\leq 5T
\end{align}
Indeed, by Lemma~\ref{lem:subset-two-lattices} there exists $S_T^\pm\subset\partial X^\pm_T$ with $2\# S_T^\pm\geq\#\partial X_T^\pm-1$ and $x\in S^\pm\Rightarrow\#\mc{N}^\mathrm{a}(x)\leq3$. Thus, \eqref{ineq:upper-bound-touching} together with Lemma~\ref{lem:Phi-vacuum}(ii), and $ \vert \nu \vert_1\leq \sqrt{2}$ for all $\nu \in \mathbb{S}^1$, implies
\begin{align*}
\#(\partial X_T^+\cup \partial X_T^-)&\leq 2\#\left( S_T^+ \cup S_T^-\right) +2 \leq \sum_{x\in X_T\cap Q^\nu_T(y_T)}(4-\#\mc{N}^\mathrm{a}(x))+2=2E_1(X_T,Q^\nu_T(y_T))+2\\
&\leq 2T(\Phi(z^+,z^-,\nu)+1)+2 \leq 
2T(\vert e^{-i\theta^+}\nu\vert_1+\vert e^{-i\theta^-}\nu\vert_1+1)+2\leq 5T
\end{align*} 
for $T>0$ large enough. \\
\noindent {\bf Step 3:} {\it Lower density bound for atoms in $\partial X_T^\pm$}. We claim the existence of $0<c<1$ universal, such that for $T>r\geq1$ it holds
\begin{align}\label{ineq: density-lowerbound}
\#(\partial X_T^\pm\cap B_r(x))\geq cr\quad \text{ for all } x\in\R^2: \dist(x,\partial X_T^\pm)\leq 1 \,.
\end{align}
To prove this, without restriction, assume $T>3r$. Due to Lemma~\ref{lem:subset-two-lattices}(iii) $\partial X^\pm$ is connected and joins the lateral sides of $Q^\nu_T(y_T)$ it holds $\partial X_T^\pm\setminus B_r(x)\neq\emptyset$. Thus there is a simple path in $\partial X_T^\pm$ that connects such $y\in\partial X_T^\pm\setminus B_r(x)$ with a point $z\in\overline{B_1(x)}$. Such a path must have at least $cr$ many atoms in $B_r(x)$. \\
\noindent {\bf Step 4:} {\it Bounded gap between touching points}. Given $R>0$ introduce the set of {\it $R$-isolated points} by
\begin{align}\label{def:R-isolated}
\mc{I}^\pm_{T,R}:=\{x\in\mc{T}_T^\pm\colon B_R(x)\cap\mc{T}_T^\pm\subset\overline{B_2(x)}\}\,.
\end{align}
We claim the existence of a universal $c'>0$ such for $R=\frac{c'}{\eta}$ and all $T$ large enough it holds
\begin{align}\label{ineq:R-isolated-touching}
\#\mc{I}_{T,R}^\pm\leq\frac{\#\mc{T}_T^\pm}{2}.
\end{align}
This follows by using \eqref{ineq:boundary-length-Step2} and \eqref{ineq: density-lowerbound} with $r=\frac{R}{2}$, as well as Step~1, which implies
\begin{align*}
\#\mc{I}_{T,R}^\pm\leq\frac{2}{cR}\sum_{x\in\mc{I}_{T,R}^\pm}\#(B_\frac{R}{2}(x)\cap\partial X_T^\pm)\leq\frac{C}{cR}\#\partial X_T^\pm\leq\frac{C}{cR}T\leq\frac{C}{cc'}\#\mc{T}_T^\pm,
\end{align*}
for a universal $C>0$ that may vary from step to step. Here, in the second step, we accounted for possible double counting as by definition of $R$-isolated points, $B_\frac{R}{2}(x)\cap B_\frac{R}{2}(y)\neq\emptyset$ for $x,y\in\mc{I}_{T, R}$ only if $\vert x-y\vert\leq 2$. But the number of points in $\overline{B_2(x)}$ is bounded for $X \in \mathcal{X}_1(\mathbb{R}^2)$. The claim, therefore, follows by choosing $c'$ large enough. \\
\noindent {\bf Step 5:} {\it Bounded gap between pairs of points having the same relative position.} Given two lattice vectors $\xi_1,\xi_2$ satisfying $e^{-i\theta^-_T}\xi_1,e^{-i\theta^+_T}\xi_2\in B_{2R}\cap(\Z^2\setminus\{0\})$, where $R>0$ is given by Step~4, we define
\begin{align*}
D^{\xi_1,\xi_2}_T=\{(x_1,y_1)\in\mc{T}_T^-\times\mc{T}_T^-\colon \text{ there exist } x_2,y_2\in\mc{T}_T^+\text{ such that }\vert x_1-x_2\vert=\vert y_1-y_2\vert=1&\\
\text{ and }x_1-y_1=\xi_1,x_2-y_2=\xi_2&\}\,.
\end{align*}
This set consists of pairs of points $(x_1,y_1)$ in $\mc{T}_T^-$, whose difference is $\xi_1$ and whose corresponding neighbors in $\mc{T}_T^+$ have $\xi_2$ as difference vector. Observe that, by \eqref{def:R-isolated}, for $x_1\in\mc{T}_T^-\setminus\mc{I}_{T,R}^-$ one finds $\xi_1\in B_R\cap e^{i\theta^-}\Z^2$ with $\vert\xi_1\vert>2$ and $y_1\in\mc{T}_T^-$ such that $x_1-y_1=\xi_1$. We denote their corresponding neighbors in $\mc{T}_T^+$ by $x_2$ and $y_2$, respectively. As $x_2,y_2\in e^{i\theta^+}(\Z^2_{\rm charge} +\tau_T^+)$ and $\vert x_1-x_2\vert=\vert y_1-y_2\vert=1$, one finds $\xi_2\in B_{2R}\cap e^{i\theta^+}\mathbb{Z}^2$ such that $x_2-y_2=\xi_2$. Notice that by $\vert\xi_1\vert>2$ it holds $\xi_2\neq0$. This together with \eqref{ineq:R-isolated-touching} implies
\begin{align}\label{ineq:T-minus-xi}
\frac12\#\mc{T}_T^-\leq\#(\mc{T}_T^-\setminus\mc{I}_{T,R}^-)\leq\sum_{(\xi_1,\xi_2)}\#D^{\xi_1,\xi_2}_T,
\end{align}
where the sum runs over all pairs $(\xi_1,\xi_2)$ with $e^{-i\theta^-_T}\xi_1,e^{-i\theta^+_T}\xi_2\in B_{2R}\cap(\Z^2\setminus\{0\})$. We now choose $(\xi_1^T,\xi_2^T)\in(B_{2R}\cap e^{i\theta^-_T}(\Z^2\setminus\{0\}))\times(B_{2R}\cap e^{i\theta^+_T}(\Z^2\setminus\{0\}))$ such that $\#D^{\xi_1^T,\xi_2^T}_T$ is maximal among all such pairs. Then, \eqref{ineq:T-minus-xi} and the fact that the number of such pairs is controlled by $CR^4$ yield
\begin{align}\label{ineq:T-minus-xi-max}
\#\mc{T}_T^-\leq CR^4\#D_T^{\xi_1^T,\xi_2^T}
\end{align}
for $C>0$ universal. We write $D^{\xi_1^T,\xi_2^T}=\{(x_j^T,y_j^T)\}_{i=1}^{M_T}$ and claim that there is a universal constant $\tilde{c}>0$ such that for $\rho=\tilde{c}\eta^{-5}$
\begin{align}\label{eq:properties-T-minus-rho}
\text{there exist } j,k,l\in\{1,\dots, M_T\}\text{ pairwise distinct such that } x_k^T,x_l^T\in B_{\rho}(x_j^T)\,.
\end{align}
Conversely, assume that $\rho$ is such that $B_{\rho}(x_k^T)\setminus\{x_k^T\}$ contains at most one point $\{x_j^T\}_{j=1}^{M_T}$. Then it is elementary to see that we can choose $\{\tilde{x}_j^T\}_{j=1}^{\lceil\frac{M_T}{2}\rceil}\subset\{x_j^T\}_{j=1}^{M_T}$ such that $B_{\frac\rho2}(\tilde{x}_j^T)\cap B_{\frac\rho2}(\tilde{x}_k^T)=\emptyset$ for $j,k\in\{1,\dots,\lceil\frac{M_T}{2}\rceil\},j\neq k$. Using $2\lceil\frac{M_T}{2}\rceil\geq\#D^{\xi_1^T,\xi_2^T}_T$ along with \eqref{ineq:boundary-length-Step2} and \eqref{ineq: density-lowerbound} implies
\begin{align*}
\#D^{\xi_1^T,\xi_2^T}_T\leq2 \left\lceil\frac{M_T}{2}\right\rceil\leq\frac{4}{c\rho}\sum_{j=1}^{\lceil\frac{M_T}{2}\rceil}\#(\partial X_T^-\cap B_{\frac{\rho}{2}}(\tilde{x}_j^T))\leq\frac{4}{c\rho}\#\partial X_T^-\leq\frac{20}{c\rho}T\,.
\end{align*}
Using \eqref{eq:properties-T-minus-rho}, $\#\mc{T}_T^-\geq\frac{\eta}{11}T$ and the choice $R=\frac{c'}{\eta}$ in Step~4 we get
$\rho\leq\frac{\tilde{c}}{2\eta^5}$ for a $\tilde{c}>0$ universal. Thus, using \eqref{ineq:T-minus-xi-max}, assertion \eqref{eq:properties-T-minus-rho} is guaranteed for $\rho=\tilde{c}\eta^{-5}$, which concludes Step~5. \\
\noindent {\bf Step 6:} {\it Conclusion.} Denote the atoms identified in \eqref{eq:properties-T-minus-rho} by $x_1^1,x_1^2,x_1^3$ with their corresponding points $y_1^1,y_1^2,y_1^3$ such that $(x_1^j,y_1^j)\in D_T^{\xi_1^T,\xi_2^T}$ for $j=1,2,3$. In particular, recall that it holds
\begin{align}\label{ineq:distancebound-Step6}
\vert x_1^1-x_1^2\vert,\vert x_1^1-x_1^3\vert,\vert x_1^2-x_1^3\vert\leq 2\rho\,.
\end{align}
By the definition of $D_T^{\xi_1^T,\xi_2^T}$, there exist $(x_2^1,y_2^1),(x_2^2,y_2^2),(x_2^3,y_2^3)$ such that $\vert x_1^j-x_2^j\vert=\vert y_1^j-y_2^j\vert=1$ and $x_1^j-y_1^j=\xi_1^T, x_2^j-y_2^j=\xi_2^T$ for $j=1,2,3$. For each $j$ the points $\{x_1^j,x_2^j,y_1^j,y_2^j\}$ form a possibly self-intersecting quadrilateral with two edges of length $1$ and two edges oriented in $\xi_1^T$ and $\xi_2^T$, respectively. Now there are two cases to consider: (a): $\xi_1^T=\xi_2^T$ and (b): $\xi_1^T\neq\xi_2^T$. \\
\noindent {\bf Case {\rm (a)}:} We have that $x_1^1-y_1^1=x_2^1-y_2^1,$ where $x_1^1-y_1^1=e^{i\theta^-_T}w_1$ and $x_2^1-y_2^1=e^{i\theta^+_T}w_2$ for $w_1,w_2\in(\Z^2\setminus\{0\})\cap B_{2R}$. Then $e^{i\theta_T^-}w_1=e^{i\theta^+_T}w_2$ and thus \eqref{eq:rational-misorientation}  holds for $w_T^+=w_1$ and $w_T^-=w_2$ with $\vert w_T^+\vert,\vert w_T^-\vert\leq2R=2\frac{c'}{\eta}$. \\
\noindent {\bf Case {\rm (b)}:}  Note that two of the three quadrilaterals are necessarily translates of each other. This is due to the fact that there are only two different quadrilaterals (up to translation) with fixed order of sides, prescribed length of $1$ of two opposite sides, and prescribed length and orientation of the remaining sides. Without restriction, we assume that the quadrilaterals for $j=1$ and $j=2$ are translates of each other. Then, we get $x_1^1-x_2^1=x_1^2-x_2^2$. We write $x_1^j=e^{i\theta^-_T}(b_1^j+\tau_T^-)$ and $x_2^j=e^{i\theta^+_T}(b_2^j+\tau_T^+)$ for suitable $b_1^j,b_2^j\in\Z^2$ for $j=1,2$. (We omit the dependence of $b_1^j,b_2^j$ on $T$ for notational convenience.) Then $x_1^1-x_2^1=x_1^2-x_2^2$ implies $e^{i\theta^-_T}(b_1^1-b_1^2)=e^{i\theta^+_T}(b_2^1-b_2^2)$. As $x_1^1\neq x_1^2$, it holds $b_1^1-b_1^2\neq0$ and therefore $b_2^1-b_2^2\neq0$ as well. This shows
\begin{align*}
e^{i(\theta^+_T-\theta^-_T)}=\frac{b_1^1-b_1^2}{b_2^1-b_2^2}.
\end{align*}
By \eqref{ineq:distancebound-Step6} it holds $\vert b_1^1-b_1^2\vert=\vert x_1^1-x_1^2\vert\leq 2\rho$, which together with $\vert b_1^1-b_1^2\vert=\vert b_2^1-b_2^2\vert$ also shows $\vert b_2^1-b_2^2\vert\leq 2\rho$. Clearly we have $b_1^1-b_1^2,b_2^1-b_2^2\in\Z^2$, so that \eqref{eq:rational-misorientation}  holds for $w_T^+=b_1^1-b_1^2$ and $w_T^-=b_2^1-b_2^2$ with $\vert w_T^+\vert,\vert w_T^-\vert\leq 2\rho=2\tilde{c}\eta^{-5}$. This shows \eqref{eq:rational-misorientation} and, as explained below \eqref{eq:rational-misorientation}, this concludes the proof.
\end{proof}
\section{Cell Formula Part II}\label{sec:8}
This final section consists of two subsections. Subsection~\ref{sec:8.1} is concerned with the cell formula and will provide a proof showing that converging boundary values in $\Phi$, see \eqref{def:Phi}, may be replaced by fixed ones.  In Subsection~\ref{sec:8.2}, we complete the proofs of the main results. For this goal, we will now introduce the auxiliary function $\bar{\varphi} \colon \mathcal{Z} \times \mathcal{Z} \times \mathbb{S}^1 \to [0,+\infty)$ defined by
\begin{align}\label{def:phi-bar}
\bar{\varphi}(z^+,z^-,\nu):=\liminf_{T\to+\infty}\frac1T\inf\{E_1(X_T,Q_T^\nu(y_T))\colon y_T\in\R^2\,, X_T \in \mathrm{Adm}_1^{(z^+,z^-)}(Q^\nu_T(y_T))\}\,.
\end{align}
\subsection{Converging and fixed boundary values}\label{sec:8.1}
In this first subsection, we will show equality of $\Phi$ and $\varphi$, which in turn shows that converging boundary values may be replaced by fixed ones. Notice if the difference between the limiting angles is not in $\mc{G}_{\mathbb{A}}$, this already follows by the results of Section~\ref{sec:7} and the continuity of $\vert \cdot \vert
_1$. For the remaining cases, we will first show that our approximating sequence can be chosen constant in the angles. In the case of fixed angles, we will then show a uniform closedness result for the set of touching points of two translates of the perfect lattice.
\begin{lem}[Fixed rotations]\label{lem:fixed-rotations}
Let $\nu \in \mathbb{S}^1$, $z_T^\pm=(\theta_T^\pm,\tau^\pm_T,1)\in\mc{Z}$ and  $\theta^\pm\in\mathbb{A}$ be such that $\theta_T^+-\theta_T^-=\theta^+-\theta^-$ for all $T>0$ with $\theta^\pm_T\to\theta^\pm$ as $T\to\infty$. Then it holds
\begin{align*}
&\liminf_{T\to+\infty}\frac1T\inf\{E_1(X_T, Q^\nu_T(y_T))\colon y_T\in\R^2\,, X_T \in \mathrm{Adm}_1^{(z^+_T,z^-_T)}(Q^\nu_T(y_T))  \}\\
\geq&\liminf_{T\to+\infty}\frac1T\inf\{E_1(X_T, Q^\nu_T(y_T))\colon y_T\in\R^2\,, X_T \in \mathrm{Adm}_1^{(\bar{z}^+_T,\bar{z}^-_T)}(Q^\nu_T(y_T))\}\,,
\end{align*}
where $\bar{z}_T^\pm=(\theta^\pm,\tau^\pm_T,1)$.
\end{lem}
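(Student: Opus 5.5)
The idea is to absorb the common rotation into the geometry. Since $\theta_T^+-\theta_T^-=\theta^+-\theta^-$, we have $\theta_T^\pm=\theta^\pm+\alpha_T$ with a single $\alpha_T:=\theta_T^+-\theta^+=\theta_T^--\theta^-\to0$ (in $\mathbb{A}$). Lemma~\ref{lem:elementary-properties}(i) applied with the rotation $e^{-i\alpha_T}$ should then turn a competitor for the boundary data $(z_T^+,z_T^-)$ on $Q^\nu_T(y_T)$ into a competitor for $(\bar z_T^+,\bar z_T^-)$ on a cube rotated by $-\alpha_T$. The remaining task is to compare cubes with normal $e^{-i\alpha_T}\nu$ and with normal $\nu$. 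Since $\alpha_T\to0$, this costs only $o(T)$ energy. That is acceptable because we divide by $T$.

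\textbf{Step 1 (rotation).} Fix $T$ and an almost-minimizing $X_T\in\mathrm{Adm}_1^{(z_T^+,z_T^-)}(Q^\nu_T(y_T))$. By Lemma~\ref{lem:subset-two-lattices} and Corollary~\ref{cor:two-lattices-Phi} we may assume, at the price of an additive constant $C$, that $X_T\subset\mathcal{L}(z_T^+)\cup\mathcal{L}(z_T^-)$. Set $\tilde X_T:=e^{-i\alpha_T}X_T$. Since $e^{-i\alpha_T}\mathcal{L}(z_T^\pm)=\mathcal{L}(\bar z_T^\pm)$, we get $\tilde X_T\in\mathrm{Adm}_1^{(\bar z_T^+,\bar z_T^-)}(Q^{\nu_T}_T(\tilde y_T))$ with $\nu_T:=e^{-i\alpha_T}\nu$ and $\tilde y_T=e^{-i\alpha_T}y_T$. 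Moreover $E_1(\tilde X_T,Q^{\nu_T}_T(\tilde y_T))=E_1(X_T,Q^\nu_T(y_T))$.

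\textbf{Step 2 (re-embedding into a cube with normal $\nu$).} This is the main step. I would build $Y_T\in\mathrm{Adm}_1^{(\bar z_T^+,\bar z_T^-)}(Q^\nu_{T}(\tilde y_T))$ from $\tilde X_T$. Inside the slightly smaller cube $Q^{\nu_T}_{T-s_T}(\tilde y_T)$, with $s_T:=\sqrt{|\alpha_T|}\,T$, we keep $\tilde X_T$. In the upper region of the frame $Q^\nu_T(\tilde y_T)\setminus Q^{\nu_T}_{T-s_T}(\tilde y_T)$ we place $\mathcal{L}(\bar z_T^+)$, in the lower region $\mathcal{L}(\bar z_T^-)$, and in a lateral strip of width $\sim 20+T|\alpha_T|$ around the interface line we place vacuum. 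Near the top and bottom sides, $\tilde X_T$ already coincides with $\mathcal{L}(\bar z_T^\pm)$ on the boundary layers $\partial^\pm_1Q^{\nu_T}_T$. Choosing $s_T$ well above the tilt $T|\alpha_T|$ makes the glued configuration coincide with a single lattice on each side, so by Lemma~\ref{lemma:cell-energy-zero} these regions carry no energy. The only new energy comes from the vacuum strips at the two lateral sides. Each strip has length $O(T|\alpha_T|+s_T)$ along the grain interface and is bounded by pieces of the two lattices and vacuum. By Lemma~\ref{lem:elementary-properties}(v) and the vacuum estimate of Lemma~\ref{lem:Phi-vacuum}, its energy is at most $C(1+T|\alpha_T|+s_T)$. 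The configuration stays in $\mathcal{X}_1(\mathbb{R}^2)$ because within $Q^{\nu_T}_T$ the lattices already coincide with $\tilde X_T$ near its boundary, and the lateral parts are separated by vacuum. Since $s_T\to\infty$, the lateral sides of the smaller cube lie far inside $Q^\nu_T$.

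\textbf{Step 3 (conclusion).} We obtain
$E_1(Y_T,Q^\nu_T(\tilde y_T))\le E_1(X_T,Q^\nu_T(y_T))+C(1+T|\alpha_T|+\sqrt{|\alpha_T|}T)$. Dividing by $T$ and taking $\liminf$ gives the claim, because $\alpha_T\to0$.

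\textbf{Main obstacle.} The main difficulty is the hard-core compatibility in Step 2 at the lateral junctions. There the tilted upper and lower lattice regions meet the vacuum strip $\partial^c$, and the lattices $\mathcal{L}(\bar z_T^+)$ and $\mathcal{L}(\bar z_T^-)$ must not come closer than allowed across the interface. I would handle this by making the vacuum strip wide enough, of order $T|\alpha_T|+20$, and by deleting from $\tilde X_T$ everything within distance $2$ of that strip. The cost of the deletion is controlled by counting points, so it remains $o(T)$.
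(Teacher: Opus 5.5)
\textbf{The gap is in Step~2.} Your Step~1 matches the paper's first step: you rotate by $-\alpha_T$, so the data become $\bar z_T^\pm$ and the normal becomes $\nu_T=e^{-i\alpha_T}\nu\to\nu$. The extra reduction to two lattices is harmless but unnecessary.

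In Step~2 you keep $\tilde X_T$ only on the \emph{smaller} tilted cube $Q^{\nu_T}_{T-s_T}(\tilde y_T)$ and put $\mathcal{L}(\bar z_T^\pm)$ on the rest of $Q^\nu_T(\tilde y_T)$. But $\tilde X_T$ is prescribed only on $\partial^\pm_1Q^{\nu_T}_T(\tilde y_T)$. That is a layer of fixed width around $\partial Q^{\nu_T}_T(\tilde y_T)$, reaching only $5$ units into the cube. As soon as $s_T>10$, the whole boundary of $Q^{\nu_T}_{T-s_T}(\tilde y_T)$ lies where $\tilde X_T$ is completely unconstrained.

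Your sentence ``choosing $s_T$ well above the tilt makes the glued configuration coincide with a single lattice on each side'' is therefore unfounded. An (almost) minimizer's grain boundary need not stay near the line $\{\langle x-\tilde y_T,\nu_T\rangle=0\}$. For example, take solid--vacuum data with lattice axes at angle $\pi/4$ to $\nu$. Every monotone lattice staircase between the two lateral endpoints is optimal up to $O(1)$. Such staircases can rise to within bounded distance of the top layer, and analogous configurations exist for two lattices.

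Where vacuum or points of the lower grain sit just inside the top side of the small cube, your construction puts $\mathcal{L}(\bar z_T^+)$ directly next to them along a segment of length of order $T$. Two things go wrong:
\begin{itemize}
\item the glued configuration may leave $\mathcal{X}_1(\mathbb{R}^2)$;
\item even with a vacuum buffer, you create a new interface of energy of order $T$. You have no way to compare it with what you cut away. That would need something like optimality of flat interfaces, i.e.\ convexity of $\varphi$, which the paper derives from this very lemma.
\end{itemize}
So the bound $C(1+T|\alpha_T|+s_T)$ in Step~3 is not justified. The lateral junctions you flag as the main obstacle are not the real difficulty.

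\textbf{How the paper avoids this.} Keep $\tilde X_T$ on the \emph{whole} tilted cube $Q^{\nu_T}_T(\tilde y_T)$, whose boundary layers carry the prescribed lattices, and embed it in a slightly \emph{larger} cube with normal $\nu$. With $|\nu-\nu_T|<\delta$ and $T_\delta=(1+4\delta)T$, the paper sets:
\begin{itemize}
\item $\hat X_T=\tilde X_T$ in $Q^{\nu_T}_T(\tilde y_T)$;
\item vacuum in a strip $A^\delta_T$ of width $\kappa\delta T$ around the interface inside $Q^\nu_{T_\delta}(\tilde y_T)\setminus Q^{\nu_T}_T(\tilde y_T)$, keeping $\partial^\pm_1Q^\nu_{T_\delta}(\tilde y_T)$;
\item $\mathcal{L}(\bar z_T^\pm)$ elsewhere.
\end{itemize}
Along the top and bottom sides the gluing is seamless, because $\tilde X_T$ already equals $\mathcal{L}(\bar z_T^\pm)$ there. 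So only the strip costs energy, at most $C\kappa\delta T$ by Lemma~\ref{lem:elementary-properties}(v). Dividing by $T_\delta$, letting $T\to\infty$ and then $\delta\to0$ gives the claim. Your widths of order $1+T|\alpha_T|$ would also work in this enlarged construction.

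The price is that the competitor lives on a cube of side $T_\delta$ rather than $T$, so the data $\bar z_T^\pm$ attach to a rescaled parameter. This is immaterial for the use of the lemma in Lemma~\ref{lem:varphi-bar-Phi}. It is exactly the bookkeeping you tried to avoid by shrinking, and shrinking is what destroyed the boundary information.
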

Note that this result allows us to choose an approximating sequence in $\Phi$ with fixed rotation angles, if the difference of the limiting rotation angles lies in $\mc{G}_{\mathbb{A}}$.
\begin{proof}
Let $z_T^\pm=(\theta^\pm_T,\tau^\pm_T,1)\in\mc{Z}$ and $\nu\in\mathbb{S}^1$ be as in the statement. Note that in our setup, it holds $\theta^+-\theta^+_T=\theta^--\theta^-_T$. \\
\noindent {\bf Step 1:} {\it Rotation to fixed rotation angles}. Choose $\bar{y}_T\in\R^2$ and $\bar{X}_T \in \mathrm{Adm}_1^{(z^+_T,z^-_T)}(Q^\nu_T(y_T)) $  such that
\begin{align}\label{ineq:almost-optimal-fixed-rotations}
E_1(\bar{X}_T,Q^\nu_T(\bar{y}_T))\leq \inf\{E_1(X_T,Q^\nu_T(y_T))\colon y_T\in\R^2, X_T \in \mathrm{Adm}_1^{(z^+_T,z^-_T)}(Q^\nu_T(y_T))  \}+\frac1T\,.
\end{align}
Define $X^{\mathrm{rot}}_T:=e^{i(\theta^+-\theta^+_T)}\bar{X}_T, \nu_T:=e^{i(\theta^+-\theta^+_T)}\nu,y_T^{\mathrm{rot}}:=e^{i(\theta^+-\theta^+_T)}\bar{y}_T$ and $\bar{z}_T^\pm=(\theta^\pm,\tau^\pm_T,1)$. By our assumptions it holds $X^{\mathrm{rot}}_T \in \mathrm{Adm}_1^{(\bar{z}_T^+,\bar{z}_T^-)}(Q^{\nu_T}_T(y_T^{\mathrm{rot}}))$ and by Lemma~\ref{lem:elementary-properties}(i) it holds
\begin{align*}
E_1(\bar{X}_T,Q^\nu_T(\bar{y}_T))&=E_1(X^{\mathrm{rot}}_T,Q^{\nu_T}_T(y_T^{\mathrm{rot}}))\\
&\geq\inf\{E_1(X_T,Q^{\nu_T}_T(y_T))\colon y_T\in\R^2\,, X_T\in \mathrm{Adm}_1^{(\bar{z}_T^+,\bar{z}_T^-)}(Q^{\nu_T}_T(y_T))\}
\end{align*}
for all $T>0$. Using \eqref{ineq:almost-optimal-fixed-rotations}, it suffices to show
\begin{align}\label{ineq:rotation-cubes}
\begin{split}
&\liminf_{T\to+\infty}\frac1T\inf\{E_1(X_T, Q^{\nu_T}_T(y_T))\colon y_T\in\R^2\,, X_T\in \mathrm{Adm}_1^{(\bar{z}_T^+,\bar{z}_T^-)}(Q^{\nu_T}_T(y_T)\}\\
\geq&\liminf_{T\to+\infty}\frac1T\inf\{E_1(X_T, Q^\nu_T(y_T))\colon y_T\in\R^2\,, X_T\in \mathrm{Adm}_1^{(\bar{z}_T^+,\bar{z}_T^-)}(Q^{\nu}_T(y_T)\}\,.
\end{split}
\end{align}
Note that the two formulas only differ in replacing $\nu$ by $\nu_T$, where $\nu_T\to\nu$ as $T\to +\infty$. \\
\noindent {\bf Step 2:} {\it Proof of \eqref{ineq:rotation-cubes}.} Fix $\delta>0$ and let $T>0$ large enough such that $\vert\nu-\nu_T\vert<\delta$. Choose $\tilde{y}_T\in\R^2$, $\tilde{X}_T\subset\R^2$ satisfying $\tilde{X}_T\in \mathrm{Adm}_1^{(\bar{z}_T^+,\bar{z}_T^-)}(Q^\nu_T(\tilde{y}_T))$ and 
\begin{align}\label{ineq:almost-min-81S2}
E_1(\tilde{X}_T,Q^{\nu_T}_T(\tilde{y}_T))\leq\inf\{E_1(X_T,Q^{\nu_T}_T(y_T))\colon y_T\in\R^2\,,
X_T\in \mathrm{Adm}_1^{(\bar{z}_T^+,\bar{z}_T^-)}(Q^\nu_T(y_T))\}+\delta\,.
\end{align}
Set $T_{\delta}=(1+4\delta)T$, fix $\kappa>1$ large enough and define
\begin{align*}
A_T^\delta = \{x \colon |\langle \nu,x-\tilde{y}_T\rangle| \leq \kappa\delta T\} \cap (Q_{T_\delta}^\nu(\tilde{y}_T) \setminus Q^{\nu_T}_T(\tilde{y}_T))\,.
\end{align*}
We then define
\begin{align}\label{def:Xhat-81S2}
\hat{X}_T=\begin{cases}
\tilde{X}_T&\text{in }Q^{\nu_T}_T(\tilde{y}_T)\,,\\
\emptyset&\text{in } A_T^\delta \setminus (\partial^+_1 Q_{T_\delta}^\nu(\tilde{y}_T) \cup \partial^-_1 Q_{T_\delta}^\nu(\tilde{y}_T)) \,, \\
\mc{L}(\bar{z}_T^\pm)&\text{in }\left((Q_{T_\delta}^\nu(\tilde{y}_T) \setminus Q^{\nu_T}_T(\tilde{y}_T)) \setminus A_T^\delta\right) \cup \partial^\pm_1 Q_{T_\delta}^\nu(\tilde{y}_T)\,.
\end{cases}
\end{align}
Here, upon choosing $\kappa>1$ (independently of $T$) and $T>0$ large enough  we can ensure $\hat{X}_T \in \mathcal{X}_1(\mathbb{R}^2)$ and therefore also $\hat{X}_T \in \mathrm{Adm}_1^{(\bar{z}_T^+,\bar{z}_T^-)}(Q_{T_\delta}^\nu(\tilde{y}_T))$. As $\hat{X}_T \in \mathrm{Adm}_1^{(\bar{z}_T^+,\bar{z}_T^-)}(Q_{T_\delta}^\nu(\tilde{y}_T))$ it holds
\begin{align}\label{ineq:hatX-81S2}
\inf\{E_1(X_T, Q^\nu_T(y_T))\colon y_T\in\R^2\,, X_T \in \mathrm{Adm}_1^{(\bar{z}_T^+,\bar{z}_T^-))}(Q_{T_\delta}^\nu(y_T)\} \leq E_1(\hat{X}_T,Q^\nu_{T_{\delta}}(\tilde{y}_T))\,.
\end{align}
We now claim that
\begin{align}\label{ineq:hatX-tildeX-81S2}
E_1(\hat{X}_T,Q^\nu_{T_\delta}(\tilde{y}_T))\leq E_1(\tilde{X}_T,Q^{\nu_T}_T(\tilde{y}_T))+C\kappa\delta T
\end{align}
for a universal $C>0$. We defer the proof of \eqref{ineq:hatX-tildeX-81S2} to Step~3 below and conclude the proof of \eqref{ineq:rotation-cubes}. Dividing \eqref{ineq:hatX-tildeX-81S2} by $T_\delta$ and letting $T \to +\infty$ we obtain
\begin{align*}
 \liminf_{T\to +\infty} \frac{1}{T_\delta} E_1(\hat{X}_T,Q^\nu_{T_\delta}(\tilde{y}_T))\leq  \liminf_{T\to +\infty} \frac{1}{T} E_1(\tilde{X}_T,Q^{\nu_T}_T(\tilde{y}_T))+C\kappa\delta \,.
\end{align*}
This together with  \eqref{ineq:almost-min-81S2}, \eqref{ineq:hatX-81S2}, and the fact that $\delta >0$ is arbitrary shows \eqref{ineq:rotation-cubes}. It remains to prove \eqref{ineq:hatX-tildeX-81S2}. \\
\noindent {\bf Step 3:} {\it Proof of \eqref{ineq:hatX-tildeX-81S2}.} By Lemma~\ref{lem:elementary-properties}(iv) it suffices to prove the two estimates
\begin{align}
E_1(\hat{X}_T,Q^{\nu_T}_T(\tilde{y}_T))&\leq E_1(\tilde{X}_T,Q^{\nu_T}_T(\tilde{y}_T))+C\kappa\delta T\,,\label{ineq:81S3a}\\
E_1(\hat{X}_T,Q^{\nu}_{T_{\delta}}(\tilde{y}_T)\setminus Q^{\nu_T}_T(\tilde{y}_T))&\leq C\kappa\delta T\label{ineq:81S3b}
\end{align}
for a universal $C>0$. In order to prove \eqref{ineq:81S3a} we use Lemma~\ref{lem:elementary-properties}(iv) to obtain
\begin{align}\label{eq:split-81S3a}
E_1(\hat{X}_T,Q^{\nu_T}_T(\tilde{y}_T)) = E_1(\hat{X}_T,Q^{\nu_T}_T(\tilde{y}_T) \setminus  \overline{(A_\delta)_1}) + E_1(\hat{X}_T,Q^{\nu_T}_T(\tilde{y}_T) \cap  \overline{(A_\delta)_1})\,.
\end{align}
We then observe that $\hat{X}_T = \tilde{X}_T$ in $Q_{T+2}^{\nu_T}(\tilde{y}_T) \setminus A_\delta $ and therefore, using Lemma~\ref{lem:elementary-properties}(vi), we obtain
\begin{align}\label{ineq:cube-81S3a}
E_1(\hat{X}_T,Q^{\nu_T}_T(\tilde{y}_T) \setminus  \overline{(A_\delta)_1}) = E_1(\tilde{X}_T,Q^{\nu_T}_T(\tilde{y}_T) \setminus  \overline{(A_\delta)_1}) \leq  E_1(\tilde{X}_T,Q^{\nu_T}_T(\tilde{y}_T)) \,.
\end{align} 
On the other hand, we observe that  $\mathcal{L}^2((Q^{\nu_T}_T(\tilde{y}_T) \cap  \overline{(A_\delta)_1})_1) \leq C\kappa\delta T$ and therefore, using Lemma~\ref{lem:elementary-properties}(v) we have that $\# \left(\hat{X}_T \cap (Q^{\nu_T}_T(\tilde{y}_T) \cap  \overline{(A_\delta)_1})_1\right)\leq C \kappa\delta T$. This shows 
\begin{align}\label{ineq:Adelta-81S3a}
 E_1(\hat{X}_T,Q^{\nu_T}_T(\tilde{y}_T) \cap  \overline{(A_\delta)_1}) \leq 4 \#\left(\hat{X}_T \cap (Q^{\nu_T}_T(\tilde{y}_T) \cap  \overline{(A_\delta)_1})_1\right) \leq C\kappa \delta T\,.
\end{align}
Now \eqref{eq:split-81S3a}--\eqref{ineq:Adelta-81S3a} shows \eqref{ineq:81S3a}. Similarly, in order to obtain \eqref{ineq:81S3b} we use Lemma~\ref{lem:elementary-properties}(iii),(iv) to obtain 
\begin{align}\label{eq:split-81S3b}
E_1(\hat{X}_T,Q^{\nu}_{T_{\delta}}(\tilde{y}_T)\setminus Q^{\nu_T}_T(\tilde{y}_T)) \leq  E_1(\hat{X}_T,Q^{\nu}_{T_{\delta}}(\tilde{y}_T)\setminus Q^{\nu_T}_T(\tilde{y}_T) \setminus  (A_\delta)_{\sqrt{2}}) + E_1(\hat{X}_T, (A_\delta)_{\sqrt{2}})\,.
\end{align}
Due to \eqref{def:Xhat-81S2} we observe that $\#\left(\hat{X} \cap  (A_\delta)_{\sqrt{2}+1}\right) \leq C\kappa\delta T$ and thus we get 
\begin{align}\label{ineq:Adelta-81S3b}
 E_1(\hat{X}_T, (A_\delta)_{\sqrt{2}})) \leq 4 \#\left(\hat{X}_T \cap (Q^{\nu_T}_T(\tilde{y}_T) \cap(A_\delta)_{\sqrt{2}+1}\right) \leq C\kappa \delta T\,.
\end{align}
Finally, again due to \eqref{def:Xhat-81S2}, we observe that $\hat{X}_T \cap B_{\sqrt{2}}(x) = \mathcal{L}(z) \cap B_{\sqrt{2}}(x)$ for some $z \in \{z_T^+,z_T^-\}$ for all $x \in (Q^{\nu}_{T_{\delta}}(\tilde{y}_T)\setminus Q^{\nu_T}_T(\tilde{y}_T) )\setminus  (A_\delta)_{\sqrt{2}}$. Thus, 
\begin{align}\label{ineq:Adelta-comp-81S3b}
E_1(\hat{X}_T,Q^{\nu}_{T_{\delta}}(\tilde{y}_T)\setminus Q^{\nu_T}_T(\tilde{y}_T) \setminus  (A_\delta)_{\sqrt{2}})=0\,.
\end{align}
Now \eqref{eq:split-81S3b}--\eqref{ineq:Adelta-comp-81S3b} shows \eqref{ineq:81S3b} and concludes the proof. 
\end{proof}

Now that we can fix the angle mismatch along the sequence if $\theta^+-\theta^-\in\mc{G}_{\mathbb{A}}$, we consider the translations. For such a pair, we consider the {\it coincidence site lattice}
\begin{align*}
e^{i(\theta^++\frac{\pi}{4})}\sqrt{2}\Z^2\cap e^{i(\theta^-+\frac{\pi}{4})}\sqrt{2}\Z^2=\{ja+kb\colon j,k\in\Z\}\,,
\end{align*}
where $a,b\in e^{i(\theta^++\frac{\pi}{4})}\sqrt{2}\Z^2\cap e^{i(\theta^-+\frac{\pi}{4})}\sqrt{2}\Z^2$ are spanning vectors of minimal length. Furthermore, for later use, we define the {\it fundamental parallelogram} of the coincidence site lattice as
\begin{align}\label{eq:fundamental-parallelogram}
P_{\theta^+,\theta^-}=\{\lambda_1a+\lambda_2b\colon 0\leq\lambda_1,\lambda_2<1\}\,.
\end{align}
 Note that translations by linear combinations of integer multiples of $a,b$ map $\mc{L}(z^\pm)$ to itself, i.e., it holds
\begin{align*}
\mc{L}(z^\pm)+\lambda_1a+\lambda_2b=\mc{L}(z^\pm)\quad\text{for all }\lambda_1,\lambda_2\in\Z\,.
\end{align*} 
We will now show a uniform closedness property for the set of touching points along sequences of translates of perfect lattices.
\begin{lem}[Closedness of touching points]\label{lem:closedness-touching-points}
Let $z_n^\pm=(\theta^\pm,\tau_n^\pm,1)\in\mc{Z}$ for $n\in\N$ and $z^\pm=(\theta^\pm,\tau^\pm,1)\in\mc{Z}$ with $\theta^+-\theta^-\in\mc{G}_{\mathbb{A}}$ and $\tau^\pm_n\to\tau^\pm$ as $n\to+\infty$. For $x\in\mc{L}(z^+),y\in\mc{L}(z^-)$ set
\begin{align*}
x_n^+=x+e^{i\theta^+}(\tau^+_n-\tau^+)\in\mc{L}(z^+_n)\,, \quad \quad  y_n^-=y+e^{i\theta^-}(\tau_n^--\tau^-)\in\mc{L}(z^-_n)\,.
\end{align*} 
Then, there is an $n_0\in\N$ such that for all $n\geq n_0$ and all $x\in\mc{L}(z^+),y\in\mc{L}(z^-)$ the following  two assertions hold true:
\begin{align*}
{\rm (i)} \quad \vert x_n^+-y_n^-\vert=1 \text{ for some } n\geq n_0  \quad \text{ implies } \quad  \vert x-y\vert=1
\end{align*}
and 
\begin{align*}
\,\, \, \, {\rm (ii)} \quad \vert x-y\vert<\sqrt{2}   \quad \text{ implies } \quad  \vert x_n-y_n\vert<\sqrt{2} \text{ for all } n\geq n_0\,.
\end{align*}
 Furthermore, it holds, $q(x_n^+)=q(x)$ and $q(y_n^-)=q(y)$ for all $n \geq n_0$.
\end{lem}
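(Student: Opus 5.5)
The key reduction is that $x_n^+-y_n^-=(x-y)+\delta_n$ with
\begin{align*}
\delta_n:=e^{i\theta^+}(\tau_n^+-\tau^+)-e^{i\theta^-}(\tau_n^--\tau^-)\,,
\end{align*}
which is independent of $x,y$ and satisfies $\delta_n\to0$. Both (i) and (ii) are therefore statements about the set of difference vectors
\begin{align*}
D:=\mc{L}(z^+)-\mc{L}(z^-)=\{x-y\colon x\in\mc{L}(z^+),\,y\in\mc{L}(z^-)\}\,.
\end{align*}
Once we know that $D$ is locally finite near the relevant radii, (i) and (ii) reduce to picking $n_0$ so large that $|\delta_n|$ is smaller than a fixed gap.

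\textbf{Step 1: structure of $D$.} We have
\begin{align*}
D=e^{i\theta^+}(\Z^2_{\mathrm{charge}}+\tau^+)-e^{i\theta^-}(\Z^2_{\mathrm{charge}}+\tau^-)=w_0+e^{i\theta^+}\Z^2+e^{i\theta^-}\Z^2
\end{align*}
for a fixed $w_0$. Since $\theta^+-\theta^-\in\mc{G}_{\mathbb{A}}$, write $e^{i(\theta^+-\theta^-)}=w_1/w_2$ with $w_1,w_2\in\Z^2$; as $|w_1|=|w_2|$, the quotient is $p/q$ with $p\in\Z[i]$ and $q\in\N$ after clearing denominators. Then
\begin{align*}
e^{i\theta^+}\Z^2+e^{i\theta^-}\Z^2=e^{i\theta^-}\big(\tfrac{p}{q}\Z[i]+\Z[i]\big)\subset e^{i\theta^-}\tfrac1q\Z[i]\,.
\end{align*}
This is a subgroup of a lattice, hence itself a lattice; this is the standard coincidence-site-lattice dual picture. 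Consequently $D$ is a translate of a discrete lattice and is uniformly discrete. In particular, $D\cap\overline{B_2}$ is finite.

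\textbf{Step 2: choice of the gap.} Set
\begin{align*}
g:=\min\big\{\,\big||d|-1\big|,\ \big||d|-\sqrt2\big|\colon d\in D\cap\overline{B_2},\ |d|\neq 1,\ |d|\neq\sqrt2\,\big\}>0\,,
\end{align*}
using the finiteness from Step 1 (if the set is empty, take $g=1$). Also set $g':=\sqrt2-\max\{|d|\colon d\in D,\ |d|<\sqrt2\}>0$. Choose $n_0$ such that $|\delta_n|<\min(g,g',\tfrac12)$ for all $n\ge n_0$.

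\textbf{Step 3: conclusion of (i) and (ii).} For (i), suppose $|x_n^+-y_n^-|=1$. Then $d=x-y$ lies in $\overline{B_2}$ and satisfies $\big||d|-1\big|\le|\delta_n|<g$, which forces $|d|=1$. For (ii), if $|d|<\sqrt2$, then $|d|\le\sqrt2-g'$, so $|d+\delta_n|<\sqrt2$.

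\textbf{Step 4: the charge claim.} The charge is defined lattice-wise: $x_n^+$ is the image of the same point $k\in\Z^2_{\mathrm{charge}}$ under $e^{i\theta^+}(\cdot+\tau_n^+)$ as $x$ is under $e^{i\theta^+}(\cdot+\tau^+)$. Hence $q(x_n^+)=(-1)^{k_1+k_2}=q(x)$, and likewise $q(y_n^-)=q(y)$. One subtlety is that $\tau_n^+$ is only defined modulo $\Z^2_+$. Convergence $\tau_n^+\to\tau^+$ in $\mathbb{T}$ lets us choose representatives converging in $\R^2$, and translations by $\Z^2_+$ preserve charge, so the identification is charge-consistent.

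The main obstacle is Step 1, namely proving that $D$ is discrete. Without rationality of $\theta^+-\theta^-$, $D$ is dense and $n_0$ cannot be uniform in $(x,y)$. The argument therefore has to rest on the rationality assumption via $\Z[i]$, and the delicate points are the denominator clearing and checking that $|w_1|=|w_2|$ makes the quotient genuinely of the form $p/q$.
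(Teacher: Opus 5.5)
Your argument is correct in substance, with one repairable slip noted below, and it takes a different route from the paper.

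\textbf{How the paper argues.} The paper never isolates the common shift $\delta_n$. For each threshold $d\in\{1,\sqrt2\}$ it proves a two-sided stability statement: $|x-y|<d$ implies $|x_n^+-y_n^-|<d$, and $|x-y|>d$ implies $|x_n^+-y_n^-|>d$. It first restricts $y$ to the fundamental parallelogram $P_{\theta^+,\theta^-}$ of the coincidence site lattice, where only finitely many pairs matter. Far pairs are handled by the crude bound $|\tau_n^\pm-\tau^\pm|<1$. A general pair is then translated by a coincidence-site vector, which leaves both $x-y$ and $x_n^+-y_n^-$ unchanged. Finally (i) and (ii) are read off with $d=1$ and $d=\sqrt2$.

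\textbf{How you argue.} You observe that $x_n^+-y_n^-=(x-y)+\delta_n$ with $\delta_n$ independent of the pair. You use rationality through the sum lattice $e^{i\theta^+}\Z^2+e^{i\theta^-}\Z^2\subset e^{i\theta^-}\tfrac1q\Z[i]$ rather than through the intersection (coincidence site) lattice. Local finiteness of the difference set $D$ then gives uniformity in $(x,y)$ immediately.

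\textbf{What each buys.} Your route is shorter and makes the mechanism transparent. It needs no periodicity reduction and no separate treatment of far pairs. The paper's route reuses the coincidence site lattice and its fundamental parallelogram, which it needs anyway in Proposition~\ref{prop:optimal-sequence} and Lemma~\ref{lem:prop-translations}.

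\textbf{The slip.} In Step~2 you exclude the points with $|d|=\sqrt2$ from the minimum defining $g$. Yet in Step~3(i) you conclude from $\bigl||d|-1\bigr|\le|\delta_n|<g$ that $|d|=1$. For $|d|=\sqrt2$ the definition of $g$ yields no contradiction. The cap $\tfrac12$ in your choice of $n_0$ exceeds $\sqrt2-1$, and you give no argument bounding $g$ or $g'$ by $\sqrt2-1$. So as written, nothing excludes $|d+\delta_n|=1$ with $|d|=\sqrt2$. The fix is one line: let the term $\bigl||d|-1\bigr|$ range over all $d\in D\cap\overline{B_2}$ with $|d|\neq1$, or simply add $\sqrt2-1$ to the minimum that defines $n_0$. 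The $\bigl||d|-\sqrt2\bigr|$ part of $g$ is never used, since (ii) is handled by $g'$.
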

\begin{proof}Fix $d\in \{1,\sqrt{2}\}$. We prove that there exists $n_0 \in \mathbb{N}$ such that for all $n \geq n_0$ there holds
\begin{align*}
{\rm (a)} \quad \vert x-y\vert<d \implies \vert x_n^+-y_n^-\vert<d \quad \text{ and } \quad {\rm (b)}\quad \vert x-y\vert>d \implies \vert x_n^--y_n^-\vert>d\,.
\end{align*}
First suppose that $y\in P_{\theta^+,\theta^-}$. Then (a) directly follows from $x_n^+\to x, y_n^-\to y$ and the fact that there are only finitely many pairs $(x,y)\in\mc{L}(z^+)\times(\mc{L}(z^-)\cap P_{\theta^+,\theta^-})$ with $\vert x-y\vert<d$. By the same argument we show the validity of (b) for pairs $(x,y)\in(\mc{L}(z^+)\cap(P_{\theta^+,\theta^-})_{3d})\times(\mc{L}(z^-)\cap P_{\theta^+,\theta^-})$ for $n$ large. Upon choosing $n$ large enough so that $\vert\tau_n^\pm-\tau^\pm\vert<1$ this also shows (b) for all $(x,y)\in\mc{L}(z^+)\times(\mc{L}(z^-)\cap P_{\theta^+,\theta^-})$. Considering now a general $y\in\mc{L}(z^-)$ one can find a translation vector $v\in e^{i(\theta^++\frac{\pi}{4})}\sqrt{2}\Z^2\cap e^{i(\theta^-+\frac{\pi}{4})}\sqrt{2}\Z^2$ such that $y-v\in P_{\theta^+,\theta^-}$ (Notice that such a translation maps particles to particles of the same charge). Applying the special case to $y-v$ and $x-v$ and noticing $(x-v)^+_n=x^+_n-v,(y-v)^-_n=y_n^--v$ shows the general case. Finally, the implication about charges is clear by construction. Now (ii) follows from (a) applied with $d=\sqrt{2}$ and (i) follows by contraposition of (a) and (b) applied with $d=1$. This concludes the proof.
\end{proof}
With these two results in place, we can now pass from converging boundary values in $\Phi$ to fixed boundary values.
\begin{lem}\label{lem:varphi-bar-Phi}
For each $z^+,z^-\in\mc{Z}$ and $\nu\in\mathbb{S}^1$ it holds that
\begin{align*}
\Phi(z^+,z^-,\nu)=\bar{\varphi}(z^+,z^-,\nu)\,.
\end{align*}
Furthermore, for $z^\pm=(\theta^\pm,\tau^\pm,1)\in\mc{Z}$ with $\{(x,y)\in\mc{L}(z^+)\times\mc{L}(z^-)\colon\vert x-y\vert=1, q(x)\neq q(y)\}=\emptyset$ we have
$\bar{\varphi}(z^+,z^-,\nu)=\frac{1}{2}\vert e^{-i\theta^+}\nu\vert_1+\frac{1}{2} \vert e^{-i\theta^-}\nu\vert_1$.
\end{lem}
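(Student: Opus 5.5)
The inequality $\Phi\le\bar\varphi$ is immediate. The constant sequences $z_T^\pm\equiv z^\pm$ are admissible in \eqref{def:Phi}, written in the $T$-scaled form at the beginning of Section~\ref{sec:6}, and for this choice the inner quantity is exactly \eqref{def:phi-bar}. The work is in the reverse inequality $\bar\varphi\le\Phi$. I would split it according to whether $z^+$ or $z^-$ is the vacuum, and then according to whether the two lattices can touch.

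\emph{Vacuum case.} If $z^-={\bf 0}$ (the case $z^+={\bf 0}$ is symmetric), Lemma~\ref{lem:Phi-vacuum}(i) gives the value of the cell problem up to $C/T$ for every $z_T^+$. Continuity of $\theta\mapsto\frac12|e^{-i\theta}\nu|_1$ then yields $\Phi(z^+,{\bf 0},\nu)=\frac12|e^{-i\theta^+}\nu|_1=\bar\varphi(z^+,{\bf 0},\nu)$.

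\emph{Solid--solid case.} Suppose both $z^\pm\neq{\bf 0}$. If $\Phi(z^+,z^-,\nu)\ge\frac12|e^{-i\theta^+}\nu|_1+\frac12|e^{-i\theta^-}\nu|_1$, we are done, since Lemma~\ref{lem:Phi-vacuum}(ii) bounds $\bar\varphi$ from above by exactly this quantity. Otherwise \eqref{lemineq:touching-lattices} holds for some $\eta>0$, and Lemma~\ref{lem:touching-lattices} supplies an optimal sequence with the following properties: $z_T^\pm\to z^\pm$, $X_T\subset\mc{L}(z_T^+)\cup\mc{L}(z_T^-)$, and $\theta_T^+-\theta_T^-=\theta^+-\theta^-\in\mc{G}_{\mathbb{A}}$. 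Lemma~\ref{lem:fixed-rotations} then lets us replace $\theta_T^\pm$ by $\theta^\pm$ without increasing the $\liminf$. It remains to replace the translations $\tau_T^\pm$ by $\tau^\pm$.

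Given an optimal two-lattice configuration $X_T=X_T^+\cup X_T^-$ for $(\theta^\pm,\tau_T^\pm)$, I would define
\[
Y_T:=\bigl(X_T^+ + e^{i\theta^+}(\tau^+-\tau_T^+)\bigr)\cup\bigl(X_T^- + e^{i\theta^-}(\tau^--\tau_T^-)\bigr).
\]
This is exactly the inverse of the shift $x\mapsto x_n^\pm$ in Lemma~\ref{lem:closedness-touching-points}. By part (ii) of that lemma, taken contrapositively (pairs $x_n^+,y_n^-$ at distance $\ge\sqrt2$ come from pairs $x,y$ at distance $\ge\sqrt2$), and by the preservation of charges, $Y_T$ has no forbidden close pairs. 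Hence $Y_T\in\mathcal{X}_1(\R^2)$. Intra-lattice bonds are preserved because each half is translated rigidly. By part (i), every cross bond of $X_T$ corresponds to a cross bond of $Y_T$, so each atom keeps at least as many attractive neighbors and the energy does not increase.

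The boundary conditions need a cut-off. The shift has size $o(1)$, so near $\partial^\pm_1Q^\nu_T$ the set $Y_T$ coincides with $\mc{L}(z^\pm)$ on a slightly shifted region. One then enlarges the cube slightly, or removes and refills an $O(1)$-thick layer, in the spirit of the construction in Lemma~\ref{lem:fixed-rotations}. This costs $O(1)$ per corner and only $o(T)$ in total, which yields $\bar\varphi\le\Phi$.

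\emph{Main obstacle.} The delicate step is this admissibility/cut-off: the translated halves must meet the fixed lattices $\mc{L}(z^\pm)$ exactly on $\partial_1^\pm Q^\nu_T$ and must leave the lateral regions empty. I would first try to absorb the small shift into the choice of center $y_T$ together with a boundary layer of width $\kappa$, and to bound the extra energy by $C$ using Lemma~\ref{lem:elementary-properties}(v).

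\emph{Second statement.} If $\mc{L}(z^+)$ and $\mc{L}(z^-)$ have no oppositely charged pairs at distance $1$, the lower bound $\bar\varphi\ge\frac12|e^{-i\theta^+}\nu|_1+\frac12|e^{-i\theta^-}\nu|_1$ follows from the two-lattice reduction. Lemma~\ref{lem:subset-two-lattices} applied with fixed $z^\pm$ gives almost minimizers $X=X^+\cup X^-$ with $X^\pm\subset\mc{L}(z^\pm)$. These have no cross bonds, so
\[
E_1(X,Q^\nu_T)=E_1(X^+,Q^\nu_T)+E_1(X^-,Q^\nu_T),
\]
and each term is bounded below via Lemma~\ref{lem:Phi-vacuum}(i). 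The matching upper bound is Lemma~\ref{lem:Phi-vacuum}(ii).
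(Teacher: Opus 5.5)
Your proposal is correct and follows the paper's proof essentially step by step: $\Phi\le\bar\varphi$ by definition, the vacuum case from Lemma~\ref{lem:Phi-vacuum}(i), the reduction to a rational misorientation with fixed angles via Lemmas~\ref{lem:touching-lattices} and~\ref{lem:fixed-rotations}, and then the rigid shift of $X_T^\pm$ by $e^{i\theta^\pm}(\tau^\pm-\tau_T^\pm)$ controlled by Lemma~\ref{lem:closedness-touching-points}. For the cut-off, the option that works is your ``enlarge the cube'' one: the paper extends $X_T$ by $\mathcal{L}(z_T^\pm)$ to a cube larger by $O(1)$ before shifting and then evaluates on an intermediate cube, so only $O(1)$ atoms near the lateral strip are affected.

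A few smaller points. Your split according to whether $\Phi$ lies below the vacuum sum is exactly the hypothesis that Lemma~\ref{lem:touching-lattices} requires. Your direct treatment of the second statement at fixed $z^\pm$ slightly streamlines the paper's Case~(a). Finally, ruling out that opposite-charge pairs at distance in $(1,\sqrt2)$ end up closer than $1$ after the shift uses claim~(a) with $d=1$ from the proof of Lemma~\ref{lem:closedness-touching-points}; items (i)--(ii) of that lemma alone do not give it.
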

\begin{proof}
Note that by definition $\Phi(z^+,z^-,\nu)\leq\bar{\varphi}(z^+,z^-,\nu)$ is clear for all $z^+,z^-\in\mc{Z},\nu\in\mathbb{S}^1$. Thus it suffices to consider the opposite inequality
\begin{align}\label{ineq:Phi-phibar}
\Phi(z^+,z^-,\nu)\geq\bar{\varphi}(z^+,z^-,\nu)\,.
\end{align}
If $z^+=\textbf{0}$ or $z^-=\textbf{0}$, then by Lemma~\ref{lem:Phi-vacuum}(i) and the continuity of $\vert
\cdot \vert_1$ one immediately gets $\Phi(z^+,z^-,\nu)=\bar{\varphi}(z^+,z^-,\nu)=\vert e^{-i\theta}\nu\vert_1$ with $\theta$ denoting the angle corresponding to either $z^+$ or $z^-$, respectively. If now $\theta^+-\theta^-\notin\mc{G}_{\mathbb{A}}$, then Lemma~\ref{lem:touching-lattices} shows $\Phi(z^+,z^-,\nu)\geq \frac{1}{2}\vert e^{-i\theta^+}\nu\vert_1+\frac{1}{2}\vert e^{-i\theta^-}\nu\vert_1$. By Lemma~\ref{lem:Phi-vacuum}(ii), this shows \eqref{ineq:Phi-phibar} in this case. So we are left to consider $\theta^+-\theta^-\in\mc{G}_{\mathbb{A}}$. Due to Lemma~\ref{lem:touching-lattices}, we can assume that $\theta^+_T-\theta^-_T=\theta^+-\theta^-$. Now, by Lemma~\ref{lem:fixed-rotations}  we can suppose $\theta^+_T=\theta^+$ and $\theta^-_T=\theta^-$ for all $T$ large enough. Now, using Lemma~\ref{lem:fixed-rotations}, let $\{X_T\}_T$ be an optimal sequence for $\Phi$ with corresponding cube centers $\{y_T\}_T$, i.e. $X_T \in \mathrm{Adm}_1^{(z_T^+,z_T^-)}(Q_T^\nu(y_T))$ with $z_T^\pm=(\theta^\pm,\tau_T^\pm,1)$ and
\begin{align}\label{eq:optimal-sequence-Phi-83}
\liminf_{T\to\infty}\frac1TE_1(X_T,Q^\nu_T(y_T))=\Phi(z^+,z^-,\nu)<\infty\,.
\end{align}
Due to Corollary~\ref{cor:two-lattices-Phi} we can  assume that $X_T=X_T^+\cup X_T^-$ with $X_T^\pm\subset\mc{L}(z_T^\pm)$. We consider the two cases:
\begin{itemize}
\item[(a)] $\{(x,y)\in\mc{L}(z^+)\times\mc{L}(z^-)\colon \vert x-y\vert=1\,, q(x)\neq q(y)\}=\emptyset$;
\item[(b)] $\{(x,y)\in\mc{L}(z^+)\times\mc{L}(z^-)\colon \vert x-y\vert=1\,,q(x)\neq q(y)\}\neq\emptyset$.
\end{itemize}
\noindent {\bf Case {\rm (a)}:} $\{(x,y)\in\mc{L}(z^+)\times\mc{L}(z^-)\colon \vert x-y\vert=1\,, q(x)\neq q(y)\}=\emptyset$. By Lemma~\ref{lem:closedness-touching-points}(i) we can assume $\{(x,y)\in\mc{L}(z^+_T)\times\mc{L}(z^-_T)\colon \vert x-y\vert=1,q(x)\neq q(y)\}=\emptyset$ for all $T$ large enough. This implies $X_T^\pm\cap\mc{N}^\mathrm{a}(x)=\emptyset$ for $x\in X_T^\mp$ and therefore
\begin{align*}
\Phi(z^+,z^-,\nu)=\liminf_{T\to\infty}\frac1TE_1(X_T,Q^\nu_T(y_T))=\liminf_{T\to\infty}(\frac1TE_1(X_T^+,Q^\nu_T(y_T))+\frac1TE_1(X_T^-,Q^\nu_T(y_T)))\,.
\end{align*}
As $X_T^+ \in \mathrm{Adm}_1^{(z^+_T,{\bf 0})}(Q_T^\nu(y_T))$ and $X_T^- \in \mathrm{Adm}_1^{({\bf 0},z^-_T)}(Q_T^\nu(y_T))$, by Lemma~\ref{lem:Phi-vacuum}(i), we obtain
\begin{align}\label{ineq:Phi-vac}
\Phi(z^+,z^-,\nu)\geq\frac{1}{2}\vert e^{-i\theta^+}\nu\vert_1+ \frac{1}{2}\vert e^{-i\theta^-}\nu\vert_1\,.
\end{align}
Due to Lemma~\ref{lem:fixed-rotations} and Lemma~\ref{lem:Phi-vacuum}(ii) this shows  $\bar{\varphi}(z^+,z^-,\nu)=\frac{1}{2}\vert e^{-i\theta^+}\nu\vert_1+\frac{1}{2} \vert e^{-i\theta^-}\nu\vert_1$. This shows \eqref{ineq:Phi-phibar} and concludes Case~{\rm (a)}. \\
\noindent {\bf Case {\rm (b)}:} $\{(x,y)\in\mc{L}(z^+)\times\mc{L}(z^-)\colon \vert x-y\vert=1\,,q(x)\neq q(y)\}\neq\emptyset$. Our goal is to construct a competitor   $\bar{X}_T=\bar{X}_T^+\cup\bar{X}_T^-$ with $\bar{X}_T^\pm\subset\mc{L}(z^\pm)$ and $\bar{X}_T \in \mathrm{Adm}_1^{(z^+,z^-)}(Q_{T+22\sqrt{2} }^\nu))$ and 
\begin{align}\label{ineq:Xtilde-X-83b}
E_1(\bar{X}_T,Q^\nu_{T+22\sqrt{2}}(y_T))\leq E_1(X_T,Q^\nu_T(y_T))+C\,.
\end{align}
Once this is established, by \eqref{def:phi-bar} and \eqref{eq:optimal-sequence-Phi-83} we get
\begin{align*}
\Phi(z^+,z^-,\nu)=\liminf_{T\to\infty}\frac1T E_1(X_T,Q^\nu_T(y_T))\geq\liminf_{T\to\infty}\frac{1}{T+22\sqrt{2}}E_1(\bar{X}_T,Q^\nu_{T+22\sqrt{2}}(y_T))\geq\bar{\varphi}(z^+,z^-,\nu)\,.
\end{align*}
To construct $\bar{X}_T$ we first extend $X_T$ to $\hat{X}_T$ as follows
\begin{align}\label{def:Xhat-X-83}
\hat{X}_T=\begin{cases}
X_T&\text{on } Q^\nu_{T}(y_T)\,,\\
\mc{L}(z^\pm_T)&\text{on }\{x\colon\pm\langle\nu,x-y_T -e^{i\theta^+}(\tau^+-\tau^+_T))\rangle\geq 10\}\cap(Q^\nu_{T+34\sqrt{2}}(y_T)\setminus Q^\nu_{T}(y_T))\,,\\
\emptyset&\text{else}.
\end{cases}
\end{align}
By definition and the fact that $X_T \in \mathrm{Adm}_1^{(z_T^+,z_T^-)}(Q_T^\nu(y_T))$ it holds that $\hat{X} \in \mathcal{X}_1(\mathbb{R}^2)$. For $x\in\hat{X}_T\cap Q^\nu_T(y_T)$ we have $\#(\mc{N}^\mathrm{a}(x)\cap\hat{X}_T)=\#(\mc{N}^\mathrm{a}(x)\cap X_T)$. For $x\in\hat{X}_T\cap(Q^\nu_{T+32\sqrt{2}}(y_T)\setminus Q^\nu_T(y_T))$ with $\#\mc{N}^\mathrm{a}(x)<4$ it holds $x\in\{y\colon\vert\langle\nu,y-y_T\rangle\vert\leq 14\}\cap Q^\nu_{T+32\sqrt{2}}(y_T)=:A_T$. As $\mc{L}^2((A_T)_1)\leq C$ for $C>0$ independent of $T$ this implies by Lemma~\ref{lem:elementary-properties}(v)
\begin{align*}
E_1(\hat{X}_T,Q^\nu_{T+32\sqrt{2}}(y_T))\leq E_1(X_T,Q^\nu_T(y_T))+C\,.
\end{align*} 
Let us now define $\bar{X}_T$. We define $\bar{X}_T:=\bar{X}_T^+\cup\bar{X}_T^-$, where
\begin{align*}
\bar{X}_T^+:=(\hat{X}_T^++e^{i\theta^+}(\tau^+-\tau^+_T))\,, \quad  \bar{X}_T^-:=(\hat{X}_T^-+e^{i\theta^-}(\tau^--\tau_T^-))\,.
\end{align*}
We denote by $\{x^j_T\}_j$ the atoms in $\hat{X}_T$ and by $\{\bar{x}_T^j\}_J$ the atoms in $\bar{X}_T$. By the above definition, it holds
$\bar{x}_T^j=x_T^j+e^{i\theta^\pm}(\tau^\pm-\tau^\pm_T)$ for $x_T^j\in\hat{X}_T^\pm$. By construction we have $\bar{X}_T^\pm\subset\mc{L}(z^\pm)$.  We claim that
\begin{align*}
E_1(\bar{X}_T,Q^\nu_{T+22\sqrt{2}}(y_T))\leq E_1(\hat{X}_T,Q^\nu_{T+32\sqrt{2}}(y_T))\,.
\end{align*}
To this end, we need to check for $T$ large:
\begin{align}\label{eq:properties-X-bar}
\begin{split}
{\rm (i)}:\ &\vert x_T^j-x_T^k\vert=1 \quad \quad \implies \quad \vert\bar{x}_T^j-\bar{x}_T^k\vert=1\,,\\
{\rm (ii)}:\ &\vert\bar{x}_T^j-\bar{x}_T^j\vert\geq1\quad\quad \text{ for all }j,k,j\neq k\,,\\
{\rm (iii)}:\ &\vert\bar{x}_T^j-\bar{x}_T^j\vert\geq\sqrt{2}\quad\, \text{ for all }j,k,j\neq k,\text{ with }q({\bar{x}_T^j})=q({\bar{x}_T^k})\,.
\end{split}
\end{align}
By \eqref{eq:properties-X-bar}(ii),(iii) $\bar{X}_T \in \mathcal{X}_1(\mathbb{R}^2)$. Due to the extension of $\hat{X}_T$ on $Q^\nu_{T+32\sqrt{2}}(y_T)$ this ensures $\bar{X}_T \in \mathrm{Adm}_1^{(z^+,z^-)}(Q^\nu_{T+22}(y_T))$ for $T$ large enough. Moreover, \eqref{eq:properties-X-bar}(i) shows $x^k_T\in\mc{N}^a(x_T^j)$ implies $\bar{x}_T^k\in\mc{N}^a(\bar{x}_T^j)$, so that the energy can only decrease. We now verify \eqref{eq:properties-X-bar}.    If both atoms are in $\bar{X}_T^+$ or $\bar{X}_T^-$, then by definition of $\bar{X}$ we have $x_T^j-x_T^k=\bar{x}_T^j-\bar{x}_T^k$, which together with finite energy of $\hat{X}_T$ implies \eqref{eq:properties-X-bar}. If now $x_T^j\in\bar{X}_T^+$ and $x_T^k\in\bar{X}_T^-$, or vice versa, then \eqref{eq:properties-X-bar}(i) follows by Lemma~\ref{lem:closedness-touching-points} and \eqref{eq:properties-X-bar}(ii),(iii) follow by the fact that $\hat{X}_T\in \mathcal{X}_1(\mathbb{R}^2)$ and Lemma~\ref{lem:closedness-touching-points}.  This shows \eqref{ineq:Xtilde-X-83b} and concludes the proof.
\end{proof}
\subsection{Well-Definedness and Properties of the energy density}\label{sec:8.2}
In a first step, we show the independence of $\bar{\varphi}$ of the sequence of centers. In the next step, this will let us conclude the $\Gamma$-convergence result. In a final step, we will show the properties stated in Theorem~\ref{thm:properties-of-varphi}.
\begin{prop}\label{prop:optimal-sequence}
For each $z^+,z^-\in\mc{Z}$ and $\nu\in\mathbb{S}^1$ there exists a sequence $\{T_j\}_j$ such that $T_j\to\infty$ as $j\to\infty$ and for all $\{y_j\}_j\subset\R^2$ it holds that
\begin{align*}
\frac{1}{T_j}\min\{E_1(X,Q^\nu_{T_j}(y_j))\colon X \in \mathrm{Adm}_1^{(z^+,z^-)}(Q^\nu_{T_j}(y_j))\}\leq\bar{\varphi}(z^+,z^-,\nu)+\eta_j\,,
\end{align*}
where $\{\eta_j\}_j\subset(0,\infty)$ is a null sequence depending on $z^\pm$ and $\nu$ but is independent of $\{y_j\}_j$.
\end{prop}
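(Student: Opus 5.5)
The plan is to pick, via \eqref{def:phi-bar}, a sequence $T_j\to\infty$ and centers $\bar y_j$ with almost minimizers $\bar X_j\in\mathrm{Adm}_1^{(z^+,z^-)}(Q^\nu_{T_j}(\bar y_j))$ such that $\frac1{T_j}E_1(\bar X_j,Q^\nu_{T_j}(\bar y_j))\le\bar\varphi(z^+,z^-,\nu)+\frac1j$. Then, for an arbitrary center $y_j$, I would transport $\bar X_j$ into $Q^\nu_{T_j}(y_j)$ at a cost bounded independently of $y_j$ (and of order $o(T_j)$). The key observation is that the boundary data depend on the center only through the relative position of the lattices $\mathcal L(z^\pm)$ and the cube. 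Since we are free to choose $T_j$ along a subsequence, we can exploit the periodicity of the data.

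\emph{Case 1: $z^+={\bf 0}$ or $z^-={\bf 0}$, or $\theta^+-\theta^-\notin\mc G_{\mathbb A}$, or more generally no touching pairs exist.} Here Lemma~\ref{lem:Phi-vacuum}(i),(ii) together with Lemma~\ref{lem:varphi-bar-Phi} (Case (a)) and Lemma~\ref{lem:touching-lattices} show that $\bar\varphi=\frac12|e^{-i\theta^+}\nu|_1+\frac12|e^{-i\theta^-}\nu|_1$, with the obvious modification for vacuum. The upper bound in Lemma~\ref{lem:Phi-vacuum} holds for every center with error $C/T$, and $C$ is universal. So any $T_j\to\infty$ works with $\eta_j=C/T_j$.

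\emph{Case 2: $\theta^+-\theta^-\in\mc G_{\mathbb A}$.} Let $\Lambda=\{ja+kb\}$ be the coincidence site lattice with fundamental parallelogram $P_{\theta^+,\theta^-}$ from \eqref{eq:fundamental-parallelogram}. Translating by $v\in\Lambda$ preserves both $\mathcal L(z^+)$ and $\mathcal L(z^-)$ together with the charges. By Lemma~\ref{lem:elementary-properties}(i), the cell problem centered at $y$ therefore has the same minimal energy as the one centered at $y-v$. It thus suffices to handle centers $y_j\in\bar y_j+P_{\theta^+,\theta^-}$, a bounded set of diameter $D$ independent of $j$. For such $y_j$ I would enlarge $\bar X_j$ as in the proof of Lemma~\ref{lem:varphi-bar-Phi}. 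Outside $Q^\nu_{T_j}(\bar y_j)$, fill with $\mathcal L(z^\pm)$ on $\{\pm\langle\nu,x-y_j\rangle\ge10\}$ and with vacuum otherwise, arranged so that the boundary conditions of $\mathrm{Adm}_1^{(z^+,z^-)}(Q^\nu_{T_j+2D+C}(y_j))$ hold. Then take $\bar X_j$ restricted to a cube $Q^\nu_{T'_j}(y_j)$ with $T_j'=T_j+2D+C$. The cost is controlled exactly as in Step~3 of Lemma~\ref{lem:fixed-rotations}. The nonzero-energy atoms of the extension lie in $\{|\langle\nu,x-y_j\rangle|\le C\}$ and at the cube corners, both regions of area $O(1)$, so Lemma~\ref{lem:elementary-properties}(v) gives an additive constant $C(D)$. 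This requires replacing $T_j$ by $T_j'$; I would simply redefine the sequence as $T'_j$ and use $\frac{1}{T_j'}(T_j\bar\varphi+T_j/j+C)\le\bar\varphi+\eta_j$.

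\emph{Main obstacle.} The hard step is checking that the glued configuration has finite energy, i.e.\ stays in $\mathcal X_1(\mathbb R^2)$, near the two places where the $z^+$- and $z^-$-filled strips outside the old cube meet each other and meet the old boundary layers $\partial^\pm_1Q^\nu_{T_j}(\bar y_j)$. The reason is that the interface height $\langle\nu,y_j-\bar y_j\rangle$ may differ by up to $D$. I would handle this by inserting a vacuum buffer of width $\sim D+10$ around $\{|\langle \nu,x-y_j\rangle|\le D+10\}$ in the added lateral region, as the set $A^\delta_T$ does in Lemma~\ref{lem:fixed-rotations}. Outside the buffer, points lie in a single lattice and so are automatically admissible. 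If $\Lambda$ is degenerate the orientation of $P$ is irrelevant, since it is always a genuine two-dimensional lattice because the angle is rational, so $D<\infty$ depends only on $\theta^+-\theta^-$.
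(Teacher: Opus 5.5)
Your proposal is correct and follows essentially the paper's argument. The vacuum and irrational-angle cases go through Lemma~\ref{lem:Phi-vacuum} and Lemma~\ref{lem:touching-lattices}. In the rational case, coincidence-site-lattice periodicity places an almost-optimal configuration within bounded distance of an arbitrary center, and the cube is enlarged by $O(1)$ with a vacuum buffer at the interface, at an $O(1)$ energy cost. The only cosmetic difference is that you translate the target center back into $\bar y_j+P_{\theta^+,\theta^-}$, whereas the paper translates the optimal configuration by a coincidence-lattice vector to a center near $y_j$; these are equivalent by Lemma~\ref{lem:elementary-properties}(i).
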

\begin{proof}
First consider $z^+=\textbf{0}$ or $z^-=\textbf{0}$, then the statement immediately follows from Lemma~\ref{lem:Phi-vacuum}(i) and the definition of $\bar{\varphi}$ in \eqref{def:phi-bar} for any sequence of $\{T_j\}$. Now consider $z^\pm=(\theta^\pm,\tau^\pm,1)\in\mc{Z}$. If $\theta^+-\theta^-\notin\mc{G}_{\mathbb{A}}$, then the statement follows from Lemma~\ref{lem:Phi-vacuum}(ii) and Lemma~\ref{lem:touching-lattices} for any sequence $\{T_j\}_j$. Thus, it suffices to study the case $\theta^+-\theta^-\in\mc{G}_{\mathbb{A}}$. Consider now a sequence $S_j\to\infty$ and $\{x_j\}_j\subset\R^2$ and configurations $\{X_j\}_j\subset\R^2$ with
\begin{align}\label{eq:bar-phi-Xj-84}
\bar{\varphi}(z^+,z^-,\nu)=\lim_{j\to\infty}\frac{1}{S_j}E_1(X_j,Q^\nu_{S_j}(x_j))\,.
\end{align}
Using Lemma~\ref{lem:subset-two-lattices}, we can consider $X_j\subset\mc{L}(z^+)\cup\mc{L}(z^-)$ for all $j\in\N$, $X_j \in \mathrm{Adm}_1^{(z^+,z^-)}(Q^\nu_{S_j}(x_j))$. Let $\{y_j\}_j$ be any sequence of centers. The goal is now to find a sequence $l_j\to1$ independent of $\{y_j\}_j$, as well as configurations $\{\tilde{X}_j\}_j\subset\R^2$ with $\tilde{X}_j \in \mathrm{Adm}_1^{(z^+,z^-)}(Q^\nu_{l_jS_j}(y_j))$ such that
\begin{align}\label{ineq:tildeX-Xj-84}
E_1(\tilde{X}_j,Q^\nu_{l_jS_j}(y_j))\leq E_1(X_j,Q^\nu_{S_j}(x_j))+C\,
\end{align}
for a constant $C>0$ depending only on $z^\pm$ and $\nu$. Assuming \eqref{ineq:tildeX-Xj-84} this implies the statement as follows: Set $T_j=l_jS_j$, divide \eqref{ineq:tildeX-Xj-84} by $T_j$ and use \eqref{eq:bar-phi-Xj-84} to show that
\begin{align*}
\frac{1}{T_j}\min\{E_1(X,Q^\nu_{T_j}(y_j))\colon X\in \mathrm{Adm}_1^{(z^+,z^-)}(Q^\nu_{l_jS_j}(y_j))\}
&\leq\frac{1}{T_j}E_1(\tilde{X}_j,Q^\nu_{T_j}(y_j))\\&\leq\frac{1}{l_jS_j}E_1(X_j,Q^\nu_{S_j}(x_j))+\frac{C}{T_j}\\&\leq\bar{\varphi}(z^+,z^-,\nu)+\eta_j\,,
\end{align*}
where now $\{\eta_j\}_j$ is a null sequence depending on $z^\pm,\nu$ and $\{T_j\}_j$ but is independent of the centers $\{y_j\}_j$. We will now construct $\tilde{X_j}$ and verify \eqref{ineq:tildeX-Xj-84}. Recall \eqref{eq:fundamental-parallelogram} and the discussion below. Now choose $\bar{y}_j\in \{\lambda_1a+\lambda_2b\colon \lambda_1,\lambda_2\in\Z\}+x_j$ such that $\vert\bar{y}_j-y_j\vert\leq\kappa$, where $\kappa=\vert a\vert+\vert b\vert+10$. Notice that $\kappa$ only depends on $a,b$ in \eqref{eq:fundamental-parallelogram} but is independent of $j$.  Set $l_j=1+4\frac{\kappa}{S_j}$ and
\begin{align*}
A_j= \{x \in \mathbb{R}^2 \colon \vert\langle x-y_j, \nu\rangle\vert \leq 6\kappa\} \cap \left(Q^\nu_{l_jS_j+4\kappa}(y_j)  \setminus Q^\nu_{S_j-4\kappa}(\overline{y}_j) \right)
\end{align*}
Note that $\partial^\pm_1Q^\nu_{l_jS_j}(y_j)\cap Q^\nu_{S_j}(\bar{y}_j)=\emptyset$ as $l_jS_j-S_j=4\kappa,\vert y_j-\bar{y}_j\vert\leq\kappa$ and $\kappa\geq 10$. We now define $\tilde{X}_j\subset\R^2$ as
\begin{align*}
\tilde{X}_j=\begin{cases}
X_j+\bar{y}_j-x_j&\text{in }Q^\nu_{S_j}(\bar{y}_j)\setminus A_j\,,\\
\emptyset&\text{in } A_j\setminus(\partial^+_1Q^\nu_{l_jS_j}(y_j)\cup\partial^-_1Q^\nu_{l_jS_j}(y_j))\,,\\
\mc{L}(z^\pm)&\text{in }  \left( Q^\nu_{l_jS_j}(y_j)\setminus Q^\nu_{S_j}(\bar{y}_j) \setminus\right)   \cup  \partial^\pm_1 Q^\nu_{l_jS_j}(y_j)\,.
\end{cases}
\end{align*}
Then, by definition of $\tilde{X}_j$ and $A_j$, and due to the fact that $X_j \in \mathrm{Adm}_1^{(z^+,z^-)}(Q^\nu_{S_j}(x_j))$, we have that $\tilde{X}_j \in \mathrm{Adm}_1^{(z^+,z^-)}(Q^\nu_{l_j S_j}(y_j))$. Moreover, by Lemma~\ref{lem:elementary-properties}(i) it holds
\begin{align}\label{ineq:X-tilde-j841}
E_1(\tilde{X}_j,Q^\nu_{S_j}(\bar{y}_j))\leq E_1(X_j,Q^\nu_{S_j}(x_j))+C\,,
\end{align}
where $C>0$ takes into account the interactions of points in $x\in\tilde{X}_j\cap Q^\nu_{S_j}(\bar{y}_j)\cap(A_j)_2$.  As $\mc{L}^2((A_j)_3)\leq C_{\kappa}$, for $C_{\kappa}>0$ only depending on $\kappa$ and $E_1(\tilde{X}_j)<\infty$, by Lemma~\ref{lem:elementary-properties}(v) we estimate the cardinality of those points by $C_{\kappa}$, which shows \eqref{ineq:X-tilde-j841}. Moreover, it holds
\begin{align}\label{ineq:X-tilde-j842}
E_1(\tilde{X}_j,Q^\nu_{l_jS_j}(y_j)\setminus Q^\nu_{S_j}(\bar{y}_j))\leq C\,,
\end{align}
where again $C$ only depends on $\kappa$. Indeed, points in $\tilde{X}_j\cap(Q^\nu_{l_jS_j}(y_j)\setminus Q^\nu_{S_j}(\bar{y}_j))$ satisfy $\#\mc{N}^\mathrm{a}(x)=4$ if $\dist(x,A_j)>1$, so that they do not contribute to the energy.  As before by Lemma~\ref{lem:elementary-properties}(v), together with $l_jS_j-S_j=4\kappa,\vert y_j-\bar{y}_j\vert\leq\kappa$ shows that the cardinality of points $x\in\tilde{X}_j$ with $\dist(x,A_j)\leq1$ is estimated by $C_\kappa$.  This shows \eqref{ineq:X-tilde-j842}, which together with \eqref{ineq:X-tilde-j841} and Lemma~\ref{lem:elementary-properties}(iv) implies \eqref{ineq:tildeX-Xj-84} and concludes the proof.
\end{proof}

We are now in a position to conclude the proof of Proposition~\ref{prop:density}. The result will follow from the following proposition by a rescaling argument.

\begin{prop}\label{prop:limit-phi}
For every $z^+,z^-\in\mc{Z},\nu\in\mathbb{S}^1$ and every sequence $\{y_T\}_T\subset\R^2$ there exists
\begin{align}\label{eq-prop:limit-phi}
\bar{\varphi}(z^+,z^-,\nu)=\lim_{T\to+\infty}\frac1T\min\{E_1(X_T,Q^\nu_T(y_T))\colon X_T \in \mathrm{Adm}_1^{(z^+,z^-)}(Q^\nu_T(y_T))\}
\end{align}
and is independent of $\{y_T\}_T$. In particular, we have $\varphi\equiv\bar{\varphi}$ and the statement of Proposition~\ref{prop:density} holds.
\end{prop}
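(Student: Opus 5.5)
The argument is a squeeze between a $\liminf$ and a $\limsup$. For a fixed sequence of centers $\{y_T\}_T$ set
\begin{align*}
m_T(y_T):=\frac1T\min\{E_1(X,Q^\nu_T(y_T))\colon X\in \mathrm{Adm}_1^{(z^+,z^-)}(Q^\nu_T(y_T))\}\,.
\end{align*}
The lower bound is free. By the definition \eqref{def:phi-bar}, $\bar\varphi$ is a $\liminf$ of infima taken also over the centers. Hence $\liminf_{T\to\infty} m_T(y_T)\geq\bar\varphi(z^+,z^-,\nu)$ for every sequence $\{y_T\}_T$.

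It remains to show $\limsup_{T\to\infty} m_T(y_T)\leq\bar\varphi(z^+,z^-,\nu)$ uniformly in the centers. The vacuum case ($z^+={\bf 0}$ or $z^-={\bf 0}$) follows from Lemma~\ref{lem:Phi-vacuum}(i), which gives $|m_T(y_T)-\tfrac12|e^{-i\theta}\nu|_1|\leq C/T$. If $\theta^+-\theta^-\notin\mc{G}_{\mathbb{A}}$, Lemma~\ref{lem:touching-lattices} together with Lemma~\ref{lem:varphi-bar-Phi} gives $\bar\varphi=\tfrac12|e^{-i\theta^+}\nu|_1+\tfrac12|e^{-i\theta^-}\nu|_1$, and then Lemma~\ref{lem:Phi-vacuum}(ii) gives $m_T(y_T)\leq\bar\varphi+C/T$.

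In the remaining rational case $\theta^+-\theta^-\in\mc{G}_{\mathbb{A}}$ I would use a tiling argument based on Proposition~\ref{prop:optimal-sequence}. Fix $j$ and the side length $T_j$ from that proposition, so that for every center the minimal energy on a cube of side $T_j$ is at most $T_j(\bar\varphi+\eta_j)$. For large $T$, write $T=kT_j+r$ with $0\leq r<T_j$. Cover the interface strip of $Q^\nu_T(y_T)$ by $k$ consecutive cubes $Q^\nu_{T_j}(y^i)$, $i=1,\dots,k$, whose centers lie on the line $\{\langle x-y_T,\nu\rangle=0\}$, and let a leftover piece of length $r+O(1)$ absorb the remainder. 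On each cube place an optimal configuration. Outside the union, fill the upper region with $\mc{L}(z^+)$ and the lower region with $\mc{L}(z^-)$, and leave a vacuum gap near the lateral sides and around the leftover piece.

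The adjacent cubes carry vacuum on their lateral boundary regions $\partial^c_1$ and lattice data $\mathcal{L}(z^\pm)$ on $\partial^\pm_1$. They therefore glue along their common vertical sides. Near each junction the set of atoms with $\#\mc{N}^\mathrm{a}<4$ lies in a region of bounded area, with a bound depending only on the $10$-wide collar and independent of $T_j$. By Lemma~\ref{lem:elementary-properties}(v) each junction costs at most a constant $C$. The leftover piece costs at most $C(r+1)$, using the explicit half-lattice competitor of Lemma~\ref{lem:Phi-vacuum}(ii). Summing, we get
\begin{align*}
T\,m_T(y_T)\leq kT_j(\bar\varphi+\eta_j)+Ck+C(T_j+1)\,.
\end{align*}
Dividing by $T$ and letting $T\to\infty$ gives $\limsup_{T\to\infty} m_T(y_T)\leq\bar\varphi+\eta_j+C/T_j$. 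Letting $j\to\infty$ yields the claim.

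The main obstacle is making the gluing admissible, that is, keeping the configuration in $\mathcal{X}_1(\mathbb{R}^2)$ across the junctions. The two lattices from neighbouring cubes must coincide exactly in the overlap, and the vacuum strips at the junctions must fit with the lattice data. Since the boundary data are the same fixed lattices $\mc{L}(z^\pm)$ (not translated copies), the lattice parts match automatically. The vacuum collars $\partial^c_1$ of width $O(1)$ separate the interface parts of adjacent cubes, so no forbidden short distances arise. Their energy cost per junction is $O(1)$, as estimated above.

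Finally, the identification $\varphi\equiv\bar\varphi$ and Proposition~\ref{prop:density} follow by rescaling. By Lemma~\ref{lem:elementary-properties}(ii), the cell problem on $Q^\nu_\rho(x_0)$ at scale $\varepsilon$ equals $\frac{\varepsilon}{\rho}$ times the unit-scale problem on $Q^\nu_{\rho/\varepsilon}(x_0/\varepsilon)$, and the normalized quantity appearing in \eqref{eq-prop:varphi} is exactly $m_{\rho/\varepsilon}(x_0/\varepsilon)$. Apply \eqref{eq-prop:limit-phi} with $T=\rho/\varepsilon$ and centers $x_0/\varepsilon$; since the limit does not depend on the centers, the result is independent of $x_0$ and $\rho$.
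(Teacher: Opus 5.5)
Your proposal is correct and follows essentially the same route as the paper: the lower bound comes for free from the definition of $\bar{\varphi}$, and the upper bound glues center-uniform optimal configurations on cubes of side $T_j$ from Proposition~\ref{prop:optimal-sequence} along the interface, fills the remainder with $\mathcal{L}(z^\pm)$ and a vacuum strip at cost $O(T_j)$, and then rescales to obtain Proposition~\ref{prop:density}. Your separate treatment of the vacuum and irrational-misorientation cases is harmless but unnecessary, because Proposition~\ref{prop:optimal-sequence} already covers all cases and the tiling argument works uniformly.
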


%Slight changes due to boundary condition.
\begin{proof}
Note that once \eqref{eq-prop:limit-phi} is established, Proposition~\ref{prop:density} follows by a rescaling argument. Indeed, given $x_0\in\R^2$ and $\rho>0$ define $\varepsilon=\frac{\rho}{T},\lambda=\frac{T}{\rho}$, and $A=Q^\nu_\rho(x_0)$. Then, using Lemma~\ref{lem:elementary-properties}(ii), \eqref{eq-prop:varphi} follows from \eqref{eq-prop:limit-phi} with $y_T=\frac{T}{\rho}x_0$. We are left to prove \eqref{eq-prop:limit-phi}.   Let $z^\pm\in\mc{Z}$ and $\nu\in\mathbb{S}^1$ and a sequence $\{y_T\}_T\subset\R^2$ be given. By definition of $\bar{\varphi}$, see \eqref{def:phi-bar}, it is sufficient to show
\begin{align}\label{ineq:phi-bar-prop85}
\limsup_{T\to\infty}\frac1T\min\{E_1(X_T,Q^\nu_T(y_T))\colon X_T\in \mathrm{Adm}_1^{(z^+,z^-)}(Q^\nu_T(y_T))\}\leq\bar{\varphi}(z^+,z^-,\nu)\,.
\end{align}
\noindent {\bf Step~1:} {\it Comparision via construction}. Consider $1\ll S\ll T$. Without loss of generality we can assume $S\in\{T_j\}_j$, where $\{T_j\}$ is the sequence identified in Proposition~\ref{prop:optimal-sequence}. To simplify the notation for $S=T_j$, we will write $\eta_S$ instead of $\eta_{T_j}$ for the null sequence in Proposition~\ref{prop:optimal-sequence}. Now define $N_{S,T}=\lfloor\frac{T}{S}\rfloor-1$ and for $j\in\{1,\dots,N_{S,T}\}$ set $x_j=y_T-(\frac{T}{2}+jS)\nu^\perp$. The idea of the construction now is to build the competitor $X_T$ on the big cube $Q^\nu_T(y_T)$ by gluing together minimizers on the small cubes $Q^\nu_S(x_j)$ along the line $\{\langle\nu, x-y_T\rangle=0\}$. Thus for each $j$ choose a minimizer $X_j\subset\R^2$ of the cell problem, such that $X_j \in \mathrm{Adm}_1^{(z^+,z^-)}(Q^\nu_S(x_j))$ and
\begin{align*}
\begin{split}
E_1(X_j,Q^\nu_S(x_j))&=\min\{E_1(X,Q^\nu_S(x_j))\colon X\in \mathrm{Adm}_1^{(z^+,z^-)}(Q^\nu_S(x_j))\}\\&\leq S(\bar{\varphi}(z^+,z^-,\nu)+\eta_S)\,,
\end{split}
\end{align*} 
where the last inequality follows from Proposition~\ref{prop:optimal-sequence}. We now define $X_T$ as
\begin{align*}
X_T=\begin{cases}
X_j&\text{in }Q^\nu_S(x_j),j\in\{1,\dots,N_{S,T}\}\,,\\
\emptyset&\text{in }(\{x\colon\vert\langle\nu,x-y_T\rangle\vert<10\}\cap Q^\nu_{T+10\sqrt{2}}(y_T))\setminus Q^*\,,\\
\mc{L}(z^\pm)&\text{in }(\{x\colon\pm\langle\nu,x-y_T\rangle\geq10\}\cap Q^\nu_{T+10\sqrt{2}}(y_T))\setminus Q^*\,,
\end{cases}
\end{align*}
where $Q^*=\bigcup_{j=1}^{N_{S,T}}Q^\nu_S(x_j)$. Note that by construction we immediately get $X_T\in \mathrm{Adm}_1^{(z^+,z^-)}(Q^\nu_T(y_T)) $. In the following, we show
\begin{align}\label{ineq:X-T-85}
E_1(X_T,Q^\nu_T(y_T))\leq \left\lfloor\frac{T}{S}\right\rfloor S(\bar{\varphi}(z^+,z^-,\nu)+\eta_S)+CS
\end{align}
for a universal $C>0$. Dividing \eqref{ineq:X-T-85} by $T$, taking first the $\limsup$ as $T\to+\infty$ and then the limit as $S\to+\infty$, this yields \eqref{ineq:phi-bar-prop85} as $\eta_{S}\to0$. We are left to prove \eqref{ineq:X-T-85}. \\ 
\noindent {\bf Step 2:} {\it Proof of {\rm \eqref{ineq:X-T-85}}}. First Note that by the boundary values of $X_j$ and the construction we have $E_1(X_T,Q^\nu_S(x_j))=E_1(X_j,Q^\nu_S(x_j))$ for $j\in\{1,\dots,N_{S,T}\}$. This together with Lemma~\ref{lem:elementary-properties}(iv) and \eqref{ineq:phi-bar-prop85} implies
\begin{align}\label{ineq:XT-85}
\begin{split}
E_1(X_T, Q^\nu_T(y_T))&=\sum_{j=1}^{N_{S,T}}E(X_T,Q^\nu_S(x_j))+E_1(X_T,Q^\nu_T(y_T)\setminus Q^*)\\
&\leq \left\lfloor\frac{T}{S}\right\rfloor(S(\bar{\varphi}(z^+,z^-,\nu)+\eta_S))+E_1(X_T,Q^\nu_T(y_T)\setminus Q^*)\,.
\end{split}
\end{align}
It remains to estimate the energy outside of $Q^*$. We claim that it holds
\begin{align}\label{ineq:X-T-outsideQ}
E_1(X_T,Q^\nu_T(y_T)\setminus Q^*)\leq CS\,.
\end{align}
To see this notice that an atom $x\in X_T\cap(Q^\nu_T(y_T)\setminus Q^*)$ can only tribute to the energy if $\vert\langle\nu,x-y_T\rangle\vert\leq 12$. However, by Lemma~\ref{lem:elementary-properties}(v) we have
\begin{align*}
\#\{x\in X_T\cap(Q^\nu_T(y_T)\setminus Q^*)\colon\vert\langle\nu,x-y_T\rangle\vert\leq12\}\leq CS\,,
\end{align*}
see the right dark gray region in Figure~\ref{fig:ConvexityAndLimit} for illustration.  This implies \eqref{ineq:X-T-outsideQ} and combining \eqref{ineq:XT-85}--\eqref{ineq:X-T-outsideQ} yields \eqref{ineq:X-T-85}.
\end{proof}
In a final step before stating the proof of Theorem~\ref{thm:properties-of-varphi}, we provide a characterization of translations of lattices with touching points. To this end, for a given misorientation $\theta=\theta^+-\theta^-\in\mc{G}_{\mathbb{A}}$, we say that $e^{i\theta^+}\tau^+-e^{i\theta^-}\tau^-$ is a {\it good translation}, written $e^{i\theta^+}\tau^+-e^{i\theta^-}\tau^-\in\mc{G}_{\mathbb{T}}(\theta)$, provided $(\tau^+,\tau^-)\in\mathbb{T}$ are such that there exist $x\in\mc{L}((\theta^+,\tau^+,1)), y\in\mc{L}((\theta^-,\tau^-,1))$ with $\vert x-y\vert=1$ and $q(x)\neq q(y)$. Indeed, by rotational invariance, this depends only on the difference $\theta$.
\begin{lem}[Properties of translations]\label{lem:prop-translations} 
Suppose $\theta=\theta^+-\theta^-\in\mc{G}_{\mathbb{A}}$. Then $\mc{G}_{\mathbb{T}}(\theta)$ is contained in a finite union (of arcs) of spheres of radius $1$, more precisely, it holds
\begin{align*}
\mc{G}_{\mathbb{T}}(\theta)\subset\bigcup_{x',y'}\partial B_1(y'-x'),
\end{align*}
where the union is taken over all $x'\in e^{i\theta^+}\Z^2_{\rm charge}\cap(P_{\theta^+,\theta^-})_6$ and $y'\in e^{i\theta^-}\Z^2_{\rm charge}\cap P_{\theta^+,\theta^-}$, where $P_{\theta^+,\theta^-}$ is the fundamental parallelogram defined in \eqref{eq:fundamental-parallelogram}.
\end{lem}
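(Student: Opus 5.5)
The plan is a direct reduction using the periodicity of both charged lattices under the coincidence site lattice. Write $\Lambda:=e^{i(\theta^++\frac{\pi}{4})}\sqrt{2}\Z^2\cap e^{i(\theta^-+\frac{\pi}{4})}\sqrt{2}\Z^2=\{ja+kb\colon j,k\in\Z\}$. Since $\theta\in\mc{G}_{\mathbb{A}}$, this lattice is nontrivial: $a,b$ exist and span $\R^2$, so $P_{\theta^+,\theta^-}$ is a bounded fundamental domain.

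First fix the normalization. Suppose $e^{i\theta^+}\tau^+-e^{i\theta^-}\tau^-\in\mc{G}_{\mathbb{T}}(\theta)$. Then there are $x\in\mc{L}((\theta^+,\tau^+,1))$ and $y\in\mc{L}((\theta^-,\tau^-,1))$ with $|x-y|=1$ and $q(x)\neq q(y)$. We write
\[
x=e^{i\theta^+}(k^++\tau^+)=x'+e^{i\theta^+}\tau^+, \qquad y=e^{i\theta^-}(k^-+\tau^-)=y'+e^{i\theta^-}\tau^-,
\]
where $x':=e^{i\theta^+}k^+\in e^{i\theta^+}\Z^2_{\rm charge}$ and $y':=e^{i\theta^-}k^-\in e^{i\theta^-}\Z^2_{\rm charge}$. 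Then
\[
1=|x-y|=\bigl|(e^{i\theta^+}\tau^+-e^{i\theta^-}\tau^-)-(y'-x')\bigr|,
\]
so $e^{i\theta^+}\tau^+-e^{i\theta^-}\tau^-\in\partial B_1(y'-x')$. The charge condition only restricts which pairs occur, so we may drop it.

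Next, reduce $(x',y')$ to a bounded window. Every $v\in\Lambda$ preserves both $e^{i\theta^\pm}\Z^2_{\rm charge}$ together with their charges, and it does not change $y'-x'$ when applied to both points. We therefore pick $v\in\Lambda$ with $y'-v\in P_{\theta^+,\theta^-}$ and replace $(x',y')$ by $(x'-v,y'-v)$.

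It remains to show that the shifted $x'$ lies in $(P_{\theta^+,\theta^-})_6$. This is the one step needing care, because it depends on the convention for representatives of $\tau^\pm$. By \eqref{def:T}, $\tau^\pm=\lambda_1(e_1+e_2)+\lambda_2(e_2-e_1)$ with $\lambda_i\in[0,1)$, so $|\tau^\pm|\le 2$. Hence
\[
|x'-y'|\le |x-y|+|\tau^+|+|\tau^-|\le 5<6,
\]
which gives $x'\in(P_{\theta^+,\theta^-})_6$.

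Finiteness then follows because $P_{\theta^+,\theta^-}$ and $(P_{\theta^+,\theta^-})_6$ are bounded. They therefore contain only finitely many points of the discrete sets $e^{i\theta^\pm}\Z^2_{\rm charge}$, and so the union of spheres is finite. One further point needs a remark: $\mc{G}_{\mathbb{T}}(\theta)$ is only defined modulo the choice of representatives in $\mathbb{T}$. Fixing the canonical representatives from \eqref{def:T}, as above, makes the statement well-posed, and I expect no further obstacle.
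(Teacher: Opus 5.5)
Your proposal is correct and follows essentially the same route as the paper. You shift by a coincidence-site-lattice vector so that $y'$ lands in $P_{\theta^+,\theta^-}$, read off $e^{i\theta^+}\tau^+-e^{i\theta^-}\tau^-\in\partial B_1(y'-x')$, and use $|\tau^\pm|\le 2$ (from the representatives in \eqref{def:T}) to get $|x'-y'|\le 5<6$. Your extra remarks, that the coincidence site lattice has full rank for $\theta\in\mc{G}_{\mathbb{A}}$ and that the union is finite, make explicit what the paper leaves implicit.
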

\begin{proof}
Consider $x\in\mc{L}((\theta^+,\tau^+,1)),y\in\mc{L}((\theta^-,\tau^-,1))$ with $\vert x-y\vert=1$ and $q(x)\neq q(y)$. One can find a shifting vector $v\in e^{i(\theta^++\frac{\pi}{4})}\sqrt{2}\Z^2\cap e^{i(\theta^-+\frac{\pi}{4})}\sqrt{2}\Z^2$ such that $y':=y-v-e^{i\theta^-}\tau^-\in e^{i\theta^-}\Z^2_{\rm charge}\cap P_{\theta^+,\theta^-}$. Futher by defining $x':=x-v-e^{i\theta^+}\tau^+\in e^{i\theta^+}\Z^2_{\rm charge}$ we get
\begin{align*}
1=\vert y-x\vert=\vert(y'-x')-(e^{i\theta^+}\tau^+-e^{i\theta^-}\tau^-)\vert\,.
\end{align*}
By this identity together with $\vert\tau^\pm\vert\leq 2$ shows $x'\in e^{i\theta^+}\Z^2_{\rm charge}\cap(P_{\theta^+,\theta^-})_6$ and $e^{i\theta^+}\tau^+-e^{i\theta^-}\tau^-\in\partial B_1(y'-x')$.
\end{proof}

We close this subsection by providing the proof of Theorem~\ref{thm:properties-of-varphi}.

\begin{figure}[h]
\centering
\begin{tikzpicture}[scale=0.25]

\tikzset{>={Latex[width=1mm,length=1mm]}};

\begin{scope}[shift={(-15,0)}]

%\fill[gray!80](-5.4,0.6) rectangle++(10.8,4.8);

%\fill[gray!80](0,0)--(1,0)--(5,3)--(6,3)--(10,0)--(11,0)--(15,3)--(16,3)--(20,0)--(21,0)--(21,10.5)--(0,10.5)--cycle;
\fill[gray!20](0,0)--(21,0)--(21,10.5)--(0,10.5)--cycle;
\draw[line width=3pt,black!70](0,0)--(21,0);
\draw[fill=gray!40](0.9,1)--(18.9,1)--(18.9,-1)--(0.9,-1)--cycle;

\foreach \x in {1,...,9}{
\draw[] (0.9+\x*2,1)--(0.9+\x*2,-1);
}
\draw[thick][shift={(0,-10.5)}](0,0)--(21,0)--(21,21)--(0,21)--cycle;

\draw[<->](-0.5,-10.5)--(-0.5,0) node[left]{$T$}--(-0.5,10.5);
\draw[<->](14.9,-1.4)--(15.9,-1.4) node[below]{$S$}--(16.9,-1.4);

\draw[dashed](0,0)--(22,0);
\draw[->](22,0)--(22,1) node[right]{$\nu$};
%\draw(-6,-6) grid(10,10);

\end{scope}

\begin{scope}[shift={(15,0)}]

%\fill[gray!80](-5.4,0.6) rectangle++(10.8,4.8);

\fill[gray!80](0,0)--(1,0)--(5,3)--(6,3)--(10,0)--(11,0)--(15,3)--(16,3)--(20,0)--(21,0)--(21,10.5)--(0,10.5)--cycle;
\draw[line width=5pt,black!70](0,0)--(1,0)--(5,3)--(6,3)--(10,0)--(11,0)--(15,3)--(16,3)--(20,0)--(21,0);
%Square dimension 1.05x1.05, total length 5.25
\draw[fill=gray!40,shift={(1-0.1,0-0.075)}](0.303,-0.404)--(-0.303,0.404)--(4+0.2-0.303,3+0.15+0.404)--(4+0.2+0.303,3+0.15-0.404)--cycle;
\draw[fill=gray!40,shift={(6-0.1,3+0.075)}](0.303,0.404)--(-0.303,-0.404)--(4+0.2-0.303,-3-0.15-0.404)--(4+0.2+0.303,-3-0.15+0.404)--cycle;
\draw[fill=gray!40,shift={(10+1-0.1,0-0.075)}](0.303,-0.404)--(-0.303,0.404)--(4+0.2-0.303,3+0.15+0.404)--(4+0.2+0.303,3+0.15-0.404)--cycle;
\draw[fill=gray!40,shift={(10+6-0.1,3+0.075)}](0.303,0.404)--(-0.303,-0.404)--(4+0.2-0.303,-3-0.15-0.404)--(4+0.2+0.303,-3-0.15+0.404)--cycle;

\foreach \x in {1,...,4}{
\draw[shift={(1-0.1,0-0.075)}] (0.303+\x*0.84,-0.404+\x*0.63)--(-0.303+\x*0.84,0.404+\x*0.63);
\draw[shift={(6-0.1,3+0+0.075)}] (0.303+\x*0.84,0.404-\x*0.63)--(-0.303+\x*0.84,-0.404-\x*0.63);
\draw[shift={(10+1-0.1,0-0.075)}] (0.303+\x*0.84,-0.404+\x*0.63)--(-0.303+\x*0.84,0.404+\x*0.63);
\draw[shift={(16-0.1,3+0+0.075)}] (0.303+\x*0.84,0.404-\x*0.63)--(-0.303+\x*0.84,-0.404-\x*0.63);
}
\draw[thick][shift={(0,-10.5)}](0,0)--(21,0)--(21,21)--(0,21)--cycle;

\draw[<->](-0.5,-10.5)--(-0.5,0) node[left]{$T$}--(-0.5,10.5);
\draw(10.5,5) node[above]{$A^+$};
\draw(10.5,-5) node[above]{$A^-$};

\draw[dashed](0,0)--(22,0);
\draw[->](22,0)--(22,1) node[right]{$\nu$};
\draw[shift={(1-0.1,0-0.075)},->] (4.5*0.84,4.5*0.63)--(-0.6+4.5*0.84,0.8+4.5*0.63) node[above]{$\nu_1$};
\draw[shift={(6-0.1,3+0+0.075)},->] (0+0.5*0.84,0-0.5*0.63)--(0.6+0.5*0.84,0.8-0.5*0.63) node[above]{$\nu_2$};
\end{scope}
%\draw(-6,-6) grid(10,10);
\end{tikzpicture}
%\end{subfigure}
\caption{Illustration of the construction for the convexity in the third argument and the existence of the limit. The dark gray region on the right picture between the cubes corresponds to $U$ in the construction of $X_T$, see \eqref{eq:X-T-convexity}.}
\label{fig:ConvexityAndLimit}
\end{figure}

%Slight modifications due to boundary condition
\begin{proof}[Proof of Theorem~\ref{thm:properties-of-varphi}.] We divide the proof into several steps. In Step~1 we prove {\rm (i)} and {\rm (ii)}. In Step~2--Step~4, we construct a competitor that yields the convexity inequality stated in {\rm (iii)}. In Step~5, we prove {\rm (iv)}. Lastly, in Step~6 we prove {\rm (v)}. \\ 
\noindent {\bf Step 1:} {\it Proof of {\rm (i)}, {\rm (ii)}}. The proof of (i) follows by definition of $\varphi$ and Lemma~\ref{lem:Phi-vacuum}(i). In order to prove (ii), we note that by Lemma~\ref{lem:Phi-vacuum}(ii), we have
\begin{align}\label{ineq:thm-252}
\frac14\vert e^{-i\theta^+}\nu\vert_1+\frac14\vert e^{-i\theta^-}\nu\vert_1\leq\varphi(z^+,z^-,\nu)\leq \frac{1}{2}\vert e^{-i\theta^+}\nu\vert_1+\frac{1}{2}\vert e^{-i\theta^-}\nu\vert_1
\end{align}
for all $(z^+,z^-)\in(\mc{Z}\setminus\{\textbf{0}\})^2, z^+\neq z^-$. As $\varphi=\Phi$ (see Lemma~\ref{lem:varphi-bar-Phi} and Proposition~\ref{prop:limit-phi}) using Lemma~\ref{lem:touching-lattices} and Lemma~\ref{lem:prop-translations} we see that the second inequality in \eqref{ineq:thm-252} can only be strict if $\theta^+-\theta^-\in\mc{G}_{\mathbb{A}}$ and $e^{i\theta^+}\tau^+-e^{i\theta^-}\tau^-\in\mc{G}_{\mathbb{T}}(\theta^+-\theta^-)$. Clearly, due to \eqref{eq:rational-angles} and Lemma~\ref{lem:prop-translations} , $\mc{G}_{\mathbb{A}}\subset\mathbb{A}$ is countable and $\mc{G}_{\mathbb{T}}(\theta^+-\theta^-)\subset\R^2$ is contained in a finite union of spheres. \\ 
\noindent {\bf Step 2:} {\it Proof of {\rm (iii)}}. Let $\nu_1,\nu_2\in\mathbb{S}^1,\lambda\in(0,1)$. In this step we prove
\begin{align} \label{ineq:convexity-S2}
\varphi(z^+,z^-,\lambda\nu_1+(1-\lambda)\nu_2)\leq\lambda_1\varphi(z^+,z^-,\nu_1)+(1-\lambda)\varphi(z^+,z^-,\nu_2)\,.
\end{align}
Without restriction we can assume $\lambda\nu_1+(1-\lambda)\nu_2\neq 0$, as \eqref{ineq:convexity-S2} is trivial otherwise. Now, define $\nu=\frac{\lambda\nu_1+(1-\lambda)\nu_2}{\vert\lambda\nu_1+(1-\lambda)\nu_2\vert}\in\mathbb{S}^1$. Then by positive $1$-homogeneity of $\varphi$, \eqref{ineq:convexity-S2} is equivalent to
\begin{align}\label{ineq:convexity-1hom-S2}
\varphi(z^+,z^-,\nu)\leq\lambda_1\varphi(z^+,z^-,\nu_1)+\lambda_2\varphi(z^+,z^-,\nu_2)\,,
\end{align}
where $\lambda_1=\frac{\lambda}{\vert\lambda\nu_1+(1-\lambda)\nu_2\vert},\lambda_2=\frac{1-\lambda}{\vert\lambda\nu_1+(1-\lambda)\nu_2\vert}>0$. We now construct a sequence of competitors for the problem $\varphi(z^+,z^-,\nu)$. Fix $n\in\N$ such that $\lambda_1,\lambda_2\leq\frac{n}{2}$. Further let $1\ll S\ll T$, where as before we assume $S\in\{T_j\}_j$ with $\{T_j\}_j$ and $\eta_{T_j}$ being the sequences identified in Proposition~\ref{prop:optimal-sequence}. As before, to keep the notation simple, we write $\eta_S$ instead of $\eta_{T_j}$ if $S=T_j$. Define for $j=1,2$ $N_j(S,T):=\lfloor\frac{\lambda_j(T-(10n+5)S)}{nS}\rfloor$. In the following we will always choose $i,j,k$ from $j\in\{1,2\},i\in\{0,\dots,N_j(S,T)\},k\in\{0,\dots,n-1\}$ without specifying again. As usual, we will denote by $\nu^\perp,\nu_1^\perp,\nu_2^\perp$ the clockwise rotation about $\frac{\pi}{2}$ of $\nu,\nu_1,\nu_2$, respectively. Notice that by definition of $N_j(S,T)$ and $\nu=\lambda_1\nu_1+\lambda_2\nu_2$ we have
\begin{align*}
N_1(S,T)\nu_1^\perp+N_2(S,T)\nu_2^\perp=M\nu^\perp-w\,,
\end{align*}
where $M=\frac{T-(10n+5)S}{nS}$ and $w=\alpha_1\nu_1+\alpha_2\nu_2$ for $0\leq\alpha_1,\alpha_2<1$, so that $\vert w\vert\leq2$. We define
\begin{align*}
x_i^{1,k}&=(-\frac{T}{2}+5S+S(M+10)k)\nu^\perp+iS\nu_1^\perp\,,\\
x_i^{2,k}&=x_{N_1(S,T)}^{1,k}+5S\nu^\perp+iS\nu_2^\perp
\end{align*}
and further let $X_i^{j,k}\subset\R^2$ be a minimizer of the problem
\begin{align}\label{eq:optimizer-S3}
\min\{E_1(X,Q^{\nu_j}_S(x^{j,k}_i))\colon X\in \mathrm{Adm}_1^{(z^+,z^-)}(Q^{\nu_j}_S(x^{j,k}_i))\}\,.
\end{align}
Moreover, for $\kappa>10$ chosen later we define
\begin{align*}
U=\left(\left[-\frac{T}{2}\nu^\perp;x_0^{1,0}\right]\right)_\kappa\cup\bigcup_{k=0}^{n-1}\left(\left[x^{1,k}_{N_1(S,T)};x^{2,k}_0\right]\right)_\kappa\cup\bigcup_{k=0}^{n-2}\left(\left[x_{N_2(S,T)}^{2,k};x^{1,k+1}_0\right]\right)_\kappa\cup\left(\left[x_{N_2(S,T)}^{2,n-1};\frac{T}{2}\nu^\perp\right]\right)_\kappa\,.
\end{align*}
Notice that $U$ consists of $2n+1$ tubular neighborhoods of segments whose length is bounded by $CS$. This is clear by construction of $x_i^{j,k}$ for all segments except $[x^{2,n-1}_{N_2(S,T)};\frac{T}{2}\nu^\perp]$. For the latter it follows from the observation $x^{2,n-1}_{N_2(S,T)}=(-\frac{T}{2}+S(M+10n))\nu^\perp-Sw=(\frac{T}{2}-5S)\nu^\perp-Sw$ and $\vert w\vert\leq2$. Also observe that $Q^\nu_T\setminus(\bigcup_{i,j,k}Q^{\nu_j}_S(x^{j,k}_i)\cup U)$ consists of two connected components. We denote by $A^\pm$ the connected component intersecting $\partial^\pm_1Q^\nu_T$. Now we can define $X_T$ as
\begin{align}\label{eq:X-T-convexity}
X_T=\begin{cases}
X_i^{j,k}&\text{in }Q^{\nu_j}_S(x^{j,k}_i)\,,\\
\emptyset&\text{in }(U\setminus(\bigcup_{i,j,k}Q^{\nu_j}_S(x^{j,k}_i)\cup\partial^+_1Q^\nu_T\cup\partial^-_1Q^\nu_T))\cup\partial^c_1Q^\nu_T\,,\\
\mc{L}(z^\pm)&\text{in }A^\pm\cup\partial^\pm_1Q^\nu_T\,,
\end{cases}
\end{align}
Note that by the boundary conditions of $X_i^{j,k}$ one can see that for $\kappa$ large enough it holds $\vert x-y\vert\geq 1$ and additionally $\vert x-y\vert\geq\sqrt{2}$ if $q(x)=q(y)$ for all $x,y\in X_T,x\neq y$, which implies $X_T\in \mathcal{X}_1(\mathbb{R}^2)$. Furthermore, due to the by construction, it holds for all $i,j,k$: $Q^{\nu_j}_S(x^{j,k}_i)\subset Q^\nu_{T-10}$ for $S,T$ large enough, which together with $\kappa>10$ implies $X_T\in \mathrm{Adm}_1^{(z^+,z^-)}(Q^\nu_T)$. \\
\noindent {\bf Step~3:} {\it Energy estimate}. We now estimate the energy of $X_T$. We now prove the following estimates
\begin{align}\label{ineq:convextiy-S31}
E_1(X_T,(A^+\cup A^-\cup\partial^+_1Q^\nu_T\cup\partial^-_1Q^\nu_T)\cap Q^\nu_T)\leq CnS
\end{align}
and
\begin{align}\label{ineq:convextiy-S32}
\begin{split}
E_1(X_T,\bigcup_{i,j,k}Q^{\nu_j}_S(x^{j,k}_i)&\cup(U\setminus(\partial^+_1Q^\nu_T\cup\partial^-_1Q^\nu_T)))\\&\leq\sum_{j=1}^2\left(\frac{\lambda_jT}{S}+C(n,\lambda_j)\right)(S(\varphi(z^+,z^-,\nu)+\eta_S)+C)\,,
\end{split}
\end{align}
where $\eta_S$ is a null sequence for $S\to\infty$. \\
\noindent {\bf Step 3.1:} {\it Proof of {\rm \eqref{ineq:convextiy-S31}}.} For $x\in X_T\cap(A^+\cup A^-\cup\partial^+_1Q^\nu_T\cup\partial^-_1Q^\nu_T)\cap Q^\nu_T$ such that $\dist(x,U)>1$ it holds $\#\mc{N}^a(x)=4$. This follows from the boundary conditions of $X_i^{j,k}$ on each cube $Q^{\nu_j}_S(x^{j,k}_i)$ and \eqref{eq:X-T-convexity}. So it suffices to estimate the cardinality of atoms $x\in X_T$ with $x\in(U)_1$. As $U$ consists of $2n+1$ tubular neighborhoods whose length is bounded by $CS$, we get $\mc{L}^2((U)_2)\leq CnS$. Thus, Lemma~\ref{lem:elementary-properties}(v) implies $\#(X_T\cap(U)_1)\leq CnS$, which by \eqref{def:local-energy} implies \eqref{ineq:convextiy-S31}. \\
\noindent {\bf Step 3.2:} {\it Proof of {\rm \eqref{ineq:convextiy-S32}}.} Note that by \eqref{eq:X-T-convexity}, to obtain \eqref{ineq:convextiy-S32} it suffices to control the energy contribution by atoms in $\bigcup_{i,j,k}Q^{\nu_j}_S(x^{j,k}_i)$. Further notice that by the boundary conditions of $X_i^{j,k}$ and \eqref{eq:X-T-convexity} we have $E_1(X_T,Q^{\nu_j}_S(x_i^{j,k}))=E_1(X^{j,k}_i,Q^{\nu_j}_S(x^{j,k}_i))$, whenever $i\notin\{0,N_1(S,T),N_2(S,T)\}$. For all other $i,j,k$ one has $X_T=\mc{L}(z^\pm)$ on
\begin{align*}
(\partial Q^{\nu_j}_S(x^{j,k}_i))_5\setminus\{x\colon\pm\langle x-x_i^{j,k},\nu_j\rangle\leq C\kappa\}\,,
\end{align*}
where $C>0$ is a constant only depending on $\nu_1,\nu_2$ and $\nu$. Using Lemma~\ref{lem:elementary-properties}(v), this shows that the cardinality of $X_T\cap Q^{\nu_j}_S(x^{j,k}_i)\cap(U)_1$ is uniformly controlled. But this set contains all atoms $x\in X_T\cap Q^{\nu_j}_S(x^{j,k}_i)$ for which possibly $\#(\mc{N}^\mathrm{a}(x)\cap X_T)<\#(\mc{N}^\mathrm{a}(x)\cap X_i^{j,k})$. Using \eqref{def:local-energy}, this shows $E_1(X_T,Q^{\nu_j}_S(x^{j,k}_i))\leq E_1(X^{j,k}_i,Q^{\nu_j}_S(x^{j,k}_i))+C$. Thus, using \eqref{eq:optimizer-S3}, Proposition~\ref{prop:optimal-sequence} and Proposition~\ref{prop:limit-phi}, we get
\begin{align}\label{ineq:X-T-cubes-S32}
E_1(X_T,Q^{\nu_j}_S(x^{j,k}_i))\leq E_1(X^{j,k}_i,Q^{\nu_j}_S(x^{j,k}_i))+C\leq S(\varphi(z^+,z^-,\nu_j)+\eta_S)+C\,.
\end{align}
Observe that for $j\in\{1,2\}$ we have
\begin{align*}
\#\{(i,k)\colon i=0,\dots,N_j(S,T),k=0,\dots,n-1\}&=n\left(\left\lfloor\frac{\lambda_j(T-(10n+5)S)}{nS}\right\rfloor+1\right)\\&\leq\frac{\lambda_jT}{S}+C(n,\lambda_j)
\end{align*}
for a constant only $C(n,\lambda_j)$ depending only on $n,\lambda_j$. This, together with \eqref{ineq:X-T-cubes-S32} yields \eqref{ineq:convextiy-S32}. \\
\noindent {\bf Step~4:} {\it Conclusion}. Note that
\begin{align*}
\min\{E_1(X,Q^\nu_T)\colon X\in \mathrm{Adm}_1^{(z^+,z^-)}(Q^\nu_T)\}\leq E_1(X_T,Q^\nu_T)\,.
\end{align*}
Using this together with Lemma~\ref{lem:elementary-properties}(iv), \eqref{ineq:convextiy-S31}, \eqref{ineq:convextiy-S32}, we have
\begin{align*}
\min&\{E_1(X,Q^\nu_T)\colon X\in \mathrm{Adm}_1^{(z^+,z^-)}(Q^\nu_T)\}\\
&\leq\sum_{j=1}^2\bigg(\lambda_jT(\varphi(z^+,z^-,\nu_j)+\eta_S)+C\lambda_j\frac{T}{S}+C(n,\lambda_j)(CS+C)\bigg)+CnS.
\end{align*}
Using Proposition~\ref{prop:limit-phi},  dividing by $T$, letting first $T\to+\infty$ and then $S\to+\infty$ and recalling that $\eta_S\to 0$ as $S\to +\infty$, we obtain \eqref{ineq:convexity-1hom-S2}. This concludes the proof of (iii).\\
\noindent {\bf Step~5:} {\it Proof of {\rm (iv)}}. Let $z^\pm=(\theta^\pm,\tau^\pm,1)\in\mc{Z},\nu\in\mathbb{S}^1$ and $\theta\in\mathbb{A}$. We will prove
\begin{align}\label{eq:S5}
\varphi((\theta^++\theta,\tau^+,1),(\theta^-+\theta,\tau^-,1),e^{i\theta}\nu)=\varphi(z^+,z^-,\nu)\,.
\end{align}
Due to Proposition~\ref{prop:limit-phi}, for every $T>0$ we can choose $X_T\subset\R^2$, such that $X_T\in \mathrm{Adm}_1^{(z^+,z^-)}(Q^\nu_T)$ and such that
\begin{align}\label{eq:optimality-s5}
\lim_{T\to\infty}\frac1TE_1(X_T,Q^\nu_T)=\varphi(z^+,z^-,\nu)\,.
\end{align}
Now setting $X_T^\theta=e^{i\theta}X_T$ it holds $X_T^\theta \in \mathrm{Adm}_1^{((\theta^++\theta,\tau^+,1),(\theta^-+\theta,\tau^-,1))}(Q^{\nu_{\theta}}_T)$, where $\nu_\theta=e^{i\theta}\nu$. Applying Proposition~\ref{prop:limit-phi}, Lemma~\ref{lem:elementary-properties}(i) and \eqref{eq:optimality-s5} we get
\begin{align*}
\varphi((\theta^\pm+\theta,\tau^\pm,1))\leq\liminf_{T\to\infty}\frac1TE_1(X^\theta_T,Q^{\nu_\theta}_T)=\lim_{T\to\infty}\frac1TE_1(X_T,Q^\nu_T)=\varphi(z^+,z^-,\nu)\,.
\end{align*}
This shows one inequality in \eqref{eq:S5}. For the reverse inequality one repeats the above argument with $(\tilde{\theta}^\pm,\tau^\pm,1)=(\theta^\pm+\theta,\tau^\pm,1), \tilde{\nu}=e^{i\theta}\nu$ and $\tilde{\theta}=-\theta$. This concludes the proof of (iv).\\ 
\noindent {\bf Step~6:} {\it Proof of {\rm (v)}}. Let $z^\pm=(\theta^\pm,\tau^\pm,1),\nu\in\mathbb{S}^1$ and $\tau\in\mathbb{T}$. We now prove
\begin{align}\label{eq:S6}
\varphi((\theta^+,\tau^++e^{-i\theta^+}\tau,1),(\theta^-,\tau^-+e^{-i\theta^-}\tau,1),\nu)=\varphi(z^+,z^-,\nu)\,.
\end{align}
Due to Proposition~\ref{prop:limit-phi}, for every $T>0$ we can choose $X_T\subset\R^2$, such that $X_T \in \mathrm{Adm}_1^{(z^+,z^-)}(Q^\nu_T)$ and there holds
\begin{align}\label{eq:optimality-s6}
\lim_{T\to\infty}\frac1TE_1(X_T,Q^\nu_T)=\varphi(z^+,z^-,\nu)\,.
\end{align}
 By setting $X_T^\tau=X_T+\tau$ it holds $X_T^\tau\in \mathrm{Adm}_1^{((\theta^+,\tau^++e^{-i\theta^+}\tau,1),(\theta^-,\tau^-+e^{-i\theta^-}\tau,1))}(Q^\nu_T(\tau))$. By Proposition~\ref{prop:limit-phi}, Lemma~\ref{lem:elementary-properties}(i) and \eqref{eq:optimality-s6} it holds
\begin{align*}
\varphi((\theta^+,\tau^++e^{-i\theta^+}\tau,1),(\theta^-,\tau^-+e^{-i\theta^-}\tau,1),\nu)&\leq\liminf_{T\to\infty}\frac1TE_1(X_T^\tau,Q^\nu_T(\tau))\\
&=\lim_{T\to\infty}E_1(X_T,Q^\nu_T)=\varphi(z^+,z^-,\nu)\,.
\end{align*}
This yields one inequality of \eqref{eq:S6}. The other inequality follows by repeating the argument with $(\theta^\pm,\tilde{\tau}^\pm,1)=(\theta^\pm,\tau^\pm+e^{-i\theta^\pm}\tau,1)$ and $\tilde{\tau}=-\tau$. This concludes the proof of (v).
\end{proof}

\section*{Compliance with Ethical Standards}

The authors have no competing interests to declare that are relevant to the content of this article.

\section*{Data Availability Statement}

This work is purely theoretical, and no datasets were generated or used.

\section*{Author contributions Statement}

L.~Kreutz and T.~Ziereis contributed equally to the analysis and writing of the manuscript.

\section*{Funding}
 The research of L.~Kreutz and T.~Ziereis  was supported by the DFG through the Emmy Noether Programme (project number 509436910).


\begin{thebibliography}{99}




\bibitem{Molecular}
 {\sc   N.L. Allinger}.
\newblock {\it Molecular structure: understanding steric and electronic effects from molecular mechanics}.
\newblock John Wiley \& Sons
\newblock (2010).

\bibitem{Alicandro-Cicalese-Braides}
 {\sc R.~Alicandro, A.~Braides, M.~Cicalese}.
\newblock  {\it Phase and anti-phase boundaries in binary
discrete systems: a variational viewpoint}. 
\newblock Netw. Heterog. Media
\newblock {\bf 1}(1) (2006), 85--107. 


\bibitem{Ambrosio-Fusco-Pallara:2000}
 {\sc   L.~Ambrosio, N.~Fusco, D.~Pallara}.
\newblock {\it Functions of bounded variation and free discontinuity problems}.
\newblock Oxford University Press
\newblock (2000).

%\bibitem{AuYeungFrieseckeSchmidt:12}
% {\sc   Y.~Au Yeung, G.~Friesecke, B.~Schmidt}.
%\newblock {\it Minimizing atomic configurations of short range
%pair potentials in two dimensions: crystallization in the Wulff-shape}.
%\newblock Calc.\ Var.\ Partial Differential Equations
%\newblock {\bf 44} (2012), 81--100.
 
%\bibitem{BachBraidesCicalese:20}
% {\sc   A.~Bach, A.~Braides, M.~Cicalese}.
%\newblock {\it Discrete-to-continuum limits of multi-body systems with bulk and surface long-range interactions}.
%\newblock {\bf 52} (2020) SIAM J.\ Math.\ Anal.,
%\newblock 3600--3665.


%\bibitem{BachRuf}
% {\sc   A.~Bach, M.~Ruf}.
%\newblock {\it Stochastic homogenization of functionals defined on finite partitions}.
%\newblock In: INdAM Workshop: Anisotropic Isoperimetric Problems \& Related Topics. Singapore: Springer Nature Singapore
%\newblock (2022), 91--126.


%\bibitem{barroso.fonseca}
% {\sc A.~C.~Barroso, I.~Fonseca}. 
%\newblock {\it Anisotropic singular perturbations--the vectorial case}. 
%\newblock Proc.\ Roy.\ Soc.\ Edinburgh Sect.\ A 
%\newblock {\bf 124} (1994), 527--571.

 
\bibitem{BeterminDeLucaPetrache}
 {\sc L.~B\'etermin, L.~De Luca, M.~Petrache}.
\newblock  {\it Crystallization to the Square Lattice for a Two-Body Potential}. 
\newblock Arch.\ Ration.\ Mech.\ Anal.\
\newblock {\bf 181} (2021), 987--1053. 

 
\bibitem{Betermin}
 {\sc L.~B\'etermin, H.~Kn\"upfer, F.~Nolte}.
\newblock  {\it Note on crystallization for alternating particle chains}. 
\newblock J.\ Stat.\ Phys.\ 
\newblock {\bf 181} (2020), 803--815. 

 

\bibitem{BlancLewin} 
 {\sc X.~Blanc, M.~Lewin}. 
\newblock {\it The crystallization conjecture: a review}. 
\newblock EMS Surv.\ Math.\ Sci.\ 
\newblock {\bf 2} (2015), 225--306. 



\bibitem{Braides:02}
 {\sc A.~Braides}.
\newblock {\it $\Gamma$-convergence for Beginners}.
\newblock Oxford University Press, Oxford 2002.

\bibitem{B43}
{\sc D.~C.~Brydges, P.~A.~Martin}. 
\newblock {\it Coulomb systems at low density: A review}.
\newblock J.\ Stat.\ Phys.\
\newblock {\bf 96} (1999),  1163--1330.

%\bibitem{Braides-Conti-Garroni:2017}
% {\sc  A.~Braides, S.~Conti, A.~Garroni}. 
%\newblock {\it Density of polyhedral partitions}. 
%\newblock Calc.\ Var.\ Partial Differential Equations 
%\newblock {\bf 56} (2017),  Paper No.~28. 

%\bibitem{Chambolle-Kreutz}
% {\sc A.~Chambolle, L.~Kreutz}. 
%\newblock {\it Crystallinity of the Homogenized Energy Density of Periodic Lattice Systems}. 
%\newblock Multiscale Model. Simul.
%\newblock {\bf 21} (2023), 34--79.  


\bibitem{ContiCrismaleGarroniMalusa}
 {\sc  S.~Conti, V.~Crismale, A.~Garroni, A.~Malusa}. 
\newblock {\it Phase-field approximation of sharp-interface energies accounting for lattice symmetry}. 
\newblock Preprint at \href{https://arxiv.org/pdf/2601.07497}{\tt arxiv:2601.07497}.

%\bibitem{Caterina}
% {\sc F.~Cagnetti, G.~Dal Maso, L.~Scardia, C.~I.~Zeppieri}.
%\newblock {\it $\Gamma$-convergence of free-discontinuity problems}. 
%\newblock   Ann.\ Inst.\ H.\ Poincar\'e\ Anal.\ Non Lin\'eaire
%\newblock {\bf 36} (2019),  1035--1079.

%\bibitem{CicaleseLeonardi:19}
% {\sc M.~Cicalese, G.~P.~Leonardi}. 
%\newblock {\it Maximal fluctuations on periodic lattices: an approach via quantitative Wulff inequalities}. 
%\newblock Comm.\ Math.\ Phys.\ 
%\newblock {\bf 375} (2020), 1931--1944. 

%\bibitem{conti.fonseca.leoni}
% {\sc S.~Conti, I.~Fonseca, G.~Leoni}.
%\newblock {\it A {$\Gamma$}-convergence result for the two-gradient theory of phase transitions}.
%\newblock Comm.\ Pure Appl.\ Math.\ 
%\newblock {\bf 55} (2002), 857--936.
 
%\bibitem{conti.schweizer}
% {\sc S.~Conti, B.~Schweizer}.
%\newblock {\it Rigidity and gamma convergence for solid-solid phase transitions with $SO(2)$ invariance}.
%\newblock Comm.\ Pure Appl.\ Math.\ 
%\newblock {\bf 59} (2006), 830--868.

\bibitem{DalMaso:93}
{\sc G.~Dal Maso}.
\newblock {\it An introduction to $\Gamma$-convergence}.
\newblock Birkh{\"a}user, Boston $\cdot$ Basel $\cdot$ Berlin 1993. 

%\bibitem{davoli}
% {\sc E.~Davoli, M.~Friedrich}.
%\newblock {\it  Two-well rigidity and multidimensional sharp-interface limits for Solid-Solid phase transitions}.
%\newblock Calc.\ Var.\ Partial Differential Equations
%\newblock {\bf 59} (2020), Paper No.~44. 


%\bibitem{DavoliPiovanoStefanelli:16}
% {\sc E.~Davoli, P.~Piovano, U.~Stefanelli}. 
%\newblock {\it Wulff shape emergence in graphene}. 
%\newblock Math.\ Models Methods Appl.\ Sci.\ 
%\newblock {\bf 26} (2016), 2277--2310.

%\bibitem{DavoliPiovanoStefanelli:17}
% {\sc E.~Davoli, P.~Piovano, U.~Stefanelli}. 
%\newblock {\it Sharp $N^{3/4}$ law for the minimizers of the edge-isoperimetric problem on the triangular lattice}. 
%\newblock J.\ Nonlin.\ Sci.\ 
%\newblock {\bf 27} (2017), 627--660. 

%\bibitem{Dolbilin-Lagarias-Senechal}
%{\sc N.~Dolbilin, J.~Lagarias, M.~Senechal}. 
%\newblock {\it Multiregular Point Systems}. 
%\newblock Discrete\ Comput.\ Geom.\
%\newblock {\bf 20} (1998), 477--498.

\bibitem{DelNinDeLuca}
{\sc L.~De Luca, G.~Del Nin}. 
\newblock {\it A Crystallization Result in Two Dimensions for a Soft Disc Affine Potential}. 
\newblock Anisotropic Isoperimetric Problems and Related Topics. INdAM 2022. Springer INdAM Series
\newblock {\bf 62} (2024), 201--212.



\bibitem{DeLucaFriesecke}
{\sc L.~De Luca, G.~Friesecke}. 
\newblock {\it Crystallization in two dimensions and a discrete Gauss–Bonnet Theorem}. 
\newblock J.\ Nonlinear Sci.\
\newblock {\bf 28} (2017), 69--90.

%\bibitem{DeLucaFriesecke:17}
% {\sc L.~De Luca, G.~Friesecke}. 
%\newblock {\it Classification of particle numbers with unique Heitmann–Radin minimizer}. 
%\newblock J.\ Stat.\ Phys.\ 
%\newblock {\bf 167} (2017), 1586-1592. 

%\bibitem{DeLucaNovagaPonsiglione:19}
% {\sc L.~De Luca, M.~Novaga, M.~Ponsiglione}.
%\newblock {\it $Gamma$-convergence of the Heitmann–Radin sticky disc energy to the crystalline perimeter}.
%\newblock J.\ Nonlin.\ Sci.\ 
%\newblock {\bf 29} (2019), 1273--1299.

\bibitem{ELi:09}
 {\sc W.~E, D.~Li}. 
\newblock {\it On the crystallization of 2D hexagonal lattices}. 
\newblock Comm.\ Math.\ Phys.\
\newblock {\bf 286} (2009), 1099--1140.

%\bibitem{EvansGariepy92}
% {\sc L.~C~Evans, R.~F.~Gariepy}. 
%\newblock {\it Measure theory and fine properties of
%functions}.
%\newblock CRC Press, Boca Raton $\cdot$ London $\cdot$ New York $\cdot$ Washington, D.C. 1992.

\bibitem{ponsiglione}
 {\sc  S.~Fanzon, M.~Palombaro, M.~Ponsiglione}.
\newblock {\it Derivation of linearized polycrystals from a two-dimensional system of edge dislocations}.
\newblock SIAM J.\ Math.\ Anal.\
\newblock {\bf 51} (2019), 3956--3981. 

\bibitem{Farmer-Esedoglu-Smereka}
 {\sc B.~Farmer, S.~Esedoḡlu, P.~Smereka}. 
\newblock {\it Crystallization for a Brenner-like Potential}. 
\newblock Commun.\ Math.\ Phys.\
\newblock {\bf 349} (2017), 1029--1061.

%\bibitem{FlatleyTheil:15}
% {\sc L.~C.~Flatley, F.~Theil}. 
%\newblock {\it Face-centered cubic crystallization of atomistic configurations}. 
%\newblock Arch.\ Ration.\ Mech.\ Anal.\ 
%\newblock {\bf 218} (2015), 363--416.

%\bibitem{fonseca.tartar}
% {\sc I.~Fonseca, L.~Tartar}. 
%\newblock {\it The gradient theory of phase transitions for systems with two potential wells}. 
%\newblock Proc.\ Roy.\ Soc.\ Edinburgh Sect.\ A 
%\newblock {\bf 111} (1989), 89--102.


 
 \bibitem{Fortuna-Garroni-Spadaro}
 {\sc M.~Fortuna, A.~Garroni, E.~Spadaro}. 
\newblock {\it On the Read-Shockley energy for grain boundaries in 2D polycrystals}. 
\newblock Commun. Pure Appl. Math. 
\newblock {\bf 78} (2025), 1411--1459. 

\bibitem{FriedrichKreutzHexDimer}
 {\sc M.~Friedrich, L.~Kreutz}. 
\newblock {\it Crystallization in the hexagonal lattice for ionic dimers}. 
\newblock Math.\ Models Methods Appl.\ Sci.\ 
\newblock {\bf 29} (2019), 1853--1900. 

\bibitem{FriedrichKreutzSquareDimer}
 {\sc M.~Friedrich, L. Kreutz}.
\newblock {\it  Finite crystallization and Wulff shape emergence for ionic compounds in the square lattice}. 
\newblock Nonlinearity 
\newblock {\bf 33} (2020), 1240--1296.

\bibitem{FriedrichKreutz:23}
 {\sc M.~Friedrich, L.~Kreutz}. 
\newblock {\it A Proof of Finite Crystallization via Stratification}. 
\newblock J Stat Phys 
\newblock {\bf 190} (2023), 199.


 \bibitem{FriedrichKreutzSchmidt}
 {\sc  M.~Friedrich, L.~Kreutz, B.~Schmidt}.
 \newblock {\it Emergence of rigid polycrystals from atomistic systems with Heitmann-Radin sticky disk energy}.
 \newblock Arch.\ Ration.\ Mech.\ Anal.\
\newblock  {\bf 240} (2021), 627--698.

 \bibitem{FriedrichKreutzStefanelli}
 {\sc  M.~Friedrich, L.~Kreutz, U.~Stefanelli}.
 \newblock {\it Crystallization in the Winterbottom shape and sharp fluctuation laws}.
 \newblock Preprint at \href{https://arxiv.org/pdf/2509.05642}{\tt arxiv:2509.05642}.

%\bibitem{FriedrichSolombrino}
% {\sc M.~Friedrich, F.~Solombrino}.
%\newblock {\it Functionals defined on piecewise rigid functions: Integral representation and $\Gamma$-convergence}. 
%\newblock {Arch.\ Ration.\ Mech.\ Anal.} 
%\newblock {\bf 236} (2020), 1325--1387.

%\bibitem{FriedrichStefanelli}
% {\sc M.~Friedrich, U.~Stefanelli}.
%\newblock {\it Crystallization in a one-dimensional periodic landscape}. 
%\newblock J.\ Stat.\ Phys.\
%\newblock {\bf 179} (2020), 485--501.

%\bibitem{Friesecke-Theil15}
% {\sc G.~Friesecke, F.~Theil}. 
%\newblock {\it Molecular geometry optimization, models}. In the Encyclopedia of Applied and Computational Mathematics,
%B. Engquist (Ed.), Springer, 2015.

\bibitem{GardnerRadin:79}
 {\sc C.~S.~Gardner, C.~Radin}. 
\newblock {\it The infinite-volume ground state of the Lennard-Jones potential}. 
\newblock  J.\ Stat.\ Phys.\ 
\newblock {\bf 20} (1979), 719--724.

%\bibitem{Harborth:74}
% {\sc H.~Harborth}. 
%\newblock {\it L{\"o}sung zu Problem 664 a}. 
%\newblock Elem.\ Math.\ 
%\newblock {\bf 29} (1974), 14--15.

\bibitem{HeitmannRadin}
 {\sc R.~Heitmann, C.~Radin}.
\newblock {\it The ground state for sticky disks}. 
\newblock J.\ Stat.\ Phys.\ 
\newblock {\bf 22} (1980), 281--287.

%\bibitem{JansenKoenigSchmidtTheil:19a} 
% {\sc S.~Jansen, W.~K{\"o}nig, B.~Schmidt, F.~Theil}. 
%\newblock {\it Surface energy and boundary layers for a chain of atoms at low temperature}. 
%\newblock  Arch.\ Ration.\ Mech.\ Anal.\ 
%\newblock  {\bf 239} (2021), 915--980.

%\bibitem{JansenKoenigSchmidtTheil:19b} 
% {\sc S.~Jansen, W.~K{\"o}nig, B.~Schmidt, F.~Theil}. 
%\newblock {\it Distribution of cracks in a chain of atoms at low temperature}. 
%\newblock  Ann.\ Henri \ Poincar\'e
%\newblock {\bf 22} (2021), 4131--4172.

%\bibitem{kytavsev-ruland-luckhaus2}
% {\sc G.~Kitavtsev, S.~Luckhaus, A.~R\"uland}.
%\newblock {\it Surface energies emerging in a microscopic, two-dimensional two-well problem}. 
%\newblock Proc.\ Roy.\ Soc.\ Edinburgh Sect.\ A 
%\newblock {\bf 147} (2017), 1041--1089.

%\bibitem{Lewars}
% {\sc E.~G.~Lewars}.
%\newblock {\it Computational Chemistry}. 2nd edition, 
%\newblock Springer 
%\newblock (2011).

\bibitem{LauteriLuckhaus:16}
 {\sc G.~Lauteri, S.~Luckhaus}. 
\newblock Submitted, 2016. 
\newblock Preprint at \href{https://arxiv.org/pdf/1608.06155}{\tt arxiv:1608.06155}.

%\bibitem{MaininiPiovanoSchmidtStefanelli:19}
% {\sc E.~Mainini, P.~Piovano, B.~Schmidt, U.~Stefanelli}. 
%\newblock {\it $N^{3/4}$ law in the cubic lattice}. 
%\newblock J.\ Stat.\ Phys.\ 
%\newblock {\bf 176} (2019), 1480--1499.

\bibitem{KreutzZiereis}
{\sc L.~Kreutz, T.~Ziereis}. 
\newblock {\it Emergence of rigid Polycrystals from atomistic Systems with general Interactions}. 
\newblock  Preprint at \href{https://arxiv.org/pdf/2604.19239}{\tt arxiv:2604.19239}.


%\bibitem{LeeIntroToManifolds}
% {\sc J.~Lee}.
%\newblock {\it Introduction to Smooth Manifolds}.
%\newblock Springer 
%\newblock (2012).

\bibitem{MaininiPiovanoStefanelli}
{\sc E.~Mainini, P.~Piovano, U.~Stefanelli}. 
\newblock {\it Finite crystallization in the square lattice}. 
\newblock Nonlinearity
\newblock {\bf 27} (2014), 717--737.

%\bibitem{MaininiSchmidt:20}
% {\sc E.~Mainini, B.~Schmidt}.
%\newblock {\it Maximal fluctuations around the Wulff shape for edge-isoperimetric sets in $\mathbb Z^{d}$: a sharp scaling law.}.
%\newblock Commun.\ Math.\ Phys.\
%\newblock {\bf 380} (2020), 947--971.

\bibitem{MaininiStefanelli}
 {\sc E.~Mainini, U.~Stefanelli}. 
\newblock {\it Crystallization in carbon nanostructures}. 
\newblock Comm.\ Math.\ Phys.\
\newblock {\bf 328} (2014), 545--571. 



\bibitem{Scardia}
 {\sc L.~Scardia, E.G.~Tolotti}. 
\newblock {\it Read–Shockley Formula for a General Bravais Lattice in Two Dimensions}. 
\newblock J Elast
\newblock {\bf 158} (2026), 52. 

\bibitem{Radin:81} 
 {\sc C.~Radin}. 
\newblock {\it The ground state for soft disks}. 
\newblock J.\ Stat.\ Phys.\ 
\newblock {\bf 26} (1981), 365--373.

  \bibitem{Radi}
{\sc C.~Radin}. 
\newblock {\it Classical ground states in one dimension}. 
\newblock  J.\ Stat.\ Phys.\
\newblock {\bf 35} (1983), 109--117.
  

\bibitem{Read-Shockley} 
 {\sc W.T.~Read, W.~Shockley}.
\newblock {\it Dislocation models of crystal grain boundaries}. 
\newblock Phys.\ Rev.\ 
\newblock {\bf 78} (1950),  275--289. 

%\bibitem{Schmidt:13} 
% {\sc B.~Schmidt}. 
%\newblock {\it Ground states of the 2D sticky disc model: fine properties and $N^{3/4}$ law for the deviation from the asymptotic Wulff shape}. 
%\newblock J.\ Stat.\ Phys.\ 
%\newblock {\bf 153} (2013), 727-738.

%\bibitem{Schmidt} 
% {\sc T.~Schmidt}.
%\newblock {\it Strict interior approximation of sets of finite perimeter and functions of bounded variation}. 
%\newblock Proc. Am. Math. Soc. 
%\newblock {\bf 143} (2015), 2069--2084.

%\bibitem{sternberg} 
% {\sc P.~Sternberg}.
%\newblock {\it The effect of a singular perturbation on nonconvex variational problems}. 
%\newblock Arch.\ Ration.\ Mech.\ Anal.\ 
%\newblock {\bf 101} (1988), 209--260.



\bibitem{Ranganathan}
 {\sc S.~Ranganathan}.
\newblock {\it On the geometry of coincidence-site lattices}. 
\newblock Acta Crystallographica 
\newblock {\bf 21}(2) (1966), 197--199.

\bibitem{Theil}
 {\sc F.~Theil}.
\newblock {\it A proof of crystallization in two dimensions}. 
\newblock Comm.\ Math.\ Phys.\ 
\newblock {\bf 262} (2006), 209--236.

  
  \bibitem{Ventevogel}
{\sc W.~J.~Ventevogel, B.~R.~A.~Nijboer}. 
\newblock {\it On the configuration of systems of interacting particles with
minimum potential energy per particle}. 
\newblock  Phys.\ A
\newblock {\bf 99} (1979), 565--580.

%\bibitem{Vince}
% {\sc A.~Vince}.
%\newblock {\it Periodicity, quasiperiodicity and Bieberbach's theorem on crystallographic groups}. 
%\newblock Am.\ Math.\ Mon.\
%\newblock {\bf 104} (1997), 27--35.



\end{thebibliography}
\end{document}